\renewcommand{\epsilon}{\varepsilon}
\title{Embedding Probability Distributions into Low Dimensional $\ell_1$: Tree Ising Models via Truncated Metrics}
\author[1]{Moses Charikar}
\author[1]{Spencer Compton}
\author[1]{Chirag Pabbaraju}
\affil[1]{Computer Science Department, Stanford University \authorcr
  \{\tt moses, comptons, cpabbara\}@stanford.edu}
\date{\today}
\begin{document}

\maketitle

\thispagestyle{empty}
Given an arbitrary set of high dimensional points in $\ell_1$, there are known negative results that preclude the possibility of always mapping them to a low dimensional $\ell_1$ space while preserving distances with small multiplicative distortion. This is in stark contrast with dimension reduction in Euclidean space ($\ell_2$) where such mappings are always possible. While the first non-trivial lower bounds for $\ell_1$ dimension reduction were established almost 20 years ago, there has been limited progress in understanding what sets of points in $\ell_1$ are conducive to a low-dimensional mapping.

In this work, we study a new characterization of $\ell_1$ metrics that are conducive to dimension reduction in $\ell_1$. Our characterization focuses on metrics that are defined by the disagreement of binary variables over a probability distribution -- any $\ell_1$ metric can be represented in this form. We show that, for configurations of $n$ points in $\ell_1$ obtained from tree Ising models, we can reduce dimension to $\polylog(n)$ with constant distortion.
In doing so, we develop technical tools for embedding \emph{truncated metrics} which have been studied because of their applications in computer vision, and are objects of independent interest in metric geometry.
Among other tools, we show how any $\ell_1$ metric can be truncated with $O(1)$ distortion and $O(\log(n))$ blowup in dimension.

{
\clearpage
\vspace{1cm}
\renewcommand{\baselinestretch}{0.1}
\setcounter{tocdepth}{4}
\setcounter{secnumdepth}{4}
\tableofcontents{}
\thispagestyle{empty}
\clearpage
}

\setcounter{page}{1}

\section{Introduction}
\label{sec:intro}

Given an arbitrary set of $n$ high-dimensional points in $\ell_1$, we know that it is impossible to always map them to a low dimensional $\ell_1$ space while preserving distances with small multiplicative distortion.
Known lower bounds \cite{brinkman2005impossibility, lee2004embedding, andoni2011near, regev2013entropy} show that for any desired distortion $c$, at least $n^{\Omega(1/c^2)}$ dimensions are needed.
This is in stark contrast with dimension reduction in Euclidean space ($\ell_2$) where mappings to $O(\log(n))$ dimensions are always possible with $1+\epsilon$ distortion.
While these results rule out dimension reduction for arbitrary sets of point in $\ell_1$, a natural question is: {\em Are there broad families of structured sets of points in $\ell_1$ that admit low dimensional mappings with small distortion?}
Despite the fact that strong lower bounds for $\ell_1$ dimension reduction were established almost 20 years ago, there has been little progress in understanding this question.

In this work, we study a new characterization of $\ell_1$ metrics that admit low-dimensional $\ell_1$ representations. 
Our starting point is that any $\ell_1$ metric can be represented by the disagreement of binary variables over a probability distribution $D$ -- a simple consequence of the fact that any $\ell_1$ metric can be embedded into the (possibly infinite-dimensional) Hamming cube.
This distribution viewpoint of $\ell_1$ metrics leads to a natural reframing of the earlier question: {\em What classes of distributions $D$ lead to $\ell_1$ metrics that admit low-distortion, low-dimensional $\ell_1$ embeddings?}

We initiate a study of this question and show that, for configurations of $n$ points in $\ell_1$ obtained from {\em tree Ising models}, we can reduce dimension %
to $\polylog(n)$ with constant distortion.
Within the family of graphical models, tree Ising models are a natural first candidate for distributions to study in this context. On the one hand, they have been studied extensively in the recent theoretical computer science literature on testing and learning 
\cite{daskalakis2021sample, bhattacharyya2021near, boix2022chow, pmlr-v195-kandiros23a}.
On the other hand, it is well known that tree metrics embed isometrically into $\ell_1$, and the question of $\ell_1$ dimension reduction for tree metrics has been studied in the literature 
\cite{charikar2002dimension, lee2013dimension, lee2013lower}.
We should note, however, that the metrics that arise from tree Ising models are not tree metrics themselves -- establishing that they admit low dimensional $\ell_1$ embeddings is subtle and non-trivial.

In order to establish our results, we develop technical tools for embedding \emph{truncated metrics}. %
Truncated metrics arise in image segmentation in computer vision 
\cite{boykov1998markov, ishikawa1998segmentation, kumar2011improved, kumar2014rounding, kumar2016rounding, pansari2017truncated}
and have been studied in theoretical computer science in the context of the metric labeling problem motivated by this application \cite{gupta2000constant, chekuri2004linear, karloff2006earthmover}; they are also objects of independent interest in metric geometry, where they have been studied in the context of metric dimension
\cite{jannesari2012metric, beardon2016k, tillquist2021truncated, tillquist2023getting}, including very recent work on the metric dimension of truncated tree metrics \cite{bartha2023sharp, gutkovich2023computing}.
Our techniques for truncated metrics relate closely with recent developments in truncated metrics for embedding shortest path decompositions \cite{abraham2022metric,filtser2020face}. The work of \cite{abraham2022metric} provides embeddings of $O(\log(n))$ dimension and constant distortion for both ``truncated \textit{line} metrics'' and ``Lipschitz capped \textit{line} metrics'' (later defined in \Cref{sec:fixed-cap-metrics-prelims}, \ref{sec:lipschitz-cap-metrics-prelims}), whereas the work of \cite{filtser2020face} provides \textit{high-dimensional} and constant distortion embeddings for truncated tree metrics and Lipschitz capped tree metrics. Our work provides embeddings of $O(\log^2(n))$ dimension \textit{and} constant distortion for truncated tree metrics and Lipschitz capped tree metrics. Moreover, we provide a general result for truncating \textit{any} arbitrary $\ell_1$ metric that may be useful in future works dealing with truncation.

While our work shows that $\ell_1$ metrics represented by tree Ising models allow for low dimensional embeddings, it would be interesting to see how far this program can be extended -- namely, is there a more general class of graphical models (e.g., having bounded treewidth) or joint distributions that also allow for low dimensional embeddings in $\ell_1$? We discuss this in more detail in \Cref{sec:discussion}, where we show how we can express prior lower bounds for dimension reduction in $\ell_1$ in this distributional formulation. In particular, we show that even Bayesian networks with treewidth-3 capture the lower bound of \cite{brinkman2005impossibility}, precluding the possibility of embedding them in $n^{o(1)}$ dimensions. 
This suggests that tree Ising models are near the boundary of families of graphical models representing $\ell_1$ metrics that admit low dimensional embeddings.
We also discuss classes of graphical models for which this question remains open.

\paragraph*{Related Work:}
The study of dimension reduction in metric spaces has a rich history (see the survey~\cite{naor2018metric}).
Dimension reduction for $\ell_1$ has been studied in several contexts:
the performance of linear embeddings on structured subsets \cite{charikar2002dimension, berinde2008combining, arce2010reconstruction, rachkovskij2016real, krahmer2016unified, jacques2017small, lotz2019persistent} as well as nearest neighbor preserving embeddings -- a weaker requirement than low distortion (i.e. bilipschitz) embeddings \cite{indyk2007nearest, emiris2019near}.
The lower bounds for $\ell_1$ dimension reduction have been generalized to the nuclear norm 
\cite{naor2020impossibility, regev2020bounds}.

The question of characterizing metrics that embed into constant dimensional $\ell_2$ (and $\ell_p$ for $p>2$) in terms of the doubling dimension has received considerable attention in the literature. 
A major open problem in this space is to determine whether doubling metrics embed into $\ell_2$ with constant dimensions and constant distortion
\cite{lang2001bilipschitz, gupta2003bounded, lafforgue2014doubling, bartal2015impossibility, le2018isometric, tao2020embedding, baudier2021no}.

Similar in spirit to our study of structured subsets of $\ell_1$ that admit low dimensional, low distortion $\ell_1$ representations, is the study of graph metrics that embed into $\ell_1$ with constant distortion. A well-known open problem here is to determine whether graph metrics from minor-closed families (e.g. planar graphs) embed into $\ell_1$ with constant distortion
\cite{gupta2004cuts, rao1999small, chekuri2006embedding, lee2009geometry, lee2013pathwidth, sidiropoulos2013non, krauthgamer2019flow, filtser2020face, kumar2022approximate}.
Closer to our work, \cite{charikar2002dimension} studied classes of metrics that admit low-dimensional, low-distortion embeddings into $\ell_1$.
They focused on classes defined by metrics supported on families of graphs -- quite different from the new probability distribution based characterization initiated in our work.
They showed that metrics supported on outerplanar graphs, and more generally, $K_{2,3}$-free graphs, can be embedded into $\ell_1$ with $O(\log^2 n)$ dimensions and constant distortion.

Several other notions of dimension reduction are well studied in the literature. See
\cite{burges2010dimension, cunningham2015linear, bartal2019dimensionality, demaine2021multidimensional}.
Somewhat related to (but different from) the focus of this work is the question of approximating distance metrics on probability distributions with short vectors \cite{abdullah2016sketching, cicalese2016approximating, melucci2019brief}.

\section{Overview of Results}
\label{sec:overview}
\subsection{Embedding Probability Distributions}
In the context of understanding the classes of distributions that are amenable to low-dimensional $\ell_1$ embeddings, our main result shows that it is possible to embed distance metrics over Tree Ising models:

\begin{restatable}[Embedding Tree Ising Models]{theorem}{theoremtreeising}
    \label{thm:tree-ising}
    Any distance metric over tree Ising models (i.e, $d_D(i,j) = c \cdot \Pr_{X \sim D}[X_i \ne X_j]$, where $D$ is a tree Ising model distribution on $n$ random variables $X_1,\dots,X_n$) can be embedded into $(\R^{O(\log^2(n))}, \ell_1)$ with $\Theta(1)$ distortion.
\end{restatable}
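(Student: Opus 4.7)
My first step exploits the correlation decay identity for tree Ising models: for vertices $i, j$ in the tree,
$$1 - 2\Pr[X_i \neq X_j] \;=\; \prod_{e \in \mathrm{path}(i,j)} (1 - 2p_e),$$
where $p_e = \Pr[X_u \neq X_v]$ for the edge $e = (u,v)$ (this follows from the symmetric tree Ising model being a Markov chain along each path, with independent flip events at each edge). In the ferromagnetic case I set $w_e := -\tfrac{1}{2}\log(1 - 2p_e)$ and let $\mu_T(i, j) := \sum_{e \in \mathrm{path}(i,j)} w_e$, so that $\Pr[X_i \neq X_j] = \tfrac{1}{2}(1 - e^{-2\mu_T(i,j)})$. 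Using the elementary bounds $\tfrac{1}{2}\min(y, 1) \leq 1 - e^{-y} \leq \min(y, 1)$ for $y \geq 0$, this is $\Theta(\min(\mu_T(i, j), 1))$, so $d_D$ is $O(1)$-equivalent to the capped tree metric $\min(\mu_T, 1)$. To handle mixed ferro/antiferromagnetic edges, I use the bipartiteness of the tree to pick spin flips $\sigma_v \in \{\pm 1\}$ so that $Y_v := \sigma_v X_v$ defines a ferromagnetic tree Ising model. Then $d_X(i, j) = d_Y(i, j) \in [0, 1/2]$ when $\sigma_i = \sigma_j$, and $d_X(i, j) = 1 - d_Y(i, j) \in [1/2, 1]$ otherwise. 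I prepend a single coordinate $\sigma_v/2$ to the embedding: it contributes $0$ to the $\ell_1$ distance of same-sign pairs (leaving the capped tree embedding intact) and $1$ to opposite-sign pairs (whose true distance is already $\Theta(1)$), so $O(1)$ distortion is retained.

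\textbf{Embedding capped tree metrics.} It then suffices to embed $\min(\mu_T, 1)$ into $\R^{O(\log^2 n)}$ under $\ell_1$ with $O(1)$ distortion. The classical isometric edge-cut embedding uses $n-1$ dimensions and represents $\mu_T$ rather than its cap, so it must be modified on both counts. My plan is a hierarchical decomposition of the tree producing $O(\log n)$ levels, e.g.\ a recursive centroid decomposition. At each level I allocate $O(\log n)$ coordinates that handle the distances ``crossing'' that level, via a combination of truncated edge-cut coordinates and a small number of random-partition indicators sufficient to separate any two pieces with constant probability. This gives $O(\log^2 n)$ coordinates in total. The cap is exploited level-by-level: within a current subtree, finer distances are deferred to the recursion, while pairs separated at this level contribute $\Theta(1)$ through a bounded collection of indicator coordinates rather than accumulating an unbounded sum of edge weights.

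\textbf{Main obstacle.} The difficulty is controlling two competing regimes simultaneously with the same coordinate system. For pairs with $\mu_T(i, j) \ll 1$, the embedded distance must behave like the uncapped tree metric $\mu_T(i, j)$, which pushes toward faithful edge-cut-style coordinates whose contributions simply add. For pairs with $\mu_T(i, j) \gg 1$, the embedded distance must saturate at $\Theta(1)$, which pushes toward aggressive truncation. I expect the resolution to require coupling the truncation scale at each level of the decomposition to the effective diameter of the subtree at that level, and then arguing by a telescoping sum across levels that the total contribution reproduces $\Theta(\min(\mu_T(i, j), 1))$. The combinatorial analysis of this telescoping -- simultaneously proving a lower bound in the capped regime (that the embedded distance does not collapse below $\Theta(1)$ despite per-edge truncation) and an upper bound in the small-distance regime (that contributions from irrelevant cuts at coarser scales do not inflate the embedded distance beyond $O(\mu_T(i,j))$) -- is where I anticipate the bulk of the technical work to lie.
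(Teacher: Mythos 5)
Your correlation-decay identity
$1 - 2\Pr[X_i \ne X_j] = \prod_{e \in \mathrm{path}(i,j)}(1 - 2p_e)$
is valid only for \emph{symmetric} tree Ising models (no external field, $\gamma_i \equiv 0$): it relies on every edge acting as a binary symmetric channel so that the flip events along the path are i.i.d.\ and $\Pr[X_i \ne X_j]$ is determined by a single accumulated parameter. You even say this yourself parenthetically. But the theorem is about arbitrary tree Ising models, where marginals $\Pr[X_i = 0]$ can be arbitrary and the conditional channels along edges are not symmetric, so this identity simply does not hold. Your spin-flip bipartiteness trick only gauges away the \emph{signs} of the couplings $\beta_{ij}$; it does nothing about the external-field parameters $\gamma_i$, which persist under any $\pm 1$ relabeling. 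Concretely, if $\Pr[X_i = 0] = 0.99$ and $\Pr[X_j = 0] = 0.01$ then $\Pr[X_i \ne X_j]$ is near $1$ regardless of edge parameters, a regime your reduction cannot represent. The paper's route for the general case (\Cref{sub:recution}) is structurally different and substantially harder: it decomposes $\Pr[X_i\ne X_j]$ into a marginal-difference term, a ``Bernoulli randomness'' term, and a negative-correlation term, and the Bernoulli-randomness piece leads to a \emph{Lipschitz}-capped tree metric whose cap $M(x_i) = b(i) = \min(\Pr[X_i=0],\Pr[X_i=1])$ varies from vertex to vertex. Your reduction produces only a \emph{fixed}-cap metric, which is exactly the warm-up case the paper handles for symmetric models in \Cref{sec:timwef}.

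\textbf{Second thrust is only a sketch, and the vertex-dependent cap is the crux.} Your plan for embedding the capped tree metric (recursive centroid decomposition, per-level truncated edge cuts, random partition indicators, a telescoping argument you candidly call the ``bulk of the technical work'') is not worked out, and it is aimed at a fixed cap. Even if that sketch could be completed, it would establish at most the symmetric-TIM case. For the general theorem you additionally need to embed a tree metric whose cap varies smoothly (Lipschitzly) along the tree, and a key difficulty there — one the paper explicitly calls out and solves with its Build-Clean-Tree algorithm — is that naive hashing/bucketing of edges from disparate parts of the tree into a common line instance destroys the Lipschitz structure of the cap, so the line-metric subroutines can no longer be applied. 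Any proposal for this theorem needs to confront both obstacles (handling external fields, and embedding the resulting vertex-dependent cap), and yours currently handles neither.
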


We also show that this characterization of metrics with distributions is expressive enough to capture classical lower bounds of \cite{brinkman2005impossibility}. In particular, there exist Bayesian networks of treewidth 3 that are impossible to embed into low-dimensions with small distortion.

\begin{restatable}[Treewidth 3 lower bound]{claim}{claimlowerbound}
    \label{claim:treewidth-3-lower-bound}
    There exists a Bayesian network $D$ on $N$ binary random variables $X_1,\dots,X_N$ whose underlying undirected graph has treewidth 3, such that embedding the metric $d_D(X_i, X_j)=\Pr_D[X_i \neq X_j]$ into $\ell_1$ with $O(1)$ distortion requires $N^{\Omega(1)}$ dimensions.
\end{restatable}

The proof of our positive result \Cref{thm:tree-ising} follows two primary thrusts. First, in \Cref{sub:recution} we characterize that distance metrics corresponding to tree Ising models can approximately have distances between $X_i,X_j$ decomposed into three sources:

\begin{itemize}
    \item The difference in the marginals between $X_i,X_j$, or $|\Pr[X_i = 0] - \Pr[X_j=0]|$.
    \item The amount of independence between $X_i$ and $X_j$, which we formalize by a notion called ``Bernoulli randomness'' in \Cref{sub:bern}. This simply-defined quantity behaves intuitively similar to Shannon entropy in how it measures randomness, yet has smoothness that makes it more amenable to the $\ell_1$ setting.
    \item The strength of any negative correlation between $X_i$ and $X_j$.
\end{itemize}

Perhaps most interestingly, we sketch here how to embed distance originating from the ``independence'' between two variables $X_i$ and $X_j$. For simplicity, suppose that the other two sources of distance are zero for $X_i$ and $X_j$ (meaning, their marginals are the same and they are not negatively correlated). Then, one could roughly view the joint distribution over the $X_i,X_j$ by some process where first $X_i$ is realized, then with probability $p$ we sample $X_j$ independently, and with probability $1-p$ we set $X_j=X_i$. The parameter $p$ roughly corresponds to the independence between $X_i$ and $X_j$: if $p=0$ then they are completely dependent, and if $p=1$ they are completely independent. In this setting, we find that the distance metric $\Pr_D[X_i \ne X_j]$ is roughly equal to %
$\Theta(1) \cdot p \cdot M(x_i)$, where $M(x_i)=\min(\Pr[X_i=0], \Pr[X_i=1])$ is a quantity that measures how far $X_i$ is from being completely deterministic (i.e. biased towards one value).
We later non-trivially determine a weight $w_e$ for each edge in the tree (in terms of ``Bernoulli randomness''), such that, in this setting, if we were to think of the path connecting $X_i$ and $X_j$,
$\Pr_D[X_i \ne X_j] \approx \min(\sum_{e: i \rightarrow j} w_e, M(x_i))$.
One interpretation of this is that independence can accumulate along a path in the tree, but at some point the effect of independence is capped by the amount of randomness in the marginals of $X_i,X_j$. 
We are able to generalize the usage of this idea, to reduce the task of embedding tree Ising models into $\ell_1$ to the task of embedding a truncated tree metric, where the truncation parameter (we will call these ``caps'') is a function of the vertices. In particular, $\dlcap{M}(x_i,x_j) \triangleq \min(\sum_{e: i \rightarrow j} w_e, \max(M(x_i),M(x_j))$. Here, $M(x_i),M(x_j)$ are ``caps'' for each vertex, which, in the context of tree Ising models, ultimately end up being related to the biases of the variables. Crucially, our caps satisfy a Lipschitz property $|M(x_i) - M(x_j)| \le d(x_i,x_j)$. We next discuss our second thrust that studies how to embed this metric.

\subsection{Truncated Tree Metrics}

We saw that one of our main challenges in proving \Cref{thm:tree-ising} ends up being able to embed truncated tree metrics. As a warm-up (used in our proof for symmetric tree Ising models), we first study truncated tree metrics with a fixed cap. This metric is of independent interest; Lemma 1 of \cite{filtser2020face} provides a high-dimensional embedding using the results of \cite{chakrabarti2008embeddings}. We show a low-dimensional embedding:

\begin{restatable}[Fixed cap tree into $\ell_1$]{theorem}{theoremfixedcaptree}
    \label{thm:capped-tree}
    Let $T_n$ be an undirected edge-weighted tree on $n$ vertices $x_1,\dots,x_n$, equipped with the standard graph metric, i.e., $d(x_i, x_j)=\sum_{e:i \to j}e$, where $e:i\to j$ indexes the (weights) of all edges $e$ on the path from $x_i$ to $x_j$. Let $\dcap{M}$ be the distance function defined as $\dcap{M}(x_i,x_j) = \min(d(x_i, x_j), M)$.
    Then, $(T_n, \dcap{M})$ can be embedded into $(\R^{O(\log^2(n))}, \ell_1)$ with $\Theta(1)$ distortion.
\end{restatable}

Embedding this fixed cap tree metric will leverage three main ingredients: (i) truncated line metrics, (ii) the tree metric embedding of \cite{charikar2002dimension}, and (iii) a new truncated sparsity condition.

\emph{Truncated line metrics. } Truncated line metrics with a fixed cap are a special case of fixed cap tree metrics, and we will later leverage them as a subroutine. Lemma 2.1 of \cite{abraham2022metric} designs an embedding for truncated line metrics with a probabilistic Sawtooth function. While this is sufficient for the required subroutine for fixed cap tree metrics, in \Cref{sec:capped-line} we design an alternative embedding called \emph{lazy snaking} that is more conducive to our later result on truncating general $\ell_1$ metrics (\Cref{thm:general-cap}).

\emph{Tree metric embedding of \cite{charikar2002dimension}. } The work of \cite{charikar2002dimension} designs an embedding for tree metrics with $O(1)$ distortion and $O(\log^2(n))$ dimension. Summarizing their techniques, they first provide a simple isometric embedding of dimension $O(n)$. They then leverage the \emph{caterpillar decomposition} of trees to provide an isometric embedding of dimension $O(n)$ that is $O(\log(n))$-sparse (meaning each vertex embedding has at most $O(\log(n))$ nonzero entries). Moreover, they remark that any $d$-sparse embedding can be converted to a dimension $O(d \log(n))$ embedding with $O(1)$ distortion, by considering an embedding that hashes every coordinate into just $d$ coordinates, constructing a $d$-dimensional embedding where the $i$-th coordinate is a line metric corresponding to the total value hashed to the $i$-th coordinate, and then boosting the success probability $O(\log(n))$ times. Thus, as the caterpillar decomposition provided a $O(\log(n))$-sparse isometric embedding, they attain a $O(\log^2(n))$ dimension embedding with $O(1)$ distortion from their sparsity property.

\emph{Truncated sparsity condition. } Inspired by this approach, we introduce a new sparsity condition similar to the one leveraged for tree metrics that enables embedding a \textit{truncated} tree metric. Namely, for a cap value $M$ and target sparsity $d$, consider a modified sparsity condition such that: (i) each coordinate is bounded by $O(M/d)$ and (ii) for any pair $x_i,x_j$ there exists a subset of $O(d)$ coordinates such that the norm of the difference in embeddings restricted to just that subset of coordinates is within a constant factor of $\dcap{M}(x_i,x_j)$. Note how this condition is a generalization of the \cite{charikar2002dimension} sparsity condition when $M=\infty$. We show that, given this truncated sparsity condition, we can embed the truncated metric into $O(d \log(n))$ dimensions with $O(1)$ distortion. This is accomplished by a similar hashing method, where we replace the line metrics used in \cite{charikar2002dimension} with truncated line metrics of parameter $M/d$. Finally, we note that that the caterpillar decomposition embedding used in \cite{charikar2002dimension} does not naively satisfy our desired truncated sparsity condition, and we have to appropriately modify it to accomplish our requirements.

The above deals with fixed cap truncated tree metrics; however, to prove \Cref{thm:tree-ising}, we must embed a more nuanced truncated tree metric where the truncation is not fixed. We note that the techniques of Lemma 2 of \cite{filtser2020face} can enable a \textit{high-dimensional} embedding for this version of the problem. We show a low-dimensional embedding:

\begin{restatable}[Embedding Lipschitz-Capped Trees]{theorem}{theoremlipschitzcappedtree}
 \label{thm:lipschitz-capped-tree}
    Let $T_n$ be an undirected edge-weighted tree on $n$ vertices $x_1,\dots,x_n$, equipped with the standard graph metric, i.e., $d(x_i, x_j)=\sum_{e:i \to j}e$, where $e:i\to j$ indexes the (weights) of all edges $e$ on the path from $x_i$ to $x_j$. Consider a cap function $M(\cdot)$ that assigns a nonnegative cap to every vertex $x_i$ in the tree, such that cap values satisfy $|M(x_i)-M(x_j)| \le d(x_i, x_j)$ for all $i,j$. Let $\dlcap{M}$ be the distance function defined as
    \begin{align}
        \dlcap{M}(x_i, x_j) = \min(d(x_i, x_j), \max(M(x_i), M(x_j))).
    \end{align}
    Then, $(T_n, \dlcap{M})$ can be embedded into $(\R^{O(\log^2(n))}, \ell_1)$ with $\Theta(1)$ distortion.
\end{restatable}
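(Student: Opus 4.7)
The plan is to follow the classical recipe for low-dimensional $\ell_1$ embeddings of tree metrics (centroid/separator decomposition combined with randomized cuts, in the spirit of \cite{charikar2002dimension}) and modify the cut construction so that each coordinate respects the caps $M(\cdot)$. First, I would build a centroid decomposition of $T_n$: recursively pick a centroid $c$ whose removal splits the tree into components each of size at most $n/2$, and recurse on each component. This produces a hierarchy of depth $O(\log n)$, and for every pair $(x_i, x_j)$ there is a unique ``top'' centroid $c$ in this hierarchy such that the path from $x_i$ to $x_j$ passes through $c$, with $d(x_i, x_j) = d(x_i, c) + d(c, x_j)$.

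For each level of the hierarchy, I would construct an $\ell_1$ sub-embedding into $O(\log n)$ dimensions that captures distances between pairs whose top centroid lies at that level; pairs internal to a single child subtree are handled at deeper levels of the recursion. Since different centroids at the same level govern disjoint vertex sets, coordinates can be shared across them without interference. With $O(\log n)$ levels and $O(\log n)$ coordinates per level, this yields the target $O(\log^2 n)$ total dimensions.

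The heart of the argument is the per-centroid construction. Fix a centroid $c$, and for each descendant $x_i$ write $d_i := d(x_i, c)$ and $M_i := M(x_i)$. For cross-subtree pairs $(x_i, x_j)$ the target quantity is
\[
    \min\bigl(d_i + d_j,\ \max(M_i, M_j)\bigr).
\]
I would discretize the $M$-values into $O(\log n)$ dyadic scales $r_k = 2^k$, and at each scale add a small bundle of coordinates that contributes $\Theta(\min(d_i + d_j,\ r_k))$ when both endpoints are ``active'' at scale $r_k$ (meaning $M \gtrsim r_k$), contributes $\Theta(r_k)$ when exactly one endpoint is active, and is negligible when both are inactive. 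Concretely, at each scale $r_k$ I would use random side-of-cut coordinates coming from a padded tree decomposition of diameter $\Theta(r_k)$, restricted to the active set; averaging $O(\log n)$ independent partitions gives Chernoff-style concentration at each scale. The Lipschitz property of $M$ is used crucially here in two ways: (i) it forces the active set at scale $r_k$ to be a tree ``band'' whose boundary is at tree-distance $\gtrsim r_k$ from the inactive set, so the random partitions do not incur spurious boundary contributions; and (ii) it bounds $\max(M_i,M_j)$ by $\min(M_i,M_j) + d(x_i,x_j)$, which is what lets the scale-by-scale decomposition faithfully reconstruct $\max(M_i, M_j)$ in the saturated regime.

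I expect the main obstacle to be the contraction (lower) bound on the embedded distance in the regime $d(x_i, x_j) \approx \max(M_i, M_j)$, where the ``correct'' scale $r_k$ is delicate and the asymmetric case $M_i \gg M_j$ must still contribute $\Omega(\max(M_i,M_j))$ even though only one endpoint is active at the top scale. Controlling the cumulative contribution from $O(\log n)$ scales so that the total distortion remains $O(1)$ rather than $O(\log n)$---upper-bounded by a telescoping geometric sum and lower-bounded by singling out one decisive scale---is where I anticipate the analysis will require the most care, and where the Lipschitz slack on $M$ will be spent.
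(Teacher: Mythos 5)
Your approach is genuinely different from the paper's. The paper does not use centroid decomposition or multi-scale random cuts at all: it uses a caterpillar (heavy-light) decomposition whose long caterpillars are chopped into pieces of length $\le M(s(e))/(100\log n)$, and then runs a single ``Build-Clean'' pass down the tree with $8\log n$ coordinates, alternating between ``build'' phases (hash an edge to a random coordinate and walk that coordinate up) and ``clean'' phases (walk nonzero coordinates back to zero). Cleaning resets the embedding to $0^{|H|}$ after every complete clean phase, which is how the cap is enforced, and the Lipschitz property of $M$ is what guarantees that consecutive chopped pieces have comparable lengths so that a clean phase of $9\log n$ pieces always suffices to drain the at most $4\log n$ nonzero coordinates accumulated by the preceding build phase. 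This gives constant distortion in expectation with $O(\log n)$ coordinates, and $O(\log n)$ independent repetitions give the $O(\log^2 n)$ bound.

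There are two concrete gaps in your proposal as written. First, the dimension count is inconsistent: you assert $O(\log n)$ coordinates per level of the centroid hierarchy, but within each centroid you use $O(\log n)$ dyadic scales $r_k$, each with $O(\log n)$ independent partitions for Chernoff concentration. That is $O(\log^2 n)$ coordinates per level, hence $O(\log^3 n)$ total across $O(\log n)$ centroid levels, which overshoots the target. Second, and more seriously, the per-scale contribution you describe does not telescope into the claimed upper bound. If both endpoints are active at every scale $r_k \lesssim \min(M_i, M_j)$ and each such scale contributes $\Theta(\min(d_i+d_j, r_k))$ to the embedded distance, then in the regime $d_i + d_j \ll \min(M_i, M_j)$ the scales with $r_k \gtrsim d_i + d_j$ each contribute $\Theta(d_i+d_j)$, and there can be $\Theta(\log n)$ of them, so the embedding overestimates by a factor of $\Theta(\log n)$ rather than $O(1)$. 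To make the sum across scales geometric you would need the scale-$r_k$ contribution to decay as $r_k$ grows past $d_i + d_j$ (so that only the decisive scale is dominant), but a padded decomposition at diameter $\Theta(r_k)$ saturates its separation probability at $1$ once $d \gtrsim r_k$, which is exactly the regime producing the repeated $\Theta(d_i+d_j)$ contributions. Without an explicit downweighting or a windowing mechanism that cancels mass between consecutive scales, the upper bound side of the distortion fails; you flagged the lower bound in the saturated regime as the hard part, but the over-estimation from summing all active scales is the issue I see blocking the argument first.
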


Recall how our algorithm for the fixed cap tree metric involved taking a suitably modified version of the tree's caterpillar decomposition and reducing (by hashing) the embedding problem to a collection of fixed cap line metric problems. However, we cannot reduce the Lipschitz cap tree metric to a collection of Lipschitz cap line metrics in an obvious manner. Crucially, a direct modification of the prior approach would not work for Lipschitz cap tree metrics, primarily because segments that are hashed to a line metric instance are not contiguous segments of the tree, and thus \emph{the caps on these segments need no longer satisfy the necessary Lipschitz condition.} Consequently, we must design a new algorithm for this task. 

First, we observe that the primary issue in extending our prior techniques was how tree edges hashed to the same line metrics potentially have very different caps. If we view the algorithm hierarchically (considering how the embedding evolves as we progress downwards in the tree), it appears to be desirable to ``clean''/zero out the embedding somehow. If we clean the embedding frequently enough, we may expect that the only edges affecting some $\embed[x_i]$ would be parts of the tree very close to node $x_i$, and thus by the Lipschitz property, all the relevant edges would have roughly the same cap. Ultimately, we obtain an algorithm that is less directly similar to the previous approaches, but is conducive towards this notion of periodic cleaning. This result is ultimately achieved by a novel algorithm that we call the \textit{Build-Clean} approach (\Cref{sec:lipschitz-tree}).

\subsection{General Truncated $\ell_1$ Metrics}

Finally, we study how to truncate arbitrary $\ell_1$ metrics. This is likely a result of independent interest, as truncation has been leveraged in the $\ell_1$ embeddings literature for a variety of applications. For example, \cite{gupta2000constant} provides a custom construction for embedding the truncated line metric that attains $O(1)$ distortion \emph{only in expectation}. Later, \cite{abraham2022metric} proved a truncated line metric with $O(1)$ distortion and leveraged this for embedding weighted paths of bounded pathwidth. Truncation has been further leveraged for embedding planar graphs \cite{filtser2020face}. We show how to truncate any $\ell_1$ metric with $O(1)$ distortion and a $O(\log(n))$ dimension blowup:

\begin{restatable}[General Truncated $\ell_1$ Metrics]{theorem}{theoremgeneralcap}
    \label{thm:general-cap}
    Let $S=\{x_1,\dots,x_n\} \subseteq \R^d$ be $n$ points. Let $d(x_i,x_j)=\|x_i-x_j\|_1$, and for any fixed cap $M > 0$, consider the truncated $\ell_1$ metric $\dcap{M}(x_i,x_j)=\min(d(x_i, x_j), M)$. Then $(S, \dcap{M})$ can be embedded into $(\R^{O(d\log(n))}, \ell_1)$ with $\Theta(1)$ distortion.
\end{restatable}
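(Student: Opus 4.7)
The plan is to build the embedding by combining per-coordinate capped line embeddings (via Theorem~\ref{thm:lipschitz-capped-tree}) with a randomized global-cap enforcement that reconciles the sum of per-coordinate caps with the single global cap $M$.

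For each coordinate $k \in [d]$, the 1D projections $\{x_i^{(k)}\}_{i=1}^n$ with the metric $\min(|x_i^{(k)} - x_j^{(k)}|, M)$ form a capped path (tree) metric where the constant cap $M$ trivially satisfies the Lipschitz condition. Applying Theorem~\ref{thm:lipschitz-capped-tree} to each coordinate yields an embedding into $\ell_1^{O(\log^2 n)}$ with constant distortion. I would aim to show that, for the special case of paths with uniform caps, the Build-Clean argument simplifies by one logarithmic factor (since there is no branching), giving $O(\log n)$ dimensions per coordinate. Concatenating across the $d$ coordinates produces an $\ell_1^{O(d\log n)}$ embedding whose distance is $\Theta\!\left(\sum_{k=1}^d \min(|x_i^{(k)} - x_j^{(k)}|, M)\right)$. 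The inequality $\min(\sum_k a_k, M) \leq \sum_k \min(a_k, M) \leq d \cdot \min(\sum_k a_k, M)$ gives the correct lower bound immediately, but allows the upper bound to be loose by a factor of $d$ in the worst case (e.g., when every coordinate contributes exactly $M$).

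To close this gap, I would supplement the per-coordinate embedding with $O(\log n)$ additional randomized ``global-cap'' coordinates. For random centers $c_r$ drawn from $S$, define the $\ell_1$-radial clamp $\psi_r(x) = c_r + (x - c_r) \cdot \min(1, M/\|x - c_r\|_1)$, which projects $x$ onto the $\ell_1$-ball of radius $M$ around $c_r$ and contributes $d$ coordinates per center. Each $\psi_r$ enforces the global cap, since $\|\psi_r(x) - \psi_r(y)\|_1 \leq 2M$, and crucially, whenever $c_r \in \{x_i, x_j\}$ one has $\|\psi_r(x_i) - \psi_r(x_j)\|_1 = \min(\|x_i - x_j\|_1, M)$ exactly. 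A Chernoff concentration over the $\binom{n}{2}$ pairs, applied to the $O(\log n)$ clamped samples, lifts this pointwise bound into a uniform guarantee with high probability.

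The main obstacle is controlling the contribution of $\psi_r$ when $c_r$ is far from both $x_i$ and $x_j$, since $\psi_r(x_i)$ and $\psi_r(x_j)$ may then be close or even coincide on the boundary of the $\ell_1$-ball around $c_r$, giving no useful signal. Overcoming this likely requires a multi-scale sampling distribution for $c_r$ (analogous to Bourgain's choice of subset sizes), so that for every pair $(i, j)$ a constant fraction of the $O(\log n)$ samples lands at a favorable position relative to the pair. Combined with Step~1's $O(d\log n)$-dimensional contribution, this gives an overall $\ell_1^{O(d\log n)}$ embedding with $\Theta(1)$ distortion, matching the claim.
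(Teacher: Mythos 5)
Your plan has a structural flaw in Step 2 that cannot be repaired by Step 3, and this is not a minor gap but the central difficulty that the paper's actual proof is designed to handle.

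You correctly observe that concatenating per-coordinate capped embeddings yields a metric that is $\Theta\bigl(\sum_k \min(|x_i^{(k)}-x_j^{(k)}|, M)\bigr)$, which can overestimate $\min(\|x_i-x_j\|_1, M)$ by a factor of $d$. But then your fix is to \emph{supplement} this embedding with additional ``global-cap'' coordinates. In $\ell_1$, adding coordinates is monotone: it can only increase or preserve distances, never decrease them. Since the per-coordinate block already overestimates by $\Theta(d)$, no amount of auxiliary coordinates can bring the distortion down to $\Theta(1)$. The only escape is to rescale the per-coordinate block by $1/d$ so that it never overestimates, but then it underestimates by a factor of $d$ (consider $x_i, x_j$ differing by $M/d$ in every coordinate), so the entire burden of the lower bound falls on the $O(\log n)$ random-center coordinates — and, as you yourself note, you do not have a working construction for these. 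The $\psi_r$ clamp only gives the exact capped distance when $c_r$ coincides with one of the two points; for all other pairs the behavior is uncontrolled, and ``multi-scale sampling'' is a hope, not a proof.

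The paper avoids this obstruction entirely by not capping per coordinate at $M$. Instead it runs a coordinate-wise Build-Clean procedure that maps each axis into $\R^{O(d)}$ with the crucial \emph{structural} property that every coordinate of the resulting vector $v^{(q)}_i$ lies in $\{0, \tfrac{M}{100d}\}$ except at most one in $[0, \tfrac{M}{100d}]$. After concatenation, the $z_i = (v^{(1)}_i,\dots,v^{(d)}_i) \in \R^{O(d^2)}$ satisfy $\|z_i - z_j\|_1 \le \|x_i - x_j\|_1$ with probability 1 (no per-coordinate cap is imposed, so no overestimation of the uncapped distance is introduced) while also being ``almost Hamming'' with magnitudes bounded by $M/(100d)$. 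This bounded-magnitude sparsity is exactly what makes the subsequent hashing of $O(d^2)$ coordinates into $O(d)$ buckets plus lazy snaking with cap $M/d$ give the \emph{global} cap at $M$ without blowing up either direction. Your per-coordinate capped-line embedding (via \Cref{thm:lipschitz-capped-tree} or \Cref{thm:capped-line}) does not produce the $O(M/d)$ magnitude bound and so does not admit this hashing-plus-snaking step. The fix is not to add coordinates after the fact but to change what the per-coordinate embedding produces so that the cap can be enforced \emph{once, jointly} at the hashing stage.
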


We note that this $O(\log(n))$ dimension blowup is optimal upto constant factors. For example, consider how a line metric (by definition) can be embedded in $\ell_1$ using 1 coordinate. Observe how a uniform metric (where, say, every pair of points has distance 1) can be represented by a truncated line metric with a cap of $1$ and all points sufficiently far apart on the line. Moreoever, this uniform metric requires $\Omega(\log(n))$ dimension to obtain $O(1)$ distortion by a simple folklore volume argument. Thus, truncating the line metric requires a $\Omega(\log(n))$ dimension blowup and thus our general truncation result is optimal within constant factors.

Our result is attained by leveraging similar techniques to those used for embedding Lipschitz-Capped trees. Conceptually, we show that using these techniques enables one to map any metric in $\ell_1^d$ to a metric in $\ell_1^{d^2}$ that almost has similar distances to the truncated metric over the original points, yet satisfies the property that all coordinates have bounded magnitude of $O(\frac{M}{d})$, and that it almost resembles an embedding on the Hamming cube insofar as all but $O(d)$ coordinates are one of two distinct values. This satisfies our truncated sparsity condition, enabling us to finally reduce the dimension from $O(d^2)$ to $O(d \log(n))$ and obtain the desired truncated metric.

\section{Preliminaries and Notation}
\label{sec:preliminaries}

\subsection{Notation}
\label{sec:notation}
The notation $[n]$ denotes the integers $\{1,\dots, n\}$. $\R$ denotes the set of real numbers, and $\R_{\ge 0}, \R_{> 0}$ denote the set of nonnegative and strictly positive reals respectively. Random variables are denoted by capital letters (e.g., $X_i$), and points in metric spaces are denoted by lower case letters (e.g., $x_i$). We use array-indexing notation to index into coordinates of vectors. For example, if $x \in \R^d$, $x[k]$ denotes the $k^\text{th}$ coordinate of $x$. Points in a metric space will generally be denoted by $x_1,\dots,x_n$. We will often seek to obtain embeddings of these points into a lower dimensional space. We will denote the embedding of a point $x_i$ by $\embed[i]$, and sometimes by $\embed[x_i]$---both of these are supposed to stand for the same thing. Similarly, when talking about distances $d(x_i, x_j)$ between two points $x_i$ and $x_j$ in the metric space, we will interchangeably use both $d(x_i, x_j)$ and $d(i,j)$. We extensively make use of asymptotic notation $(\Omega(1), O(1))$ in the place of absolute constants. These constants will not depend on the problem parameters, unless explicitly specified. For example, at multiple places, we lower bound the probability of an event of interest by an absolute constant larger than 0, and to avoid tracking this constant, we use the $\Omega(1)$ notation. We emphasize that we do not require $n$ to be sufficiently large when we use this notation, where $n$ might be the number of joint random variables or points in the metric space.

\subsection{Tree Ising Models}
\label{sec:tim-prelims}
Given an underlying undirected tree on vertices $1,\dots, n$, let $i \sim j$ denote the existence of an edge between $i$ and $j$. Corresponding to this tree, a \textit{Tree Ising Model} defines a joint probability distribution $D$ over $\{0,1\}^n$ as follows:
\begin{equation}
    \label{eqn:tim-def}
    \Pr_D(X_1,\dots,X_n) \propto \exp\left(\sum_{i \sim j}\beta_{ij}\cdot(-1)^{X_i \oplus X_j}+\sum_{i}\gamma_i\cdot(-1)^{X_i}\right).
\end{equation}
Here, $\oplus$ denotes the XOR operation. The parameters $\beta_{ij}$ capture the \textit{pairwise interactions} between adjacent variables in the tree, while the parameters $\gamma_i$ specify the \textit{external field} on the individual variables. The Tree Ising Model given by \eqref{eqn:tim-def} for a given set of parameters $\{\beta_{ij}\}, \{\gamma_i\}$ can also be completely characterized by specifying the appropriate marginal probabilities for every variable $X_i$, and joint probabilities for every pair $(X_i, X_j)$ corresponding to an edge $i \sim j$ in the underlying tree. Under this specification, we can generate a sample from the model as follows: we arbitrarily root the tree at node 1, and direct all edges in the tree away from it. We realize a value for $X_1$ by sampling from its marginal distribution. Thereafter, for every directed edge $i \to j$, if we have realized the value of $X_i$ and haven't yet realized the value of $X_j$, we do so by sampling from the \textit{conditional} distribution $\Pr_D[X_j|X_i]$. In other words, tree Ising models are equivalent to arbitrary tree-structured Bayesian networks (with all edges directed away from the root) over binary alphabets. This equivalent characterization has been described and used in other prior works, e.g., \cite{daskalakis2021sample,pmlr-v195-kandiros23a}.

\subsection{Symmetric Tree Ising Models}
\label{sec:timwef-prelims}
If all the $\gamma_i$'s are 0 in the expression for the joint probability distribution in \Cref{eqn:tim-def}, we obtain a \textit{symmetric} tree Ising model, or a tree Ising model with \textit{no external field}. The joint distribution $D$ for these models is simply given by
\begin{equation}
    \label{eqn:timwef-def}
    \Pr_D(X_1,\dots,X_n) \propto \exp\left(\sum_{i \sim j}\beta_{ij}(-1)^{X_i \oplus X_j}\right).
\end{equation}
In a symmetric tree Ising model, $\Pr_D[X_i=0]=\Pr_D[X_i=1]=0.5$ for all $i \in [n]$. These models are widely studied in the literature---the assumption of no external fields makes the analysis of these tree-structured models significantly easier, while also capturing the central aspects of a number of problems. For example, \cite{bresler2020learning}, and more recently \cite{boix2022chow}, both study the Chow-Liu algorithm \cite{chow1968approximating} and its variants for \textit{learning} tree Ising models under the assumption of no external field---this assumption is necessary to make their analysis tractable. As we shall see, while we are able to obtain $\ell_1$ dimension reduction results more generally for tree Ising models even with an external field, our analysis for the symmetric case ends up being much simpler. %

We can identify a \timwef~uniquely given the underlying tree, and a single parameter $\theta_{ij}$ corresponding to every edge $i \sim j$ that specifies ``flip'' probabilities, where
\begin{equation}
    \label{eqn:timwef-edge-flip-map}
    \theta_{ij} = \Pr_D[X_i \neq X_j] = \frac{1}{1+\exp(2\beta_{ij})}.
\end{equation}
Given this characterization, we can generate a sample from the \timwef~as follows: we arbitrarily root the tree at node 1, and direct all edges away from the root. We first draw a uniformly random value in $\{0,1\}$ for $X_1$. Thereafter, for every directed edge $i \to j$, if we have realized the value of $X_i$ and haven't yet realized the value of $X_j$, we do so as follows: independently, with probability $\theta_{ij}$, we set $X_j = 1-X_i$, and with probability $1-\theta_{ij}$, we set $X_j=X_i$. A sample generated via this process has the same distribution as \Cref{eqn:timwef-def}, given the assignment to the $\theta_{ij}$'s as in \Cref{eqn:timwef-edge-flip-map}. In other words, symmetric tree Ising models are equivalent to tree-structured Bayesian networks (with all edges directed away from the root) over binary alphabets, where the conditional distributions for every edge $i \sim j$ satisfy the symmetries that $\Pr_D[X_j = 0 | X_i = 0] = \Pr_D[X_j = 1 | X_i = 1] = 1 - \Pr_D[X_j = 1 | X_i = 0] = 1 - \Pr_D[X_j = 0 | X_i = 1]$.

\subsection{Metric Spaces of Interest}
\label{sec:metrics-prelims}
We define explicitly the metric spaces that we study in our paper here. Recall that a metric space $(S, d)$ is defined by a set of points $S$, together with a distance function $d$ that maps pairs of points to nonnegative numbers, and satisfies (i) symmetry, (ii) triangle inequality and (iii) the property that every point has zero distance only to itself.

\begin{enumerate}
    \item $\ell_1$ metric: This metric space comprises of a set of $n$ points $\{x_1,\dots,x_n\} \subseteq \R^d$, where the distance function $d(x_i, x_j)=\|x_i-x_j\|_1=\sum_{k=1}^d|x_i[k]-x_j[k]|$.
    \item Tree metric: Consider an undirected edge-weighted tree $T_n$ on $n$ vertices $x_1,\dots,x_n$. Let $d(x_i, x_j)$ denote the standard graph distance on $T_n$, namely $d(x_i, x_j)=\sum_{e \in x_i \to x_j}e$, where the notation $e \in i \to j$ indexes the (weights on the) edges on the shortest path from $x_i$ to $x_j$.
    \item Line metric: This is simply a special case of the tree metric, where the tree is a line $L_n$ of vertices $x_1\to x_2 \to \dots \to x_n$. We can then instead think of the vertices as being points on the real line spaced according to the edge weights (with say $x_1$ at the origin).
    \item Tree Ising model metric: Given a tree Ising model $D$ on $n$ random variables $X_1,\dots,X_n$, the distance function $d_D$ is given by $d_D(X_i, X_j)=\Pr_D[X_i \neq X_j]$.\footnote{This is technically a \textit{pseudo-metric} space, because it may not satisfy property (iii) from above---two different variables $X_i$ and $X_j$ may both have $\Pr_D[X_i=1]=\Pr_D[X_j=1]=1$, in which case they have zero distance. We will not care too much about this  distinction.}
\end{enumerate}

We make a simple observation here: any $\ell_1$ metric can be alternatively viewed as $d(x_i,x_j) = c \cdot \Pr_D[X_i \neq X_j]$ where $D$ is \textit{some} distribution over $\{0,1\}^n$ (not necessarily a tree Ising model). To see this, let $x_1,\dots,x_n$ be $d$-dimensional points in $(\R^d, \ell_1)$. Now, consider shifting and scaling all the points, so that all coordinates of every point are in $[0,1]$. This preserves distances between points upto the constant scaling factor. %
Consider a joint distribution $D$ on $n$ binary-valued random variables $X_1,\dots,X_n$. A sample from this joint distribution is obtained as follows: first, we choose a coordinate $k \in [d]$ uniformly at random. Then, we sample a real number $p$ uniformly at random from the interval $[0,1]$. For each $i \in [n]$, we set $X_i = 1$ if $x_i[k] \le p$, and $0$ otherwise. Here, $x_i[k]$ is the $k^\text{th}$ coordinate of $x_i$. Then, observe that for any $i,j$,
\begin{align*}
    \Pr_D[X_i \neq X_j] &= \frac{1}{d}\sum_{k=1}^d\left|x_i[k]-x_j[k]\right| = \frac{1}{d}\cdot\|x_i-x_j\|_1.
\end{align*}
This observation motivates the main question in our work: is there a rich enough class of distributions $D$ such that $\ell_1$ metrics corresponding to that class embed well into low-dimensional spaces? In fact, while this distributional view in principle has enough expressive power to capture any $\ell_1$ metric, we also show how we can explicitly view the lower bound of \cite{brinkman2005impossibility} in this framework.

Finally, when we say that a metric space $(S_1, d_1)$ embeds into the metric space $(S_2, d_2)$ with distortion $\alpha > 0$, this means that there exists a function $\sigma:S_1 \to S_2$ and $\beta > 0$, which satisfies, for all pairs $x_i, x_j$ in $S_1$, the relation
\begin{align}
    \label{eqn:metric-embedding-def}
    \beta \cdot d_1(x_i, x_j) \le d_2(\sigma(x_i), \sigma(x_j)) \le \alpha\beta \cdot d_1(x_i, x_j).
\end{align}
\subsubsection{Fixed Cap Metrics}
\label{sec:fixed-cap-metrics-prelims}
 In our work, we will primarily be interested in studying \textit{capped} versions of metric spaces. Consider a metric space on a set $S$ with distance function $d$. As a start, consider a fixed cap $M \in \R_{\ge 0}$. The fixed cap metric space on $S$ corresponding to the fixed cap $M$ is given by the distance function $\dcap{M}$ defined as
\begin{equation}
    \label{def:fixed-cap-distance}
    \dcap{M}(x_i, x_j) = \min(d(x_i, x_j), M).
\end{equation}
The metric given by \Cref{def:fixed-cap-distance} has also been referred to as the \textit{truncated} metric in the literature, as mentioned in the introduction.

\subsubsection{Lipschitz Cap Metrics}
\label{sec:lipschitz-cap-metrics-prelims}
We can also consider metric spaces on $S$ where the cap varies across the different points in the space, albeit smoothly with the distance $d$. Concretely, let $M:S \to \R_{\ge 0}$ be a nonnegative cap function which satisfies the following \textit{Lipschitz} property: for any $x_i, x_j$\footnote{When talking about the cap at a point $x_i$, we will sometimes be loose and denote it interchangeably by $M(x_i)$ and $M(i)$.} in $S$,
\begin{equation}
    \label{eqn:lipschitz-property}
    |M(x_i)-M(x_j)| \le d(x_i, x_j).
\end{equation}
The Lipschitz cap metric space on $S$ corresponding to the Lipschitz cap function $M$ is given by the distance function $\dlcap{M}$ defined as
\begin{equation}
    \label{def:lipschitz-cap-distance}
    \dlcap{M}(x_i, x_j) = \min(d(x_i,x_j), \max(M(x_i), M(x_j))).
\end{equation}

\subsection{Caterpillar Tree Decomposition}
\label{sec:caterpillar-prelims}
Our low-dimension embedding for capped tree metrics crucially involves using a particular tree decomposition technique, known as the ``caterpillar" decomposition, or also the ``heavy-light" decomposition. This decomposition has been used in the past \cite{linial1998low,gupta1999embedding,gupta2001improved,charikar2002dimension} for embedding tree metrics into $\ell_1$. We briefly describe the caterpillar decomposition here. Given an undirected tree $T_n$ on $n$ vertices $x_1,\dots,x_n$, let us arbitrarily root the tree at $x_1$. The caterpillar decomposition then decomposes the edges of the tree into several disjoint vertical\footnote{A path is vertical if for every pair of two points on the path, one is an ancestor of the other.} paths called ``caterpillars", such that any root-to-leaf path touches at most $\log n$ caterpillars. Equivalently, we can walk up to the root from any node in the tree by traversing at most $\log n$ caterpillars. Every tree allows such a caterpillar decomposition, which can be computed in a fairly straightforward manner using a single depth-first search.

\section{Symmetric Tree Ising Models}
\label{sec:timwef}
We begin our analysis of embedding tree Ising models into $\ell_1$ with the simpler case of tree Ising models with no external field (\Cref{eqn:timwef-def}). The techniques we develop for embedding these models will end up being building blocks for embedding general tree Ising models that also have an external field.

\subsection{Symmetric Tree Ising Models Reduce to Fixed Cap Tree Metrics}
\label{sec:timwef-to-fixed-cap-tree}

Given a \timwef~$D$, the metric of interest that we want to embed into $\ell_1$ with few dimensions is
\begin{align*}
    d_D(i, j) &= \Pr_D[X_i \neq X_j].
\end{align*}

First, let us consider the \timwef~$D'$ with the same tree structure, but where the flip probabilities on the edges (\Cref{eqn:timwef-edge-flip-map}) are replaced from $\theta_{ij}$ to $\min(\theta_{ij}, 1-\theta_{ij})$, so that all the flip probabilities are in $[0,0.5]$. We will show later how we are able to reduce to this case without loss of generality---for now, let us assume this is possible. We have the following lemma:

\begin{lemma}[\timwef~with no ``bad" edges $\to$ fixed cap tree]
    \label{lem:timwef-no-bad-edges-to-capped-tree}
    Let $D'$ be a \timwef~where all the edge-flip probabilities are at most $0.5$. Define the following capped embedding:
    \begin{align*}
        \dcap{0.5}(i, j) &= \min\left(\sum_{e \in (i\to j)}\theta_{e},\quad 0.5\right),
    \end{align*}
    where the notation $e \in (i \to j)$ indexes the edges on the (undirected) path from $i$ to $j$ in the tree, and $\theta_e$ denotes the weight on the edge. Then, for all $i \neq j$,
    \begin{align*}
        0.5 \cdot d_{D'}(i, j) \le \dcap{0.5}(i, j) \le 8 \cdot d_{D'}(i, j).
    \end{align*}
\end{lemma}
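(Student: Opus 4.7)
The plan is to exploit the sampling process for symmetric tree Ising models: root the tree arbitrarily, sample the root uniformly in $\{0,1\}$, and for each directed edge $e$ independently flip the child's value relative to the parent with probability $\theta_e$. Under this process, $X_i$ and $X_j$ disagree if and only if an odd number of edge flips occur along the unique path from $i$ to $j$. By the standard identity for the parity of a sum of independent Bernoullis, this yields the closed form
\begin{equation*}
    d_{D'}(i, j) \;=\; \Pr_{D'}[X_i \neq X_j] \;=\; \tfrac{1}{2}\Big(1 - \prod_{e \in (i \to j)}(1 - 2\theta_e)\Big).
\end{equation*}
Abbreviate $S := \sum_{e \in (i \to j)} \theta_e$ and $P := \prod_{e \in (i \to j)}(1 - 2\theta_e)$. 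Since every $\theta_e \in [0, 1/2]$ by assumption, each factor $(1 - 2\theta_e)$ lies in $[0,1]$, so $P \in [0,1]$.

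The argument is then driven by two elementary sandwich inequalities on $P$. First, iterating $(1-a)(1-b) \ge 1 - a - b$ for $a, b \in [0,1]$ yields the ``union-bound'' estimate $P \ge 1 - 2S$. In the opposite direction, $1 - x \le e^{-x}$ gives $P \le e^{-2S}$. The proof then splits into two cases depending on which side of $1/2$ the sum $S$ lies on.

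\textbf{Case $S \le 1/2$:} here $\dcap{0.5}(i,j) = S$. The union-bound inequality directly gives $d_{D'}(i,j) = (1 - P)/2 \le S$, which yields $0.5 \cdot d_{D'}(i,j) \le \dcap{0.5}(i,j)$ with plenty of slack. Combining $P \le e^{-2S}$ with the elementary calculus fact $1 - e^{-x} \ge x/2$ on $[0,1]$ (equivalently, $1 - e^{-2S} \ge S$ on $[0, 1/2]$) yields $d_{D'}(i,j) \ge S/2$, so $\dcap{0.5}(i,j) = S \le 2 \cdot d_{D'}(i,j) \le 8 \cdot d_{D'}(i,j)$.

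\textbf{Case $S > 1/2$:} here $\dcap{0.5}(i,j) = 1/2$. The closed form immediately gives $d_{D'}(i,j) \le 1/2 = \dcap{0.5}(i,j)$, taking care of the lower bound. For the upper bound, the exponential inequality gives $P \le e^{-2S} \le e^{-1}$, so $d_{D'}(i,j) \ge (1 - e^{-1})/2 > 1/16$, hence $\dcap{0.5}(i,j) = 1/2 < 8 \cdot d_{D'}(i,j)$. I do not foresee any significant obstacle: once the product formula is in hand, the generous constants $1/2$ and $8$ allow one to use only the crudest available bounds on a product of terms in $[0,1]$.
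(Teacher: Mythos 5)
Your proof is correct, and it takes a genuinely different (and cleaner) route than the paper's. The paper reasons directly about the probability of an odd number of edge flips: it lower-bounds $d_{D'}(i,j)$ by $\Pr[\text{exactly one flip}]$ via a union bound and, when $\sum_e \theta_e > 1/2$, it needs an auxiliary prefix/suffix argument to locate a sub-path with flip-sum in $[1/4, 3/4)$ and argue that $d_{D'}(i,j) \ge 2^{-4}$ from there. You instead invoke the exact parity identity $d_{D'}(i,j) = \tfrac12\bigl(1 - \prod_e (1-2\theta_e)\bigr)$ and sandwich the product $P = \prod_e(1-2\theta_e)$ between $1-2S$ (Weierstrass) and $e^{-2S}$, where $S = \sum_e \theta_e$. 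This collapses the prefix argument entirely: in the case $S > 1/2$ you get $P \le e^{-1}$ and hence $d_{D'}(i,j) \ge (1-e^{-1})/2$ in one line. Both approaches land within the stated constants (and in fact yours shows the tighter bound $\dcap{0.5}(i,j) \le 2\,d_{D'}(i,j)$ in the first case, matching the paper). One small remark: the validity of $P \ge 1 - 2S$ and $P \in [0,1]$ both rely on every $\theta_e \le 1/2$ so that each factor $1 - 2\theta_e$ is nonnegative; you state this, but it is exactly the hypothesis that the bad-edge reduction is designed to enforce, so it is worth underlining as load-bearing.
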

\begin{proof}
    Fix $i,j$, %
    and let us think of the generative process that realizes $X_i$ and $X_j$. %
    Observe that $X_i \neq X_j$ if and only if we decide to flip realizations on an odd number of edges on the path from $i$ to $j$. Then, by the union bound,
    \begin{align*}
        &\Pr_{D'}[\text{at least one edge flip on $i \to j$}] \le \sum_{e \in (i \to j)} \theta_{e} \\
        \implies \qquad &\Pr_{D'}[\text{not a single edge flip on $i \to j$}] \ge 1-\sum_{e \in (i \to j)} \theta_{e}.
    \end{align*}
    Further, recall that we decide whether to flip on an edge independently of the others. We can then say that
    \begin{align*}
        \Pr_{D'}[\text{exactly one edge flip on $i \to j$}] &= \sum_{e \in (i \to j)} \Pr_{D'}\left[e \text{ flips}\right]\cdot \Pr_{p'}\left[\text{edges on $i \to j$ other than $e$ don't flip}\right] \\
        &\ge \sum_{e \in (i \to j)} \Pr_{D'}\left[e \text{ flips}\right]\cdot \Pr_{D'}\left[\text{all edges on $i \to j$ don't flip}\right] \\
        &\ge \left(1-\sum_{e \in (i \to j)} \theta_{e}\right) \sum_{e \in (i \to j)} \theta_{e}.
    \end{align*}
    We have two cases:\\
    \noindent Case 1: $\sum_{e \in (i \to j)} \theta_{e} \le 0.5$. \\
    In this case, $\dcap{0.5}(i, j) = \sum_{e \in (i \to j)}\theta_{e}$. We have that
    \begin{align*}
        d_{D'}(i, j) &= \Pr_{D'}[X_i \neq X_j] = \Pr_{D'}[\text{odd number of flips on $i \to j$}] \\
        &\ge \Pr_{D'}[\text{exactly one flip on $i \to j$}] \ge \underbrace{\left(1-\sum_{e \in (i \to j)} \theta_{e}\right)}_{\ge 0.5} \underbrace{\sum_{e \in (i \to j)} \theta_{e}}_{=\dcap{0.5}(i, j)} \\
        &\ge 0.5 \cdot \dcap{0.5}(i, j).
    \end{align*}
    Furthermore, observe that
    \begin{align*}
        d_{D'}(i, j) &= \Pr_{D'}[\text{odd number of flips on $i \to j$}] \le \Pr_{D'}[\text{at least one flip on $i \to j$}] \le \sum_{e \in (i \to j)} \theta_{e} = \dcap{0.5}(i, j).
    \end{align*}
    Together, we get that
    \begin{align*}
        d_{D'}(i, j) \le \dcap{0.5}(i, j) \le 2\cdot d_{D'}(i, j).
    \end{align*}
    \noindent Case 2: $\sum_{e \in (i \to j)} \theta_{e} > 0.5$. \\
    In this case, $\dcap{0.5}(i, j) = 0.5$. One direction of what we want to show is easy:
    \begin{align*}
        &d_{D'}(i, j) = \Pr_{D'}[X_i \neq X_j] \le 1 \\
        \implies \qquad & 0.5 \cdot d_{D'}(i, j) \le 0.5 = \dcap{0.5}(i, j).
    \end{align*}
    For the other direction, we want to lower bound $d_{D'}(i, j)$ by a constant. Observe that since $\sum_{e \in (i \to j)} \theta_{e} > 0.5$ and all $\theta_{e} \in [0,0.5]$, there must exist a prefix of edges on the path from $i$ to $j$ such that $0.25 \le \sum_{e \in \prefix} \theta_{e} < 0.75$. For this prefix, using the same arguments as above, we have that
    \begin{align*}
        \Pr_{D'}[\text{even flips on prefix}] &\ge \Pr_{D'}[\text{no flips on prefix}] 
        \ge 1-\sum_{e \in \prefix} \theta_{e} 
        > 1-0.75=0.25.
    \end{align*}
    \begin{align*}
        \Pr_{D'}[\text{odd flips on prefix}] \ge \Pr_{D'}[\text{exactly one flip on prefix}] &\ge \left(1-\sum_{e \in \prefix} \theta_{e}\right)\sum_{e \in \prefix} \theta_{e} \\
        &> (0.25)^2 = 2^{-4}.
    \end{align*}
    Therefore, we get that
    \begin{align*}
    d_{D'}(i, j) &= \Pr_{D'}[X_i \neq X_j] = \Pr_{D'}[\text{odd number of flips on $i \to j$}] \\
    &\hspace{-1cm}= \Pr_{D'}[\text{odd flips on prefix}]\Pr_{D'}[\text{even flips on suffix}] + \Pr_{D'}[\text{even flips on prefix}]\Pr_{D'}[\text{odd flips on suffix}] \\
    &\hspace{-1cm}\ge 2^{-4}\left(\Pr_{D'}[\text{even flips on suffix}] + \Pr_{D'}[\text{odd flips on suffix}]\right) = 2^{-4} = 2^{-3} \cdot \dcap{0.5}(i, j).
    \end{align*}
    Putting the two bounds together, we have in all
    \begin{align*}
        0.5\cdot d_{D'}(i, j) \le  \dcap{0.5}(i, j) \le 8 \cdot d_{D'}(i, j).
    \end{align*}
    This completes the proof of the lemma.
\end{proof}  

We will now see how we are able to reduce from an arbitrary \timwef~$D$ with edge-flip probabilities $\theta_{ij}\in [0,1]$ to a \timwef~$D'$ with edge-flip probabilities $\min(\theta_{ij}, 1-\theta_{ij}) \in [0,0.5]$.
Let us call an edge $e$ ``bad'' if $\theta_e > 0.5$, and ``good'' if $\theta_e \le 0.5$. Then, we have the following claim:
\begin{claim}
    \label{claim:bad-edges-timwef}
    For any pair of nodes $i,j$,
    \begin{align*}
    &\text{(1)  If the number of bad edges between $i$ and $j$ is even, then } d_{D}(i, j) = d_{D'}(i, j). \\
    &\text{(2)  If the number of bad edges between $i$ and $j$ is odd, then } d_{D}(i, j) \ge 0.5.
    \end{align*}
\end{claim}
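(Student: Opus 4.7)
The plan is to view $D$ and $D'$ as sharing the same underlying flip randomness along the unique path from $i$ to $j$, exploiting the observation that a \emph{bad} edge with $\theta_e > 1/2$ in $D$ is equivalent to a \emph{good} edge with flip probability $1 - \theta_e$ in $D'$ whose outcome has been complemented. Concretely, I would couple the edge-flip indicators so that $F_e = F_e'$ on every good edge and $F_e = 1 - F_e'$ on every bad edge. Letting $B$ denote the number of bad edges on the path from $i$ to $j$, this coupling immediately yields the parity identity
\[
    \sum_{e \in (i \to j)} F_e \;\equiv\; \sum_{e \in (i \to j)} F_e' \;+\; B \pmod{2}.
\]

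Next I would invoke the generative process from \Cref{sec:timwef-prelims} to equate $d_D(i,j)$ with $\Pr_D[\text{odd number of flips on } i \to j]$, and analogously for $d_{D'}$. The parity identity then gives (1) at once when $B$ is even: $d_D(i,j) = d_{D'}(i,j)$. When $B$ is odd, the parities are flipped, so $d_D(i,j) = 1 - d_{D'}(i,j)$. To finish (2), it suffices to show that $d_{D'}(i,j) \le 1/2$ whenever all edge-flip probabilities on the path are at most $1/2$. This can be proven by a one-line induction on path length, or more cleanly by the product identity
\[
    \Pr_{D'}[X_i \neq X_j] \;=\; \tfrac{1}{2}\Bigl(1 - \prod_{e \in (i \to j)}(1 - 2\theta_e')\Bigr),
\]
where $\theta_e' = \min(\theta_e, 1 - \theta_e) \in [0, 1/2]$; since each factor $1 - 2\theta_e' \in [0,1]$, the product is nonnegative, so $d_{D'}(i,j) \le 1/2$. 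Plugging this into $d_D = 1 - d_{D'}$ yields $d_D(i,j) \ge 1/2$.

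I anticipate no substantive obstacle here: the only real idea is recognizing the complementation coupling for bad edges, after which everything reduces to parity bookkeeping plus the elementary fact that a parity of independent flips each with probability at most $1/2$ is odd with probability at most $1/2$. If I want a unified one-line statement, I would package both cases as
\[
    d_D(i,j) \;=\; \tfrac{1}{2}\Bigl(1 - (-1)^{B}\prod_{e \in (i \to j)}(1 - 2\theta_e')\Bigr),
\]
from which (1) and (2) drop out by substituting the two parities of $B$.
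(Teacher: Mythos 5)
Your proposal is correct and ultimately rests on the same underlying fact as the paper's proof, namely the identity $1 - 2\,\Pr[\sum_e F_e \text{ odd}] = \prod_e(1-2\theta_e)$: the paper phrases it as $\E[\prod_e Y_e] = \prod_e \E[Y_e]$ for $\pm1$ variables and argues directly in $D$ by sign-counting the factors, while you first couple $D$ to $D'$ via complementation on bad edges (a cleaner packaging of the paper's parity case analysis for part (1)) and then apply the product identity inside $D'$ where all factors are nonnegative. Both are valid; your unified formula $d_D = \tfrac12\bigl(1-(-1)^B\prod_e(1-2\theta'_e)\bigr)$ makes the relationship between the two explicit, since $(-1)^B\prod_e(1-2\theta'_e)=\prod_e(1-2\theta_e)$.
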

\begin{proof}
    For the first part, observe that
    \begin{align*}
        d_{D}(i, j) &= \Pr_{D}[\text{odd number of flips on $i \to  j$}] \\
        &= \Pr_{D}[\text{odd flips on bad edges}]\Pr_{D}[\text{even flips on good edges}] \\ &\qquad+ \Pr_{D}[\text{even flips on bad edges}]\Pr_{D}[\text{odd flips on good edges}] \\
        &= \Pr_{D}[\text{odd flips on bad edges}]\Pr_{D'}[\text{even flips on good edges}] \\
        &\qquad+ \Pr_{D}[\text{even flips on bad edges}]\Pr_{D'}[\text{odd flips on good edges}],
    \end{align*}
    where the preceding equality follows because weights on good edges don't change from $D$ to $D'$. For the first part, observe crucially that if the number of bad edges between $i$ and $j$ is even, the parity of flips and non-flips on the bad edges is always the same. Hence, we can replace non-flips to flips from $D$ to $D'$, to obtain
    \begin{align*}
        &\Pr_{D}[\text{odd flips on bad edges}] = \Pr_{D}[\text{odd non-flips on bad edges}]= \Pr_{D'}[\text{odd flips on bad edges}] \\
        &\Pr_{D}[\text{even flips on bad edges}] = \Pr_{D}[\text{even non-flips on bad edges}] = \Pr_{D'}[\text{even flips on bad edges}].
    \end{align*}
    This gives us that
    \begin{align*}
        d_{D}(i, j) &= \Pr_{D'}[\text{odd flips on bad edges}]\Pr_{D'}[\text{even flips on good edges}] \\
        &\qquad+ \Pr_{D'}[\text{even flips on bad edges}]\Pr_{D'}[\text{odd flips on good edges}] \\
        &= \Pr_{D'}[\text{odd number of flips on $i \to  j$}] = d_{D'}(i,j).
    \end{align*}
    For the second part, let $Y_{e}$ be a $\pm1$ random variable, which is $-1$ if we flipped and $1$ if we did not flip on edge $e$ on the path from $i$ to $j$. Let $Y = \prod_{e \in (i \to j)}Y_{e}$. Then, we have that $Y=-1$ if and only if $X_i \neq X_j$. Observe that $d_D(i, j) = \Pr_D[X_i \neq X_j] = \Pr[Y=-1]$. Then, we have that
    \begin{align*}
        \E[Y] &= \prod_{e \in (i \to j)}\E\left[Y_{e}\right] = -d_D(i, j)+(1-d_D(i, j)) = 1-2d_D(i, j).
    \end{align*}
    For each good edge $e$, since $\Pr[Y_{e}=-1]\le 0.5$, we have that
    \begin{align*}
        \E\left[Y_{e}\right] &\ge 0.
    \end{align*}
    Similarly, for each bad edge, we have that
    \begin{align*}
        \E\left[Y_{e}\right] &< 0.
    \end{align*}
    Since we are under the case that the number of bad edges from $i$ to $j$ is odd, we get that
    \begin{align*}
        \E[Y] &= \prod_{e \in (i \to j)}\E\left[Y_{e}\right] \le 0 \implies d_D(i, j) \ge 0.5.
    \end{align*}
\end{proof}
Thus, if we are able to embed $\dcap{0.5}$ into $\ell_1$, with an additional coordinate which simply indicates if the number of bad edges on the root-to-node path is odd, we can achieve our goal of embedding $d_D$ into $\ell_1$.
\begin{claim}
    If $E$ is a constant distortion embedding of $\dcap{0.5}$ into $\ell_1$, then $E'$, defined as the concatenation
    $$E'[i] = (E[i], \Ind[\#\text{bad edges from root $X_1$ to $X_i$ is odd}])$$ is a constant distortion embedding of $d_D(i, j).$
\end{claim}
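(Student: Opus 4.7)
The plan is to split the analysis into the two cases of Claim~\ref{claim:bad-edges-timwef}, based on the parity of bad edges on the path from $i$ to $j$, and verify in each that $\|E'[i] - E'[j]\|_1 = \Theta(d_D(i,j))$. The key preliminary observation is that the new coordinate contributes exactly $\Ind[\#\text{bad edges on } i \to j \text{ is odd}]$ to $\|E'[i] - E'[j]\|_1$: letting $a_k$ denote the parity of bad edges on the root-to-$X_k$ path, the contribution equals $|\Ind[a_i] - \Ind[a_j]| = \Ind[a_i \neq a_j]$, and $a_i \oplus a_j$ is precisely the parity of bad edges on the path from $i$ to $j$, since contributions of bad edges on the root-to-LCA prefix cancel in the XOR.

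In the even case, the new coordinate contributes $0$, so $\|E'[i]-E'[j]\|_1 = \|E[i]-E[j]\|_1$, which by hypothesis is a constant-factor approximation of $\dcap{0.5}(i,j)$. By Lemma~\ref{lem:timwef-no-bad-edges-to-capped-tree} this is a constant-factor approximation of $d_{D'}(i,j)$, and by Claim~\ref{claim:bad-edges-timwef}, $d_{D'}(i,j) = d_D(i,j)$ in this case, yielding the desired approximation. In the odd case, the new coordinate contributes $1$, while $\|E[i]-E[j]\|_1 = O(\dcap{0.5}(i,j)) = O(1)$ since $\dcap{0.5}(i,j) \le 0.5$ always; hence $\|E'[i]-E'[j]\|_1 \in [1, O(1)]$. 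On the other side, Claim~\ref{claim:bad-edges-timwef} gives $d_D(i,j) \ge 0.5$, and trivially $d_D(i,j) \le 1$, so $d_D(i,j) = \Theta(1)$. Thus $\|E'[i]-E'[j]\|_1 = \Theta(d_D(i,j))$ in this case too.

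Combining the two cases gives a constant distortion embedding of $d_D$. The only mildly delicate point is matching the scale of the unscaled indicator with the scale of $E$: since both $\dcap{0.5}$ and the indicator are bounded by constants, one may without loss of generality normalize $E$ so that $\|E[i]-E[j]\|_1 = \Theta(\dcap{0.5}(i,j))$ with constants of order $1$, after which an unscaled indicator coordinate (or any fixed constant rescaling of it) is compatible with $E$, and the overall distortion is simply the max of the per-case distortion factors up to a small constant. I do not anticipate a genuine obstacle; the parity observation together with Lemma~\ref{lem:timwef-no-bad-edges-to-capped-tree} and Claim~\ref{claim:bad-edges-timwef} essentially delivers the conclusion immediately.
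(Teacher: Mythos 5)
Your proposal is correct and takes essentially the same route as the paper's own proof: the same even/odd case split driven by \Cref{claim:bad-edges-timwef}, the same use of \Cref{lem:timwef-no-bad-edges-to-capped-tree} in the even case, and the same two-sided $\Theta(1)$ argument in the odd case (where both $d_D(i,j)$ and $\|E'[i]-E'[j]\|_1$ lie in constant-width intervals). Your explicit parity/XOR justification for why the indicator coordinate contributes $\Ind[\#\text{bad edges on }i\to j\text{ is odd}]$, and your note about normalizing the scale of $E$ before concatenating a unit-magnitude coordinate, make explicit two points the paper treats as implicit, but the substance is the same.
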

\begin{proof}
    Given $D$, we obtain $D'$ where the edge-flip probabilities are at most $0.5$. If the number of bad edges on the path between $i$ and $j$ is even, then
    \begin{align*}
        &\Ind[\#\text{bad edges from root $X_1$ to $X_i$ is odd}] - \Ind[\#\text{bad edges from root $X_1$ to $X_j$ is odd}] = 0 \\
        \implies \qquad &\|E'[i]-E'[j]\|_1 = \|E[i]-E[j]\|_1.
    \end{align*}
    Furthermore, from part (1) in \Cref{claim:bad-edges-timwef}, we know that $d_{D}(i,j) = d_{D'}(i,j)$. Also, from \Cref{lem:timwef-no-bad-edges-to-capped-tree}, we know that $\dcap{0.5}(i, j)$ is constant distortion embedding of $d_{D'}(i, j)$. Thus, we are good if $E$ is a constant distortion embedding of $\dcap{0.5}$.

    If the number of bad edges on the path between $i$ and $j$ is odd, observe that
    \begin{align*}
        &\left|\Ind[\#\text{bad edges from root $X_1$ to $X_i$ is odd}] - \Ind[\#\text{bad edges from root $X_1$ to $X_j$ is odd}]\right| = 1 \\
        \implies \qquad &\|E'[i]-E'[j]\|_1 = \|E[i]-E[j]\|_1 + 1 \le \Theta(1) \cdot \dcap{0.5}(i,j) + 1 \le \Theta(1) \cdot 0.5 + 1 \le \Theta(1) \cdot 0.5 \\
        &\qquad\qquad\qquad \le \Theta(1) \cdot d_D(i,j),
    \end{align*}
    where the last inequality follows from part (2) in \Cref{claim:bad-edges-timwef}.
    Also, we have that
    \begin{align*}
        d_D(i, j) = \Pr_D[X_i \neq X_j] \le 1 \le 1 + \|E[i]-E[j]\|_1  = \|E'[i]-E'[j]\|_1,
    \end{align*}
    giving us that
    \begin{align*}
        d_D(i, j) \le \|E'[i]-E'[j]\|_1 \le \Theta(1) \cdot d_D(i, j).
    \end{align*}
    This completes the proof.
\end{proof}

Thus, we have argued that the crux of embedding the metric $d_D$ defined by the \timwef~into $\ell_1$ is obtaining a constant distortion embedding of the metric $\dcap{0.5}$ into $\ell_1$. %

\subsection{Fixed Cap Metrics}
\label{sec:fixed-cap}
We now turn our attention towards generally embedding fixed cap tree metrics into $\ell_1$. We begin with the special case of line graphs, and build up towards arbitrary tree metrics.
\subsubsection{Fixed Cap Line Metrics}
\label{sec:capped-line}
Recall that a line metric space $(L_n, d)$ simply corresponds to $n$ vertices (or rather locations on the real line) $x_1,\dots,x_n$, where each pair of consecutive vertices $x_i,x_{i+1}$ is connected by an edge of length $e_i \in \R_{>0}$, and $d(x_i, x_j)=\sum_{k=i}^{j-1}e_i$. For a strictly positive fixed cap $M \in \Z_{> 0}$, the corresponding fixed cap metric is
\begin{align}
    \dcap{M}(x_i, x_j) = \min(d(x_i, x_j), M).
\end{align}
The work of \cite{abraham2022metric} shows that $(L_n, \dcap{M})$ embeds into $\ell_1$ using $O(\log n)$ dimensions with constant distortion. We will present an alternative approach that also attains this results, both for sake of completeness, and because it lends itself better towards our later result of \Cref{thm:general-cap}. Let $loc[i]=d(x_i, x_1)$. Throughout what follows, we will identify every vertex $x_i$ with $loc[i]$ instead. The main technique involved to construct the embedding is a ``lazy snaking'' procedure, described in \Cref{algo:lazy-snake}.

\begin{algorithm}[t]
    \caption{Lazy Snaking for fixed cap line metric} \label{algo:lazy-snake}
    \hspace*{\algorithmicindent} 
    \begin{flushleft}
      {\bf Input:} List of $n$ node locations $loc[]$, cap $M$ \\
      {\bf Output:} List of embeddings for the nodes $embedding[]$ \\
    \end{flushleft}
    \begin{algorithmic}[1]
    \Procedure{LazySnake}{$loc, M$}:
    \State $t \gets 0$
    \While{$t \leq loc[n]$}
        \State $\Delta \sim \mathrm{Uniform}\{0,1\}$
        \If{$\Delta = 0$} \Comment{Rest for a duration $\frac{M}{4}$}
            \State $snake[t'] \gets 0$ for $t' \in \left[t, t+\frac{M}{4}\right]$
            \State $t \gets t + \frac{M}{4}$
        \EndIf
        \If{$\Delta = 1$} \Comment{Snake with a width $M$}
            \State $snake[t'] = t'-t$ for $t' \in \left[t, t+M\right]$ 
            \State $snake[t'] = 2M-(t'-t)$ for $t' \in \left[t+M, t+2M\right]$
            \State $t \gets t + 2M$
        \EndIf    
    \EndWhile
    \For{$i \gets 1$ to $n$}
        \State $embedding[i] = snake[loc[i]]$
    \EndFor
    \State {\bf return} $embedding[]$
    \EndProcedure
    \end{algorithmic}
\end{algorithm}

\begin{figure}[H]
    \centering
    \includegraphics[scale=0.42]{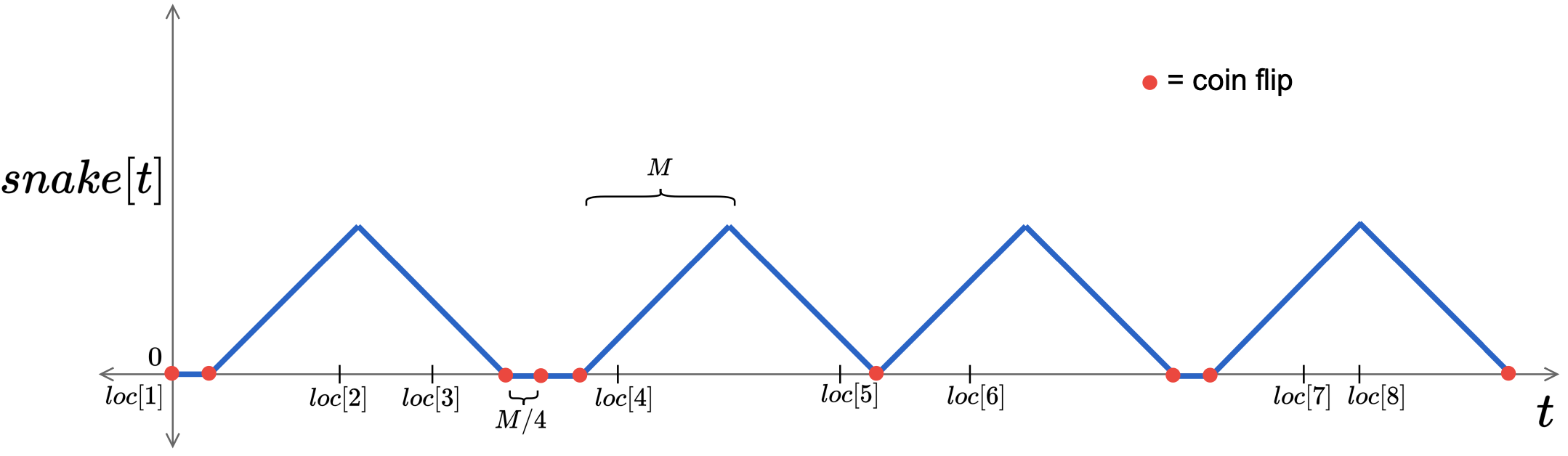}
    \caption{Lazy snaking with fixed cap.}
    \label{fig:lazy-snaking-fixed-cap}
\end{figure}

Similar techniques have previously appeared in the literature: for example, in \cite{abraham2022metric}, the ``sawtooth'' function is exactly a means of non-lazy snaking with a fixed period (i.e., $\Delta=1$ always in step 4 above), which similarly allows embedding fixed-cap metrics into $\ell_1$. Our version of \textit{lazy} snaking is more conducive towards deriving our general result (\Cref{thm:general-cap}) on truncated $\ell_1$ metrics.

We will require the following claim, which says that for every pair of nodes, lazy snaking does not overestimate their distance, and also covers at least a constant fraction of their distance in expectation. As a point of comparison, this claim is qualitatively similar to \cite[Lemma 2]{abraham2022metric}, but our proof is a different case analysis.

\begin{claim}
    \label{claim:lazy-snake-line-fixed-cap-lb}
    Let $embedding=\textsc{LazySnake}(loc, M)$. For every fixed pair of nodes $x_i$ and $x_j$,
    \begin{align*}
    &\text{(1)  } \Pr\left[|embedding[i]-embedding[j]| \le \dcap{M}(x_i, x_j)\right]=1. \\
    &\text{(2)  } \E\left[|embedding[i]-embedding[j]|\right]  \ge \Omega(1)\cdot \dcap{M}(x_i, x_j).
    \end{align*}
\end{claim}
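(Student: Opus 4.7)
The claim has two parts, which I will prove separately. Part (1) is a purely deterministic $1$-Lipschitz argument. Part (2) will proceed by a case split on the true distance $d := |loc[i]-loc[j]|$: a short-distance regime ($d$ is at most a fixed fraction of $M$) and a long-distance regime.

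\textbf{Part (1).} I will observe that $snake(\cdot)$, as built by \textsc{LazySnake}, is a piecewise-linear function with slopes only in $\{-1,0,+1\}$: slope $0$ on each rest piece, slope $+1$ on the up-half of each snake piece, slope $-1$ on the down-half. Moreover each rest piece outputs $0$, and each snake piece begins and ends at value $0$ (by construction $snake[t']=0$ at $t'=t$ and $snake[t+2M]=0$ at $t'=t+2M$). Hence $snake$ is continuous on $[0,loc[n]]$, is $1$-Lipschitz, and takes values in $[0,M]$. This immediately gives deterministically $|embedding[i]-embedding[j]|\le\min(d,M)=\dcap{M}(x_i,x_j)$.

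\textbf{Part (2), short-distance case.} Assume $d\le M/4$. Classify positions along $[0,loc[n]]$ into \emph{monotone pieces} (the up-half and down-half of each snake, each of length $M$) and \emph{flat pieces} (each rest, of length $M/4$). On any monotone piece, $snake$ has slope $\pm 1$, so any two points lying inside the same monotone piece realize $|snake[i]-snake[j]|=d$ exactly. Using the renewal structure of segment generation (each segment is i.i.d.\ rest-or-snake, and snake pieces occupy an $8/9$ fraction of the expected total length), a short computation shows that the probability that the interval $[loc[i],loc[j]]$ lies entirely inside a single monotone piece is at least a positive absolute constant whenever $d\le M/4$. This yields $\E[|embedding[i]-embedding[j]|]\ge\Omega(1)\cdot d=\Omega(1)\cdot\dcap{M}(x_i,x_j)$.

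\textbf{Part (2), long-distance case, and main obstacle.} Assume $d>M/4$, so $\dcap{M}(x_i,x_j)\in(M/4,M]$ and the goal is $\E[|embedding[i]-embedding[j]|]=\Omega(M)$. The key structural fact is that $snake$ returns to value $0$ at every segment boundary, and the segments generated after any boundary are probabilistically independent of the ones before. Assume $loc[i]<loc[j]$ and let $B$ be the right endpoint of the segment containing $loc[i]$. If $loc[j]>B$ (different segments), then conditioned on the run up to $B$ (which fixes $snake[i]\in[0,M]$), $snake[j]$ is distributed as a fresh independent snake value at offset $loc[j]-B$; on the constant-probability event that the first segment after $B$ is a snake and that $loc[j]-B$ lands in its middle portion, $snake[j]$ is concentrated near $M/2$, forcing $|snake[i]-snake[j]|\ge\Omega(M)$ on that event regardless of $snake[i]$. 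If instead $loc[j]\le B$ (same segment), that segment must be a snake piece (rest pieces have length $M/4<d$), and a direct averaging over the random shift of this segment's start relative to the pair shows $\E[|snake[i]-snake[j]|\mid\text{same snake piece}]=\Omega(M)$. The main obstacle is precisely this long-distance regime: $snake$ has a reflection symmetry about each peak (two points equidistant from a peak have identical values), so one cannot argue pointwise that large spatial $d$ yields a large embedded gap; the argument must exploit the fresh-segment independence and average carefully over the random placement of the snake structure relative to the fixed pair.
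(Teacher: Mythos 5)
Your Part~(1) is correct and is in fact cleaner than the paper's one-line justification: observing that $snake$ is $1$-Lipschitz (slopes in $\{-1,0,+1\}$) \emph{and} bounded in $[0,M]$ is exactly the right two-sided argument, giving $|embedding[i]-embedding[j]|\le\min(d,M)$ deterministically.

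Your Part~(2), however, has a real gap in the long-distance branch, and it comes from a different proof strategy than the paper uses. You condition on the entire run up to the boundary $B$, which pins down \emph{both} $snake[i]$ and the offset $loc[j]-B$, and then try to argue that the fresh post-$B$ randomness makes $snake[j]$ far from $snake[i]$. Two problems. First, the specific step is wrong: if ``$snake[j]$ is concentrated near $M/2$,'' this does \emph{not} force $|snake[i]-snake[j]|\ge\Omega(M)$ regardless of $snake[i]$ --- take $snake[i]=M/2$ and the difference is zero. Any argument of the form ``$snake[j]$ lands in some fixed range'' cannot work uniformly over $snake[i]\in[0,M]$. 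Second, even ignoring that, the offset $loc[j]-B$ is determined by the conditioning and need not lie in any ``middle portion''; it is entirely possible that $snake[i]$ is small \emph{and} $loc[j]-B$ is small (e.g.\ $loc[i]$ sits in a rest piece ending just before $loc[j]$), in which case $snake[j]$ is forced small as well and the fresh randomness cannot help. Your same-snake-piece subcase has a related issue: when $d$ is close to $2M$ the average of $|f(\phi)-f(\phi+d)|$ over the feasible phases $\phi\in[0,2M-d]$ is about $(2M-d)/2$, not $\Omega(M)$, so that branch alone also does not deliver the bound.

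The paper sidesteps all of this by \emph{not} conditioning on the full past. Instead it exhibits an explicit $\Omega(1)$-probability event in which the process rests long enough so that $embedding[i]=0$ exactly (Case 1 relaxes this to $embedding[i]\le M/4$), and then snakes at a time chosen so that $embedding[j]\ge 3M/4$. Conditioning on $embedding[i]=0$ removes the ``what if $snake[i]\approx M/2$'' adversary entirely, and the event is witnessed by $O(1)$ coin flips so has constant probability. Your approach could be salvaged --- for instance via $\E[|snake[i]-snake[j]|\mid snake[i]]\ge\tfrac12\E[|snake[j]-snake[j]'|]$ with $snake[j]'$ an i.i.d.\ copy, plus an anti-concentration estimate placing constant mass of $snake[j]$ both near $0$ and near $M$ --- but as written, the ``regardless of $snake[i]$'' step is a genuine error, not merely an omitted computation. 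The short-distance branch (landing $[loc[i],loc[j]]$ inside a single monotone piece with constant probability) is plausible and close in spirit to the paper's Case 1, but it too needs the paper's resetting device (`there is a coin flip at some $t\in[loc[i]-2M,loc[i]]$ with $snake[t]=0$'') to make the ``renewal structure'' claim rigorous for arbitrary $loc[i]$.
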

\begin{proof}
    The first part of the claim is straightforward, since snaking can only reduce distance between two nodes. \\
    For the second part, let $i < j$ without loss of generality. Let $E_1$ be the event that $snake=0$ for some $t \in [\max(0, loc[i]-2M), loc[i]]$. Since the width that we snake over (if at all we are not at rest) is $M$, observe that $\Pr[E_1]=1$.
    We have three cases: \\
    \noindent Case 1: $1 \le |loc[i]-loc[j]|\le M.$ \\
    In this case, $\dcap{M}(x_i, x_j) = |loc[i]-loc[j]|$.  Let $E_2$ be the event that  $snake=0$ for some $t'=[loc[i]-M/4, loc[i]]$. Then, $\Pr[E_2|E_1]\ge \left(\frac{1}{2}\right)^8 = \Omega(1)$. This is because, the time to reach $loc[i]$ from $t$ is at most $2M$, and in this duration, we flip the ``rest" coin at most $\frac{2M}{M/4}=8=O(1)$ times to arrive at such a $t'$. Now, conditioned on $E_2$, we have that with probability $1/2$, we choose to move and snake for a width $M$, and if that happens, we have that that $0 \le embedding[i] \le M/4$. Then, since $|loc[i]-loc[j]|\le M$, we will have $embedding[j] - embedding[i] \ge \frac{1}{2} (loc[j]-loc[i])$. This is because, in the worst case, $embedding[i]=M/4$, and $loc[j]=loc[i]+M$, in which case $embedding[j]=3M/4$ after snaking around once. Since we only had to condition on a constant number of coin flips to obtain this,
    \begin{align*}
        \E\left[|embedding[i]-embedding[j]|\right] &\ge \Pr[E_1,E_2]\cdot \E\left[|embedding[i]-embedding[j]| ~|~E_1,E_2\right] \\
        &\ge \Omega(1) \cdot |loc[i]-loc[j]|.
    \end{align*}
    \noindent Case 2: $M < |loc[i]-loc[j]|\le 4M.$ \\
    In this case, $\dcap{M}(x_i, x_j) = M$. Let $E_2$ be the event that $embedding[i]=0$ and $snake=0$ for some $t' \in [loc[j]-M, loc[j]-3M/4]$. Then, we have that $\Pr[E_2|E_1] \ge \Omega(1)$. This is because, to arrive at such a $t'$ starting from $t$ while also ensuring that $embedding[i]=0$, we may flip at most $\frac{2M+3M}{M/4}=O(1)$ ``rest" coins from time $t$ --- this happens with probability at least $\Omega(1)$. Thereafter, conditioned on $E_2$, the coin flips ``snake" with probability at least $1/2$, and this ensures that $embedding[j] \ge 3M/4$, which in turn gives $|embedding[i]-embedding[j]| \ge 3M/4$. Again, since we only had conditioned on $O(1)$, we get that
    \begin{align*}
        \E\left[|embedding[i]-embedding[j]|\right] &\ge \Pr[E_1,E_2]\cdot \E\left[|embedding[i]-embedding[j]| ~|~E_1,E_2\right] \\
        &\ge \Omega(1)\cdot M.
    \end{align*}
    \newline\noindent Case 3: $|loc[i]-loc[j]| > 4M.$ \\
    Here as well, $\dcap{M}(x_i, x_j) = M$. First, just as in Case 1 above, let $E_2$ be the event that  $snake=0$ for some $t'=[loc[i]-M/4, loc[i]]$, for which we saw that $\Pr[E_2|E_1]\ge \Omega(1)$. Note that $E_1,E_2$ together ensure that $embedding[i] \le M/4$. Now, let $E_3$ be the event that $snake=0$ for some $t'' \in [loc[j]-4M, loc[j]-2M]$. Since $|loc[i]-loc[j]| > 4M$, there exists such a $t'' > t'$ with certainty, for which we have not yet assumed any conditioning, i.e., $\Pr[E_3|E_2,E_1]=1$. Now, following $E_3$, with $\frac{3M}{M/4} = O(1)$ ``rest" flips (happens with probability at least $\Omega(1)$), we ensure that $snake=0$ up until some $t''' \in [loc[j]-M, loc[j]-3M/4]$. Thereafter, we may flip a ``snake", ensuring that $embedding[j] \ge 3M/4$. Putting together, we get
    \begin{align*}
        \E\left[|embedding[i]-embedding[j]|\right] &\ge \Pr[E_1,E_2, E_3]\cdot \E\left[|embedding[i]-embedding[j]| ~|~E_1,E_2, E_3\right] \\
        &\ge \Omega(1)\cdot M.
    \end{align*}
\end{proof}

Using $O(\log n)$ dimensions, we can then boost the in-expectation guarantee of \Cref{claim:lazy-snake-line-fixed-cap-lb} to obtain the following theorem.
\begin{theorem}[Fixed cap line into $\ell_1$]
    \label{thm:capped-line}
    $(L_n, \dcap{M})$ can be embedded into $(\R^d, \ell_1)$ where $d=O(\log n)$ with $\Theta(1)$ distortion.
\end{theorem}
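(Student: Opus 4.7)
The plan is to take $d = O(\log n)$ independent runs of the \textsc{LazySnake} procedure and concatenate their outputs (suitably normalized) into a single $d$-dimensional embedding $\sigma : L_n \to \R^d$. The upper bound on distortion will come for free from the almost-sure contraction property in part~(1) of \Cref{claim:lazy-snake-line-fixed-cap-lb}, while the matching lower bound will follow by amplifying the in-expectation guarantee of part~(2) via a standard concentration inequality, together with a union bound over all $\binom{n}{2}$ pairs.

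More precisely, I would let $snake^{(1)}, \dots, snake^{(d)}$ be $d$ independent outputs of $\textsc{LazySnake}(loc, M)$ and set $\sigma(x_i) = \tfrac{1}{d}\bigl(snake^{(1)}[loc[i]],\, \dots,\, snake^{(d)}[loc[i]]\bigr)$. For any pair $i \neq j$, write $Y^{(k)}_{ij} := |snake^{(k)}[loc[i]] - snake^{(k)}[loc[j]]|$. Part~(1) of the claim gives $Y^{(k)}_{ij} \in [0, \dcap{M}(x_i, x_j)]$ with probability one, so
\begin{equation*}
\|\sigma(x_i) - \sigma(x_j)\|_1 \;=\; \tfrac{1}{d}\sum_{k=1}^d Y^{(k)}_{ij} \;\le\; \dcap{M}(x_i, x_j),
\end{equation*}
which yields the upper bound with no loss and no probabilistic overhead.

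For the matching lower bound I would rescale: set $Z^{(k)} := Y^{(k)}_{ij}/\dcap{M}(x_i, x_j)$, which lies in $[0,1]$ by the contraction property above and whose mean is $\mu := \E[Z^{(1)}] = \Omega(1)$ by part~(2) of the claim. The $Z^{(k)}$ are i.i.d., so Hoeffding gives
\begin{equation*}
\Pr\!\left[\tfrac{1}{d}\sum_{k=1}^d Z^{(k)} \;<\; \tfrac{\mu}{2}\right] \;\le\; \exp(-\Omega(d)).
\end{equation*}
Choosing $d = c\log n$ for a sufficiently large absolute constant $c$ drives this below $1/n^3$, and a union bound over the $\binom{n}{2}$ pairs makes every pair simultaneously satisfy $\|\sigma(x_i)-\sigma(x_j)\|_1 \ge \Omega(1)\cdot \dcap{M}(x_i, x_j)$. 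Combined with the deterministic upper bound, this gives constant distortion in $O(\log n)$ dimensions.

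I don't expect a genuine obstacle here; the only subtle point worth flagging is that one might worry the concentration degrades when $\dcap{M}(x_i, x_j)$ is tiny compared to $M$, but the deterministic contraction $Y^{(k)}_{ij} \le \dcap{M}(x_i, x_j)$ already bounds the per-coordinate range \emph{by the target distance itself}, so rescaling produces a $[0,1]$-valued random variable whose mean is $\Omega(1)$ regardless of the scale of $\dcap{M}(x_i, x_j)$. This is precisely what makes Hoeffding's bound oblivious to the distance scale and lets a single setting $d = O(\log n)$ work uniformly across all pairs.
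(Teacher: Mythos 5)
Your proposal is correct and follows essentially the same route as the paper: $O(\log n)$ independent runs of \textsc{LazySnake}, the almost-sure contraction from part (1) of \Cref{claim:lazy-snake-line-fixed-cap-lb} for the upper bound, Hoeffding applied to the $[0,\dcap{M}(x_i,x_j)]$-bounded per-coordinate differences for the lower bound, and a union bound over all pairs. Your explicit normalization to $[0,1]$-valued variables $Z^{(k)}$ is a slightly cleaner way of saying what the paper does implicitly when invoking Hoeffding, but the argument is the same.
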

\begin{proof}
Let us invoke \textsc{LazySnake}$(loc, M)$ $N$ times to obtain $N$ independent copies $embedding_1,\dots,embedding_N$ for all the nodes. Let $X^{(l)}_{ij} = |embedding_l[i]-embedding_l[j]|$. Then, observe that $0 \le X^{(l)}_{ij} \le \dcap{M}(x_i, x_j)$. Furthermore, \Cref{claim:lazy-snake-line-fixed-cap-lb} gives that $\E\left[\sum_{k=1}^NX^{(l)}_{ij}\right] \ge cN\dcap{M}(x_i,x_j)$ for some constant $c > 0$. Hoeffding's inequality then gives us
\begin{align*}
    \Pr\left[\sum_{l=1}^N X^{(l)}_{ij} \le \frac{cN\dcap{M}(x_i,x_j)}{2}\right] \le 2\exp\left( -\Theta(N)\right),
\end{align*}
or rather, choosing $N = C\log n$ for an appropriately large constant $C$ gives
\begin{align*}
    \Pr\left[\frac{1}{N}\sum_{l=1}^N X^{(l)}_{ij} \le \frac{c}{2}\cdot\dcap{M}(x_i,x_j)\right] \le \frac{2}{n^3}.
\end{align*}
Furthermore, since the embedding never overestimates distances, observe that 
\begin{align*}
    \Pr\left[\frac{1}{N}\sum_{l=1}^N X^{(l)}_{ij} > \dcap{M}(x_i,x_j)\right] =0.
\end{align*}
We can therefore rescale each $embedding_k$ as $embedding_k \gets embedding_k/N$, and union bound over all the $< n^2$ pairs $i, j$ to obtain that
\begin{align*}
    \Pr\left[\forall i,j:~ \Omega(1)\cdot\dcap{M}(x_i,x_j) \le \sum_{l=1}^N|embedding_l[i]-embedding_l[j]| \le \dcap{M}(x_i, x_j)\right] > 0,
\end{align*}
which in particular, implies that there exists such an embedding that obtains $\Theta(1)$ distortion with $N=O(\log n)$ dimensions.
\end{proof}

\subsubsection{Fixed Cap Tree Metrics}
Consider now a tree $T_n$ on $n$ vertices $x_1,\dots,x_n$, for which recall that the distance is given by $d(x_i, x_j)=\sum_{e \in x_i \to x_j}e$. Again, we consider the (fixed) cap version of the tree metric distance, given by
\begin{align}
    \dcap{M}(x_i, x_j) = \min(d(x_i, x_j), M).
\end{align}
We will show that $(T_n, \dcap{M})$ also embeds into $\ell_1$ using $O(\log^2 n)$ dimensions and constant distortion. 

First, let us obtain the caterpillar decomposition (stated in \Cref{sec:caterpillar-prelims}) of $T_n$. Let $C_1,C_2,\dots,C_m$ denote the caterpillars in its caterpillar decomposition, where each $C_i=\{e_{i_1},\dots,e_{i_l}\}$ is a simple path in the tree. The length of $C_i$ is the sum of the weights of the edges that it is made up of. We will first quickly see how the caterpillar decomposition yields an isometric embedding of the tree metric into $\ell_1$, while possibly using a lot many dimensions. We will have a coordinate in the embedding dedicated for every caterpillar in the decomposition -- thus, the number of dimensions can be as large as the number of edges in the tree, i.e., $\Omega(n)$. For every node $x_i$, we walk up from $x_i$ to the root, and for every caterpillar that we touch (at most $\log n$ of them), we record the length traversed on that caterpillar at the coordinate dedicated for it. Thus, we will end up with a $\log n$-sparse embedding for every node. We can see that this is an isometric embedding, because for any pair of nodes, the lengths on the caterpillars from their least common ancestor up to root get canceled out when we take the difference of their embeddings, and only the lengths on caterpillars on the path joining them survive.

Now, we describe a slight modification to the above embedding, which possibly increases the dimensions even more, but still ensures that the embedding is isometric, and will be useful for constructing our final low-dimensional embedding for the capped tree metric with cap $M$. For any caterpillar $C_i$ whose length is at most $M$, we do nothing. For caterpillars having length larger than $M$, we split it up into snippets of size exactly $M/\log n$, and possibly one left-over snippet of size smaller than $M/\log n$. Having snipped each caterpillar like so, let $S_1,\dots,S_s$ be all the resulting snippets --- we will dedicate a single coordinate in the embedding to each $S_i$. As before, for every node, we walk up from the node to the root, and record the length of each snippet we traverse at the corresponding coordinate. Let this embedding be denoted by $\snipcat$ (short for Snipped Caterpillar) i.e., $\snipcat(x_i)$ is a vector of size $s$ for each node $x_i$. This embedding is still isometric, however, the number of snippets $s$ can now be huge, if a large number of the caterpillars had length $\gg M$.

We will now make a couple observations on the structure of the vector $\snipcat(x_i)-\snipcat(x_j)$ for any pair of nodes $x_i$ and $x_j$.

\begin{claim}
    \label{claim:sparsity-less-than-M}
    If $d(x_i, x_j) \le M$, then $\snipcat(x_i)-\snipcat(x_j)$ has at most $6\log n$ non-zero coordinates, and each of these coordinates is at most $M/\log n$ in magnitude.
\end{claim}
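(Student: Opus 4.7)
The plan is to exploit two structural properties of the construction: every snippet has size at most $M/\log n$, and by the caterpillar decomposition, the walk from any node up to the root crosses at most $\log n$ caterpillars. The nonzero coordinates of $\snipcat(x_i) - \snipcat(x_j)$ correspond exactly to snippets on which the upward walks from $x_i$ and $x_j$ traverse different lengths.

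First I would identify which snippets can actually contribute to the difference. Let $v$ denote the LCA of $x_i$ and $x_j$. On any caterpillar lying strictly above $v$, both walks enter at the same point (via the caterpillar containing $v$) and proceed identically up to the top, so every snippet contribution cancels coordinate-by-coordinate; caterpillars touched by neither walk contribute zero in both embeddings. The remaining caterpillars are exactly those intersecting the $x_i$-to-$x_j$ path $P$, and by the caterpillar property there are at most $2\log n$ of them ($\le \log n$ along the sub-walk from $x_i$ up to $v$, and similarly from $x_j$). For each such caterpillar $C$, one checks that only snippets overlapping $P \cap C$ can contribute: if $C$ lies strictly on $x_i$'s side of $v$, then $x_j$'s walk never touches $C$ and $x_i$'s walk traverses exactly $P \cap C$ before exiting; if $C$ contains $v$, the two walks agree on the portion of $C$ strictly above $v$ and differ precisely on $P \cap C$.

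Next I would count the contributing snippets per caterpillar. An interval of length $L_C$ lying in a caterpillar tiled by snippets of size at most $s = M/\log n$ overlaps at most $L_C/s + 2$ snippets (at most two partial snippets at its endpoints, plus intermediate ones). Summing over the at most $2\log n$ caterpillars of $P$ and using $\sum_C L_C = d(x_i, x_j) \le M$ gives
\begin{align*}
\sum_C \left(\frac{L_C}{s} + 2\right) \le \frac{d(x_i, x_j)\log n}{M} + 4\log n \le 5\log n \le 6\log n.
\end{align*}
The magnitude bound is then immediate: each nonzero coordinate records a full or partial traversal of a single snippet, and every snippet has length at most $M/\log n$.

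I expect the main subtlety to be the cancellation argument, particularly for the caterpillar $C$ that contains $v$: both walks overlap extensively on $C$, and one must verify that their contributions on snippets strictly above $v$ cancel exactly coordinate-by-coordinate (not merely in aggregate length). This holds because the snippets are fixed sub-intervals of $C$ and both walks contribute the full length of each snippet lying entirely above $v$.
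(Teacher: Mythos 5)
Your argument is correct and follows the same core structure as the paper's proof: both use the fact that only snippets on the $x_i$–$x_j$ path can contribute, the caterpillar property giving at most $2\log n$ caterpillars on that path, and $d(x_i,x_j)\le M$ to bound the number of full snippets. The only difference is bookkeeping — you bound contributions caterpillar-by-caterpillar ($L_C/s + 2$ per caterpillar, totaling $5\log n$), while the paper sorts nonzero coordinates into full snippets ($\le\log n$), unsnipped short caterpillars ($\le 2\log n$), and partial snippets ($\le 3\log n$); your organization is marginally tighter but the ideas are the same.
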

\begin{proof}
    Let $z=\snipcat(x_i)-\snipcat(x_j)$. The only non-zero coordinates in $z$ correspond to the caterpillar snippets on the path joining $x_i$ and $x_j$. First, let us count the number of full $M/\log n$ sized snippets on this path --- since $\|z\|_1=d(x_i, x_j)\le M$, we can have at most $\log n$ many such snippets. Now, let us count the number of non-zero coordinates contributed by snippets of length $< M/\log n$. These could either be due to caterpillars that were smaller than $M/\log n$, and never got snipped --- there could be at most $2\log n$ many of these, since the (unsnipped) caterpillar embeddings of both $x_i$ and $x_j$ are $\log n$ sparse. Or, these could be due to partial snippets on the path joining $x_i$ and $x_j$. Observe that we can have at most one partial snippet per whole caterpillar on this path (except possibly an extra at the least common ancestor), and thus, we can again only have $2\log n +1 \le 3\log n$ many of these in total. Together, we get $\log n + 2 \log n + 3\log n=6\log n$ many non-zero coordinates, as claimed. Finally, by construction, each coordinate in both $\snipcat(x_i)$ and $\snipcat(x_j)$ is at most $M/\log n$, and hence the difference can be at most $M/\log n$ in magnitude.
\end{proof}

\begin{claim}
    \label{claim:sparsity-greater-than-M}
    If $d(x_i, x_j) > M$, then $\snipcat(x_i)-\snipcat(x_j)$ contains a set of $\frac{3}{4}\log n$ non-zero coordinates, each of which is at least $\frac{M}{20\log n}$ and at most $\frac{M}{\log n}$ in magnitude, and the sum of these coordinates is at least $3M/4$.
\end{claim}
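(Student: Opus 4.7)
The plan is to show that almost all of the total weight $d(x_i,x_j) > M$ carried by $\snipcat(x_i) - \snipcat(x_j)$ lies in coordinates of ``medium'' magnitude, specifically in $[M/(20\log n), M/\log n]$. I would reuse the structural fact from the proof of \Cref{claim:sparsity-less-than-M}: the nonzero coordinates of $\snipcat(x_i) - \snipcat(x_j)$ are indexed by the snippets $S$ that meet the unique $x_i$-to-$x_j$ path $P$ in the tree, each has magnitude $|S \cap P|$, and these magnitudes sum to $d(x_i,x_j) > M$. Since every snippet has length at most $M/\log n$, every nonzero coordinate has magnitude at most $M/\log n$.

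The main quantitative step is to upper bound by $4 \log n$ the number of ``non-full'' coordinates, meaning those with magnitude strictly less than $M/\log n$. By the caterpillar decomposition, $P$ meets at most $2\log n$ caterpillars, and within each touched caterpillar $C$ the intersection $P \cap C$ is a contiguous interval with two endpoints. A snippet of $C$ fails to be full only if it contains one of these two endpoints (a partial overlap) or if it is the leftover snippet of $C$ (whose full length is less than $M/\log n$). The crucial observation is that the leftover is placed at one end of $C$, so whenever $P \cap C$ touches the leftover it must contain the leftover's outer end, i.e., the leftover is already one of the two endpoint-containing snippets and is never a third, distinct non-full snippet. Thus at most two non-full coordinates arise per caterpillar.

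With this count in hand, the weight calculation is short. Coordinates of magnitude strictly below $M/(20\log n)$ are a subset of the non-full coordinates, so there are at most $4\log n$ of them and they contribute total magnitude less than $4\log n \cdot M/(20\log n) = M/5$. The remaining coordinates all have magnitude in $[M/(20\log n), M/\log n]$, so their total magnitude exceeds $d(x_i,x_j) - M/5 > 4M/5 \geq 3M/4$; and since each such coordinate has magnitude at most $M/\log n$, there must be at least $(4M/5)/(M/\log n) = (4/5)\log n \geq (3/4)\log n$ of them, which is exactly the desired set.

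The main obstacle is the tight per-caterpillar bound of two non-full snippets. A naive count that treats the leftover as a third non-full snippet per caterpillar inflates the non-full count to $6\log n$ and the sub-threshold weight budget to $6\log n \cdot M/(20\log n) = 3M/10$, which is not small enough to leave the required $3M/4$ of weight in the medium range. The proof hinges on the geometric fact that the leftover lies at an endpoint of its caterpillar, so its only way to be hit by $P \cap C$ is to sit at an endpoint of $P \cap C$ itself.
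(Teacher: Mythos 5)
Your proof is correct and follows the same high-level strategy as the paper: upper-bound the number of ``non-full'' coordinates of $\snipcat(x_i)-\snipcat(x_j)$, argue that those with magnitude below $M/(20\log n)$ account for only a small fraction of $\|z\|_1 > M$, and conclude that the remaining mass must be carried by at least $\frac34 \log n$ coordinates in $[M/(20\log n), M/\log n]$. Where you differ is the accounting of the non-full coordinates. The paper uses a looser union bound of $5\log n$ (splitting into $\leq 2\log n$ unsnipped-short-caterpillar coordinates and $\leq 2\log n + 1$ ``partial'' coordinates, with the paper's budget of $5\log n \cdot \frac{M}{20\log n} = \frac{M}{4}$ giving exactly the required $3M/4$ with no slack). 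You instead do a cleaner caterpillar-by-caterpillar count: $P$ meets $\leq 2\log n$ caterpillars, $P\cap C$ is contiguous, so only the two endpoint snippets of each intersection can be non-full, and — the key observation you make explicit — the leftover snippet sits at an end of $C$, so it is never a third non-full snippet beyond the two endpoints. This gives $4\log n$ and leaves more slack ($4M/5$ of mass in the medium range, and $\frac45\log n$ coordinates, both exceeding what the claim requires). Your treatment of the leftover is actually more careful than the paper's: the paper's accounting (unsnipped short caterpillars plus ``partial'' full-sized snippets) does not obviously cover leftover snippets that are fully traversed, and the paper's $5\log n$ bound has no room to spare, so your tighter count is a genuine improvement in the bookkeeping even though the overall proof architecture is the same. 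One small nitpick: you phrase the leftover observation as ``$P\cap C$ must contain the leftover's outer end'' — this isn't literally true if $P\cap C$ terminates inside the leftover, but the conclusion you actually use (the leftover is one of the two endpoint snippets) holds regardless, since the leftover is at the boundary of $C$ and $P\cap C$ is contiguous.
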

\begin{proof}
    Again, let $z=\snipcat(x_i)-\snipcat(x_j)$. First, let us count the number of snippets on the path joining $x_i$ and $x_j$ that are less than $\frac{M}{20\log n}$ in absolute value. Either these arise due to unsnipped-caterpillars of length smaller than $M/\log n$ in the embeddings of $x_i$ and $x_j$ --- there can be at most $2\log n$ many of these; or these arise due to partial snippets on the path joining them --- there are again at most $2\log n+1\le3\log n$ many of these, as argued above. Thus, there can only be at most $5\log n$ many coordinates having magnitude smaller than $\frac{M}{20 \log n}$ in $z$. These coordinates account for length at most $M/4$ out of $\|z\|_1$. All the other coordinates necessarily have magnitude at least $\frac{M}{20 \log n}$, but also at most $M/\log n$ by construction. Since $\|z\|=d(x_i, x_j)>M$, these other coordinates have to account for at least $3M/4$ of the length. Thus, there needs to be a set of at least $\frac{3M}{4} \cdot \frac{\log n}{M}=\frac{3}{4}\log n$ many coordinates, each of which is at least $\frac{M}{20 \log n}$, and whose sum is at least $3M/4$ as required.
\end{proof}

Now, for $k=6\log n$, consider choosing a uniformly random hash function $h:[s]\to [k]$, where recall that $s$ is the total number of caterpillar snippets. For each node $x_i$, let $\hsnipcat(x_i)$ (short for Hashed Snipped Caterpillar) be a vector of size $6\log n$, defined in the following. This will be the building block of our final $O(\log^2 n)$ sized embedding.
\begin{align}
    \hsnipcat(x_i)[p]=\sum_{q \in [s]:h(q)=p}\snipcat(x_i)[q] \qquad \text{for } p \in \{1,\dots,k\}.
\end{align}
Now, we interpret each coordinate of $\hsnipcat$ as defining a line metric, over which we will do lazy snaking. Concretely, for each $p \in [k]$, let $loc_p[i] = \hsnipcat(x_i)[p]$ --- note that for the root $x_1$, $loc_p[1]=0$ for all $p$, since $\snipcat(x_i)=0^s$. Let $sloc_p=\textsc{LazySnake}(loc_p, M/\log n)$, and consider the representation of each node $x_i$ as 
\begin{equation}
    \label{eqn:fixed-cap-tree-embedding-def}
    embedding[i] = (sloc_1[i], sloc_2[i], \dots, sloc_k[i]).
\end{equation}    
First, we make the following simple claim:
\begin{claim}[No overestimation]
    \label{claim:fixed-cap-tree-no-overestimate}
    Fix any pair of nodes $x_i$ and $x_j$. Then,
     with probability 1,
     \begin{align*}
        \|embedding[i]-embedding[j]\|_1 \le 6\dcap{M}(x_i, x_j).
     \end{align*}
\end{claim}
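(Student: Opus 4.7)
The plan is to prove the bound with probability $1$ by combining two separate upper bounds on $\sum_{p=1}^{k} |sloc_p[i] - sloc_p[j]|$, one coming from the snipped caterpillar isometry and the other from the snaking cap, and then showing that the smaller of these two is always within a factor of $6$ of $\dcap{M}(x_i,x_j)$.

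First, I would apply \Cref{claim:lazy-snake-line-fixed-cap-lb}(1) to each of the $k = 6\log n$ independent invocations of lazy snake, each run with cap parameter $M/\log n$. This yields, with probability $1$ for every coordinate $p \in [k]$,
\begin{align*}
    |sloc_p[i] - sloc_p[j]| \le \min\bigl(|loc_p[i] - loc_p[j]|,\; M/\log n\bigr).
\end{align*}
Summing the second term in the minimum over all $p$ immediately gives the bound $\|embedding[i]-embedding[j]\|_1 \le k \cdot M/\log n = 6M$.

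Second, I would relate $\sum_p |loc_p[i] - loc_p[j]|$ to the tree distance via the hashing step. By definition, $loc_p[i] - loc_p[j] = \sum_{q: h(q) = p}(\snipcat(x_i)[q] - \snipcat(x_j)[q])$, so the triangle inequality inside the absolute value and the fact that $h$ partitions $[s]$ into the preimages $\{h^{-1}(p)\}_{p \in [k]}$ give
\begin{align*}
    \sum_{p=1}^{k} |loc_p[i] - loc_p[j]| \le \sum_{p=1}^{k}\sum_{q:h(q)=p}|\snipcat(x_i)[q] - \snipcat(x_j)[q]| = \|\snipcat(x_i)-\snipcat(x_j)\|_1 = d(x_i,x_j),
\end{align*}
using that $\snipcat$ is an isometric $\ell_1$ embedding of the tree metric. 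Summing the first term in the coordinate-wise minimum thus gives $\|embedding[i]-embedding[j]\|_1 \le d(x_i,x_j)$.

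Finally, combining the two bounds yields $\|embedding[i]-embedding[j]\|_1 \le \min(d(x_i,x_j), 6M)$. A short case check on whether $d(x_i,x_j) \le M$ or $d(x_i,x_j) > M$ shows $\min(d(x_i,x_j), 6M) \le 6\min(d(x_i,x_j), M) = 6\dcap{M}(x_i,x_j)$, completing the proof. There is no real obstacle here; the only thing to be careful about is that the hashing step is handled by the triangle inequality before the sum over $p$, so no randomness over $h$ is used in this direction (the bound holds for any hash function), which is why the guarantee is almost-sure rather than in expectation.
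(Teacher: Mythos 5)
Your proof is correct and is essentially the paper's argument reorganized: the paper splits into cases $d(x_i,x_j)\le M$ vs.\ $d(x_i,x_j)>M$ and applies exactly one of your two upper bounds (the hashing/isometry bound giving $d(x_i,x_j)$, and the snaking-width bound giving $k\cdot M/\log n = 6M$) in each case, whereas you derive both bounds unconditionally and then do the case split at the end on $\min(d(x_i,x_j),6M)$. The underlying observations -- that snaking never overestimates, that the hash buckets partition the snippet coordinates, and that $\snipcat$ is isometric -- are identical.
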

\begin{proof}
    We have two cases: \\
    \noindent Case 1: $d(x_i, x_j) \le M.$ \\
    In this case, $\dcap{M}(x_i, x_j) = d(x_i, x_j)$. Observe that
    \begin{align*}
        \|embedding[i]-embedding[j]\|_1 &= \sum_{p=1}^k\left|sloc_p[i]-sloc_p[j]\right| \\
        &\le \sum_{p=1}^k\left|loc_p[i]-loc_p[j]\right| \quad \text{(snaking never overestimates distances)}\\
        &= \sum_{p=1}^k\left|\sum_{q \in [s]:h(q)=p}(\snipcat(x_i)[q] - \snipcat(x_j)[q])\right| \\
        &\le \sum_{p=1}^k\sum_{q \in [s]:h(q)=p}\left|\snipcat(x_i)[q] - \snipcat(x_j)[q]\right| \\
        &= d(x_i, x_j).
    \end{align*}
    \noindent Case 2: $d(x_i, x_j) > M.$ \\
    In this case, $\dcap{M}(x_i, x_j) = M$. We have
    \begin{align*}
        \|embedding[i]-embedding[j]\|_1 &= \sum_{p=1}^k\left|sloc_p[i]-sloc_p[j]\right| \\
        &\le \sum_{p=1}^k \frac{M}{\log n} \qquad \left(\text{snaking width is }\frac{M}{\log n}\right)\\
        &\le 6M.
    \end{align*}
\end{proof}

\Cref{claim:fixed-cap-tree-no-overestimate} ensures that the embedding never overestimates distances beyond a constant factor. However, we also have the following nice property, which ensures that in expectation, the embedding captures at least a constant fraction of the distance between any fixed pair of nodes.
\begin{lemma}[No underestimation]
    \label{lemma:fixed-cap-tree-no-underestimate}
    Fix any pair of nodes $x_i$ and $x_j$, and fix $p \in [k]$. Then, we have that
    \begin{align*}
        \E\left[\left|sloc_p[i]-sloc_p[j]\right|\right] &= \E_{h}\E_{\snake}\left[\left|sloc_p[i]-sloc_p[j]\right|\right] \ge \frac{\Omega(1)}{k}\cdot \dcap{M}(x_i, x_j).
    \end{align*}
    Thus, by linearity of expectation,
    \begin{align*}
        \E\left[\|embedding[i]-embedding[j]\|_1\right] = \E\left[\sum_{p=1}^k\left|sloc_p[i]-sloc_p[j]\right|\right] \ge \Omega(1)\cdot \dcap{M}(x_i, x_j).
    \end{align*}
\end{lemma}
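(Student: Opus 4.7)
The plan is to reduce to a one-dimensional analysis by conditioning on the hash $h$ and applying \Cref{claim:lazy-snake-line-fixed-cap-lb} (on the line with cap $M/\log n$, at the locations $loc_p[\cdot]$) to get $\E_{\snake}[|sloc_p[i]-sloc_p[j]| \mid h] \ge \Omega(1) \cdot \min(|loc_p[i]-loc_p[j]|, M/\log n)$. Taking the expectation over $h$, it suffices to establish $\E_h[\min(|loc_p[i]-loc_p[j]|, M/\log n)] \ge \Omega(\dcap{M}(x_i,x_j)/k)$. Setting $z := \snipcat(x_i)-\snipcat(x_j)$, we have $loc_p[i]-loc_p[j] = \sum_{q:h(q)=p} z[q]$, a random hashed sum of the entries of $z$. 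I will split into two cases according to whether $d(x_i,x_j) \le M$ or $d(x_i,x_j) > M$, leveraging \Cref{claim:sparsity-less-than-M} and \Cref{claim:sparsity-greater-than-M} respectively.

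Case 1 ($d(x_i, x_j) \le M$). By \Cref{claim:sparsity-less-than-M}, the support of $z$ has size at most $k$ and each $|z[q]| \le M/\log n$. For each $q$ in the support, let $E_q$ be the event that $h(q)=p$ while $h(q') \ne p$ for every other $q'$ in the support. By independence of the hashes, $\Pr[E_q] = (1/k)(1-1/k)^{|\text{supp}(z)|-1} \ge 1/(ek)$. Conditioned on $E_q$, $|loc_p[i] - loc_p[j]| = |z[q]| \le M/\log n$, so the $\min$ equals $|z[q]|$. The events $E_q$ are disjoint, so summing over $q$ gives $\E_h[\min(\cdot)] \ge \sum_q |z[q]|/(ek) = d(x_i,x_j)/(ek) = \Omega(\dcap{M}(x_i,x_j)/k)$.

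Case 2 ($d(x_i, x_j) > M$). Since $\dcap{M}(x_i,x_j)=M$ here, it suffices to show $\Pr_h[|loc_p[i]-loc_p[j]| \ge c M/\log n] \ge c'$ for absolute constants $c, c' > 0$, which I establish via a second-moment / Paley-Zygmund argument. The $\ge 3\log n/4$ big coordinates from \Cref{claim:sparsity-greater-than-M}, each with $|z[q]| \ge M/(20\log n)$, give $\sigma^2 := \Var_h(loc_p[i]-loc_p[j]) = \Omega(M^2/\log^2 n)$. Since also $|z[q]| \le M/\log n$ for all $q$, we obtain the bounded-aspect-ratio property $\max_q |z[q]| = O(\sigma)$. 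The standard independent-sum identity applied to the centered variables $Y_q := z[q]\Ind[h(q)=p] - z[q]/k$ then yields $\E[(loc_p[i]-loc_p[j]-\mu)^4] = O(\sigma^4)$, where $\mu$ is the mean. Combining this with $(a+b)^4 \le 8(a^4+b^4)$ applied to $loc_p[i]-loc_p[j] = (loc_p[i]-loc_p[j]-\mu)+\mu$ gives $\E[(loc_p[i]-loc_p[j])^4] = O((\sigma^2+\mu^2)^2) = O(\E[(loc_p[i]-loc_p[j])^2]^2)$. Paley-Zygmund applied to the non-negative random variable $(loc_p[i]-loc_p[j])^2$ then yields $\Pr[|loc_p[i]-loc_p[j]| \ge \sigma/\sqrt{2}] = \Omega(1)$, and since $\sigma = \Omega(M/\log n)$, this gives the required bound.

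The main obstacle is Case 2: arbitrary sign cancellations among the hashed entries of $z$, combined with a potentially large mean $\mu$, preclude the simple isolation argument of Case 1. The crucial resolving observation is the bounded-aspect-ratio property $\max_q |z[q]| = O(\sigma)$---enabled by the snippet-size upper bound on each $|z[q]|$ together with the lower bound on $\sigma$ coming from the many big coordinates---which controls the fourth-to-squared-second moment ratio and makes Paley-Zygmund applied to $(loc_p[i]-loc_p[j])^2$ produce a constant-probability lower bound on $|loc_p[i]-loc_p[j]|$ of the correct order directly (as opposed to only its deviation from $\mu$).
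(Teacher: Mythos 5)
Your overall structure (condition on the hash $h$, apply \Cref{claim:lazy-snake-line-fixed-cap-lb} with cap $M/\log n$ per bucket, then average over $h$) and your Case~1 argument both mirror the paper's proof: the isolation event $E_q$ you use is the per-coordinate unpacking of the paper's event $A$ (exactly one special coordinate hashes to $p$), and the computation is the same up to bookkeeping. Case~2 is where you genuinely diverge. The paper partitions the nonzero coordinates of $z$ into the $\ge \frac34\log n$ ``special'' ones (each in $[M/(20\log n), M/\log n]$) and the rest, and then conditions on whether the non-special mass hashed into bucket $p$ exceeds $M/(100\log n)$: if so, it wants \emph{no} special coordinate in $p$ (constant probability); if not, it wants \emph{exactly one} special coordinate in $p$ (constant probability), and in either case $|loc_p[i]-loc_p[j]|$ is pinned down to $\Omega(M/\log n)$ by an explicit constant-fraction argument. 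You instead run a second-moment / Paley–Zygmund argument on the hashed sum: the big coordinates give $\Var_h(loc_p[i]-loc_p[j]) = \Omega((M/\log n)^2)$, the per-coordinate cap $|z[q]|\le M/\log n$ forces the bounded-aspect-ratio condition $\max_q|z[q]| = O(\sigma)$, the independent-hash fourth-moment identity then gives $\E[(loc_p[i]-loc_p[j])^4] = O\bigl((\E[(loc_p[i]-loc_p[j])^2])^2\bigr)$, and Paley–Zygmund on the squared random variable yields $\Pr[|loc_p[i]-loc_p[j]|\ge \sigma/\sqrt 2]=\Omega(1)$. Both arguments are correct; yours is a cleaner, more off-the-shelf anti-concentration route that avoids the ``small coordinates in the bucket'' casework entirely (at the cost of a fourth-moment computation), while the paper's version is more elementary in that it uses nothing beyond union bounds and direct conditional estimates. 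One small caveat worth recording in your write-up: after establishing $\Pr[|loc_p[i]-loc_p[j]|\ge\sigma/\sqrt 2]=\Omega(1)$ you still need to pass through the $\min(\cdot, M/\log n)$ in the snaking guarantee, which works because $\min(\sigma/\sqrt2, M/\log n)=\Omega(M/\log n)$ regardless of whether $\sigma$ exceeds $M/\log n$; your text does address this implicitly, but making it explicit avoids any impression that a large $|loc_p[i]-loc_p[j]|$ might not translate into large snaked distance.
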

\begin{proof}
    We have two cases: \\
    \noindent Case 1: $d(x_i, x_j) \le M.$ \\
    In this case, $\dcap{M}(x_i, x_j) = d(x_i, x_j)$. Recall from \Cref{claim:sparsity-less-than-M} that $\snipcat(x_i)-\snipcat(x_j)$ has at most $6 \log n$ non-zero coordinates and each of these coordinates is at most $\frac{M}{\log n}$ in magnitude. Let us possibly include some zero coordinates, so that we have exactly $k=6\log n$ of these ``special'' coordinates $c_1,\dots, c_r, \dots,c_k$. Observe that $\sum_{r=1}^k |\snipcat(x_i)[c_r]-\snipcat(x_j)[c_r]|=d(x_i, x_j)$, because the snipped-caterpillar embedding is isometric. Let us condition on the realization of the hash function $h$, which is independent of the randomness in the snaking. Conditioned on this realization, for $loc_p[i]=\hsnipcat(x_i)[p]$ and $sloc_p=\textsc{LazySnake}(loc_p, M/\log n)$ we have from \Cref{claim:lazy-snake-line-fixed-cap-lb} that
    \begin{align*}
        \E_{\snake}\left[|sloc_p[i]-sloc_p[j]|\right] \ge \Omega(1) \cdot \dcap{M/\log n}(loc_p[i], loc_p[j]).
    \end{align*}
    Now, taking an expectation with respect to the choice of the hash function, we get
    \begin{align*}
        \E_h\E_{\snake}\left[|sloc_p[i]-sloc_p[j]|\right] \ge \Omega(1) \cdot \E_h\left[\dcap{M/\log n}(loc_p[i], loc_p[j])\right].
    \end{align*}
    Let $A$ be the event that only one of the $k$ special coordinates hashes to $p$. Then, we have that $\Pr[A]=\left(1-\frac1k\right)^{k-1} \ge \frac{1}{e}=\Omega(1)$, yielding
    \begin{align*}
        \E_h\left[\dcap{M/\log n}(loc_p[i], loc_p[j])\right] &\ge \Omega(1)\cdot\E_h\left[\dcap{M/\log n}(loc_p[i], loc_p[j]) ~|~ A\right]
    \end{align*}
    Furthermore, conditioned on $A$, we have
    \begin{align*}
        &\E_h\left[\dcap{M/\log n}(loc_p[i], loc_p[j]) ~|~ A\right] \\
        &\qquad=\sum_{r=1}^k \frac1k \cdot \E_h\left[\dcap{M/\log n}(loc_p[i], loc_p[j]) ~|~ \text{only $r^{\text{th}}$ special coordinate hashes to $p$}\right].
    \end{align*}
    But now, observe that if only the $r^{\text{th}}$ special coordinate hashes to $p$, we have $\dcap{M/\log n}(loc_p[i], loc_p[j]) =|\snipcat(x_i)[c_r]-\snipcat(x_j)[c_r]|$. This is because $|\snipcat(x_i)[c_r]-\snipcat(x_j)[c_r]|$ is at most $M/ \log n$, which is smaller than the cap. Finally, recalling that $\sum_{r=1}^k |\snipcat(x_i)[c_r]-\snipcat(x_j)[c_r]|=d(x_i, x_j)$, we get
    \begin{align*}
        \E_h\left[\dcap{M/\log n}(loc_p[i], loc_p[j]) ~|~ A\right] &= \sum_{r=1}^k\frac1k\left|\snipcat(x_i)[c_r]-\snipcat(x_j)[c_r]\right| =\frac{1}{k}\cdot d(x_i, x_j). 
    \end{align*}
    Putting everything together, we get
    \begin{align*}
        \E_h\E_{\snake}\left[|sloc_p[i]-sloc_p[j]|\right] &\ge \frac{\Omega(1)}{k}\cdot d(x_i, x_j) = \Omega(1) \cdot \frac{\dcap{M}(x_i, x_j)}{k}.
    \end{align*}
    \noindent Case 2: $d(x_i, x_j) > M.$ \\
    In this case, $\dcap{M}(x_i, x_j) = M$. Recall from \Cref{claim:sparsity-greater-than-M} that $\snipcat(x_i)-\snipcat(x_j)$ has a set of $k'=\frac34\log n$ non-zero ``special" coordinates $c_1,\dots,c_r,\dots,c_{k'}$, each of which is at least $\frac{M}{20\log n}$ and at most $\frac{M}{\log n}$ in magnitude. Let $A$ be the event that the non-special coordinates that get hashed to $p$ amount for a distance of at least $\frac{M}{100\log n}$. Concretely, under $A$, $\left|\sum_{q \in [s]:q \text{ not special},h(q)=p}(SC(x_i)[q]-SC(x_j)[q])\right| \ge \frac{M}{100\log n}$. 
    We have that
    \begin{align*}
        \E\left[|sloc_p[i]-sloc_p[j]|\right] &= \Pr[A]\cdot \E\left[|sloc_p[i]-sloc_p[j]| ~|~ A\right] \\
        &\quad + \Pr[\neg A]\cdot \E\left[|sloc_p[i]-sloc_p[j]| ~|~ \neg A\right]
    \end{align*}
    Conditioned on $A$, we are happy if none of the special coordinates hash to $p$, which happens with probability $\left(1-\frac{1}{k}\right)^{k'} \ge \exp\left(-\frac{k'}{\sqrt{k(k-1)}}\right)=\exp\left(-\frac18\sqrt{\frac{6\log n}{6\log n-1}}\right) \ge 1/e = \Omega(1)$ for $n \ge 2$. In this case, lazy snaking will capture at least a constant fraction of $\frac{M}{100 \log n}$, yielding
    \begin{align*}
        \E\left[|sloc_p[i]-sloc_p[j]| ~|~ A\right] &\ge \Omega(1) \cdot \frac{M}{100 \log n}.
    \end{align*}
    If $A$ does not occur, we have that $\left|\sum_{q \in [s]:q \text{ not special},h(q)=p}(SC(x_i)[q]-SC(x_j)[q])\right| < \frac{M}{100\log n}$. In this case, we are happy if exactly one of the special coordinates hashes to $p$, which happens with probability $\left(1-\frac{1}{k}\right)^{k'-1} \ge \left(1-\frac{1}{k}\right)^{k'} \ge 1/e = \Omega(1)$. This will ensure a distance of at least $\frac{M}{20\log n} - \frac{M}{100\log n}=\frac{M}{25\log n}$, of which lazy snaking will capture at least a constant fraction in expectation, yielding
    \begin{align*}
        \E\left[|sloc_p[i]-sloc_p[j]| ~|~ \neg A\right] &\ge \Omega(1) \cdot \frac{M}{25 \log n}.
    \end{align*}
    In total, we get that
    \begin{align*}
        \E\left[|sloc_p[i]-sloc_p[j]|\right] &\ge \Pr[A]\cdot \Omega(1) \cdot \frac{M}{100 \log n} 
        + \Pr[\neg A]\cdot \Omega(1)\cdot \frac{M}{25 \log n} \\
        &\ge \Omega(1) \cdot \frac{M}{100 \log n} \left(\Pr[A] + \Pr[\neg A] \right) \\
        &= \Omega(1)\cdot \frac{M}{100 \log n} = \Omega(1) \cdot \frac{M}{k} = \Omega(1)\cdot \frac{\dcap{M}(x_i, x_j)}{k}.
    \end{align*}
\end{proof}

As in the case of the capped line metric, using $O(\log n)$ independent repetitions of the above, we obtain the following theorem.

\theoremfixedcaptree*
\begin{proof}
    Let us obtain $N$ independent copies $embedding_1,\dots,embedding_N$ of the embedding in \Cref{eqn:fixed-cap-tree-embedding-def}. Note that each of these is an embedding of size $6\log n$. Let $X^{(l)}_{ij} = \|embedding_l[i]-embedding_l[j]\|_1$. Then, observe that from \Cref{claim:fixed-cap-tree-no-overestimate}, $0 \le X^{(l)}_{ij} \le 6\dcap{M}(x_i, x_j)$. Furthermore, \Cref{lemma:fixed-cap-tree-no-underestimate} gives that $\E\left[\sum_{l=1}^NX^{(l)}_{ij}\right] \ge cN\dcap{M}(x_i,x_j)$ for some absolute constant $c > 0$. Hoeffding's inequality then gives us
\begin{align*}
    \Pr\left[\sum_{l=1}^N X^{(l)}_{ij} \le \frac{cN\dcap{M}(x_i,x_j)}{2}\right] \le 2\exp\left(-\Theta(N)\right),
\end{align*}
or rather, choosing $N = C\log n$ for an appropriately large constant $C$ gives
\begin{align*}
    \Pr\left[\frac{1}{N}\sum_{l=1}^N X^{(l)}_{ij} \le \frac{c}{2}\cdot\dcap{M}(x_i,x_j)\right] \le \frac{2}{n^3}.
\end{align*}
Furthermore, since the embedding never overestimates distances (\Cref{claim:fixed-cap-tree-no-overestimate}), observe that 
\begin{align*}
    \Pr\left[\frac{1}{N}\sum_{l=1}^N X^{(l)}_{ij} > 6\dcap{M}(x_i,x_j)\right] =0.
\end{align*}
We can therefore rescale each coordinate in $embedding_k$ as $embedding_k \gets embedding_k/N$, and union bound over all the $< n^2$ pairs $i, j$ to obtain that
\begin{align*}
    \Pr\left[\forall i,j:~ \Omega(1)\cdot\dcap{M}(x_i,x_j) \le \sum_{l=1}^N\|embedding_l[i]-embedding_l[j]\|_1 \le O(1)\cdot\dcap{M}(x_i, x_j)\right] > 0,
\end{align*}
which in particular, implies that there exists such an embedding that obtains $\Theta(1)$ distortion with $N=O(\log^2n)$ dimensions.
\end{proof}

\section{General Tree Ising Models}
\label{sec:general-tim}
We now turn our attention to the more general and challenging case of tree Ising models on $n$ random variables $X_1,\dots,X_n$ where each variable can additionally have a nonzero external field (\Cref{eqn:tim-def}). Recall that these tree Ising models can be uniquely described by specifying marginal probabilities for every $X_i$, and joint probabilities for every pair $(X_i, X_j)$ that corresponds to an edge in the tree.

To exhaustively capture the probability of disagreement between two variables $X_i$ and $X_j$, we end up requiring to crucially capture contribution towards this by the ``independence" between the variables. In the typical notion of independence, this could be thought of in terms of being distance that grows with $H(X_i|X_j)$ or $H(X_j|X_i)$, where $H$ is the standard Shannon entropy. However, conditional Shannon entropy $H(X_i|X_j)$ seems at first glance an unwieldy quantity from the perspective of embedding into $\ell_1$ (for reasons like, e.g., the behavior of $H(X_i)$ as $X_i \rightarrow 0$). We define an alternative seemingly simpler quantity, which we call ``Bernoulli Randomness"\footnote{We expect this quantity (or equivalent versions of it) to have previously been defined and used in the literature.}, that essentially captures what we are hoping for.

\subsection{Bernoulli Randomness}\label{sub:bern}

\begin{definition}[Bernoulli Randomness]
    Any binary random variable $X$ can be written as the mixture of an unbiased random variable and a deterministic (i.e. single state of support) variable. Let $\Br(X)$ be the weight of the mixture on the unbiased variable. We can see that $\Br(X) = 2 \cdot \min(\Pr[X = 0],\Pr[X = 1])$. 
\end{definition}

A philosophical interpretation of Bernoulli randomness is that it corresponds to the probability that $X$ is determined by a purely random coin flip, instead of being deterministically chosen. Like Shannon entropy, Bernoulli randomness increases as $X$ becomes less biased.

We similarly define conditional Bernoulli randomness:

\begin{definition}[Conditional Bernoulli Randomness]
    $\Br(X|Y) = \sum_{y \in \{0,1\}} \Pr(Y = y) \cdot \Br(X | Y=y) = \sum_{y} 2 \cdot \min(\Pr(X=0,Y=y),\Pr(X=1,Y=y))$
\end{definition}

\begin{figure}[H]
    \centering
    \includegraphics[scale=0.35]{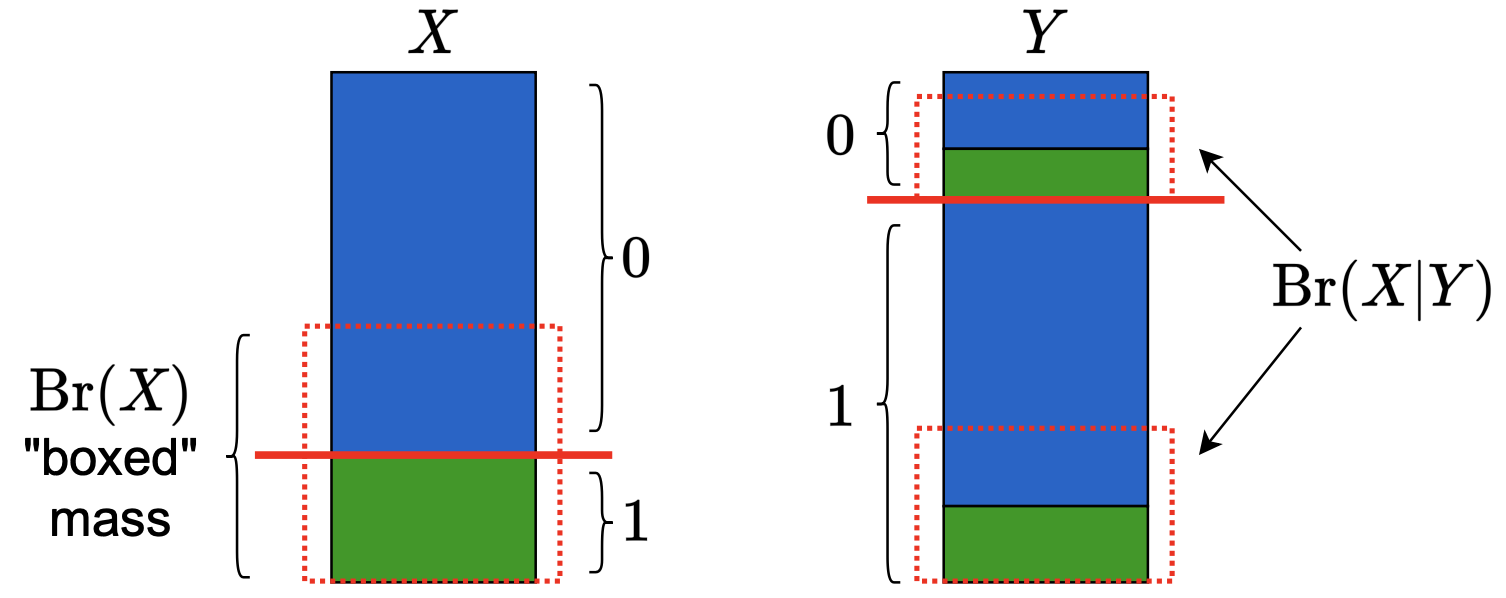}
    \caption{Bernoulli randomness and conditional Bernoulli randomness as ``boxed'' mass.}
    \label{fig:br-x-y}
\end{figure}
It helps a lot to keep \Cref{fig:br-x-y} in mind when thinking of Bernoulli randomness. The probability mass of $X$ has been divided by the red solid line into blue and green states corresponding to $\Pr[X=0]$ and $\Pr[X=1]$ respectively. Here, $\Br(X)$ is exactly the mass ``boxed'' by the red dashed line---there is an equal amount of green and blue mass in the box. Similarly, on the right, the blue and green mass of $X$ has been distributed in the states of $Y=0$ and $Y=1$ according to the joint distribution of $X$ and $Y$. The total boxed mass across both the states is precisely $\Br(X|Y)$.

We can lower bound the probability of disagreement between $X$ and $Y$ in terms of conditional Bernoulli randomness in a straightforward manner:
\begin{lemma}[$\Br$ lower-bounds distance] \label{lemma:bern-lb-dist}
    $\Pr[X \ne Y] \ge \frac{1}{2} \cdot \Br(Y|X), \frac{1}{2} \cdot \Br(X|Y)$.
\end{lemma}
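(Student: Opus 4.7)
The plan is to prove both bounds by directly expanding the definition of conditional Bernoulli randomness and bounding the summand pointwise in terms of a disagreement probability. The bounds are symmetric, so it suffices to show $\tfrac12 \Br(Y \mid X) \le \Pr[X \ne Y]$; the other inequality follows by swapping the roles of $X$ and $Y$.

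First, I would unfold the definition:
\begin{align*}
\tfrac{1}{2}\Br(Y \mid X) = \sum_{x \in \{0,1\}} \min\bigl(\Pr(X = x, Y = 0),\, \Pr(X = x, Y = 1)\bigr).
\end{align*}
The key observation is that, for each fixed $x \in \{0,1\}$, the minimum of $\Pr(X = x, Y = 0)$ and $\Pr(X = x, Y = 1)$ is at most the ``disagreement'' term $\Pr(X = x, Y \ne x)$. Indeed, when $x = 0$ the disagreement term equals $\Pr(X = 0, Y = 1)$, which is one of the two quantities inside the min and is therefore $\ge$ the min; the case $x = 1$ is symmetric.

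Summing this pointwise inequality over the two values of $x$ gives
\begin{align*}
\tfrac{1}{2}\Br(Y \mid X) \;\le\; \sum_{x \in \{0,1\}} \Pr(X = x, Y \ne x) \;=\; \Pr[X \ne Y],
\end{align*}
which is the desired bound. Swapping $X$ and $Y$ in the argument (using $\Br(X \mid Y) = \sum_y 2 \min(\Pr(X=0, Y=y), \Pr(X=1, Y=y))$) yields $\tfrac{1}{2}\Br(X \mid Y) \le \Pr[X \ne Y]$ by the same pointwise-min argument.

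There is no real obstacle here; the proof is just a matter of unpacking the definition and noticing that the min of the two joint probabilities with a fixed $X = x$ is witnessed (up to being the smaller one) by the joint term in which $Y \ne x$. The picture in \Cref{fig:br-x-y} is a useful visual: the ``boxed'' mass inside each column $X = x$ is always at most the mass of the smaller color in that column, which is exactly the contribution of that column to $\Pr[X \ne Y]$.
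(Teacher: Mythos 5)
Your proof is correct and is essentially the same as the paper's: both unfold the definition of $\Br(Y|X)$ as a sum of mins over $x \in \{0,1\}$ and bound each min term by the corresponding disagreement probability $\Pr[X=x, Y \ne x]$. The argument for $\Br(X|Y)$ by swapping roles matches as well.
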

\begin{proof}
    This is true because, for example, because if $Y$ is chosen via an unbiased coin from $X$ with probability $\Br(Y|X)$, then for each unbiased coin flip there is at least $1/2$ chance $X \ne Y$ and thus $\Pr[X \ne Y] \ge \frac{1}{2} \cdot \Br(Y | X)$. More formally,
    \begin{align*}
        \Pr[X \ne Y] &= \Pr[X=0, Y=1] + \Pr[X=1, Y=0] \\
        &\ge \min(\Pr[X=0, Y=0], \Pr[X=0, Y=1]) + \min(\Pr[X=1, Y=0], \Pr[X=1, Y=0]) \\
        &= \frac12 \cdot \Br(Y|X).
    \end{align*}
    The argument for $\Br(X|Y)$ is identical.
\end{proof}

The following result also nicely matches the intuition from Shannon entropy:
\begin{claim}[$\Br$ maximized at independence]
    Given two variables with fixed marginals $X,Y$, when optimizing over their joints it holds that $\Br(X|Y)$ and $\Br(Y|X)$ are both maximized when $X,Y$ are independent.
\end{claim}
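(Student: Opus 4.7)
The plan is to prove the inequality $\Br(X \mid Y) \le \Br(X)$ for any joint distribution (the ``conditioning can only reduce randomness'' intuition), and then verify that independence attains this upper bound, which immediately yields the claim. The same argument handles $\Br(Y \mid X)$ symmetrically.

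First, I would unfold the definition: $\Br(X \mid Y) = \sum_{y \in \{0,1\}} 2\min(\Pr[X=0, Y=y], \Pr[X=1, Y=y])$. Let $x^\star \in \arg\min_{x \in \{0,1\}} \Pr[X = x]$, so that $\Br(X) = 2\Pr[X = x^\star]$. The key elementary observation is that for every $y$, $\min(\Pr[X=0, Y=y], \Pr[X=1, Y=y]) \le \Pr[X = x^\star, Y = y]$, simply because the minimum of two nonnegative numbers is at most either one of them. Summing over $y$ and marginalizing out $Y$ gives
\begin{equation*}
\Br(X \mid Y) \;\le\; 2\sum_{y} \Pr[X = x^\star, Y = y] \;=\; 2\Pr[X = x^\star] \;=\; \Br(X).
\end{equation*}
Since $\Br(X)$ is a function only of the marginal of $X$, which is fixed, this upper bound is constant across all joints with the prescribed marginals.

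Next, I would check that independence achieves this bound. When $X \perp Y$, the joint factors as $\Pr[X = x, Y = y] = \Pr[X = x]\Pr[Y = y]$, so
\begin{equation*}
\min\bigl(\Pr[X{=}0, Y{=}y], \Pr[X{=}1, Y{=}y]\bigr) = \Pr[Y = y]\cdot \min\bigl(\Pr[X{=}0], \Pr[X{=}1]\bigr) = \Pr[Y{=}y]\cdot \tfrac{1}{2}\Br(X).
\end{equation*}
Summing $y$ over $\{0,1\}$ gives $\Br(X \mid Y) = \Br(X)$, matching the upper bound. Therefore the independent joint maximizes $\Br(X \mid Y)$ among all joints with the specified marginals. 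The identical argument with the roles of $X$ and $Y$ swapped handles $\Br(Y \mid X)$.

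There is no real technical obstacle here: both halves reduce to the trivial inequality $\min(a,b) \le a$ together with the factoring of the joint under independence. The only mild subtlety worth mentioning is that the maximizer need not be unique---any joint for which, within each column $y$, the ``minority'' mass of $X$ equals $\Pr[Y{=}y]\cdot\min(\Pr[X{=}0],\Pr[X{=}1])$ also achieves the maximum---but this does not affect the statement of the claim.
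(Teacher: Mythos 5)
Your proof is correct and follows essentially the same route as the paper: both bound $\Br(X\mid Y)$ by $\Br(X)$ using $\min(a,b)\le a$ summed over $y$, and then check equality under independence. The only cosmetic difference is that you commit to the minimizing state $x^\star$ up front, whereas the paper derives both bounds $\Br(X\mid Y)\le 2\Pr[X=0]$ and $\Br(X\mid Y)\le 2\Pr[X=1]$ and then takes the min.
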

\begin{proof}
    We state the proof for $\Br(X|Y)$---the proof for $\Br(Y|X)$ is identical.
    \begin{align*}
        \Br(X|Y) &= 2(\min(\Pr[X=0,Y=0],\Pr[X=1,Y=0]) + \min(\Pr[X=0,Y=1],\Pr[X=1,Y=1])) \\
        &\le 2(\Pr[X=0,Y=0]+ \Pr[X=0,Y=1]) = 2\Pr[X=0],
    \end{align*}
    and similarly,
    \begin{align*}
        \Br(X|Y) &= 2(\min(\Pr[X=0,Y=0],\Pr[X=1,Y=0]) + \min(\Pr[X=0,Y=1],\Pr[X=1,Y=1])) \\
        &\le 2(\Pr[X=1,Y=0]+ \Pr[X=1,Y=1]) = 2\Pr[X=1].
    \end{align*}
    This means $\Br(X|Y)\le 2\min(\Pr[X=0],\Pr[X=1])=\Br(X)$. Note that equality is attained when $X$ and $Y$ are independent.
\end{proof}

The niceness of the aforementioned claim is that the same statement is classically known to hold when $\Br$ terms are replaced with $H$ (Shannon entropy)  terms. A key difference, however, is that while the independent joint distribution is the \emph{unique} maximizer for Shannon entropy, there are often many maximizers for Bernoulli randomness. For example, consider random variables $X,Y$ with $p_{00}=0.1, p_{01}=0.2, p_{10}=0.2, p_{11}=0.5$. Here, $\Br(X|Y)=\Br(Y|X)=\Br(X)=\Br(Y)=0.6$, but $X$ and $Y$ are not independent. This indicates that although the notions share similar intuitions, there are key differences.

In our proof for tree Ising models, we will be studying the accumulation of conditional Bernoulli randomness. A simple lemma that may help see the niceness of this property (analogous to the data processing inequality for mutual information) is as follows:

\begin{lemma}[Data Processing Inequality]
    \label{lemma:data-processing-inequality}
    Consider a Bayesian network on a line specified by $X_1\to \dots\to X_n$. For any $i < j < k$, it must hold that $\Br(X_i | X_k) \ge \Br(X_i | X_j)$.
\end{lemma}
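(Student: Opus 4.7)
The plan is to use the Markov property of the line Bayesian network (given $X_j$, $X_i$ is independent of $X_k$), combined with the concavity of $\min$, mirroring the standard proof of the data processing inequality for Shannon entropy.

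First I would unfold the definition of conditional Bernoulli randomness at a single conditioning value: $\Br(X_i \mid X_k = x_k) = 2\min\bigl(\Pr[X_i=0 \mid X_k = x_k],\, \Pr[X_i=1 \mid X_k = x_k]\bigr)$. By the Markov property $X_i \perp X_k \mid X_j$, each of these marginals decomposes as a convex combination over $x_j$:
\begin{equation*}
  \Pr[X_i = b \mid X_k = x_k] \;=\; \sum_{x_j} \Pr[X_i = b \mid X_j = x_j]\cdot \Pr[X_j = x_j \mid X_k = x_k].
\end{equation*}
Applying the elementary inequality $\min\bigl(\sum_\alpha p_\alpha a_\alpha,\, \sum_\alpha p_\alpha b_\alpha\bigr) \ge \sum_\alpha p_\alpha \min(a_\alpha, b_\alpha)$ (concavity of $\min$) with weights $p_{x_j} = \Pr[X_j = x_j \mid X_k = x_k]$ yields
\begin{equation*}
  \Br(X_i \mid X_k = x_k) \;\ge\; \sum_{x_j} \Br(X_i \mid X_j = x_j)\cdot \Pr[X_j = x_j \mid X_k = x_k].
\end{equation*}

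Next I would take the expectation over $X_k$ and swap the order of summation. Using $\sum_{x_k} \Pr[X_k = x_k]\Pr[X_j = x_j \mid X_k = x_k] = \Pr[X_j = x_j]$, this gives
\begin{equation*}
  \Br(X_i \mid X_k) \;=\; \sum_{x_k} \Pr[X_k = x_k]\,\Br(X_i \mid X_k = x_k) \;\ge\; \sum_{x_j} \Br(X_i \mid X_j = x_j)\Pr[X_j = x_j] \;=\; \Br(X_i \mid X_j),
\end{equation*}
which is the desired inequality.

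The only nontrivial step is the concavity bound on $\min$, but since $\min$ is concave this is immediate, and the Markov property is guaranteed by the line structure of the Bayesian network. I do not anticipate any serious obstacle. As a sanity check, when $X_j$ already determines $X_i$ deterministically (smallest possible randomness), the conditional distribution of $X_i$ given $X_k$ is a mixture of these deterministic distributions, and any mixing can only increase $\min(\Pr[X_i=0 \mid \cdot],\Pr[X_i=1 \mid \cdot])$; conversely, when $X_j$ is independent of $X_i$, further conditioning on $X_k$ (a function of $X_j$) cannot reduce the randomness below $\Br(X_i)$. Both extremes are consistent with the inequality, which intuitively says that pushing the conditioning variable farther away on the chain makes $X_i$ look more random.
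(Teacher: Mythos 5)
Your proof is correct, and it turns the same underlying observation the paper uses into a clean algebraic argument. Both proofs hinge on the Markov property $X_i \perp X_k \mid X_j$, which makes each conditional law $\Pr[X_i = \cdot \mid X_k = x_k]$ a convex combination (with weights $\Pr[X_j = x_j \mid X_k = x_k]$) of the conditional laws $\Pr[X_i = \cdot \mid X_j = x_j]$. The paper phrases this pictorially: the ``boxed'' mass at $X_j$, which witnesses $\Br(X_i \mid X_j)$, gets redistributed to the buckets of $X_k$ while preserving the blue-to-green ratio, so no boxing can be lost. You instead apply concavity of $\min$ directly to the two components of the mixture and then average over $x_k$. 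The two arguments encode exactly the same idea; yours has the advantage of being a one-line chain of inequalities with no appeal to a figure, and it makes explicit that all that is needed is that $x_j \mapsto \min(\cdot,\cdot)$ composes with a stochastic kernel in the right direction, which is the generic mechanism behind data-processing-type statements.
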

\begin{proof}
    \begin{figure}[H]
        \centering
        \includegraphics[scale=0.35]{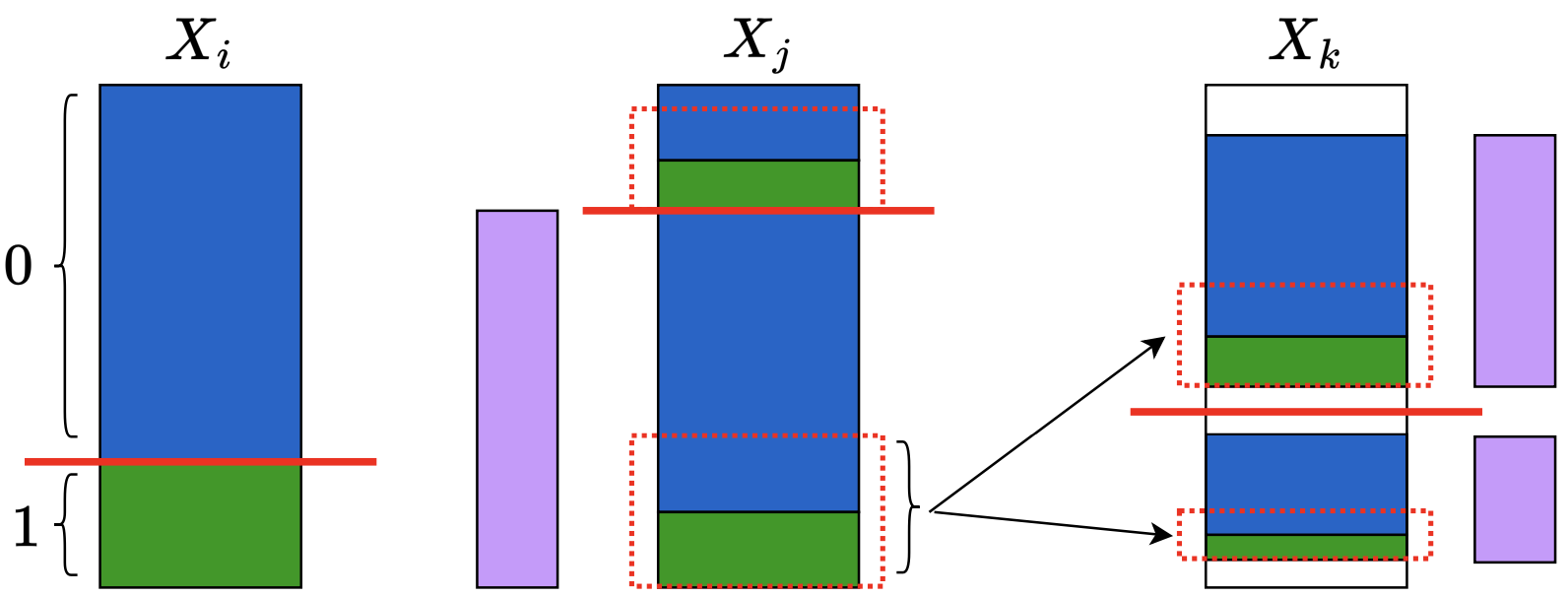}
        \caption{The ratio of blue and green mass within $X_j=0$ and $X_j=1$ is preserved when it gets distributed between $X_k=0$ and $X_k=1$. The picture above only shows this for the (purple) mass in state $X_j=1$. However, it is also true for the mass within state $X_j=0$ that would occupy the leftover white regions in $X_k$.}
        \label{fig:br-xi-xj-xk}
    \end{figure}
    The proof is aided by the pictoral definition of conditional Bernoulli randomness in terms of ``boxed'' mass that we stated above. In \Cref{fig:br-xi-xj-xk}, blue mass corresponds to $X_i=0$ and green mass corresponds to $X_i=1$. The total boxed mass in $X_j$ corresponds to $\Br(X_i|X_j)$. Observe that because of the Markov property, $\Pr[X_i|X_j, X_k]=\Pr[X_i|X_j]$. Namely, the mass in the bucket $X_j=1$ must get distributed between the buckets $X_k=0$ and $X_k=1$ in a such a way that the blue-to-green ratio in both these buckets is equal to the ratio in $X_j=1$. Put another way, some fraction of the total mass in $X_j=1$ goes to the bucket $X_k=0$ and the remaining fraction goes to the bucket $X_k=1$. The same also happens to the mass in $X_j=0$. But observe that this preserves all the boxing of mass in $X_j$. Namely, all the boxed mass in $X_j$ remains boxed in $X_k$, and we can only gain additional boxed mass because of redistribution. Thus, the boxed mass in $X_k$ is at least as much as the boxed mass in $X_j$, which means that $\Br(X_i|X_k) \ge \Br(X_i|X_j)$.
\end{proof}

\subsection{General Tree Ising Models Reduce to Lipschitz Cap Tree Metrics}\label{sub:recution}
In this section, we will transform the task of embedding general tree Ising models into $\ell_1$ to the task of embedding Lipschitz cap tree metrics into $\ell_1$. In what follows, we will reason about $\Pr[X_i \neq X_j]$ for nodes $X_i$ and $X_j$ in the tree Ising model. For this, we will imagine the tree to be rooted at $X_i$, with all edges pointing away from the root. This ensures that all the edges on the path from $X_i$ to $X_j$ are in the same direction, which will allow us to use convenient independence properties to reason about the evolution of conditional Bernoulli randomness as we traverse the path. Furthermore, this rooting assumption is without loss of generality, by the equivalence of tree Ising models and tree-structured Bayesian networks, as mentioned in \Cref{sec:tim-prelims}.

For a node $X_i$ in the tree Ising model, let us define its bias $b(i)=\min(\Pr[X_i=0], \Pr[X_i=1])$. The following inequality upper-bounds the absolute difference in the biases of $X_i$ and $X_j$ in terms of Bernoulli randomness and the probability that they disagree.
\begin{lemma}\label{lemma:br-bound}
    Let $b(i)=\min(\Pr[X_i=0], \Pr[X_i=1])$.
    \begin{enumerate}
        \item[(a)] $b(i)-b(j) \le \frac{1}{2}\cdot \Br(X_i|X_j).$
        \item[(b)]  $|b(i)-b(j)| \le \Pr[X_i \neq X_j].$
    \end{enumerate}
\end{lemma}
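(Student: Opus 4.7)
For part (a), the plan is to use the mixture representation of Bernoulli randomness that underlies its very definition. Conditioned on $X_j = y$, the variable $X_i$ can be sampled as follows: with probability $\Br(X_i \mid X_j = y)$ use an independent fair coin, and with probability $1 - \Br(X_i \mid X_j = y)$ deterministically output the dominant value $d_y \in \{0,1\}$ (on ties either choice is fine, since the deterministic branch is then taken with probability $0$). Averaging over $X_j$ decomposes the marginal of $X_i$ as
\begin{equation*}
    \Pr[X_i = b] \;=\; \tfrac{1}{2}\Br(X_i \mid X_j) \;+\; \tau_b, \qquad \tau_b := \sum_{y} \Pr[X_j=y]\,\bigl(1 - \Br(X_i \mid X_j = y)\bigr)\,\mathbb{1}[d_y = b].
\end{equation*}
Taking the minimum over $b \in \{0,1\}$ then gives $b(i) = \tfrac{1}{2}\Br(X_i \mid X_j) + \min(\tau_0, \tau_1)$, and (a) reduces to the clean inequality $\min(\tau_0,\tau_1) \le b(j)$.

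To establish this reduction, I would introduce the ``pruned'' quantities $r_b := \sum_y \Pr[X_j=y]\,\mathbb{1}[d_y=b]$, which satisfy $\tau_b \le r_b$ (the extra $(1-\Br(X_i\mid X_j=y))$ factor is at most $1$) and $r_0 + r_1 = 1$. A two-case split on the dominants then finishes things: if $d_0 = d_1$ then one of $r_0,r_1$ equals $0 \le b(j)$; if $d_0 \ne d_1$ then $\{r_0, r_1\} = \{\Pr[X_j=0], \Pr[X_j=1]\}$, so $\min(r_0,r_1) = b(j)$ exactly. In either case $\min(\tau_0,\tau_1) \le \min(r_0,r_1) \le b(j)$, completing (a).

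For part (b), the plan is to invoke (a) twice, once in each direction, and then apply \Cref{lemma:bern-lb-dist}. Running the same argument with the roles of $i$ and $j$ swapped gives $b(j) - b(i) \le \tfrac{1}{2}\Br(X_j \mid X_i)$, and combining with (a) and the bounds $\tfrac{1}{2}\Br(X_i \mid X_j),\, \tfrac{1}{2}\Br(X_j \mid X_i) \le \Pr[X_i \ne X_j]$ from \Cref{lemma:bern-lb-dist} yields $|b(i)-b(j)| \le \Pr[X_i \ne X_j]$.

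The only substantive step is the inequality $\min(\tau_0,\tau_1) \le b(j)$; everything else is either a restatement of the mixture picture from \Cref{sub:bern} or a direct appeal to \Cref{lemma:bern-lb-dist}. I expect the small subtlety to be the handling of ties in $d_y$, but as noted, ties occur precisely when $\Br(X_i \mid X_j = y) = 1$, and in that case the contribution to $\tau_b$ vanishes regardless of how the tie is broken, so the argument is insensitive to this choice.
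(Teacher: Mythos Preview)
Your proof is correct but takes a different route from the paper for part (a). The paper argues by a short inequality chain: after a WLOG relabeling so that $\Pr[X_i=0],\Pr[X_j=0]\ge\tfrac12$, it lower-bounds $\Br(X_i\mid X_j)$ by the single $y=0$ term $2\min_x\Pr[X_i=x,X_j=0]=2\bigl(\Pr[X_j=0]-\max_x\Pr[X_i=x,X_j=0]\bigr)$, and then uses $\max_x\Pr[X_i=x,X_j=0]\le\max_x\Pr[X_i=x]$ to arrive at $2(b(i)-b(j))$ in three lines. Your argument instead exploits the mixture interpretation of $\Br$ to produce the exact identity $b(i)=\tfrac12\Br(X_i\mid X_j)+\min(\tau_0,\tau_1)$, and then bounds the residual $\min(\tau_0,\tau_1)$ by $b(j)$ via the dominants $d_y$. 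The paper's approach is terser; yours is more structural, making transparent that the slack in the inequality is precisely $b(j)-\min(\tau_0,\tau_1)\ge 0$, and it avoids the WLOG relabeling. For part (b), both proofs are identical: swap $i,j$ and invoke \Cref{lemma:bern-lb-dist}.
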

\begin{proof}
    Suppose $\Pr[X_i = 0], \Pr[X_j = 0] \ge \frac{1}{2}$ without loss of generality. Then, 
    \begin{align*}
    \Br(X_i | X_j) &\ge 2\cdot \min_x \Pr[X_i=x,X_j=0] = 2\cdot \left(\Pr[X_j = 0] - \max_x \Pr[X_i=x, X_j=0] \right)\\
    &\ge 2\cdot \left(\Pr[X_j = 0] - \max_x \Pr[X_i=x] \right) = 2 \cdot \left((1 - b(j)) - (1 - b(i)) \right) = 2\cdot \left( b(i) - b(j) \right)
    \end{align*}
    Repeating the calculation above with $\Br(X_j|X_i)$ yields $\Br(X_j|X_i) \ge 2\cdot \left( b(j) - b(i) \right)$. Thus, we get
    \begin{align*}
        |b(i)-b(j)| \le \frac{1}{2}\cdot\max\left(\Br(X_i, X_j), \Br(X_j|X_i)\right) \le \Pr[X_i \neq X_j]
    \end{align*}
    where we used \Cref{lemma:bern-lb-dist} in the last step.
\end{proof}

Now, let us define a strange but necessary quantity, that captures a ``surprise" event when we traverse an edge.

\begin{definition}[Cross]
We define an edge $e = (X_u,X_v)$ as being a ``cross'' if $\Pr[X_u = 0, X_v = 1] + \Pr[X_u = 1, X_v = 0] > \Pr[X_u = 0, X_v = 0] + \Pr[X_u = 1, X_v = 1]$, or equivalently     $\Pr[X_u = 0, X_v = 1] + \Pr[X_u = 1, X_v = 0] > \frac{1}{2}$.
\end{definition}

We will show how distance can be related to three sources and how we can embed these sources well. 

\begin{lemma}\label{claim:three}
Each of the following quantities is upper bounded by $O(\Pr[X_i \ne X_j])$:
    \begin{enumerate}
        \item[(a)] Difference in marginals: $d_{\marg}(i, j)=|\Pr[X_i = 0] - \Pr[X_j = 0]|$.
        \item[(b)] Bernoulli randomness: $d_{\Br}(i,j)=\Br(X_i | X_j)$.%
        \item[(c)] Negative correlation: $d_{\negcor}(i,j) = \begin{cases}
            \max(b(i),b(j)) & \text{if number of crosses between $i$ and $j$ is odd.} \\
            |b(i)-b(j)| & \text{if number of crosses between $i$ and $j$ is even.}
        \end{cases}$
    \end{enumerate}
\end{lemma}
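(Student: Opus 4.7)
My plan is to prove the three bounds separately. Parts (a) and (b) are essentially one-line arguments. For (a), I would expand
\[
\Pr[X_i=0] - \Pr[X_j=0] = \Pr[X_i=0, X_j=1] - \Pr[X_i=1, X_j=0]
\]
by summing out the missing coordinate in each marginal, then apply the triangle inequality in absolute value to conclude $|\Pr[X_i=0]-\Pr[X_j=0]| \le \Pr[X_i \neq X_j]$. For (b), the bound $\Br(X_i|X_j) \le 2\Pr[X_i \neq X_j]$ is exactly \Cref{lemma:bern-lb-dist}.

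For (c), I split on the parity of the number of crosses on the $X_i$-to-$X_j$ path. The even subcase is immediate from \Cref{lemma:br-bound}(b), giving $|b(i)-b(j)| \le \Pr[X_i \neq X_j]$. The odd subcase requires $\max(b(i),b(j)) \le O(\Pr[X_i \neq X_j])$ and is the main obstacle. The key structural fact I will leverage is that an odd number of crosses guarantees at least one cross edge $(u,u')$ on the path, at which $\Pr[X_u \ne X_{u'}] > \tfrac{1}{2}$, already exceeding $\max(b(u), b(u')) \le \tfrac{1}{2}$; the challenge is to propagate this strong local disagreement out to the path endpoints $X_i,X_j$.

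My primary strategy is a flipping trick: define $\tilde X_w = X_w \oplus \sigma(w)$, where $\sigma(w) \in \{0,1\}$ is the parity of crosses on the path from $X_i$ to $w$. Since the total parity is odd, $\sigma(i) \ne \sigma(j)$, so $\Pr[X_i \ne X_j] = \Pr[\tilde X_i = \tilde X_j]$, and every edge on the path is a non-cross in the $\tilde X$-model (i.e., $\Pr[\tilde X_u \ne \tilde X_v] \le \tfrac{1}{2}$). I would then argue by induction along the path that $\Pr[\tilde X_i = \tilde X_j] \ge \Omega(\max(b(i), b(j)))$. The hard part will be the inductive step: a single non-cross edge only guarantees aggregate agreement $\ge \tfrac{1}{2}$, not positive correlation of individual states, so naively chaining does not suffice. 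I expect to need an invariant stronger than $\Pr[\tilde X_i = \tilde X_j]$, likely the full joint of $(\tilde X_i, \tilde X_{\mathrm{current}})$ as the induction progresses, and to combine the flipping trick with \Cref{lemma:data-processing-inequality} for conditional Bernoulli randomness in order to convert the accumulated structure back into a lower bound on $\Pr[X_i \ne X_j]$ in terms of $b(i)$ (and symmetrically $b(j)$).
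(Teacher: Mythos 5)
Parts (a) and (b), and the even subcase of (c), exactly match the paper's proof. The genuine gap is in the odd-cross subcase of (c), which is precisely where you acknowledge your argument is incomplete. Your flipping reduction is sound as a reframing --- defining $\tilde X_w = X_w \oplus \sigma(w)$ indeed turns every edge on the path into a non-cross edge and converts the target into lower-bounding $\Pr[\tilde X_i = \tilde X_j]$ --- but it does not by itself simplify the hard part, and you stop at the point where the work actually begins. The sketch of ``induct with a stronger invariant on the full joint of $(\tilde X_i, \tilde X_{\mathrm{current}})$'' is not a proof; you would still need to identify the right invariant and show it propagates across non-cross edges, and it is not obvious how to do that in a way that recovers $\Omega(\max(b(i),b(j)))$ at the end.

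For comparison, the paper does not run an induction at all. It first reduces to proving $\min(b(i),b(j)) = O(\Pr[X_i\neq X_j])$ (Claim~\ref{claim:odd-cross}), which suffices because $\max = \min + |b(i)-b(j)|$ and the difference term is already controlled by Lemma~\ref{lemma:br-bound}. Then it case-splits on $\Br(X_i\mid X_j)$. If $\Br(X_i\mid X_j) = 2b(i)$ or $\Br(X_i\mid X_j) \ge 1/2$, a one-line appeal to Lemma~\ref{lemma:bern-lb-dist} finishes. In the remaining case ($\Br(X_i\mid X_j) < 2b(i)$ and $<1/2$), the data processing inequality (Lemma~\ref{lemma:data-processing-inequality}) shows that for \emph{every} intermediate $X_k$, both states of $X_k$ carry non-zero unmatched $X_i$-mass of opposite colors totalling more than $1/2$. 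In that regime, an edge toggles the unmatched configuration if and only if it is a cross (because the unmatched mass exceeds $1/2$), so an odd number of crosses means an odd number of toggles, which forces all of that $>1/2$ unmatched mass to land on the disagreement event, giving $\Pr[X_i\neq X_j] > 1/2 \ge \min(b(i),b(j))$ directly. You do cite the data processing inequality as a tool you ``expect to need,'' which is the right instinct --- but the decisive observation that toggle $\Leftrightarrow$ cross once the unmatched mass exceeds $1/2$, and the preliminary case split that isolates this regime, are missing from your plan.
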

\begin{proof}
    For part (a), observe that
    \begin{align*}
        |\Pr[X_i = 0] - \Pr[X_j = 0]| &= |\Pr[X_i = 0, X_j=1] - \Pr[X_i=1, X_j = 0] \\&\le \Pr[X_i = 0, X_j=1] + \Pr[X_i=1, X_j = 0] = \Pr[X_i \ne X_j].
    \end{align*}
    Part (b) follows from \Cref{lemma:bern-lb-dist} above.
    Part (c) is relatively most involved. Note that the ``even crosses'' case follows from \Cref{lemma:br-bound}. For the ``odd crosses" case, we have
    \begin{claim}\label{claim:odd-cross}
        If there are an odd number of crosses between $X_i$ and $X_j$, then $\min(b(i),b(j))=O(\Pr[X_i \ne X_j])$.
    \end{claim}
    \begin{proof}
        \textbf{Case 1:} $\Br(X_i | X_j) = 2 \cdot b(i)$. In this case, we use (2) to get
        \begin{align*}
            \min(b(i),b(j)) \le b(i) = \frac12\Br(X_i|X_j) = O(\Pr[X_i \ne X_j]).
        \end{align*}

        \textbf{Case 2:} $\Br(X_i | X_j) \ge \frac{1}{2}$. In this case, again by (2) we know 
        $$
        \min(b(i),b(j)) \le \frac12 \le \Br(X_i | X_j) = O(\Pr[X_i \ne X_j]).
        $$
        
        \textbf{Case 3:} $\Br(X_i | X_j) < 2 \cdot b(i), \frac{1}{2}$. 
        \begin{figure}[H]
            \centering
            \includegraphics[scale=0.35]{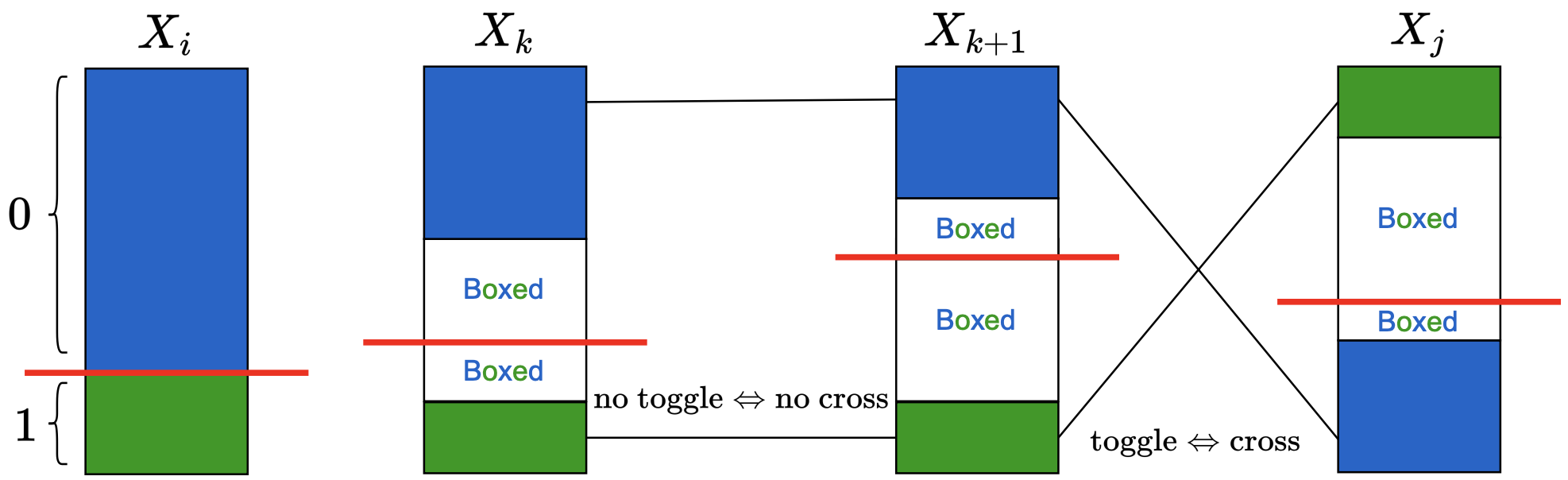}
            \caption{ The case where $\Br(X_i | X_j) < 2 \cdot b(i), \frac{1}{2}$. Each $X_{i+1},\dots,X_j$ has unmatched blue/green mass in each state, totalling more than $\frac12$. The boxed mass in each state has equal green and blue mass. Toggles and crosses are in one-to-one correspondence in this case. In the picture above, we are imagining there to be an odd number of toggles on the path $X_i \to X_j$. Observe how this results in all of the unmatched mass in $X_j$ contributing to $\Pr[X_i \neq X_j]$. In particular, for $X_j$ (for which \#toggles is odd), $\Pr[X_i \neq X_j]=1-\frac12\Br[X_i|X_j]$, while for $X_{k+1}$ (for which \#toggles is even), $\Pr[X_i \neq X_{k+1}]=\frac12\Br[X_i|X_{k+1}]$.}
            \label{fig:toggle-cross}
        \end{figure}
        In this case, observe that by \Cref{lemma:data-processing-inequality}, for any $X_k \in X_i \to X_j$, $\Br(X_i | X_k) < 2 \cdot b(i), \frac12$. In particular, the former inequality means that at every $k$, the two states of $X_k$ contain non-zero amounts of \textit{different} ``unmatched" states of $X_i$ (refer to \Cref{fig:toggle-cross}), and as we move from $X_k$ to $X_{k+1}$, the configuration of these unmatched states either toggles (blue-green $\to$ green-blue or green-blue $\to$ blue-green) or stays the same. The latter inequality means that these states further total to $>\frac12$ at every $k$. Observe that in this case, the configuration from $X_k$ to $X_{k+1}$ toggles if and only if the edge connecting them is a cross. This is because, since the unmatched states amount to a mass $>\frac12$, a toggle would mean that $\Pr[X_k=0, X_{k+1}=1] + \Pr[X_k=1, X_{k+1}=0] > \frac12$, which corresponds to a cross edge. Similarly, a non-toggle would mean that $\Pr[X_k=0, X_{k+1}=0] + \Pr[X_k=1, X_{k+1}=1] > \frac12$, which is a non-cross edge. Now, if we are given there are an odd number of crosses between $X_i$ and $X_j$, then that means there are an odd number of toggles, which means that the unmatched states contribute their mass towards $\Pr[X_i \neq X_j]$. Namely, we get $\Pr[X_i \neq X_j] > \frac12 \ge \min(b(i), b(j))$. 
    \end{proof}
    To conclude the proof of the ``odd crosses" case, we note that
    $$\max(b(i),b(j)) = \min(b(i),b(j)) + |b(i)-b(j)| = O(\Pr[X_i \ne X_j]),$$
    where we used \Cref{claim:odd-cross} and \Cref{lemma:br-bound}.
\end{proof}

We aim to show that the sum of the three metrics given in \Cref{claim:three} will be an approximation of $\Pr[X_i \ne X_j]$. This would follow from showing:
\begin{lemma} \label{lemma:cross-exist}
    Suppose $|\Pr[X_i = 0] - \Pr[X_j=0]|,\Br(X_i | X_j) \le \frac{1}{100} \cdot \Pr[X_i \ne X_j]$, then it must hold that $b(i),b(j) \in [0.485,0.5]$ and that there are an odd number of crosses between $X_i$ and $X_j$.
\end{lemma}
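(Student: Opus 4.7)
The plan is to derive both conclusions by contradiction, working with the joint probabilities $p_{xy} = \Pr[X_i = x, X_j = y]$ for $x,y \in \{0,1\}$. Denote $p = \Pr[X_i \ne X_j] = p_{01}+p_{10}$ and note $\Br(X_i|X_j) = 2\min(p_{00},p_{10}) + 2\min(p_{01},p_{11})$. A useful preliminary is that $a \mapsto \min(a, 1-a)$ is $1$-Lipschitz, so $|b(i)-b(j)| \le |\Pr[X_i=0]-\Pr[X_j=0]| \le p/100$.

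For the bias bound, suppose for contradiction $b(i) < 0.485$. By globally flipping all variables (which preserves both hypotheses and the conclusion), I may assume $\Pr[X_i=0] = b(i) < 0.485$, so $\Pr[X_i=1] > 0.515$. \emph{Case I:} $\Pr[X_j=0] > 0.5$. Then $\Pr[X_j=0] - \Pr[X_i=0] > 0.015$, which forces $p \ge 1.5$ via the marginal hypothesis, a contradiction. \emph{Case II:} $\Pr[X_j=0] = b(j) \le 0.5$, whence $b(j) < 0.495$ and $p_{11} \ge \Pr[X_i=1] + \Pr[X_j=1] - 1 > 0.02 > p/200$. The $\Br$ bound then forces $\min(p_{01},p_{11}) = p_{01} \le p/200$, so $p_{10} \ge 199p/200 > p/200$ as well, which in turn forces $\min(p_{00},p_{10}) = p_{00} \le p/200$. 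This yields $\Pr[X_i=0] = p_{00}+p_{01} \le p/100$ while $\Pr[X_j=0] \ge p_{10} \ge 199p/200$, so their difference is of order $p$, again contradicting the marginal hypothesis. Hence $b(i), b(j) \in [0.485, 0.5]$.

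For the odd-crosses conclusion, the biases now in $[0.485, 0.5]$ give $\Br(X_i|X_j) \le p/100 \le 0.01 < \min(2b(i), 1/2)$, placing us in Case 3 of the proof of \Cref{claim:odd-cross}. By \Cref{lemma:data-processing-inequality}, the same strict inequalities $\Br(X_i|X_k) \le \Br(X_i|X_j) < \min(2b(i), 1/2)$ hold at every vertex $X_k$ on the path from $X_i$ to $X_j$, so the ``toggle iff cross'' characterization proved there applies along the entire path. I would then partition the joint distribution of $(X_i, X_j)$ into its \emph{matched} mass (totaling $\Br(X_i|X_j)$, in which $X_i$ is a fair coin given $X_j$ and thus contributes $\Br(X_i|X_j)/2$ to $p$) and its \emph{unmatched} mass (totaling $1-\Br(X_i|X_j)$, in which $X_i$ is deterministic given $X_j$). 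The toggle analysis shows the unmatched contribution to $p$ equals $0$ if the number of crosses along the path is even and $1 - \Br(X_i|X_j)$ if it is odd. An even number of crosses thus forces $p = \Br(X_i|X_j)/2 \le p/200$, impossible for $p > 0$, so the number of crosses must be odd.

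The step I expect to be the main obstacle is Case II of the bias argument, where $X_i$ and $X_j$ are biased toward the same value: neither hypothesis alone directly contradicts $b(i) < 0.485$, and one must chain together the $\Br$ bound (to pin down which entries realize the minima defining $\Br(X_i|X_j)$) with the marginal hypothesis to push through the contradiction. The odd-crosses step then falls out cleanly from the toggle/cross machinery already developed in the proof of \Cref{claim:odd-cross}, with the only new ingredient being the straightforward observation that even crosses annihilate the unmatched contribution.
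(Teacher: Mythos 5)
Your proposal is correct and takes a genuinely different route from the paper's proof. The paper proceeds by casing on whether $\Br(X_i|X_j) < 2b(i)$ or $\Br(X_i|X_j) = 2b(i)$: in the first case it uses the toggle/cross structure to pin down $\Pr[X_i\ne X_j]$ as $\frac12\Br$ (even crosses) or $1-\frac12\Br$ (odd crosses), rules out even, and reads off the bias bounds from $\Pr[X_i=0,X_j=1]\ge\frac{99}{200}$; in the second case it derives a standalone contradiction. You instead prove the bias bound first by an elementary calculation directly on the $2\times2$ joint probability table: the $\Br$ hypothesis controls $\min(p_{00},p_{10})$ and $\min(p_{01},p_{11})$, and once $p_{11}$ is shown to exceed $p/200$ (from the biases being bounded away from $\frac12$ via the marginal hypothesis), a chain of forced minima drives $\Pr[X_i=0]$ and $\Pr[X_j=0]$ to differ by $\Theta(p)$, contradicting the marginal hypothesis. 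The bias bound then automatically implies $\Br \le 0.01 < 2\cdot 0.485 \le 2b(i)$, so the paper's second case never arises, and the odd-crosses conclusion follows in one stroke from the toggle/cross formula. Your version replaces the paper's $\Br<2b(i)$ vs.\ $\Br=2b(i)$ dichotomy with a single, self-contained bias argument at the cost of more joint-probability bookkeeping; it is, I think, a cleaner logical structure.

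One gap you should close: you only explicitly rule out $b(i)<0.485$, and the Lipschitz bound $|b(i)-b(j)|\le p/100\le 0.01$ then gives only $b(j)\ge 0.475$, not $0.485$. However, your Case II argument is symmetric in $i$ and $j$ --- the $\Br$ hypothesis bounds both $\min(p_{00},p_{10})$ and $\min(p_{01},p_{11})$, which is all you use --- so you can simply rerun it assuming $b(j)<0.485$. Make that explicit. Also note, as the paper does implicitly, that both your contradictions (and the paper's) require $\Pr[X_i\ne X_j]>0$; the lemma is only invoked in the regime where this holds.
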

\begin{proof}
    \textbf{Case 1: $\Br(X_i | X_j) < 2 \cdot b(i)$.} In this case, as $\Br(X_i | X_j)\le \frac{1}{100}\cdot\Pr[X_i \neq X_j] < \frac{1}{2}$ as well, as depicted in \Cref{fig:toggle-cross}, it must hold that $\Pr[X_i \ne X_j] = \frac{1}{2} \cdot \Br(X_i | X_j)$ if there are an even number of crosses from $X_i$ to $X_j$, and $\Pr[X_i \ne X_j] = 1-\frac{1}{2} \cdot \Br(X_i | X_j)$ if there are an odd number of crosses. In the even case, it would then not hold that $\Br(X_i | X_j) \le \frac{1}{100} \cdot \Pr[X_i \ne X_j]$. Thus, we must be in the odd number of crosses case, and 
    $$\Pr[X_i \ne X_j] = 1-\frac{1}{2} \cdot \Br(X_i | X_j) \ge 1 - \frac{1}{200}.$$
    Without loss of generality, suppose $\Pr[X_i = 0] \ge \Pr[X_i = 1]$ (as in \Cref{fig:toggle-cross} too). Then, we can see that 
    $$\Pr[X_i = 0, X_j = 1] \ge \frac{1}{2} \cdot (1 - \Br(X_i | X_j)) \ge \frac{99}{200}.$$
    Accordingly, $\Pr[X_i = 0], \Pr[X_j =1] \ge \frac{99}{200}$. Moreover, as $|\Pr[X_i = 0] - \Pr[X_j = 0]| \le \frac{1}{100}$, then 
    $$\Pr[X_j = 0] \ge \frac{99}{200} - \frac{1}{100} = \frac{97}{200}.$$
    Since $\Pr[X_j = 0] \ge \frac{97}{200}$ and $\Pr[X_j = 1] \ge \frac{99}{200}$, it holds that $b(j) \in [0.485,0.5]$. Similarly,
    $$\Pr[X_i = 1] \ge \Pr[X_j = 1] - |\Pr[X_i = 0] - \Pr[X_j = 0]| \ge \frac{99}{200} - \frac{1}{100} = 0.485.$$
    So, $b(i) \in [0.485,0.5]$ too.
    
    \textbf{Case 2: $\Br(X_i | X_j) = 2\cdot b(i)$}.
    \begin{figure}[H]
    \centering
    \includegraphics[scale=0.35]{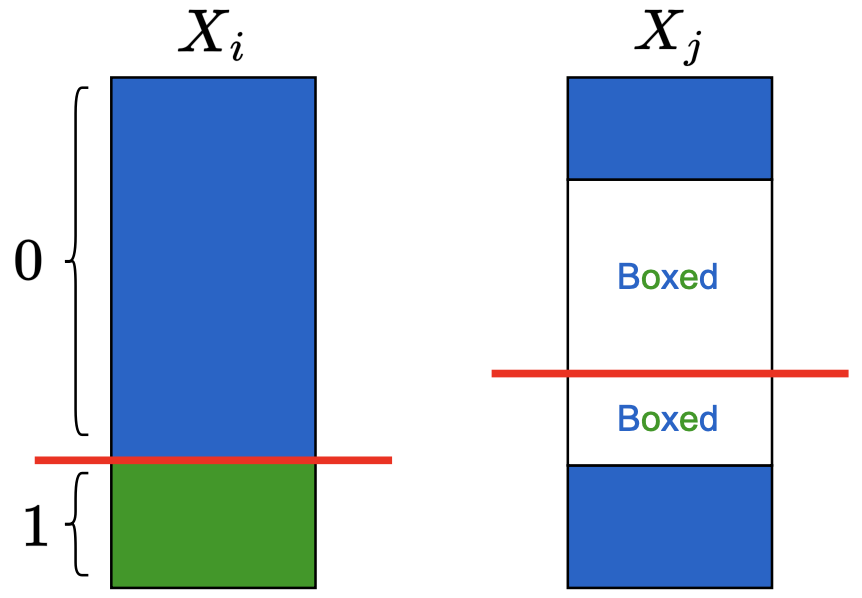}
    \caption{The case where $\Br(X_i | X_j) = 2 \cdot b(i).$ The mass corresponding to the smaller state in $X_i$ (green mass) should all be in a box in $X_j$.}
    \label{fig:br=2bi}
\end{figure}
        Without loss of generality assume $\Pr[X_i = 0] \ge \Pr[X_i = 1]$. Since $\Br(X_i | X_j) = 2\cdot b(i)$, the mass corresponding to the smaller state in $X_i$ (\Cref{fig:br=2bi}) must all be in a box in $X_j$. This means that 
    \begin{align}
        \label{eqn:a}
        \Pr[X_i \ne X_j] &\in [\Pr[X_i = 0, X_j = 1], \Pr[X_i = 1, X_j = 0]] \nonumber \\
        &\in\left[\Pr[X_i = 0, X_j = 1],\frac{1}{2}\cdot \Br(X_i | X_j) + \Pr[X_i = 0, X_j = 1]\right].
    \end{align}
    Since by assumption, $b(i)= \Pr[X_i=1]$,
    \begin{align*}
        \Pr[X_j=1] &\le \Pr[X_i=1] + |\Pr[X_i=1]-\Pr[X_j=1]| \\
        &=b(i) + |\Pr[X_i=1]-\Pr[X_j=1]| \\
        &=b(i) + |\Pr[X_i=0]-\Pr[X_j=0]| \\
        &=\frac12\cdot\Br(X_i|X_j) + |\Pr[X_i=0]-\Pr[X_j=0]| \\
        &\le \frac{2}{100} \cdot \Pr[X_i \ne X_j].
    \end{align*}
    But then combining with \eqref{eqn:a}, this gives
    \begin{align*}
        \Pr[X_i \ne X_j] &\le \frac{1}{2}\cdot \Br(X_i | X_j) + \Pr[X_i = 0, X_j = 1] \\
        &\le \frac{1}{2}\cdot \Br(X_i | X_j) + \Pr[X_j = 1] \\
        &\le \frac{1}{200} \Pr[X_i \ne X_j] + \frac{2}{200} \Pr[X_i \ne X_j] = \frac{3}{200} \Pr[X_i \ne X_j] < \Pr[X_i \ne X_j],
    \end{align*}
    which is a contradiction. 
\end{proof}

\begin{corollary}\label{corr:metrics-good}
The sum of the three metrics in \Cref{claim:three} i.e., $d_{\marg} + d_{\Br} + d_{\negcor}$ is a $\Theta(1)$-distortion embedding for $\Pr[X_i \ne X_j]$.
\end{corollary}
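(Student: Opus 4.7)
The plan is to establish the two directions of $\Theta(1)$-distortion separately. The upper bound $d_{\marg}(i,j) + d_{\Br}(i,j) + d_{\negcor}(i,j) = O(\Pr[X_i \ne X_j])$ is essentially immediate from \Cref{claim:three}: each of the three terms has already been shown to be at most $O(\Pr[X_i \ne X_j])$, so summing the three bounds gives the desired upper bound with at most a constant-factor loss.

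The lower bound $\Pr[X_i \ne X_j] = O(d_{\marg} + d_{\Br} + d_{\negcor})$ is where \Cref{lemma:cross-exist} does the real work. I will split into cases based on which term dominates. If either $d_{\marg}(i,j) \ge \frac{1}{100}\Pr[X_i \ne X_j]$ or $d_{\Br}(i,j) \ge \frac{1}{100}\Pr[X_i \ne X_j]$, then that single term already witnesses the lower bound, and we are done. Otherwise, both of these quantities are strictly smaller than $\frac{1}{100}\Pr[X_i \ne X_j]$, which is precisely the hypothesis of \Cref{lemma:cross-exist}. Applying it, we conclude that the number of crosses between $X_i$ and $X_j$ is odd and that $b(i), b(j) \in [0.485, 0.5]$. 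In this regime, by the definition of $d_{\negcor}$ in the odd-cross case, we have $d_{\negcor}(i,j) = \max(b(i), b(j)) \ge 0.485$. Since $\Pr[X_i \ne X_j] \le 1$ trivially, this gives $d_{\negcor}(i,j) \ge 0.485 \cdot \Pr[X_i \ne X_j]$, completing the lower bound with constant $\Omega(1)$.

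Combining the two bounds, we obtain
\begin{align*}
    \Omega(1)\cdot \Pr[X_i \ne X_j] \le d_{\marg}(i,j) + d_{\Br}(i,j) + d_{\negcor}(i,j) \le O(1)\cdot \Pr[X_i \ne X_j],
\end{align*}
which by the definition of distortion in \Cref{eqn:metric-embedding-def} is exactly a $\Theta(1)$-distortion embedding.

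There is no genuine obstacle here: the technical heavy lifting has been absorbed into \Cref{claim:three} (for the upper bound on each component) and \Cref{lemma:cross-exist} (for the dichotomy that forces an odd cross and near-uniform biases when both $d_{\marg}$ and $d_{\Br}$ are small). The corollary is essentially a short case analysis that combines these two ingredients. The only subtlety worth double-checking is that the constants chosen in the threshold ($\tfrac{1}{100}$) and the resulting bias window ($[0.485, 0.5]$) are mutually consistent, which they are by the proof of \Cref{lemma:cross-exist}; and that in the third case we do not need a tighter relationship between $d_{\negcor}(i,j)$ and $\Pr[X_i \ne X_j]$, because $\Pr[X_i \ne X_j]$ is bounded by $1$ and $d_{\negcor}(i,j)$ is bounded below by an absolute constant in that case.
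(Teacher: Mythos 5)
Your proof is correct and takes essentially the same approach as the paper: the upper bound is immediate from \Cref{claim:three}, and the lower bound rests on invoking \Cref{lemma:cross-exist} to force $d_{\negcor}\ge 0.485$ whenever $d_{\marg}$ and $d_{\Br}$ are both small. The only superficial difference is that you phrase the lower bound as a direct case analysis while the paper phrases it as a contradiction; the substance is identical.
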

\begin{proof}
    By \Cref{claim:three}, we know that the metrics do not individually overestimate (upto constant factors). Now, we just want to show that their sum does not sufficiently underestimate. If the sum of the metrics was to underestimate under $\frac{1}{100} \cdot \Pr[X_i \ne X_j]$, then the conditions of \Cref{lemma:cross-exist} must hold. However, the result of \Cref{lemma:cross-exist} then implies that the third metric will yield distance $\ge 0.485$, causing a contradiction as the distance of any $\Pr[X_i \ne X_j] \le 1 < 100 \cdot 0.485$.
\end{proof}

Now all that remains is to show that each of the three metrics in \Cref{claim:three} can be embedded efficiently into $\ell_1$ with constant distortion. The first metric can be embedded exactly into $\ell_1$ simply with one coordinate having value $\Pr[X_i = 0]$. The third metric also easily embeds into $\ell_1$ by having a single coordinate for each $X_i$ which is $b(i)$ if there is an even number of crosses from the root to $X_i$, and $-b(i)$ if there is an odd number of crosses from the root to $X_i$. For any two nodes $X_i$ and $X_j$, note that this coordinate measures a distance $|b(i)-b(j)|$ iff there is an even number of crosses on the path between $X_i$ and $X_j$, and distance $|b(i)+b(j)| \in [\max(b(i), b(j)), 2\cdot\max(b(i)+b(j))]$ iff there is an odd number of crosses.

The second metric is the most complicated to embed into $\ell_1$. To do so, we once again seek to reduce the task at hand to embedding a capped-tree metric. Recall that in the case of symmetric tree Ising models, we were able to reduce the task to a tree metric with a \textit{fixed} cap. The more general asymmetric case requires us to be able to embed a tree metric with a \textit{Lipschitz} cap instead, where the Lipschitzness in the caps is with respect to the tree distance between nodes. More concretely, we consider a tree with the same tree structure as the tree Ising model. We specify edge weights for every edge $e=(X_u, X_v)$ in the tree, via a quantity that we call ``forced randomness".
\begin{definition}[Forced randomness]
    For an edge $e = (X_u,X_v)$, its forced randomness $\mathcal{F}_e$ is defined as follows: $\mathcal{F}_e = \max (\Br(X_u | X_v), \Br(X_v | X_u))$. 
\end{definition}
Now, we define the following Lipschitz capped-tree metric:
\begin{definition}[Forced randomness Lipschitz-capped tree metric]
    Every node $X_i$ in the tree is associated with a cap $b(i)$. The edge weights are given by $\mathcal{F}_e$. The distance is defined as 
    \begin{equation}
        d_{\mathcal{F}}(i,j) = \min\left(\sum_{e \in (i \to j)}\mcF_e, \max(b(i), b(j))\right).
    \end{equation}
\end{definition}
Note that the caps respect Lipschitzness with respect to the tree distance.
\begin{claim}
    The forced randomness Lipschitz-capped tree distance is a valid Lipschitz capping because for every $e = (X_u,X_v)$ it holds that $|b(u)-b(v)| \le \mathcal{F}_e$.
\end{claim}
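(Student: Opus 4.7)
The plan is to observe that this claim is essentially an immediate corollary of Lemma \ref{lemma:br-bound}(a), which was already established for \emph{arbitrary} pairs of nodes $X_i, X_j$ in the tree Ising model (not just adjacent ones). Specializing that lemma to the endpoints $X_u, X_v$ of the edge $e$, we obtain
\[
b(u) - b(v) \le \tfrac{1}{2}\, \Br(X_u \mid X_v),
\]
and by swapping the roles of $u$ and $v$ in the same argument,
\[
b(v) - b(u) \le \tfrac{1}{2}\, \Br(X_v \mid X_u).
\]

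Taking the maximum of the two right-hand sides yields
\[
|b(u) - b(v)| \le \tfrac{1}{2} \max\bigl(\Br(X_u \mid X_v),\, \Br(X_v \mid X_u)\bigr) = \tfrac{1}{2}\, \mathcal{F}_e \le \mathcal{F}_e,
\]
which is precisely the Lipschitz property $|M(x_u) - M(x_v)| \le d(x_u, x_v)$ restricted to an edge. Since in a weighted tree with nonnegative edge weights the Lipschitz inequality across an edge propagates to all pairs by the triangle inequality (summing along the path from $x_i$ to $x_j$ yields $|b(i) - b(j)| \le \sum_{e \in (i \to j)} \mathcal{F}_e$), this edge-wise bound is exactly what is needed to conclude that the cap function $M(x_i) = b(i)$ is Lipschitz with respect to the tree metric induced by the edge weights $\mathcal{F}_e$, so that the distance $d_{\mathcal{F}}$ falls within the framework of Theorem \ref{thm:lipschitz-capped-tree}.

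There is no real obstacle here; all the conceptual work was done in establishing Lemma \ref{lemma:br-bound}(a), which used the pointwise inequality $\min_x \Pr[X_i = x, X_j = y] \le \Pr[X_i] - \max_x \Pr[X_i = x, X_j = y]$. The only thing to verify is that the factor of $\tfrac{1}{2}$ loss is harmless, which it is, since we are only seeking the weaker inequality $|b(u) - b(v)| \le \mathcal{F}_e$ (not $\le \tfrac{1}{2}\mathcal{F}_e$).
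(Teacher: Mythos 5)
Your proposal is correct and takes essentially the same approach as the paper, which likewise derives the edge-wise Lipschitz bound $|b(u)-b(v)| \le \tfrac{1}{2}\mathcal{F}_e$ directly from Lemma~\ref{lemma:br-bound}. The additional observation about propagating the bound along paths via the triangle inequality is a helpful clarification but was already implicit in the definition of a Lipschitz cap.
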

\begin{proof}
    This follows directly from the proof of \Cref{lemma:br-bound}, which asserts that $|b(u)-b(v)| \le \frac{1}{2}\cdot\mathcal{F}_e$.
\end{proof}

We show that it suffices to be able to efficiently embed this Lipschitz-cap tree metric into $\ell_1$ in order to embed $\Br(X_i | X_j)$. First, we show that $d_{\mcF}(i, j)$ does not vastly underestimate the quantities that we are interested in.

\begin{lemma}[No underestimation]
    \label{lemma:forced-radomness-no-underestimate}
    Let $d_{\mathcal{F}}(i,j)$ denote the forced randomness Lipschitz-capped tree distance between $X_i$ and $X_j$. Then, $d_{\mathcal{F}}(i,j) \ge \min(\max(b(i),b(j)),\Br(X_i | X_j))$.
\end{lemma}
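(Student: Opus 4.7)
The plan is to prove the key inequality $\sum_{e \in (i \to j)} \mcF_e \ge \Br(X_i \mid X_j)$; the full lemma then follows by a short case analysis on whether this sum exceeds the cap $\max(b(i), b(j))$. To begin, root the tree at $X_i$ so that the path from $X_i$ to $X_j$ becomes a directed Markov chain $X_i = Y_0 \to Y_1 \to \cdots \to Y_k = X_j$ under the Bayesian-network interpretation of tree Ising models. The goal is then to control $\Br(X_i \mid X_j) = \Br(Y_0 \mid Y_k)$ by the sum of forced-randomness weights along this chain.

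The heart of the proof is a subadditivity lemma: for any Markov chain $A \to B \to C$ of binary random variables, $\Br(A \mid C) \le \Br(A \mid B) + \Br(B \mid C)$. I would first establish the clean reinterpretation that $\Br(X \mid Y)/2$ equals the Bayes error of predicting $X$ from $Y$: since $\max_f \Pr[X = f(Y)] = \sum_y \max_x \Pr[X=x, Y=y] = 1 - \sum_y \min_x \Pr[X=x, Y=y] = 1 - \Br(X \mid Y)/2$, the minimum prediction error is exactly $\Br(X \mid Y)/2$. Now let $f^*$ and $g^*$ denote the Bayes-optimal predictors for $A$ given $B$ and for $B$ given $C$ respectively, and consider the composed predictor $h(c) = f^*(g^*(c))$ for $A$ from $C$. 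By the triangle inequality for the Hamming metric on $\{0,1\}$, applied pointwise, $\Ind[A \ne f^*(g^*(C))] \le \Ind[A \ne f^*(B)] + \Ind[f^*(B) \ne f^*(g^*(C))] \le \Ind[A \ne f^*(B)] + \Ind[B \ne g^*(C)]$, where the second step uses that $f^*$ is a deterministic function. Taking expectations and noting that $h$ is a particular (hence possibly suboptimal) predictor for $A$ from $C$, we get $\Br(A \mid C)/2 \le \Pr[A \ne h(C)] \le \Br(A \mid B)/2 + \Br(B \mid C)/2$, giving the claim.

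A straightforward induction on path length, using that any sub-segment $Y_l \to Y_{l+1} \to \cdots \to Y_m$ inherits the Markov property on a tree, then yields $\Br(X_i \mid X_j) \le \sum_{l=0}^{k-1} \Br(Y_l \mid Y_{l+1}) \le \sum_{l=0}^{k-1} \mcF_{(Y_l, Y_{l+1})} = \sum_e \mcF_e$, where the second inequality is immediate from the definition $\mcF_e = \max(\Br(Y_l \mid Y_{l+1}), \Br(Y_{l+1} \mid Y_l))$. To finish, split into two cases. If $\sum_e \mcF_e \ge \max(b(i), b(j))$, then $d_{\mcF}(i,j) = \max(b(i), b(j)) \ge \min(\max(b(i), b(j)), \Br(X_i \mid X_j))$. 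Otherwise $d_{\mcF}(i,j) = \sum_e \mcF_e \ge \Br(X_i \mid X_j) \ge \min(\max(b(i), b(j)), \Br(X_i \mid X_j))$, completing the proof.

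The main obstacle is the subadditivity lemma; direct manipulation of the joint-distribution formula for $\Br$ quickly becomes unwieldy because $\min$ does not distribute over marginalization. The Bayes-error reinterpretation converts the problem into a clean triangle-inequality argument for compositions of deterministic classifiers, analogous to familiar excess-risk decompositions in learning theory, and this seems to be the natural route.
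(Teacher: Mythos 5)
Your proposal is correct, and it reaches the crucial inequality $\sum_{e}\mcF_e \ge \Br(X_i \mid X_j)$ by the same telescoping structure as the paper, but it proves the key one-step estimate by a genuinely different and arguably cleaner argument. The paper establishes Claim~\ref{claim:br-sum-across-edges} directly, showing $\Br(X_i \mid X_{k+1}) - \Br(X_i \mid X_k) \le \Br(X_k \mid X_{k+1})$ via a ``boxed mass'' picture: any new boxed blue-green mass gained when passing from $X_k$ to $X_{k+1}$ must come from mass that was unboxed and in separate states at $X_k$, and therefore is simultaneously yellow-purple boxed mass contributing to $\Br(X_k \mid X_{k+1})$. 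This is elementary but requires careful reasoning about how mass flows under the Markov property. Your route instead reinterprets $\Br(X \mid Y)/2$ as the Bayes error of predicting $X$ from $Y$ (a clean identity: $\max_f \Pr[X = f(Y)] = \sum_y \max_x \Pr[X=x, Y=y] = 1 - \Br(X\mid Y)/2$) and then derives $\Br(A\mid C) \le \Br(A\mid B) + \Br(B\mid C)$ from the pointwise triangle inequality $\Ind[A \ne f^*(g^*(C))] \le \Ind[A \ne f^*(B)] + \Ind[B \ne g^*(C)]$ for composed deterministic classifiers. This is a short, self-contained argument that buys extra generality: as you wrote it, the subadditivity is actually a triangle inequality that holds for \emph{any} joint distribution on $(A,B,C)$, with no Markov assumption needed --- you invoke the Markov property in your statement of the lemma but your proof never uses it. The paper's version is more local and visual and slots naturally into its series of ``boxed mass'' arguments elsewhere in the section, whereas yours isolates the metric-space structure of $\Br(\cdot\mid\cdot)$ explicitly, which could be reused independently.
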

\begin{proof}
    Note how this immediately holds if we show that the \emph{uncapped} distance i.e., $\sum_{e \in (i \to j)}\mcF_e$ is at least $\Br(X_i | X_j)$. Recall that
    \begin{align*}
        \sum_{e=(X_k, X_{k+1}) \in (i \to j)}\mcF_e = \sum_{(X_k, X_{k+1}) \in (i \to j)}\max(\Br(X_k|X_{k+1}), \Br(X_{k+1}|X_{k})) \ge \sum_{(X_k, X_{k+1}) \in (i \to j)}\Br(X_k|X_{k+1}).
    \end{align*}
    Thus, it is sufficient to show:
    \begin{claim}
        \label{claim:br-sum-across-edges}
        $\sum_{k : i \rightarrow j} \Br(X_k | X_{k+1}) \ge \Br(X_i | X_j)$
    \end{claim}
    \begin{proof}
        \begin{figure}[H]
            \centering
            \includegraphics[scale=0.35]{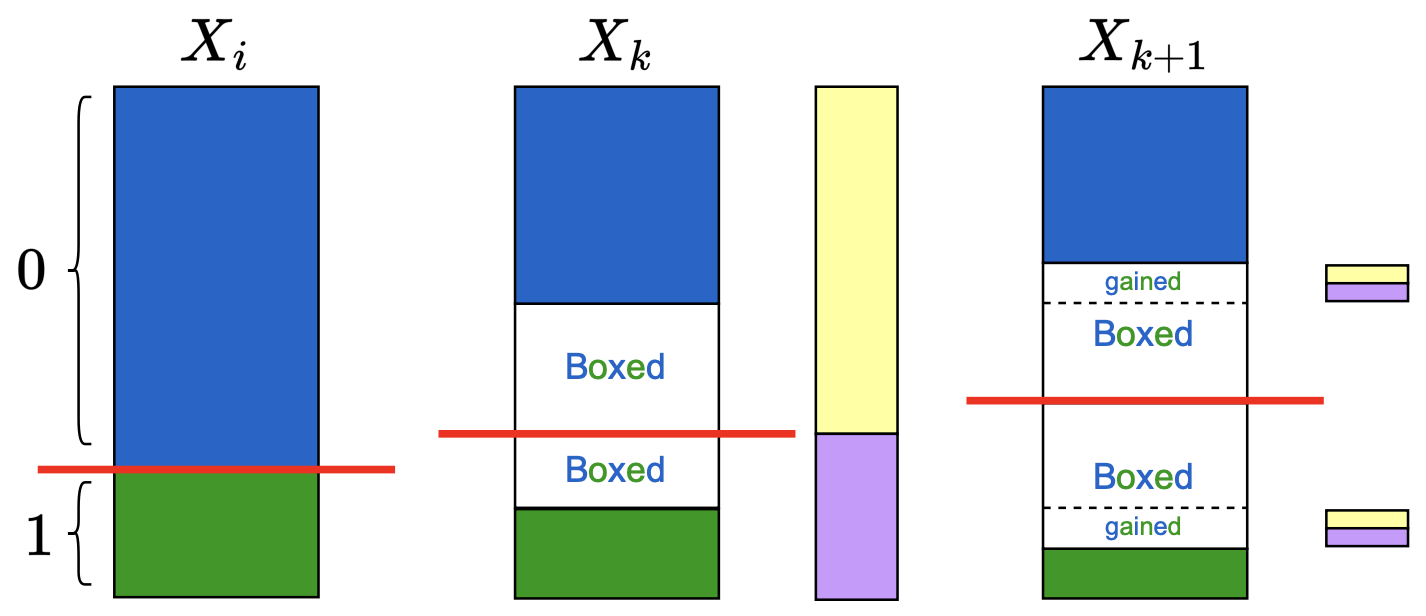}
            \caption{The ``blue-green" boxed mass in $X_k$ corresponds to $\Br[X_i|X_k]$. The gained blue-green boxed mass in $X_{k+1}$ must come from the unboxed mass from different states in $X_k$. Reinterpreting mass in the different states of $X_k$ as yellow and purple, we can see that the gained blue-green boxed mass certainly also constitutes yellow-purple boxed mass in $X_{k+1}$, meaning that it forms part of $\Br[X_k|X_{k+1}]$.}
            \label{fig:xk-xk+1}
        \end{figure}
        Let the mass in $X_i=0$ and $X_i=1$ be colored blue and green respectively, as in \Cref{fig:xk-xk+1}. For any $k:i \to j$, the ``blue-green" boxed mass is non-decreasing as we go from $X_k$ to $X_{k+1}$ (\Cref{lemma:data-processing-inequality}), and $\Br[X_i|X_{k+1}] - \Br[X_i|X_{k}]$ is exactly the increase in boxed mass. Observe that any increase in boxed mass at $X_{k+1}$ must come from mass that was unboxed and separate at $X_k$. Since this new boxed mass in either state of $X_{k+1}$ was from separate states in $X_{k}$, it also forms part of $\Br[X_k|X_{k+1}]$. This is made clear if we think of all the mass under $\Pr[X_k=0]$ as yellow, and all the mass under $\Pr[X_k=1]$ as purple, so that any additional blue-green boxed mass in $X_{k+1}$ is in fact also yellow-purple boxed mass. Thus, we have argued that $\Br[X_i|X_{k+1}]-\Br[X_i|X_{k}] \le \Br[X_{k}|X_{k+1}]$. Since this holds for every $k=i,i+1,\dots,j-1$, we can add up the inequalities to obtain a telescoping sum as follows:
        \begin{align*}
            \sum_{k:i \to j}\left(\Br[X_i|X_{k+1}]-\Br[X_i|X_{k}]\right)  = \Br[X_i|X_{j}]-\Br[X_i|X_{i}] = \Br[X_i|X_{j}] \le \sum_{k:i\to j}\Br[X_k|X_{k+1}].
        \end{align*}
    \end{proof}
\end{proof}

Second, and finally, we will show that this metric does not overestimate above $O(\Pr[X_i \ne X_j])$. 
 \begin{lemma}[No overestimation]
    \label{lemma:forced-radomness-no-overestimate}
    Let $d_{\mathcal{F}}(i,j)$ denote the forced randomness Lipschitz-capped tree distance between $X_i$ and $X_j$. Then, $d_{\mcF}(i,j) \le  200\cdot\Pr[X_i \ne X_j]$.
\end{lemma}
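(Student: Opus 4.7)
The plan is to argue by case analysis on the parity of crosses on the path from $X_i$ to $X_j$. Note first that $d_{\mcF}(i,j) \le \max(b(i), b(j)) \le 1/2$, so the claim is immediate whenever $\Pr[X_i \ne X_j] \ge 1/400$, and we may restrict attention to $\Pr[X_i \ne X_j] < 1/400$. When the number of crosses is odd, \Cref{claim:odd-cross} gives $\min(b(i), b(j)) = O(\Pr[X_i \ne X_j])$, and combined with \Cref{lemma:br-bound}(b) this yields $\max(b(i), b(j)) = O(\Pr[X_i \ne X_j])$; since $d_{\mcF}(i,j) \le \max(b(i), b(j))$, we are done.

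The content of the lemma lies in the even-crosses case. Here I further split based on whether the cap is active: if $\sum_e \mcF_e \ge \max(b(i), b(j))$ then $d_{\mcF}(i,j) = \max(b(i), b(j))$ and I must bound the cap, whereas if $\sum_e \mcF_e < \max(b(i), b(j))$ then $d_{\mcF}(i,j) = \sum_e \mcF_e$ and I must bound the sum. Both goals reduce to establishing a single lower bound on $\Pr[X_i \ne X_j]$, namely
\[
\Pr[X_i \ne X_j] \;\ge\; \Omega\!\left(\min\!\Big(\sum_e \mcF_e,\; \max(b(i), b(j))\Big)\right) \;=\; \Omega(d_{\mcF}(i,j)).
\]
To prove this bound, I analyze the forward Markov chain $X_i = X_0 \to X_1 \to \cdots \to X_L = X_j$ via the Bernoulli-randomness decomposition: at each step, $X_{k+1}|X_k = y$ is a mixture of an unbiased-coin branch (with weight $\Br(X_{k+1}|X_k=y) \le \mcF_{e_{k+1}}$) and a deterministic majority branch. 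Letting $A$ denote the event that the unbiased branch is triggered at at least one edge, a standard inclusion-exclusion / union-type argument gives $\Pr[A] \ge \Omega(\min(1, \sum_e \mcF_e))$. Conditional on $A$, $X_j$ is essentially sampled afresh from a uniform coin propagated through the remaining chain, so it is nearly independent of $X_i$; combined with the fact that under our regime \Cref{claim:three}(a), (b) and the even-crosses part of (c) force $b(i) \approx b(j) =: b \le 1/2$ with both on the same side, independence yields $\Pr[X_i \ne X_j \mid A] \ge 2b(1-b) - o(b) \ge \Omega(\max(b(i), b(j)))$. Multiplying these two estimates gives the desired bound.

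The main obstacle is executing the ``$X_j$ is nearly independent of $X_i$ given $A$'' step rigorously under the even-crosses condition. A cross edge does not cleanly translate to a flip in the composition of deterministic-branch majority maps unless the intermediate biases are close to $1/2$, and in general the path can pass through variables with small $b$. I expect to resolve this by tracking how biases evolve along the path (they are Lipschitz in $\mcF_e$ by the proof of \Cref{lemma:br-bound}, mirroring the cap's Lipschitz property), splitting the path at excursions where bias drops, and using a \Cref{lemma:cross-exist}-style dichotomy to ensure that whenever the argument would fail, an odd-crosses subinterval must exist and can be handled using the first-case bound. Once this bookkeeping is in place, the remaining calculation parallels the ``good edges only'' analysis of \Cref{lem:timwef-no-bad-edges-to-capped-tree}.
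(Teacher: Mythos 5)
Your proposal has a genuine gap, and it also differs substantially from the paper's route. The paper does not split on cross parity for this lemma; instead it partitions the path edges into $S_{ij}$ (where $\Br(X_k|X_{k+1}) \ge \Br(X_{k+1}|X_k)$) and $S_{ji}$ (the reverse), proves $\min\bigl(\max(b(i),b(j)), \sum_{e\in S_{ij}}\mcF_e\bigr) \le 100\Pr[X_i\ne X_j]$ by contradiction, and does so via a potential/telescoping argument: assuming neither the cap nor the sum is small, each edge in $S_{ij}$ forces $\Br(X_i|X_{k+1})-\Br(X_i|X_k) \ge \frac12\mcF_e$ (a ``boxed-mass'' argument), which telescopes to $\Br(X_i|X_j) \ge \frac12\sum_{S_{ij}}\mcF_e$, contradicting $\Br(X_i|X_j) \le 2\Pr[X_i\ne X_j]$; the degenerate sub-cases where the per-edge growth bound fails are closed off by direct inequalities using \Cref{lemma:br-bound} and \Cref{lemma:data-processing-inequality}. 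Summing the two halves gives the factor $200$.

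Your route is a direct sampling argument (event $A$ that some unbiased branch fires, then independence), and it has two unresolved problems. First, the one you flag is real and not easily patched: conditional on the first unbiased coin at some edge $k$, $X_{k+1}$ is uniform, but $X_j$ is the pushforward of $\mathrm{Unif}\{0,1\}$ through the remaining chain, which need not resemble the unconditional marginal of $X_j$; if the tail of the path is strongly biasing, $\Pr[X_i\ne X_j\mid A]$ can be far smaller than $\Omega(\max(b(i),b(j)))$, and your proposed ``split at excursions where bias drops'' fix is exactly where a new argument would be needed, not where it can be sidestepped. Second, and you do not flag this: $\mcF_e = \max(\Br(X_u|X_v), \Br(X_v|X_u))$ may be attained by the \emph{backward} direction, in which case the forward unbiased-branch weight $\Br(X_{k+1}|X_k=y)$ (which is what controls $\Pr[A]$ when you root at $X_i$) has no lower bound in terms of $\mcF_e$. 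So the inequality $\Pr[A] \ge \Omega(\min(1,\sum_e\mcF_e))$ does not follow from a union-type bound in one traversal direction; this is precisely why the paper's proof splits the edges into $S_{ij}$ and $S_{ji}$ and argues each set separately. Until both of these are resolved, the proof is incomplete.
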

\begin{proof}
    Let $\mathcal{S}_{ij}$ be the set of edges $(X_k,X_{k+1})$ on the path from $i$ to $j$ for which $\Br(X_k | X_{k+1}) \ge \Br(X_{k+1} | X_k)$, and let $S_{ji}$ be the set of edges for which $\Br(X_k | X_{k+1}) < \Br(X_{k+1} | X_k)$. %
    
    Let us first process edges in $S_{ij}$. Suppose $\max(b(i),b(j)) > 100 \cdot \Pr[X_i \ne X_j]$ and $\sum_{e \in S_{ij}} \mathcal{F}_e > 100 \cdot \Pr[X_i \ne X_j]$. \\
    \noindent \textbf{Case 1:} $\Br(X_i | X_k) < \min\left(\frac{99}{100} b(i), \frac{1}{2} b(k)\right)$ for every $e=(X_k, X_{k+1})\in S_{ij}.$ \\
    
    \begin{figure}[H]
        \centering
        \includegraphics[scale=0.4]{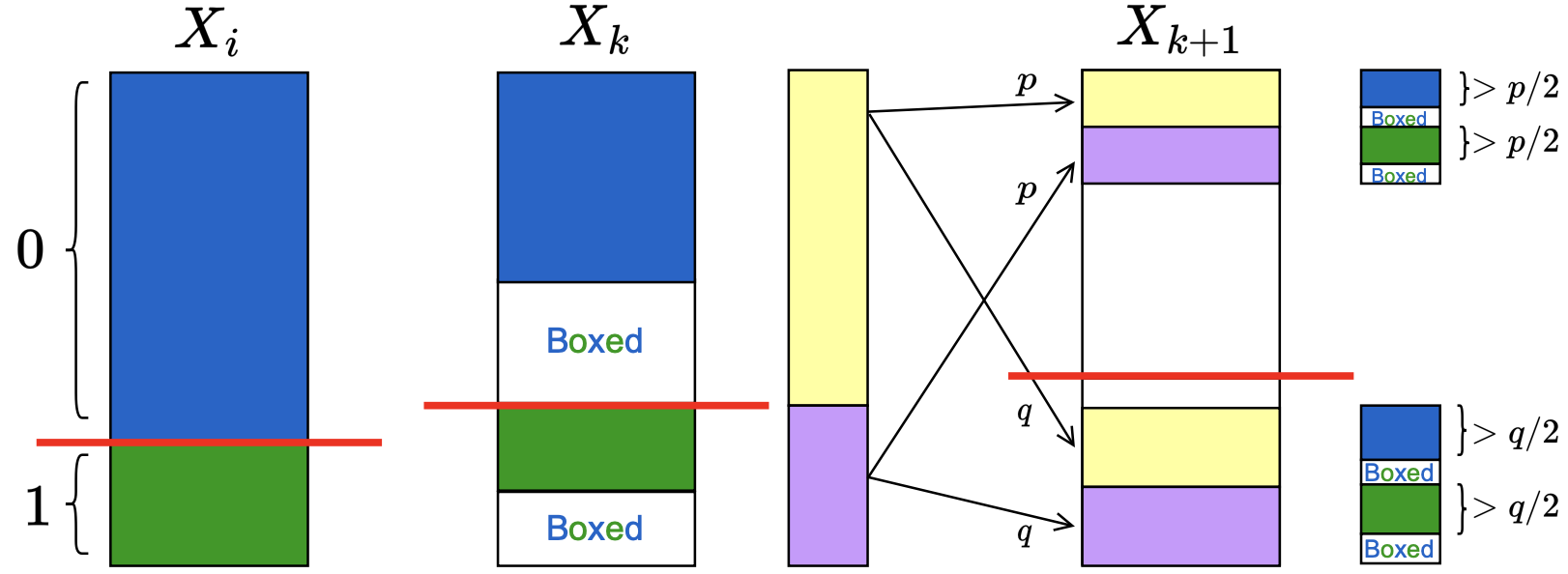}
        \caption{We reinterpret mass in the different states of $X_k$ as yellow and purple. The total ``yellow-purple" boxed mass in both states of $X_{k+1}$ constitutes $\Br[X_k|X_{k+1}]$. The gained blue-green boxed mass in $X_{k+1}$ must come from unboxed mass from different states in $X_k$. Under the assumption that $\Br(X_i | X_k) < \min\left(\frac{99}{100} b(i), \frac{1}{2} b(k)\right)$, at least half of $\Br[X_k|X_{k+1}]$ is derived from unboxed blue-green mass in $X_k$.}
        \label{fig:2br<bi}
    \end{figure}
    Consider any $e=(X_k, X_{k+1})\in S_{ij}$. Refer to \Cref{fig:2br<bi}. Since $\Br(X_i | X_k) < \frac{99}{100} b(i) < 2b(i)$, there must be different-colored unboxed mass overflowing in each state of $X_k$. As mentioned previously in the proof of \Cref{claim:br-sum-across-edges}, $\Br[X_i|X_{k+1}] - \Br[X_i|X_{k}]$ is exactly the increase in ``blue-green" boxed mass as we go from $X_k$ to $X_{k+1}$, and this increase in boxed mass \textit{must} come from mass that was unboxed and separate at $X_k$. With a view to reason about $\Br[X_k|X_{k+1}]$, let us reinterpret the mass in state 0 at $X_k$ as yellow and that in state 1 as purple. Then, $\Br[X_k|X_{k+1}]$ is precisely the total ``yellow-purple" boxed mass across both states of $X_{k+1}$. In the figure, $\Br[X_k|X_{k+1}]=2p+2q$, corresponding to a $p$ amount of mass coming from each state of $X_k$ into state 0 of $X_{k+1}$, and a $q$ amount of mass coming from each state of $X_k$ into state 1 of $X_{k+1}$. But recall that because of the Markov property, the proportion of blue and green mass in, say the $p$ amount of yellow mass coming from state $``X_k=0"$ into state $``X_{k+1}=0"$ is the same as it was in all of state $``X_{k}=0"$. Since we are also assuming that $\Br[X_i|X_k]<\frac12 b(k)$, this means that at least one half of the yellow mass constituting $\Br[X_k|X_{k+1}]$ in state $``X_{k+1}=0"$ corresponds to unboxed blue mass. Similarly, at least one half of the purple mass constituting $\Br[X_k|X_{k+1}]$ in state $``X_{k+1}=0"$ corresponds to unboxed green mass. This means that we have at least $\frac{p}{2}+\frac{p}{2}=p$ new boxed blue-green mass in state 0 of $X_{k+1}$ arising out of previously unboxed blue-green mass in $X_k$. The same holds true for the state $X_{k+1}$, where we have at least $\frac{q}{2}+\frac{q}{2}=q$ new boxed blue-green mass arising out of unboxed mass in $X_k$. In total, we have at least $p+q \ge \frac{1}{2}\Br[X_k|X_{k+1}]$ new boxed blue-green mass in $X_{k+1}$ arising out of unboxed mass in $X_k$, meaning, from our earlier reasoning and the definition of $\mcF_e$ for $e \in S_{ij}$, that
    $$
        \Br[X_i|X_{k+1}] - \Br[X_i|X_{k}] \ge \frac{1}{2}\Br[X_k|X_{k+1}] = \frac{1}{2}\mcF_e. %
    $$
    Under the current case, this logic holds for every $e=(X_k, X_{k+1}) \in S_{ij}$, which means that
    \begin{align*}
        \frac{1}{2} \sum_{e \in S_{ij}}\mathcal{F}_e &\le \sum_{e=(X_k,X_{k+1}) \in S_{ij}}\Br(X_i | X_{k+1})-\Br(X_i | X_k) \nonumber \\
        &\le \sum_{k:i\to j}\Br(X_i | X_{k+1})-\Br(X_i | X_k) \nonumber \\
        &= \Br[X_i|X_j] \\
        &\le 2\Pr[X_i \neq X_j] \\
        &< \frac{1}{50}\sum_{e \in S_{ij}}\mathcal{F}_e
    \end{align*}
    which is a contradiction. 
    
    Thus, it must be the case that $\Br(X_i | X_k) \ge \min\left(\frac{99}{100} b(i), \frac{1}{2} b(k)\right)$ for \textit{some} $e=(X_k, X_{k+1})\in S_{ij}.$ Now, observe that under our assumptions and by \Cref{lemma:br-bound},
    \begin{equation}
        \label{eqn:bi-bj-bound}
        |b(i)-b(j)| \le \Pr[X_i \neq X_j] < \frac{1}{100}\max(b(i),b(j)).
    \end{equation}
    Therefore,
    \begin{equation}
        \label{eqn:bi-bound}
        b(i) \ge \max(b(i),b(j)) - |b(i)-b(j)| > \frac{99}{100}\max(b(i),b(j)).
    \end{equation}

    \noindent \textbf{Case 2:} $\Br(X_i | X_k) \ge \frac{99}{100} b(i)$ for some $e=(X_k, X_{k+1}) \in S_{ij}$. \\
    
    If $\Br(X_i | X_k) \ge \frac{99}{100} b(i)$, we would have
    \begin{align*}
        \Pr[X_i\neq X_j] &\ge \frac{1}{2}\Br[X_i | X_j] \\
        &\ge \frac{1}{2}\Br[X_i | X_k] \qquad (\text{\Cref{lemma:data-processing-inequality}})\\
        &\ge \frac{1}{2}\cdot \frac{99}{100} b(i) \\
        &>\frac{1}{2}\cdot \left(\frac{99}{100}\right)^2\max(b(i),b(j))\qquad (\text{from }\eqref{eqn:bi-bound}) \\
        &> \Pr[X_i\neq X_j],
    \end{align*}
    which is a contradiction. %

    \noindent \textbf{Case 3:} $\Br(X_i | X_k) \ge \frac{1}{2} b(k)$ but $\Br(X_i | X_k) < \frac{99}{100} b(i)$ for some $e=(X_k, X_{k+1}) \in S_{ij}$. \\
    If $b(k) \ge \frac{1}{5}b(i)$, then we would have
    \begin{align*}
        \Pr[X_i \neq X_j] &\ge \frac{1}{2}\Br[X_i | X_k] \ge \frac{1}{4}b(k) \ge \frac{1}{20}b(i) > \frac{1}{20}\cdot\frac{99}{100}\max(b(i), b(j)) > \Pr[X_i \neq X_j],
    \end{align*}
    which is a contradiction. \\
    On the other hand, if $b(k) < \frac{1}{5}b(i)$, we would have
    \begin{align*}
        \Pr[X_i \neq X_j] &\ge \frac{1}{2}\Br[X_i | X_k] \\
        &\ge b(i)-b(k) \qquad (\text{\Cref{lemma:br-bound}}) \\
        &> \frac{4}{5}b(i) > \frac{4}{5}\cdot\frac{99}{100}\max(b(i), b(j)) > \Pr[X_i \neq X_j],
    \end{align*}
    which is again contradiction.

    Thus, we have shown that if both $\max(b(i),b(j)) > 100 \cdot \Pr[X_i \ne X_j]$ and $\sum_{e \in S_{ij}} \mathcal{F}_e > 100 \cdot \Pr[X_i \ne X_j]$, every possible case leads to a contradiction. Consequently, it must be true that 
    \begin{equation}
        \label{eqn:sij}
        \min\left(\max(b(i),b(j)), \sum_{e \in S_{ij}} \mathcal{F}_e\right)\le 100\cdot \Pr[X_i \ne X_j].
    \end{equation}
    Repeating the argument above for edges $e \in S_{ji}$ gives
    \begin{equation}
        \label{eqn:sji}
        \min\left(\max(b(i),b(j)), \sum_{e \in S_{ji}} \mathcal{F}_e\right)\le 100\cdot \Pr[X_i \ne X_j].
    \end{equation}
    The lemma follows by putting \eqref{eqn:sij} and \eqref{eqn:sji} together, and using
    \begin{align*}
        d_{\mcF}(i,j) &= \min\left(\max(b(i),b(j)), \sum_{e \in (i \to j)} \mathcal{F}_e\right) \\
        &=  \min\left(\max(b(i),b(j)), \sum_{e \in S_{ij}} \mathcal{F}_e + \sum_{e \in S_{ji}} \mathcal{F}_e\right) \\
        &\le \min\left(\max(b(i),b(j)), \sum_{e \in S_{ij}} \mathcal{F}_e\right) +  \min\left(\max(b(i),b(j)), \sum_{e \in S_{ji}} \mathcal{F}_e\right) \\
        &\le 200\cdot \Pr[X_i \ne X_j].
    \end{align*}

\end{proof}
Therefore, we have shown that the forced randomness Lipschitz cap tree metric neither grossly underestimates nor overestimates. This lets us substitute $d_{\mcF}$ for $d_{\Br}$ in \Cref{corr:metrics-good}.
\begin{corollary}\label{corr:modified-metrics-good}
The sum $d_{\marg} + d_{\mcF} + d_{\negcor}$ is a $\Theta(1)$-distortion embedding for $\Pr[X_i \ne X_j]$.
\end{corollary}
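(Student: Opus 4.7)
The plan is to mimic the proof of Corollary~\ref{corr:metrics-good}, treating $d_{\mcF}$ as a proxy for $d_{\Br}$. The upper bound is immediate: Claim~\ref{claim:three}(a),(c) give $d_{\marg}(i,j), d_{\negcor}(i,j) = O(\Pr[X_i \ne X_j])$, and Lemma~\ref{lemma:forced-radomness-no-overestimate} gives $d_{\mcF}(i,j) \le 200 \cdot \Pr[X_i \ne X_j]$, so summing yields the $O(1)$-overestimation side. For the lower bound, I will assume for contradiction that $d_{\marg}(i,j) + d_{\mcF}(i,j) + d_{\negcor}(i,j) < \frac{1}{200}\Pr[X_i \ne X_j]$, which forces each individual term to be below this threshold.

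The key tool is Lemma~\ref{lemma:forced-radomness-no-underestimate}, which states $d_{\mcF}(i,j) \ge \min(\max(b(i), b(j)), \Br(X_i|X_j))$. This naturally splits the analysis by which branch of the $\min$ is attained. In the first case, $\Br(X_i|X_j) \le \max(b(i), b(j))$ implies $d_{\mcF}(i,j) \ge \Br(X_i|X_j)$, making both $d_{\marg}(i,j)$ and $\Br(X_i|X_j)$ sufficiently small to invoke Lemma~\ref{lemma:cross-exist}. That lemma then yields $b(i), b(j) \in [0.485, 0.5]$ together with an odd number of crosses on the $i$-to-$j$ path, so $d_{\negcor}(i,j) = \max(b(i), b(j)) \ge 0.485$, contradicting $d_{\negcor}(i,j) < 1/200$. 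This branch is essentially a replay of the Corollary~\ref{corr:metrics-good} argument.

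The main obstacle is the second case, where $\Br(X_i|X_j) > \max(b(i), b(j))$ and so Lemma~\ref{lemma:forced-radomness-no-underestimate} only supplies $d_{\mcF}(i,j) \ge \max(b(i), b(j))$; here Lemma~\ref{lemma:cross-exist} cannot be applied because $\Br(X_i|X_j)$ is not known to be small. I will instead argue directly. The bound on $d_{\mcF}$ forces $\max(b(i),b(j)) < \frac{1}{200}\Pr[X_i \ne X_j] \le 1/200$, so both variables are very close to deterministic. A sub-case split on whether $X_i$ and $X_j$ share the same majority state closes this: if they share the same majority, then a union bound gives $\Pr[X_i \ne X_j] \le b(i) + b(j) < \frac{2}{200}\Pr[X_i \ne X_j]$, a contradiction; otherwise the marginals strongly disagree, so $d_{\marg}(i,j) = |\Pr[X_i=0] - \Pr[X_j=0]| \ge 1 - b(i) - b(j) > 99/100$, again contradicting $d_{\marg}(i,j) < 1/200$. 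This exhausts the cases and establishes the $\Omega(1)$ lower bound.
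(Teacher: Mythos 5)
Your proof is correct, but the route is more roundabout than the paper's. The paper exploits the identity $\Br(X_i|X_j) \le \Br(X_i) = 2b(i)$, which gives $\max(b(i),b(j)) \ge b(i) \ge \frac{1}{2}\Br(X_i|X_j)$, so the minimum in Lemma~\ref{lemma:forced-radomness-no-underestimate} is always at least $\frac{1}{2}\Br(X_i|X_j)$. Thus $d_{\mcF}(i,j) \ge \frac{1}{2}d_{\Br}(i,j)$ holds \emph{unconditionally}, and the lower bound follows in one line by substituting into Corollary~\ref{corr:metrics-good}:
\[
d_{\marg} + d_{\mcF} + d_{\negcor} \ge d_{\marg} + \tfrac{1}{2}d_{\Br} + d_{\negcor} \ge \tfrac{1}{2}\left(d_{\marg} + d_{\Br} + d_{\negcor}\right) \ge \Omega(1)\cdot\Pr[X_i\ne X_j].
\]
You instead argue by contradiction with a case split on which branch of the $\min$ is attained. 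Your Case~1 replays the Corollary~\ref{corr:metrics-good} argument via Lemma~\ref{lemma:cross-exist}; your Case~2 handles the regime $\Br(X_i|X_j) > \max(b(i),b(j))$ directly (same majority state gives $\Pr[X_i\ne X_j] \le b(i)+b(j)$, a contradiction; different majority state makes $d_{\marg}$ near $1$). This is valid but redoes work that the single inequality $b(i) \ge \frac{1}{2}\Br(X_i|X_j)$ already makes unnecessary; in particular, your Case~2 lives entirely inside the narrow band $\max(b(i),b(j)) < \Br(X_i|X_j) \le 2\max(b(i),b(j))$, and the $\frac{1}{2}$-factor observation dispatches it immediately. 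Both approaches buy the same constant-factor distortion guarantee, but the paper's reduction is cleaner and more reusable.
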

\begin{proof}
    Fix nodes $X_i$ and $X_j$. From \Cref{corr:metrics-good}, we know that
    \begin{equation}
        \label{eqn:good-metrics-to-modified-good-metrics}
        \Theta(1) \cdot \Pr[X_i \neq X_j] \le d_{\marg}(i,j) + d_{\Br}(i,j) + d_{\negcor}(i,j) \le \Theta(1) \cdot \Pr[X_i \neq X_j]. 
    \end{equation}
    This gives
    \begin{align*}
        d_{\marg}(i,j) + d_{\mcF}(i,j) + d_{\negcor}(i,j) &\le d_{\marg}(i,j) + 100\cdot\Pr[X_i \neq X_j] + d_{\negcor}(i,j) \qquad(\text{\Cref{lemma:forced-radomness-no-overestimate}}) \\
        &\le \Theta(1)\cdot(d_{\marg}(i,j) + d_{\Br}(i,j) + d_{\negcor}(i,j)) \qquad(\text{lower bound in \eqref{eqn:good-metrics-to-modified-good-metrics}}) \\
        &\le \Theta(1)\cdot \Pr[X_i \neq X_j] \qquad(\text{upper bound in \eqref{eqn:good-metrics-to-modified-good-metrics}}).
    \end{align*}
    Next, from \Cref{lemma:forced-radomness-no-underestimate}, we know that
    \begin{align*}
        d_{\mcF}(i,j) \ge \min(\max(b(i), b(j)), \Br[X_i|X_j]).
    \end{align*}
    But notice that 
    \begin{align*}
        \max(b(i), b(j)) &\ge b(i) \ge \frac{1}{2}\Br[X_i|X_j],
    \end{align*}
    and hence
    \begin{align*}
        d_{\mcF}(i,j) \ge \min(\max(b(i), b(j)), \Br[X_i|X_j]) \ge \frac{1}{2}\Br[X_i|X_j] = \frac{1}{2}d_{\Br}(i, j).
    \end{align*}
    Finally,
    \begin{align*}
        d_{\marg}(i,j) + d_{\mcF}(i,j) + d_{\negcor}(i,j) &\ge d_{\marg}(i,j) +  \frac{1}{2}d_{\Br}(i, j) + d_{\negcor}(i,j) \\
        &\ge \frac12\left(d_{\marg}(i,j) + d_{\Br}(i,j) + d_{\negcor}(i,j)\right) \\
        &\ge \Theta(1)\cdot\Pr[X_i \neq X_j] \qquad(\text{lower bound in \eqref{eqn:good-metrics-to-modified-good-metrics}}).
    \end{align*}
\end{proof}
In summary, we have shown that if we are able to efficiently embed the forced randomness Lipschitz cap tree metric $d_{\mcF}$ efficiently into $\ell_1$ with constant distortion, we will have achieved our goal of embedding $\Pr[X_i \neq X_j]$ into $\ell_1$ (recall that $d_\marg$ and $d_{\negcor}$ embed into $\ell_1$ in a very simple manner). In the following section, we show that Lipschitz cap tree metrics embed into $\ell_1$ with only $O(\log^2(n))$ dimensions.

\subsection{Lipschitz Cap Metrics}
\label{sec:lipschitz-cap}
As before, we begin our journey on Lipschitz cap metrics with the basic case of a line graph metric. As it turns out, the technique of lazy snaking that we used for the fixed cap line ends up being sufficient for the Lipschitz cap case, with a slight modification.

\subsubsection{Lipschitz Cap Line Metrics}
\label{sec:lipschitz-line}
The work of \cite{abraham2022metric} shows that Lipschitz cap line metrics embed into $\ell_1$ using $O(\log n)$ dimensions with constant distortion. We will present an alternative approach that also attains this results, both for sake of completeness, and because it lends itself better towards our later result of \Cref{thm:general-cap}. Let $L_n$ be the line graph on $n$ vertices $x_1,\dots,x_n$, where each pair of consecutive vertices $x_i,x_{i+1}$ is connected by an edge of length $e_i \in \Z_{>0}$, and $d(x_i, x_j)=\sum_{k=i}^{j-1}e_i$. Let $loc[i]=d(x_i, x_1)$. Throughout what follows, we will identify every vertex $x_i$ with $loc[i]$ instead. Every vertex has associated with it a cap, given by the cap function $M: \R \to \Z_{\ge 0}$, so that the cap at vertex $x_i$ is $M(loc[i])$. The cap function $M$ satisfies the Lipschitz property in the graph distance, i.e., for any $i,j$, $|M(loc[i])-M(loc[j])|\le |loc[i]-loc[j]|$. Let $loc[0]=-2M(0)$, and for any $i\ge0$, $t \in [loc[i], loc[i+1]]$, let us additionally define $M(t)$ to be the linear interpolation of $M(loc[i])$ and $M(loc[i+1])$ i.e., $M(t)=M(loc[i])+\left(\frac{M(loc[i+1])-M(loc[i])}{loc[i+1]-loc[i]}\right)\cdot(t-loc[i])$. Consider the metric space $(L_n, \dlcap{M})$ equipped with the Lipschitz cap line metric, defined as
\begin{align}
    \dlcap{M}(x_i, x_j) = \min(d(x_i, x_j), \max(M(loc[i]), M(loc[j])).
\end{align}

Given that we have a different cap value at every node, a natural idea, and intuitive generalization of the lazy snaking procedure in \Cref{algo:lazy-snake} would be to snake around with varying widths depending on the cap function. In fact, this modification is sufficient for our purposes, and is described in \Cref{algo:lipschitz-lazy-snake}\footnote{Observe that this algorithm might not terminate if $M(t)$ happens to be $0$ at some location $t$. We fix this issue in the general tree case.}. 

\begin{algorithm}[t]
    \caption{Lazy Snaking for Lipschitz-cap line metric} \label{algo:lipschitz-lazy-snake}
    \hspace*{\algorithmicindent} 
    \begin{flushleft}
      {\bf Input:} List of $n$ node locations $loc[]$, Lipschitz cap function $M$ \\
      {\bf Output:} List of embeddings for the nodes $embedding[]$ \\
    \end{flushleft}
    \begin{algorithmic}[1]
    \Procedure{LipschitzLazySnake}{$loc, M$}:
    \State \blue{$t \gets -2M(0)$}
    \While{$t \leq loc[n]$}
        \State $\Delta \sim \mathrm{Uniform}\{0,1\}$
        \If{$\Delta = 0$} \Comment{Rest for a duration $\frac{M(t)}{300}$}
            \State $snake[t'] \gets 0$ for $t' \in \left[t, t+\frac{M(t)}{300}\right]$
            \State $t \gets t + \frac{M(t)}{300}$
        \EndIf
        \If{$\Delta = 1$} \Comment{Snake with a width $\frac{M(t)}{100}$}
            \State $snake[t'] = t'-t$ for $t' \in \left[t, t+\frac{M(t)}{100}\right]$ 
            \State $snake[t'] = \frac{2M(t)}{100}-(t'-t)$ for $t' \in \left[t+\frac{M(t)}{100}, t+\frac{2M(t)}{100}\right]$
            \State $t \gets t + \frac{2M(t)}{100}$
        \EndIf    
    \EndWhile
    \For{$i \gets 1$ to $n$}
        \State $embedding[i] = snake[loc[i]]$
    \EndFor
    \State {\bf return} $embedding[]$
    \EndProcedure
    \end{algorithmic}
\end{algorithm}

\begin{figure}[H]
    \centering
    \includegraphics[scale=0.42]{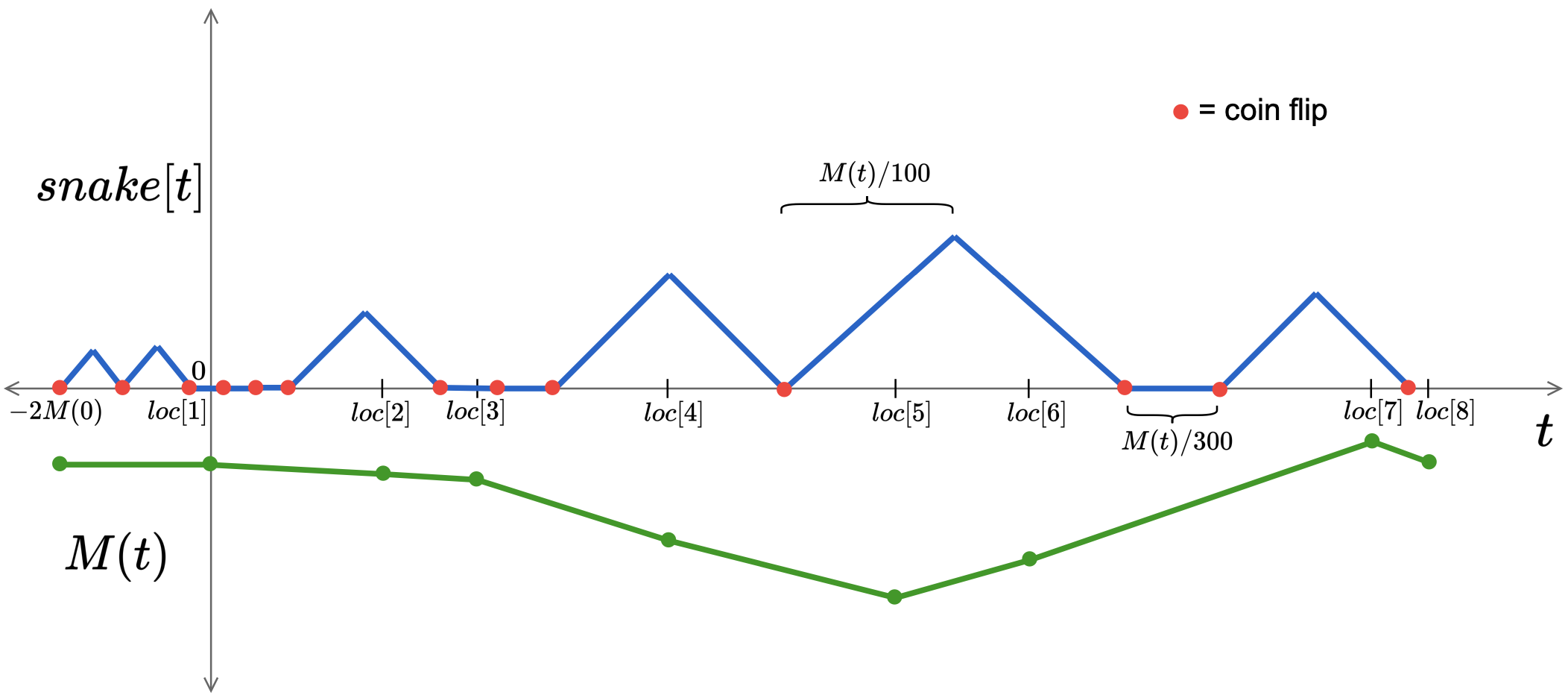}
    \caption{Lazy snaking with Lipschitz cap.}
    \label{fig:lazy-snaking-lipschitz-cap}
\end{figure}

We remark that \Cref{algo:lazy-snake} and \Cref{algo:lipschitz-lazy-snake} are nearly identical, except for constants and initialization. The modified initialization is necessary for algorithm correctness when the cap is non-constant, for a technical reason.

Just like \Cref{claim:lazy-snake-line-fixed-cap-lb}, we have the guarantee that for every pair of nodes, \Cref{algo:lipschitz-lazy-snake} does not overestimate their distance, and also covers at least a constant fraction of it in expectation.
\begin{restatable}{lemma}{lazysnakelipschitzcapline}
    \label{lemma:lazy-snake-line-lipschitz-cap-lb}
    Let $M$ be a strictly positive function, and let $embedding=\textsc{LipschitzLazySnake}(loc, M)$. For every fixed pair of nodes $x_i$ and $x_j$,
    \begin{align*}
    &\text{(1)  } \Pr\left[\|\embed[i]-\embed[j]\|_1 \le \dlcap{M}(x_i, x_j)\right] =1.\\
    &\text{(2)  } \E\left[\|\embed[i]-\embed[j]\|_1\right]  \ge \Omega(1)\cdot \dlcap{M}(x_i, x_j).
    \end{align*}
\end{restatable}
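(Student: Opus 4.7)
Two properties of the function $snake[\cdot]$ produced by \textsc{LipschitzLazySnake} suffice. First, within any iteration of the while-loop the function is piecewise linear with slope in $\{-1,0,1\}$, and it equals $0$ at the boundary between consecutive iterations; hence $snake$ is globally $1$-Lipschitz in its argument, and $|snake[loc[i]] - snake[loc[j]]| \le |loc[i] - loc[j]| = d(x_i,x_j)$. Second, the maximum value attained inside an iteration starting at time $t_{\mathrm{start}}$ is at most $M(t_{\mathrm{start}})/100$, and this iteration spans at most $2M(t_{\mathrm{start}})/100$ in time. The Lipschitz property of $M$ then gives $M(t) \ge (49/50)\cdot M(t_{\mathrm{start}})$ for every $t$ inside the iteration, so $snake[t] \le M(t)/98$. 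Combining, $|\embed[i]-\embed[j]| \le snake[loc[i]] + snake[loc[j]] \le \max(M(loc[i]), M(loc[j]))/49$, which together with the $1$-Lipschitz bound yields $|\embed[i]-\embed[j]| \le \dlcap{M}(x_i, x_j)$.

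\textbf{Part (2): No underestimation.} The plan is to follow the three-case template of \Cref{claim:lazy-snake-line-fixed-cap-lb}, but making each step robust to position-dependent widths. The enabling observation is that for any small constant $\alpha$ and any $t$ with $|t - loc[i]| \le \alpha M(loc[i])$, Lipschitzness of $M$ forces $M(t) \in [(1-\alpha)M(loc[i]),(1+\alpha)M(loc[i])]$; consequently all segment widths inside a window of size $\Theta(M(loc[i]))$ around $loc[i]$ are $\Theta(M(loc[i]))$, and only $O(1)$ coin flips are required to navigate such a window. Assuming WLOG that $M(loc[i]) \ge M(loc[j])$, I split on the ratio $r := |loc[i] - loc[j]|/M(loc[i])$: (a) $r \le c_1$ for small $c_1$, where $\dlcap{M}(x_i,x_j) = |loc[i]-loc[j]|$ and one engineers an $O(1)$ coin-flip pattern placing both $loc[i]$ and $loc[j]$ inside a single snake segment, so the slope-$1$ nature of the snake yields $|\embed[i]-\embed[j]| \ge \Omega(|loc[i]-loc[j]|)$; (b) $c_1 < r \le c_2$, where $\dlcap{M}(x_i,x_j) = M(loc[i])$ and one schedules a rest near $loc[i]$ followed by a full snake segment whose peak of magnitude $\Theta(M(loc[i]))$ lies between $loc[i]$ and $loc[j]$; (c) $r > c_2$, where one rests near $loc[i]$, lets the process evolve freely toward $loc[j]$, and finally engineers a rest followed by a snake at a time just before $loc[j]$. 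In each case the target coin-flip pattern conditions on $O(1)$ flips, hence has probability $\Omega(1)$, and guarantees $|\embed[i]-\embed[j]| \ge \Omega(\dlcap{M}(x_i,x_j))$.

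\textbf{Main obstacle.} The most delicate case is (c), where the widths near $loc[j]$ may differ from those near $loc[i]$ by a constant factor that the argument must absorb. To handle it, I would locate a time $t''$ within distance $O(M(loc[i]))$ of $loc[j]$ at which $M(t'') = \Theta(M(loc[i]))$, which is guaranteed by Lipschitzness of $M$ together with the WLOG assumption $M(loc[j]) \le M(loc[i])$. Then $O(1)$ coin flips suffice between $t''$ and $loc[j]$ to trigger a rest-then-snake that produces amplitude $\Omega(M(loc[i]))$ at $loc[j]$. The initialization $t \gets -2M(0)$ plays an essential role at the left boundary: it ensures that at least one full snake cycle could have been completed before any $loc[i] \ge 0$ is reached, so that the constant-probability events needed in case (a) for the smallest $loc[i]$ are not deterministically ruled out. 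The enlarged constants ($300$ for rests, $100$ for snake widths) in \Cref{algo:lipschitz-lazy-snake} supply the slack required to absorb the Lipschitz variation of $M$ across every window of interest, and are ultimately what lets all three cases close with absolute constants.
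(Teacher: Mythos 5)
Your Part (1) argument is sound, and in fact more explicit than the paper's single-sentence justification; it correctly combines the $1$-Lipschitzness of the $snake$ function with a peak bound derived from the local Lipschitzness of $M$ around the start of each iteration. The case structure for Part (2) also matches the paper's (three ranges for the ratio $|loc[i]-loc[j]|/\max(M(loc[i]),M(loc[j]))$), and cases (a) and (b) are on track, since there $M(loc[j]) = \Theta(M(loc[i]))$ by Lipschitzness and all relevant widths are comparable.

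The genuine gap is in case (c). You rest near $loc[i]$ (forcing $\embed[i]=0$) and then engineer a rest-then-snake just before $loc[j]$, but the snake amplitude attainable there is $\Theta(M(loc[j]))$, not $\Theta(M(loc[i]))$, and in case (c) Lipschitzness only gives $M(loc[j]) \ge M(loc[i]) - |loc[i]-loc[j]|$, which is vacuous once $|loc[i]-loc[j]| \ge M(loc[i])$. Your proposed remedy --- a time $t''$ within distance $O(M(loc[i]))$ of $loc[j]$ with $M(t'') = \Theta(M(loc[i]))$ --- need not exist: $M$ can decay linearly from $M(loc[i])$ down to $M(loc[i])/n$ within a window of length $M(loc[i])$ near $loc[i]$ and then stay flat until $loc[j]$ far away, so every snake width in a wide neighborhood of $loc[j]$ is $\Theta(M(loc[i])/n)$, and no suitable $t''$ exists. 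This is not a ``constant factor'' discrepancy; it can be unbounded, and the bound you achieve is $\Omega(M(loc[j]))$ rather than the needed $\Omega(M(loc[i])) = \Omega(\dlcap{M}(x_i,x_j))$. The paper's Case 3a does the reverse: it engineers a \emph{snake} near $loc[i]$ (where widths are provably $\Theta(M(loc[i]))$) to get $snake[loc[i]] = \Omega(M(loc[i]))$, and a \emph{rest} near $loc[j]$ so that $snake[loc[j]] = 0$. The rule is to snake where the cap is large and rest where it is small, not the other way around.

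A secondary issue: your WLOG $M(loc[i]) \ge M(loc[j])$ does not also fix which of $loc[i],loc[j]$ is encountered first by the left-to-right snaking process, yet your case descriptions ("rest near $loc[i]$, then let the process evolve toward $loc[j]$") implicitly assume $loc[i]$ is encountered first. You cannot normalize both the cap ordering and the location ordering by relabeling. The paper instead fixes $i<j$ and writes out both cap orderings (Cases 1a--3a and 1b--3b); the cap ordering determines whether the large-amplitude snake is engineered at the first or second location visited, which is precisely the distinction your sketch elides.
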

\begin{proof}
    The proof mirrors the proof of \Cref{claim:lazy-snake-line-fixed-cap-lb} stepwise (although with slightly messier calculations), and is given in \Cref{sec:lipschitz-line-proof}.
\end{proof}
Boosting the above in-expectation guarantee using logarithmically many dimensions then yields the following theorem:
\begin{theorem}[Lipschitz cap line into $\ell_1$]
    \label{thm:lipschitz-capped-line}
    $(L_n, \dlcap{M})$ can be embedded into $(\R^d, \ell_1)$ where $d=O(\log n)$ with $\Theta(1)$ distortion.
\end{theorem}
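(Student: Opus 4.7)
The plan is to follow exactly the boosting template used in the proof of \Cref{thm:capped-line} for the fixed-cap line, treating \Cref{lemma:lazy-snake-line-lipschitz-cap-lb} as a black box that provides the two essential properties of a single run of \textsc{LipschitzLazySnake}: it never overestimates $\dlcap{M}(x_i, x_j)$, and it captures a constant fraction of it in expectation. Specifically, I would invoke \textsc{LipschitzLazySnake}$(loc, M)$ independently $N$ times to obtain embeddings $embedding_1, \ldots, embedding_N$, and form the final embedding by concatenating them (with a $1/N$ rescaling) into a single $N$-dimensional vector per node. For any fixed pair $i, j$, let $X^{(l)}_{ij} = |embedding_l[i] - embedding_l[j]|$. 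Part (1) of the lemma says $X^{(l)}_{ij} \in [0, \dlcap{M}(x_i, x_j)]$ almost surely, and part (2) says $\E[X^{(l)}_{ij}] \ge c \cdot \dlcap{M}(x_i, x_j)$ for an absolute constant $c > 0$.

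Since the $X^{(l)}_{ij}$ are independent and bounded in a common range of width $\dlcap{M}(x_i, x_j)$, Hoeffding's inequality gives
\begin{align*}
    \Pr\left[\sum_{l=1}^N X^{(l)}_{ij} \le \tfrac{cN}{2}\dlcap{M}(x_i, x_j)\right] \le 2\exp(-\Theta(N)).
\end{align*}
Choosing $N = C\log n$ for a sufficiently large absolute constant $C$ makes this at most $2/n^3$. A union bound over the fewer than $n^2$ pairs $(i,j)$ then shows that with positive probability, for every pair simultaneously, $\tfrac{1}{N}\sum_l X^{(l)}_{ij} \ge \tfrac{c}{2}\dlcap{M}(x_i, x_j)$; the corresponding upper bound of $\dlcap{M}(x_i, x_j)$ holds deterministically by part (1). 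Hence there exists a realization achieving $\Theta(1)$ distortion in $O(\log n)$ dimensions, proving the theorem.

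There is essentially no obstacle at this stage: all the substantive work has been done in establishing \Cref{lemma:lazy-snake-line-lipschitz-cap-lb}, and what remains is a routine concentration-plus-union-bound argument identical in structure to \Cref{thm:capped-line}. The only mild subtlety is the algorithmic caveat mentioned in the footnote — that \textsc{LipschitzLazySnake} may fail to terminate if $M(t) = 0$ at some location — which needs to be assumed away here (strict positivity of $M$), with the fix deferred to the tree case where it is addressed more carefully.
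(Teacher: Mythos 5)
Your proposal is correct and follows exactly the route the paper takes: the paper states that boosting the in-expectation guarantee of \Cref{lemma:lazy-snake-line-lipschitz-cap-lb} with $O(\log n)$ independent copies, via the same Hoeffding-plus-union-bound argument used for \Cref{thm:capped-line}, yields the theorem. Your note about the strict positivity of $M$ (the non-termination footnote) matches the paper's own caveat, which is likewise deferred to the tree case.
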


\subsubsection{Lipschitz Cap Tree Metrics}\label{sec:lipschitz-tree}

We will now consider how to obtain the analogous result for Lipschitz cap tree metrics. It is tempting to try and combine our techniques from the fixed cap tree metric and our Lipschitz cap line metric. Recall how our main technique for the fixed cap tree metric was to obtain a modified ``snipped" version of the tree's caterpillar decomposition and to reduce (by hashing) the embedding problem to a collection of fixed cap line metric problems. However, we cannot reduce the Lipschitz cap tree metric to a collection of Lipschitz cap line metrics in an obvious manner. Crucially, a direct modification of the prior approach would not work for Lipschitz cap tree metrics, primarily because segments that are hashed to a line metric instance are not contiguous segments of the tree, and thus \emph{the caps on these segments need no longer satisfy the necessary Lipschitz condition.} Consequently, we must design a new algorithm for this task. 

\textbf{Algorithm Intuition.} First, we observe that the primary issue in extending our prior techniques was how tree edges hashed to the same line metrics potentially have very different caps. If we view the algorithm hierarchically (considering how the embedding evolves as we progress downwards in the tree), it would be desirable to ``clean''/zero out the embedding somehow. If we clean the embedding frequently enough, we may expect that the only edges affecting some $\embed[x_i]$ would be parts of the tree very close to $x_i$, and thus by the Lipschitz property, all the relevant edges would have roughly the same cap. Ultimately, we obtain an algorithm that is less directly similar to the previous approaches, but is conducive towards this notion of periodic cleaning.

\textbf{Build-Clean-Tree Algorithm.} We now define our ``Build-Clean-Tree" algorithm.

\underline{\emph{Tree decomposition}}. We will again consider a modified caterpillar decomposition. Abusing notation slightly, we will refer to the caterpillars/segments given by the caterpillar decomposition as edges themselves. For any edge $e$ in the tree decomposition, we will denote $s(e)$ to be the location of the ``start'' (or top) of the edge, $\ell(e)$ the length of the edge, and $a(e)$ will store an auxiliary value for the edge we will later define. After obtaining the original caterpillar decomposition, we will split up any overly long edges. More concretely, we chop off any edge $e$ having length larger than $\frac{M(s(e))}{100\log(n)}$ at $\frac{M(s(e))}{100\log(n)}$%
, and then recursively continue chopping the remaining part of the edge.\footnote{The most natural manner of doing this splitting faces a nuanced issue that if one of the caps at the endpoints is zero and the other is nonzero, then the preprocessing would not terminate as it splits the edge into infinitely many edges. This is remedied by separately handling the case where all caps are strictly positive in \Cref{lemma:tree-lipschitz-cap-lb}, and then reducing the general problem to this special case in \Cref{lemma:general-tree-lipschitz-cap-lb}.} This process ensures that the length of every (potentially chopped) edge $e$ is at most $\frac{M(s(e))}{100\log(n)}$.
From this decomposition, we will create the embedding. \underline{All embeddings will be of length $|H| = 8 \log(n)$}. Consider building the embedding from the top to bottom, in a way such that we will only process an edge $e$ after its parent has been processed. We will then determine the embedding for everything along the edge $e$ as purely a function of: $embedding[s(e)]$ (this is already calculated because the embedding has been calculated for everything along the parent edge), the length $\ell(e)$, and auxiliary information $a(e)$. Our algorithm will work in stages that we call ``building'' and ``cleaning'', and the auxiliary information $a(e)$ will store what determines the current stage.

\underline{\emph{Building}}. One stage of our algorithm is \emph{building}. If we process an edge $e$ while the auxiliary information $a(e)$ indicates it is the building stage, we will then uniformly at random determine a hash for the edge $h(e) \in [8 \log(n)]$. Let $embedding[i](k)$ denote the $k^\text{th}$ coordinate of the embedding at location $i$. Then, we will proceed as follows:

\begin{itemize}
    \item If $embedding[s(e)](h(e)) = 0$, then we will walk in the positive direction for the coordinate $h(e)$ as traversing the edge. More formally, suppose $embedding[s(e)+v]$ denotes the embedding at $v$ below $s(e)$ for $v \in [0,\ell(e)]$. Then, we set $embedding[s(e)+v](h(e))=v$. All other coordinates remain the same as $embedding[s(e)]$.
    \item Otherwise, if $embedding[s(e)](h(e)) \ne 0$, then we do nothing and keep the embedding entirely the same as $embedding[s(e)]$.
\end{itemize}

Note how this process is Lipschitz in how the embedding changes while moving along the tree, and is designed in a way such that we are trying to keep coordinates of the embedding to be at most $O(\frac{M(s(e))}{\log(n)})$ by leveraging properties of our tree decomposition that limits the sizes of edges.

\underline{\emph{Cleaning}}. The other stage of our algorithm is \emph{cleaning}. When we are in a cleaning stage, it is our hope to try make the state closer to $0$, but we must do so in a Lipschitz manner. Accordingly, our algorithm will be to use the edge to walk a coordinate negatively towards $0$. We proceed by:

\begin{itemize}
    \item If $embedding[s(e)](k)=0$ for all $k$, then do nothing and keep the embedding the same as $embedding[s(e)]$.
    \item Otherwise, let $k^*$ denote the smallest $k$ such that $embedding[s(e)](k^*) > 0$. Then, we will walk negatively for a length of $\min(embedding[s(e)](k^*), \ell(e))$. More formally, we set $embedding[s(e)+v](k^*)=embedding[s(e)](k^*)-v$ for $v \in [0,\min(embedding[s(e)](k^*), \ell(e))]$
\end{itemize}

\underline{\emph{Controlling stages}}.  We will design our stages such that build stages process $\lb = 4 \log(n)$ edges, and clean stages process $\lc = 9 \log(n)$ edges, and we alternate between building and cleaning. Our auxiliary information $a(e)$ will actually be an integer counter $\in \{0, \dots,\lb + \lc-1\}$, such that the edge will be processed in the build stage if $a(e) \in \{0,\dots, \lb-1\}$ and otherwise in the clean stage if $a(e) \in \{\lb,\dots,\lb+\lc-1\}$. Moreover, $a(e)$ will simply be $a(par(e))+1 \mod (\lb + \lc)$. %

\underline{\emph{Initialization}}. For technical reasons similar to those of the Lipschitz cap line metric, we will modify the tree so that the root $r$ of the original tree actually has a parent that is a new node $r_0$, where $M(r_0)=M(r)$ and the length of the edge $(r_0,r)$ is $2M(r)$. We initialize $embedding[r_0] = 0^{|H|}$. The auxiliary information of the topmost edge in the tree which contains the extra node $r_0$ is initialized uniformly at random in $\{0,\dots,\lb+\lc-1\}$.  %

These are the main components of our algorithm.

\textbf{Analysis intuition. } Recall how our tree composition will limit the length of an edge to be $O(\frac{M(s(e))}{\log(n)})$. We will set the length of our stages $\lb,\lc = \Theta(\log(n))$. With correctly chosen parameters, we can ensure properties such as the following:

\begin{itemize}
    \item At the end of every clean stage, the embedding is exactly $0$ for every coordinate.
    \item Leveraging how often we clean, we can show that for any location $x$ it must hold $embedding[x](k) = O\left(\frac{M(x)}{\log(n)}\right)$ for all indices $k$.
    \item Moreover, using how the embedding is always updated in a Lipschitz manner, and how $\|embedding[x]\|_1$\\ $= O(M(x))$ by the previous fact, the embeddings never overestimate distances by more than a constant factor, or more formally $\|embedding[x_i]-embedding[x_j]\|_1 \le \dlcap{M}(x_i,x_j)$.
    \item By analyzing casework, we can also show how this embedding captures at least a constant fraction of the correct distance in expectation.
\end{itemize}

The main thrust of our proof will focus on a special case where all caps are strictly positive, and then we will reduce the general case to the special case. For what follows, let us assume that $M(x)>0$ at all locations $x$ in the tree. Here again, we have linearly interpolated the cap function $M$ to assign cap values at all points across any edge $(x_i, x_j)$ based on cap values at the endpoints $M(x_i)$ and $M(x_j)$.

We begin by proving a useful property for our Lipschitz cap trees, that the caps of nearby parts of the tree must have similar caps. For the simplest version of this, we claim about the similarity between points on an edge:

\begin{claim}[Caps on an edge are close by]\label{claim:edge-similar}
    For any edge $e = (u,w)$ in the decomposition with $u=s(e)$ and location $v$ along $e$, it must hold that $\frac{M(u)}{M(v)},\frac{M(v)}{M(u)} \le 1 + \frac{1}{50 \log(n)}$.
\end{claim}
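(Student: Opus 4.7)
The plan is to use the two key constraints directly: the chopping procedure bounds the length of every edge in the decomposition by $\frac{M(s(e))}{100\log(n)}$, and the Lipschitz property (combined with linear interpolation) bounds the cap difference between any two points by their tree distance. Chaining these two facts gives a tight bound on the multiplicative deviation $M(v)/M(u)$.

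First, I would invoke the tree decomposition guarantee: by construction, every edge $e$ in the decomposition satisfies $\ell(e) \le \frac{M(s(e))}{100\log(n)} = \frac{M(u)}{100\log(n)}$. Since $v$ lies on $e$, the tree distance from $u$ to $v$ is at most $\ell(e)$. Next, since the cap function $M$ is linearly interpolated across edges from its values at endpoints (all of which satisfy the Lipschitz property), the Lipschitz property extends to interior points, giving
\begin{equation*}
|M(u) - M(v)| \le d(u,v) \le \ell(e) \le \frac{M(u)}{100\log(n)}.
\end{equation*}

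From this, I immediately obtain $M(v) \in [M(u)(1 - \tfrac{1}{100\log(n)}), M(u)(1 + \tfrac{1}{100\log(n)})]$. The upper bound on $M(v)/M(u)$ is direct: $M(v)/M(u) \le 1 + \tfrac{1}{100\log(n)} \le 1 + \tfrac{1}{50\log(n)}$. For the other direction,
\begin{equation*}
\frac{M(u)}{M(v)} \le \frac{1}{1 - \tfrac{1}{100\log(n)}} \le 1 + \frac{1}{50\log(n)},
\end{equation*}
where the last step uses the elementary inequality $\tfrac{1}{1-x} \le 1 + 2x$ valid for $x \in [0, \tfrac{1}{2}]$, which applies here since $\tfrac{1}{100\log(n)} \le \tfrac{1}{2}$ for $n \ge 2$.

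There is no serious obstacle here—the argument is essentially a one-line chaining of the decomposition's edge-length bound with the Lipschitz property. The only minor subtlety is verifying that linear interpolation preserves the Lipschitz condition on interior points (immediate from convexity of the interpolation) and handling the reciprocal direction with the $\tfrac{1}{1-x} \le 1 + 2x$ step. The constant $100$ in the chopping threshold is precisely tuned to leave room for this reciprocal slack while still yielding the claimed $\tfrac{1}{50\log(n)}$ bound.
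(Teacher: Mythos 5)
Your proof is correct and follows essentially the same route as the paper: apply the Lipschitz property together with the decomposition's edge-length bound $\ell(e) \le \frac{M(u)}{100\log(n)}$ to sandwich $M(v)$, then use $\frac{1}{1-x} \le 1+2x$ for $x\in[0,\tfrac12]$ (the paper writes this as $\frac{1}{1-x}=1+\frac{x}{1-x}$ and bounds the denominator, but it is the same estimate) to control the reciprocal direction. No substantive differences.
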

\begin{proof}
    By Lipschitzness, we have that
    \begin{align}
        & |M(u)-M(v)| \le \ell(e) \le \frac{M(u)}{100 \log(n)} \\
        & \implies \left(1 - \frac{1}{100 \log(n)}\right) M(u) \le M(v) \le \left(1 + \frac{1}{100 \log(n)}\right)M(u) \\
        & \implies \frac{M(u)}{M(v)},\frac{M(v)}{M(u)} \le \max\left(1 + \frac{1}{100 \log(n)},\frac{1}{1 - \frac{1}{100 \log(n)}}\right) \\
        &=\max\left(1 + \frac{1}{100 \log(n)},1+\frac{1}{100\log(n)\left(1 - \frac{1}{100 \log(n)}\right)}\right)\\
        &= 1+\frac{1}{100\log(n)\left(1 - \frac{1}{100 \log(n)}\right)} \le 1 + \frac{1}{50 \log(n)}.
    \end{align}
    The last line holds when $\frac{1}{100 \log(n)} \le \frac{1}{2}$, which holds for $n \ge 2$.\footnote{Note how any requirement in proofs that $n$ be sufficiently large can always be handled by adding meaningless nodes to our tree with arbitrary edge weights and caps respecting Lipschitzness.}
\end{proof}

This similarly lets us prove claims about the similarity in the caps at starts of edges that are somewhat close:

\begin{claim}[Caps on a path are close by]\label{claim:path-similar}
    Consider a sequence of edges $e_1, \dots, e_k$, such that each $e_i,e_{i+1}$ is adjacent (meaning they share a vertex), and $x_1$ is on $e_1$ and $x_k$ is on $e_k$. Then, it must hold that $\frac{M(x_k)}{M(x_1)} \le e^{\frac{k}{50\log(n)}}$. As a special case, $\frac{M(s(e_k))}{M(s(e_1))} \le e^{\frac{k}{50\log(n)}}$.
\end{claim}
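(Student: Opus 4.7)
The plan is to prove the bound by induction on $k$, iterating \Cref{claim:edge-similar} along the chain of edges via a telescoping product. I would first set up a bridging sequence of points: for each $i = 1, \ldots, k-1$, let $v_i$ be a common vertex of $e_i$ and $e_{i+1}$ (guaranteed to exist by adjacency), and set $v_0 := x_1$, $v_k := x_k$. By construction, both $v_{i-1}$ and $v_i$ lie on the single edge $e_i$ for every $i \in \{1, \ldots, k\}$, so the ratio of interest telescopes as
\[
    \frac{M(x_k)}{M(x_1)} \;=\; \prod_{i=1}^{k} \frac{M(v_i)}{M(v_{i-1})}.
\]

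The key step is to bound each factor $M(v_i)/M(v_{i-1})$ by $1 + \frac{1}{50\log n}$. I would do this by a one-shot Lipschitz argument along $e_i$ (essentially the same calculation that proves \Cref{claim:edge-similar}): since $v_{i-1}$ and $v_i$ both lie on $e_i$ and $\ell(e_i) \le M(s(e_i))/(100\log n)$, Lipschitzness forces both $M(v_{i-1})$ and $M(v_i)$ into the interval $\bigl[M(s(e_i))(1 - \tfrac{1}{100\log n}),\, M(s(e_i))(1 + \tfrac{1}{100\log n})\bigr]$, and hence
\[
    \frac{M(v_i)}{M(v_{i-1})} \;\le\; \frac{1 + 1/(100\log n)}{1 - 1/(100\log n)} \;\le\; 1 + \frac{1}{50\log n}
\]
for $n \ge 2$. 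Combining this with the telescoping product and the standard inequality $1 + x \le e^x$ yields $M(x_k)/M(x_1) \le (1 + \tfrac{1}{50\log n})^k \le e^{k/(50\log n)}$. The special case stated in the claim follows immediately by taking $x_1 = s(e_1)$ and $x_k = s(e_k)$.

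The main subtlety is getting the constant right. If one applies \Cref{claim:edge-similar} as a pure black box to bound $M(v_i)/M(v_{i-1})$ by routing through $M(s(e_i))$, each edge contributes a factor of $\bigl(1 + \tfrac{1}{50\log n}\bigr)^2$ rather than $\bigl(1 + \tfrac{1}{50\log n}\bigr)$, which would double the exponent and only give $e^{k/(25\log n)}$. The fix is to compare two points on the same edge directly, as above, rather than bouncing off $s(e_i)$ --- this is the one place where the proof of \Cref{claim:edge-similar} has to be repeated rather than invoked verbatim.
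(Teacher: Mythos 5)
Your overall strategy — telescope the ratio $M(x_k)/M(x_1)$ into a product of $k$ per-edge factors and bound each factor via the edge-length constraint $\ell(e) \le M(s(e))/(100\log n)$ — is the same as the paper's. The paper routes the chain through the edge starts $s(e_2),\dots,s(e_k)$ so that each factor $M(s(e_j))/M(s(e_{j-1}))$ compares $s(\cdot)$ of some edge to another point on that edge and Claim~\ref{claim:edge-similar} applies verbatim; you route through the shared vertices $v_i$, which forces you to compare two arbitrary points on $e_i$ and re-derive the bound. That is fine in principle, and you are right that naively chaining two applications of Claim~\ref{claim:edge-similar} through $s(e_i)$ would double the exponent. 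However, the one-line inequality you supply to avoid that is itself false: writing $y = 1/(100\log n) > 0$, one has
\begin{equation*}
\frac{1+y}{1-y} \;=\; 1 + \frac{2y}{1-y} \;>\; 1 + 2y \;=\; 1 + \frac{1}{50\log n},
\end{equation*}
and indeed $\bigl(\tfrac{1+y}{1-y}\bigr)^k > e^{2ky} = e^{k/(50\log n)}$, so pushing both $M(v_{i-1})$ and $M(v_i)$ symmetrically into the interval around $M(s(e_i))$ gives a strictly weaker bound than the claim states.

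The fix is small and is exactly the asymmetric estimate already inside the proof of Claim~\ref{claim:edge-similar}: bound the numerator additively by Lipschitzness along $e_i$, $M(v_i) \le M(v_{i-1}) + \ell(e_i) \le M(v_{i-1}) + \tfrac{M(s(e_i))}{100\log n}$, and then use only the lower half of your interval, $M(v_{i-1}) \ge M(s(e_i))\bigl(1 - \tfrac{1}{100\log n}\bigr)$, to get
\begin{equation*}
\frac{M(v_i)}{M(v_{i-1})} \;\le\; 1 + \frac{\ell(e_i)}{M(v_{i-1})} \;\le\; 1 + \frac{1}{100\log n - 1} \;\le\; 1 + \frac{1}{50\log n}
\end{equation*}
for $n \ge 2$. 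With that replacement your telescoping argument goes through and matches the paper's bound; alternatively, switching to the paper's chain through $s(e_2),\dots,s(e_k)$ lets you invoke Claim~\ref{claim:edge-similar} directly for every factor with no re-derivation needed.
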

\begin{proof}
    Note how the adjacency condition in the sequence $e_1,\dots,e_k$ implies that it is possible to traverse from $x_1$ to $x_k$ by crossing at most $k$ edges. Thus, by \Cref{claim:edge-similar} it holds that:

    \begin{align}
        \frac{M(x_k)}{M(x_1)} &= \frac{M(x_k)}{M(s(e_k))}\cdot\frac{M(s(e_k))}{M(s(e_{k-1}))}\cdot\dots\cdot\frac{M(s(e_3))}{M(s(e_2))}\cdot\frac{M(s(e_2))}{M(x_1)} \\
        & \le \left(1 + \frac{1}{50 \log(n)}\right)^k \\
        & \le e^{\frac{k}{50 \log(n)}},
    \end{align}
    where in the last step, we used $1+x \le e^x$ for all real $x$.
\end{proof}

Intuitively, this claim will enable us to relate the lengths of almost all nearby edges, as on any root-to-leaf path, it will hold that all but $\le \log(n)$ edges satisfy $\ell(e) = \frac{M(s(e))}{100 \log(n)}$. This is because every edge $e$ on a root-to-leaf path that has length smaller than $\frac{M(s(e))}{100 \log(n)}$ must either correspond to a whole short caterpillar, or be the last piece of a caterpillar, and there are at most $\log(n)$ caterpillars on any root-to-leaf path.

Next, we will prove a useful property about the cleaning stage---after an entire clean stage, the embedding is exactly $0^{|H|}$:

\begin{claim}[Complete cleaning]\label{claim:clean}
At time $x$ immediately after processing an entire clean stage (i.e., the last $\lc$ edges were ``clean'' edges), $\embed[x] = 0^{|H|}$.
\end{claim}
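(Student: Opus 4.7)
My plan is to characterize the state entering the clean stage and then argue that a clean stage of $\lc$ edges has enough budget to zero it out. First, I would observe that each build-edge either sets a previously-zero coordinate to its own length $\ell(e) \le M(s(e))/(100\log n)$ or does nothing; hence after the build stage of $\lb = 4\log n$ edges, at most $\lb$ coordinates of $\embed[x]$ are nonzero entering the clean stage, each with value at most $M_{\max}/(100\log n)$, where $M_{\max}$ is the largest cap occurring in the cycle.

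Next, I would apply \Cref{claim:path-similar} with $k = \lb + \lc = 13\log n$ to control cap variation: every pair of edges in the cycle has caps within a multiplicative factor of $e^{13/50} < 1.3$. Writing $M_*$ for the minimum cap in the cycle, every nonzero coordinate has value at most $1.3\,M_*/(100\log n)$, and every ``full-length'' clean edge (one chopped to the threshold) has length at least $M_*/(100\log n)$. Since this ratio is strictly below $2$, at most two full-length clean edges suffice to zero out any single nonzero coordinate.

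Finally, I would count the edge budget of the clean stage. Since cleaning targets the smallest-indexed nonzero coordinate, coordinates are processed sequentially. The only clean edges shorter than the threshold are trailing pieces of caterpillars, of which at most $\log n$ appear on any root-to-leaf path; such a short edge, interleaved into a coordinate's cleaning, either completes it (saving full edges) or reduces its remaining value, in which case it costs one extra edge in the sequence but does not push the number of full edges required beyond two. The total number of clean edges consumed is therefore at most $2\lb + \log n = 9\log n = \lc$, exactly matching the clean stage length.

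The main technical obstacle is that this bound is tight and requires both ingredients to line up: the chopping constant $100\log n$ is chosen so that the cap ratio across one cycle stays below $2$ (ensuring each coordinate needs only two full clean edges), and the caterpillar decomposition limits short edges to $\log n$ per root-to-leaf path. Formally verifying that interleaved short edges never force a coordinate to require a third full clean edge, and that the sequential processing composes cleanly across the $\lb$ coordinates, is where the main care in the argument lies.
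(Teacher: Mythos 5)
Your proposal is correct and follows essentially the same approach as the paper: bound cap variation over a window of $\lb + \lc$ edges via \Cref{claim:path-similar}, observe that each nonzero coordinate (of value at most one build-edge length) needs at most two full-length clean edges, and verify the clean stage has enough such edges. One step to make explicit is the inductive hypothesis --- your assertion that at most $\lb$ coordinates are nonzero entering the clean stage relies on the embedding having been $0^{|H|}$ at the start of the preceding build stage, which is exactly the invariant being proven; the paper handles this with induction over complete clean stages, with the base case $\embed[r_0] = 0^{|H|}$ supplied by the initialization.
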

\begin{proof}
    Suppose this invariant holds at the end of every complete clean stage before the clean stage we are currently considering (or at initialization, if there is none.) Then, at the start of this clean stage, any coordinate that is nonzero must be exactly $\ell(e_{\mathcal{B}})$ for one of the at most $\lb$ ``build" edges $e_{\mathcal{B}}$ in the immediately preceding build stage---namely, there are at most $\lb$ such nonzero coordinates. We aim to show that there are at least $2 \lb$ edges with weight $\ge \max_{e_{\mcB}}\frac{\ell(e_{\mathcal{B}})}{2}$ in the clean stage, as this would immediately imply that all coordinates were cleaned. To show this, note how, by definition of the modified tree decomposition and the caterpillar property, all but $\log(n)$ of the edges $e_{\mcC}$ in the clean stage must exactly satisfy $\ell(e_{\mathcal{C}}) = \frac{M(s(e_{\mathcal{C}}))}{100 \log(n)}$. Moreover, for any $e_{\mcB}$, $\ell(e_{\mathcal{B}}) \le \frac{M(s(e_{\mathcal{B}}))}{100 \log(n)}$. Then, note how any build edge $e_{\mathcal{B}}$ is within a path of length $\lb + \lc$ of each clean edge $e_{\mathcal{C}}$, and so by \Cref{claim:path-similar}, 
    $$
    \frac{M(s(e_{\mathcal{B}}))}{M(s(e_{\mathcal{C}}))} \le e^{\frac{\lb + \lc}{50 \log(n)}} \le e^{13/50} \le 2.
    $$
    Thus, at least $\lc-\log(n)=8\log(n)\ge 2\lb$ many ``clean" edges have length that is at least half of every build edge $e_{\mcB}$. %
\end{proof}

As the embedding is regularly cleaned, we use this to show that the $\ell_1$ norm of the embedding at any location $x$ is not too large: %

\begin{claim}[Embedding has small norm]
    \label{claim:norm-bound-after-clean}
    At any location $x$, $\|\embed[x]\|_1 \le \frac{M(x)}{2}.$
\end{claim}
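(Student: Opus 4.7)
The plan is to bound $\|\embed[x]\|_1$ by walking back from $x$ along the root-to-$x$ path until we reach a point where the embedding was identically $0^{|H|}$. By \Cref{claim:clean}, such a point is guaranteed to exist at the end of the most recent complete clean stage preceding $x$ (or at the initialization point $r_0$, if no complete clean stage has occurred yet; note that while $r_0$ has $\embed[r_0]=0^{|H|}$, cleaning on any initial clean edges leaves the embedding at $0^{|H|}$, so either way the embedding is zero at the start of the first build edge encountered).

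Between that zero-embedding point and $x$, the stages alternate in lengths $\lb = 4\log(n)$ and $\lc = 9 \log(n)$, so at most $\lb + \lc = 13 \log(n)$ edges (partial or full) have been traversed, and among these at most $\lb$ are build edges. I claim every non-zero coordinate of $\embed[x]$ was first set during one of these $\le \lb$ build edges: this is because build edges can only change a coordinate from $0$ to positive (by construction, only if the coordinate is currently $0$), and clean edges only decrease a coordinate's magnitude, stopping at $0$. Furthermore, a build edge $e$ sets the affected coordinate to at most $\ell(e) \le \frac{M(s(e))}{100 \log(n)}$, which can only decrease afterwards. So $\embed[x]$ has at most $\lb$ non-zero coordinates, each bounded by $\frac{M(s(e))}{100\log(n)}$ for some build edge $e$ within the last $\lb+\lc$ edges from $x$.

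To convert the $M(s(e))$ bound into a bound in terms of $M(x)$, I invoke \Cref{claim:path-similar} on the path of at most $\lb + \lc = 13\log(n)$ edges from $x$ to $s(e)$, yielding
\begin{align*}
    \frac{M(s(e))}{M(x)} \le e^{(\lb+\lc)/(50\log(n))} = e^{13/50} < 2.
\end{align*}
Putting everything together,
\begin{align*}
    \|\embed[x]\|_1 \le \lb \cdot \frac{M(s(e))}{100\log(n)} \le 4\log(n) \cdot \frac{e^{13/50} M(x)}{100\log(n)} = \frac{4 e^{13/50}}{100} M(x) < \frac{M(x)}{2}.
\end{align*}

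There is no serious obstacle here; the proof is essentially bookkeeping that combines \Cref{claim:clean} (to guarantee frequent resets), the structural properties of build/clean stages (to bound the number and magnitude of active coordinates), and \Cref{claim:path-similar} (to transfer the bound from $M(s(e))$ to $M(x)$). The mildest subtlety is ensuring the argument also covers the case where $x$ lies in the middle of an edge and/or no complete clean stage has yet occurred, which is handled uniformly by the initialization observation above.
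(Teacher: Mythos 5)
Your proof is correct and takes essentially the same approach as the paper: use \Cref{claim:clean} to show at most $\lb$ nonzero coordinates (each set by a build edge in the last $\lb+\lc$ edges and hence bounded by $\ell(e) \le M(s(e))/(100\log n)$), then apply \Cref{claim:path-similar} to convert $M(s(e))$ into a bound in terms of $M(x)$ and sum. The extra bookkeeping you include about walking back to the last zero-embedding point and the initialization at $r_0$ is implicit in the paper's invocation of \Cref{claim:clean}.
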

\begin{proof}
    By \Cref{claim:clean}, any nonzero coordinate of $\embed[x]$ must correspond to one of the $\lb$ build edges among the previous $\lb+\lc$ edges processed. Any of the at most $\lb$ build edges $e$ must satisfy:
    \begin{align}
        \ell(e)
        & \le \frac{M(s(e))}{100 \log(n)} \\
        & = \frac{M(x)}{100 \log(n)} \cdot \frac{M(s(e)}{M(x)} \\
        & \le \frac{M(x)}{100 \log(n)} \cdot e^{\frac{\lb + \lc}{50 \log(n)}} \label{step:use-path} \\ 
        & \le \frac{M(x)}{100 \log(n)} \cdot e^{13/50} \\ 
        & \le \frac{M(x)}{50 \log(n)} 
    \end{align}

    Here, \Cref{step:use-path} uses \Cref{claim:path-similar}. Accordingly, each of the $\le \lb$ nonzero coordinates of $\embed[x]$ is at most $\frac{M(x)}{50 \log(n)}$, and so,
    \begin{align}
        \|\embed[x]\|_1
        & \le \lb \cdot \frac{M(x)}{50 \log(n)} \\ 
        & \le \frac{M(x)}{2}.
    \end{align}
\end{proof}

Now, we show how paths in the tree relate to distances, in terms of distances to least common ancestors. For any locations $x$ and $y$ in the tree, let $P_{x \rightarrow y}$ denote the path from location $x$ to $y$, and let $\ell(P_{x \rightarrow y})$ denote the total length of the path.

\begin{subclaim}\label{subclaim:leg}
    Fix any two locations $x$ and $y$, and let $L=$LCA$(x,y)$ be the least common ancestor of $x$ and $y$. Then,
    $$\E[\|\embed[x]-\embed[y]\|_1] \ge \Omega(1) \cdot \min(\ell(P_{L \rightarrow x}),M(x)).$$
\end{subclaim}
\begin{proof}
    If $x=L$, then $\ell(P_{L \rightarrow x})=0$, and the bound holds. Otherwise, consider the last $3\log(n)$ edges along the path $P_{L \rightarrow x}$ (or consider all the edges if there are less than $3\log(n)$). Now, consider the event that all these edges are ``build'' edges. Over the randomness of initializing $a(r_0)$, this must happen with probability at least $\frac{\lb - 3\log(n)}{\lb + \lc} = \Omega(1)$. Let us just consider the expected difference in embeddings conditioned on this constant-probability event.

    Let $\ell_P(e)$ denote the length of the edge that lies on the path $P_{L \rightarrow x}$ (this may be different than $\ell(e)$ if only a fraction of $e$ lies along $P_{L \rightarrow x}$). Observe that
    \begin{align*}
        \E[\|\embed[x]-\embed[y]\|_1] &= \sum_{i=1}^{H}\E[|\embed[x](i)-\embed[y](i)|]
    \end{align*}
    Now, for any fixed index $i \in \{1,\dots,H\}$, let $A_i$ be the event that exactly one of the last $k\le 3\log(n)$ edges $e_1,\dots,e_k$ along $P_{L \rightarrow x}$ hash to index $i$, but none of the other edges in the build stages immediately preceding both $x$ and $y$ hash to index $i$. Observe that
    \begin{align*}
        \Pr[A_i] &\ge \sum_{j=1}^{k}\frac{1}{H}\left( 1-\frac{1}{H}\right)^{2\lb}=\frac{k}{H}\left( 1-\frac{1}{H}\right)^{2\lb} = \frac{k}{H}\left(1-\frac{1}{8\log n}\right)^{8\log n}\ge \Omega(1)\cdot \frac{k}{H}.
    \end{align*}
    Furthermore, conditioned on $A_i$, observe that
    \begin{align*}
        \E\left[|\embed[x](i)-\embed[y](i)|~\bigg|~A_i\right] &= \frac{1}{k}\sum_{j=1}^k \ell_P(e_j),
    \end{align*}
    and thus,
    \begin{align*}
        \E[|\embed[x](i)-\embed[y](i)|] &\ge \Pr[A_i]\cdot \E\left[|\embed[x](i)-\embed[y](i)|~\bigg|~A_i\right] \\
        &\ge \Omega(1)\cdot \frac{k}{H}\cdot \frac{1}{k}\sum_{j=1}^k \ell_P(e_j) 
        = \Omega(1)\cdot \frac{1}{H}\sum_{j=1}^k\ell_P(e_j).
    \end{align*}
    Finally,
    \begin{align*}
        \E[\|\embed[x]-\embed[y]\|_1] &= \sum_{i=1}^{H}\E[|\embed[x](i)-\embed[y](i)|]
        \ge \Omega(1)\cdot \sum_{j=1}^k\ell_P(e_j).
    \end{align*}
    Thus, the claim is already proven by the above argument if there at most $k\le 3\log(n)$ edges on $P_{L \rightarrow x}$. If not, there are $> 3 \log(n)$ edges on $P_{L \rightarrow x}$, and if we look at the last $k=3 \log(n)$ of them, then there is at most one fractional edge (close to $x$) amongst these---all the rest are fully on $P$ i.e., $\ell_P(e)=\ell(e)$. Furthermore, by the caterpillar property, all but $\log(n)$ of these edges have length
    \begin{align}
        \ell_P(e_j) &= \frac{M(s(e))}{100 \log(n)} \\
        &= \frac{M(x)}{100 \log(n)} \cdot \frac{M(s(e))}{M(x)} \\
        & \ge \frac{M(x)}{100 \log(n)} \cdot e^{-2/50} \label{step:lb-by-path}\\
        & \ge \frac{M(x)}{200 \log(n)}
    \end{align}
    \Cref{step:lb-by-path} uses \Cref{claim:path-similar}. To conclude, 
    \begin{align*}
        \sum_{j=1}^k \ell_P(e) \ge (3\log(n)-1-\log(n)) \cdot \frac{M(x)}{200 \log(n)} = M(x)\left(\frac{1}{100} - \frac{1}{200 \log(n)}\right) = \Omega(M(x))
    \end{align*}
    for $n \ge 2$. In any case, we have shown
    \begin{align*}
        \E[\|\embed[x]-\embed[y]\|_1] \ge \Omega(1) \cdot \min(\ell(P_{L \rightarrow x}),M(x)).
    \end{align*}
\end{proof}

We are now ready to show a guarantee that in the special case that all caps are strictly positive, for every pair of nodes, the Build-Clean-Tree algorithm does not overestimate their distance, and also covers at least a constant fraction of it in expectation:
\begin{lemma}[Strictly positive caps]
    \label{lemma:tree-lipschitz-cap-lb}
    Let $embedding$ be the output of the Build-Clean-Tree algorithm, with $embedding$ of length $|H|=8\log(n)$, \underline{where all $M(x)>0$}. For every fixed pair of nodes $x_i$ and $x_j$,
    \begin{align*}
    &\text{(1)  } \Pr\left[\|\embed[i]-\embed[j]\|_1 \le \dlcap{M}(x_i, x_j)\right] =1.\\
    &\text{(2)  } \E\left[\|\embed[i]-\embed[j]\|_1\right]  \ge \Omega(1)\cdot \dlcap{M}(x_i, x_j).
    \end{align*}
\end{lemma}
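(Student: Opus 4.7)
My plan for the proof is as follows.

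For part (1), the strategy is to bound $\|\embed[i]-\embed[j]\|_1$ by two quantities deterministically and take the minimum. First, observe that the Build-Clean-Tree algorithm is Lipschitz in the tree distance: during a build stage the embedding either remains unchanged or a single coordinate walks up by exactly $\ell(e)$, and during a clean stage a single coordinate walks down by at most $\ell(e)$; summing along the path from $x_i$ to $x_j$ thus gives $\|\embed[i]-\embed[j]\|_1 \le d(x_i,x_j)$ with probability 1. Second, the triangle inequality together with Claim~\ref{claim:norm-bound-after-clean} yields $\|\embed[i]-\embed[j]\|_1 \le \|\embed[i]\|_1 + \|\embed[j]\|_1 \le M(x_i)/2 + M(x_j)/2 \le \max(M(x_i),M(x_j))$. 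Taking the minimum of the two bounds gives exactly $\dlcap{M}(x_i,x_j)$.

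For part (2), the plan is to invoke Subclaim~\ref{subclaim:leg} symmetrically at both endpoints, which yields
\[
\E[\|\embed[i]-\embed[j]\|_1] \ge \Omega(1)\cdot\max\!\big(\min(\ell(P_{L\to x_i}),M(x_i)),\ \min(\ell(P_{L\to x_j}),M(x_j))\big),
\]
where $L=\mathrm{LCA}(x_i,x_j)$. I would then split into cases based on whether $d(x_i,x_j)\le \max(M(x_i),M(x_j))$ (so $\dlcap{M}=d(x_i,x_j)$) or $d(x_i,x_j)>\max(M(x_i),M(x_j))$ (so $\dlcap{M}=\max(M(x_i),M(x_j))$). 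In the typical sub-cases, one endpoint $x_\star$ has both a long leg $\ell(P_{L\to x_\star})\ge \Omega(\dlcap{M})$ and a large cap $M(x_\star)\ge \Omega(\dlcap{M})$; this follows from the Lipschitz property $|M(x_i)-M(x_j)|\le d(x_i,x_j)$ combined with the observation that the two legs must sum to $d(x_i,x_j)$. Applying Subclaim~\ref{subclaim:leg} to $x_\star$ in these sub-cases directly yields the required $\Omega(\dlcap{M})$ lower bound.

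The main obstacle will be the remaining degenerate sub-case, in which the endpoint with the larger cap (call it $x_j$) happens to have a short leg $\ell(P_{L\to x_j})$ while the other endpoint $x_i$ has a small cap $M(x_i)$; then both invocations of Subclaim~\ref{subclaim:leg} produce bounds that are $o(\dlcap{M})$. I plan to handle this by using Claim~\ref{claim:norm-bound-after-clean} to argue that $\|\embed[i]\|_1\le M(x_i)/2$ is small and hence $\|\embed[i]-\embed[j]\|_1 \ge \|\embed[j]\|_1-\|\embed[i]\|_1$, and then establishing an auxiliary claim that at any location $w$ the expected embedding norm satisfies $\E[\|\embed[w]\|_1] = \Omega(M(w))$. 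To prove this auxiliary claim, I would exploit that over the uniformly random initialization of $a(r_0)$, the build/clean cycle position at a fixed $w$ is uniform across $\{0,\dots,\lb+\lc-1\}$, so with constant probability we are well into a build stage, at which point a partial-build-stage calculation shows a constant fraction of the $|H|=\Theta(\log n)$ coordinates are non-zero with value $\Theta(M(w)/\log n)$, contributing $\Omega(M(w))$ to the expected $\ell_1$ norm. Closing this auxiliary claim, and thereby the degenerate sub-case, is the technical heart of the proof.
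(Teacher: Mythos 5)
Your Part (1) argument matches the paper exactly, and your Part (2) case structure (invoke the ``leg'' subclaim when one endpoint has both a long leg to the LCA and a comparable cap; handle the remaining degenerate configuration separately) is the same as the paper's. You also correctly identify the degenerate case (large-cap endpoint has a short leg, other endpoint has a small cap) and the auxiliary claim $\E[\|\embed[w]\|_1]=\Omega(M(w))$ as the technical crux.

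Where you diverge is in how the degenerate case is closed, and this is where there is a subtlety you gloss over. The paper proves two things: (a) the small-cap endpoint's leg $P_{L\to j}$ must contain the end of a clean stage (since $\ell(P_{L\to j})\ge\frac{49}{50}M(j)$ while edges in that region are short), and (b) conditioned on that, the hashes of the post-clean build edges that shape $\embed[j]$ are \emph{independent} of $\embed[i]$, so with probability $\ge 1/2$ each coordinate of $\embed[j]$ is zero, giving the \emph{multiplicative} bound $\E[\|\embed[i]-\embed[j]\|_1]\ge\frac12\E[\|\embed[i]\|_1]$. Your proposal instead uses the \emph{additive} bound $\|\embed[i]-\embed[j]\|_1\ge\|\embed[j]\|_1-\|\embed[i]\|_1$ with $\|\embed[i]\|_1\le M(x_i)/2$ deterministic. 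That only yields something $\ge\Omega(\dlcap{M})$ if $M(x_i)/2$ is strictly dominated by the constant $c$ hidden in $\E[\|\embed[j]\|_1]\ge c\cdot M(x_j)$, and that constant is tiny (it passes through the $\frac{1}{13}$ probability of an all-build suffix, a $1/e$ factor, and a $1/100$ in the leg calculation). So the threshold defining your degenerate case ($M(x_i)\ ``\mathrm{small}"$) cannot be the paper's factor-2 split; it must be $M(x_i)<c' M(x_j)$ for $c'<2c$, and the ``comparable caps'' case must correspondingly be shown to absorb ratios up to $1/c'$ (which it does, since the Case-1 calculation $\dlcap{M}\le\min(\ell_i,M(i))+C\min(\ell_j,M(j))$ works for any constant $C$). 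This is repairable, but as written the proposal does not acknowledge the constant-matching requirement, and without it the additive estimate can be vacuous. The paper's multiplicative decoupling avoids the issue entirely.

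Finally, your proposed proof of the auxiliary norm lower bound (direct accounting of how many build coordinates are nonzero under a uniformly random $a(r_0)$) is a different and plausible route; the paper gets the same bound more economically by invoking Subclaim~\ref{subclaim:leg} with $y=r_0$, which is exactly why $r_0$ is planted at distance $2M(r)$ from the root.
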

\begin{proof}
     We first prove part(1). Because of the way the build-clean procedure works, the embeddings as we move continuously along the tree are coordinate-wise Lipschitz, and only ever capture distance in one coordinate along adjacent locations. This immediately gives us that $\|\embed[i]-\embed[j]\|_1 \le d(x_i,x_j)$. What remains is to show that $\|\embed[i]-\embed[j]\|_1 \le \max\left(M(x_i),M(x_j)\right)$. This follows from \Cref{claim:norm-bound-after-clean}, because \begin{align*}
        \|\embed[x] - \embed[y]\|_1 &\le \|\embed[x]\|_1 + \|\embed[y]\|_1 \\
        &\le \frac{M(x) + M(y)}{2} \le \max(M(x),M(y)).
    \end{align*}
    Next, we turn our attention towards proving part (2). Without loss of generality, let us assume $M(i) \ge M(j)$---if this is not the case, we simply swap $x_i$ and $x_j$.
    
    \textbf{Case 1:} $\frac{M(i)}{M(j)} \le 2$. In this case, the caps of $x_i$ and $x_j$ are similar. Note how then:
    \begin{align}
        \dlcap{M}(x_i, x_j) &=\min(\ell(P_{i \rightarrow L})+\ell(P_{L \rightarrow j}),\max(M(i),M(j))) \\
        &=\min(\ell(P_{i \rightarrow L})+\ell(P_{L \rightarrow j}),M(i)) \label{eqn:case1-start}\\
        & \le \min(\ell(P_{i \rightarrow L}),M(i)) + \min(\ell(P_{L \rightarrow j}),M(i)) \\
        & \le \min(\ell(P_{i \rightarrow L}), M(i)) + \min(\ell(P_{L \rightarrow j}),2 \cdot M(j)) \label{eqn:mi-mj-bound}\\
        &\le \min(\ell(P_{i \rightarrow L}), M(i)) + 2\min(\ell(P_{L \rightarrow j}), M(j)).\label{eqn:case1-end}
    \end{align}
    In \Cref{eqn:mi-mj-bound}, we used the assumption that $M(i)\le 2M(j)$ under the present case. Accordingly then, we can just show that $E[\|\embed[i]-\embed[j]\|_1] \ge \Omega(1) \cdot \min(\ell(P_{i \rightarrow L}), \cdot M(i)), \Omega(1) \cdot \min(\ell(P_{L \rightarrow j}),\cdot M(j))$. This immediately follows from \Cref{subclaim:leg}.

    \textbf{Case 2:} $\ell(P_{L \rightarrow i}) \ge \frac{M(i)}{100}$. In this case,  by \Cref{subclaim:leg}, we know 
    $$E[\|\embed[i]-\embed[j]\|_1] \ge \Omega(1) \cdot \min(\ell(P_{L \rightarrow i}), M(i)) \ge \Omega(1)\cdot\frac{M(i)}{100} \ge \Omega(1)\cdot \dlcap{M}(x_i,x_j),$$
    where in the last inequality, we used that $M(i) \ge \dlcap{M}(x_i, x_j).$
    
    \textbf{Case 3:} $\frac{M(i)}{M(j)} >2$ and $\ell(P_{L \rightarrow i}) < \frac{M(i)}{100}$. This is the only case that remains. Our proof for this case will focus on showing (i) there must be the the end of a clean stage within $P_{L \rightarrow j}$, (ii) given that there is a clean stage that ends in $P_{L\rightarrow j}$, $\E[\|\embed[i]-\embed[j]\|_1] \ge \Omega(1) \cdot \E[\|\embed[i]\|_1]$, and finally, (iii) $\E[\|\embed[i]\|_1] = \Omega(1) \cdot M(i)$---together these imply that $\E[\|\embed[i]-\embed[j]\|_1] = \Omega(1) \cdot M(i) = \Omega(1) \cdot \dlcap{M}(i,j)$, as required.

    For (i):
    \begin{subclaim}\label{subclaim:must-clean}
        $P_{L \rightarrow j}$ contains at least $\lb + \lc$ edges, and hence, the end of a clean stage.
    \end{subclaim}
    \begin{proof}
        Note how 
        \begin{align*}
        \ell(P_{L \rightarrow j}) &= \ell(P_{i \rightarrow j})-\ell(P_{L \rightarrow i}) \\
        &\ge |M(i)-M(j)| - \ell(P_{L \rightarrow i}) \\
        &\ge \frac{M(i)}{2} - \frac{M(i)}{100} = \frac{49M(i)}{100} \\
        &> \frac{49M(j)}{50}.
        \end{align*}
        Now, there \textit{must} be the end of a clean stage in $P_{L \rightarrow j}$ if $P_{L \rightarrow j}$ contains at least $\lb + \lc$ edges. We will show that if the number of edges in $P_{L \rightarrow j}$ is smaller than $\lb +\lc$, then such a small number of edges could not possibly produce such a large $\ell(P_{L \rightarrow j})$. In particular, if so, by \Cref{claim:path-similar}, each edge $e$ of the $< \lb + \lc$ edges nearest to $x_j$ in the direction of $L$ must be of length 
        \begin{align*}
            \ell(e) \le \frac{M(j)}{100 \log(n)} \cdot \frac{M(s(e))}{M(j)} \le \frac{M(j)}{100 \log(n)} \cdot e^{\frac{\lb + \lc}{50 \log(n)}} \le \frac{M(j)}{100 \log(n)} \cdot e^{13/50} \le \frac{M(j)}{50}.
        \end{align*}
        The total length of $< \lb+\lc$ such edges on $P_{L \rightarrow j}$ would then be bounded as 
        $$\ell(P_{L \rightarrow j}) < (\lb + \lc) \cdot \ell(e) \le 13 \log(n) \cdot \frac{M(j)}{50 \log(n)} \le \frac{13}{50} \cdot M(j) < \frac{49}{50} \cdot M(j) < \ell(P_{L \rightarrow j}),$$ causing a contradiction.
    \end{proof}

    For (ii):
    \begin{subclaim}\label{subclaim:clean-helps}
        If $P_{L \rightarrow j}$ contains the end of a clean stage, then $$\E[\|\embed[i]-\embed[j]\|_1] \ge \Omega(1) \cdot \E[\|\embed[i]\|_1].$$
    \end{subclaim}
    \begin{proof}
        Note how, because there is a clean stage ending in $P_{L \rightarrow j}$, the only nonzero coordinates of $\embed[j]$ must be coordinates hashed into by one of the $\le \lb$ build edges after the \textit{last }clean. Note also that these hash values are independent of $\embed[i]$, and there are at most $\lb = 4 \log(n) \le \frac{|H|}{2}$ such coordinates. Let $Z_k$ denote the event that none of these build edges hash into the $k^{\text{th}}$ coordinate. Observe that by the union bound, $\Pr[Z_k] \ge \frac12$. Conditioned on $Z_k$, observe that $\embed[i](k)-\embed[j](k)=\embed[i](k)$. Accordingly,

        \begin{align}
            \E[\|\embed[i]-\embed[j]\|_1] &= \sum_{k=1}^{H}\E[|\embed[i](k)-\embed[j](k)|] \\
            & \ge \sum_{k=1}^H \Pr(Z_k) \cdot \E\left[|\embed[i](k)-\embed[j](k)| ~\bigg|~ Z_k\right] \\
            & = \sum_{k=1}^H \Pr(Z_k) \cdot \E\left[|\embed[i](k)| ~\bigg|~ Z_k\right] \\
            & = \sum_{k=1}^H \Pr(Z_k) \cdot \E[\|\embed[i](k)\|_1] \qquad(\text{independence})\\
            & \ge \sum_{k=1}^H \frac{1}{2} \cdot E[|\embed[i](k)|] \\
            & = \frac{1}{2} \cdot \E[\|\embed[i]\|_1].
        \end{align}
    \end{proof}

    For (iii), the following lets us say that the expected norm of any node's embedding is within a constant factor of its cap. Note how this would not be true for the root $r$ if it had a nonzero cap and we had just initialized the root's embedding as all zeros. This is the reason behind our initialization with an extra $r_0$ node.
    \begin{subclaim}\label{subclaim:big-norm}
        $\E[\|\embed[i]\|_1] = \Omega(1) \cdot M(i)$.
    \end{subclaim}
    \begin{proof}
        This follows from invoking \Cref{subclaim:leg} with $x=x_i$ and $y=r_0$ (and thus, $L=LCA(x_i, r_0)=r_0$), and recalling that $\embed[r_0] = 0^{|H|}$:
        \begin{align}
            \E[\|\embed[i]\|_1] & = \E[\|\embed[i]-\embed[r_0]\|_1]\\
            & \ge \Omega(1) \cdot \min(d(r_0,x),M(x))\\
            &= \Omega(1) \cdot \min(2M(r)+d(r, x_i),M(x)) \\
            & \ge \Omega(1) \cdot \min(2 M(r) + |M(r) - M(x)|,M(x)) \qquad (\text{by Lipschitzness})\\
            & \ge \Omega(1) \cdot M(x),
        \end{align}
        where the last inequality follows because $2M(r)+|M(r)-M(x)| \ge \frac{M(x)}{2}$, which can be seen by considering two cases $M(r) \ge \frac{M(x)}{2}$ and $M(r) < \frac{M(x)}{2}$.
    \end{proof}
    Thus, as \Cref{subclaim:must-clean} implies the condition required by \Cref{subclaim:clean-helps}, combining with \Cref{subclaim:big-norm} yields $\E[\|\embed[i]-\embed[j]\|_1] \ge \Omega(1) \cdot M(i) \ge \Omega(1) \cdot \dlcap{M}(i,j)$.
    Thus, we have proven part(2) of \Cref{lemma:tree-lipschitz-cap-lb} in its entirety.
\end{proof}

Finally, we obtain the general result for the setting which allows for $M(x)$ to be zero as well.

\begin{lemma}
    \label{lemma:general-tree-lipschitz-cap-lb}
    There exists an embedding of dimension $O(\log(n))$ where for every fixed pair of nodes $x_i$ and $x_j$,
    \begin{align*}
    &\text{(1)  } \Pr\left[\|\embed[i]-\embed[j]\|_1 \le \dlcap{M}(x_i, x_j)\right] =1.\\
    &\text{(2)  } \E\left[\|\embed[i]-\embed[j]\|_1\right]  \ge \Omega(1)\cdot \dlcap{M}(x_i, x_j).
    \end{align*}
\end{lemma}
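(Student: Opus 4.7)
The plan is to reduce to the strictly-positive-cap case of \Cref{lemma:tree-lipschitz-cap-lb} via an additive perturbation of $M$, combined with collapsing all zero-cap nodes to a single representative. Two subtleties motivate this: the Build-Clean-Tree preprocessing can fail to terminate when $M$ vanishes (the reason the prior lemma needs strictly positive caps), and the a.s.\ upper bound in (1) forces any pair with $\dlcap{M}(x_i,x_j) = 0$, which occurs precisely when $M(x_i) = M(x_j) = 0$, to land at the same point in the embedding.

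First I handle the extremes: if $M \equiv 0$ on $[n]$, the all-zero embedding works; if $M > 0$ everywhere, the prior lemma applies directly. Otherwise let $Z = \{i : M(x_i) = 0\}$, fix any $z^\star \in Z$, and define $r(i) = z^\star$ for $i \in Z$ and $r(i) = i$ otherwise. Lipschitzness of $M$ yields $M(x_j) \le d(x_{z^\star}, x_j)$ for $j \notin Z$, from which a small case analysis verifies
\begin{align*}
\dlcap{M}(x_{r(i)}, x_{r(j)}) \;=\; \dlcap{M}(x_i, x_j) \qquad \text{for all } i, j \in [n],
\end{align*}
since both are $0$ inside $Z$, both equal $M(x_j)$ across $Z$ and its complement, and the identity is tautological outside $Z$.

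Next, set $M_\delta(x) = M(x) + \delta$ for a parameter $\delta > 0$ chosen strictly smaller than $D^+_{\min} := \min\{\dlcap{M}(x_i,x_j) : \dlcap{M}(x_i,x_j) > 0\}$. The function $M_\delta$ is strictly positive and still $1$-Lipschitz, so \Cref{lemma:tree-lipschitz-cap-lb} applied to $M_\delta$ produces a random $O(\log n)$-dimensional embedding $\embed_\delta$ satisfying the a.s.\ upper bound and the $\Omega(1)$ expectation lower bound against $\dlcap{M_\delta}$, with some absolute constant $c > 0$ in the latter. A straightforward case analysis shows $\dlcap{M}(x_u, x_v) \le \dlcap{M_\delta}(x_u, x_v) \le \dlcap{M}(x_u, x_v) + \delta$ for all pairs. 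I then define the final embedding by $\embed[i] := \tfrac{1}{2}\,\embed_\delta[r(i)]$.

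Finally I verify both properties. Pairs with $\dlcap{M}(x_i,x_j) = 0$ have $r(i) = r(j) = z^\star$, so $\embed[i] = \embed[j]$ and (1) and (2) hold trivially. For any pair with $\dlcap{M}(x_i,x_j) > 0$, combining the three facts above gives
\begin{align*}
\|\embed[i] - \embed[j]\|_1 \;\le\; \tfrac{1}{2}\dlcap{M_\delta}(x_{r(i)}, x_{r(j)}) \;\le\; \tfrac{1}{2}\bigl(\dlcap{M}(x_i,x_j) + \delta\bigr) \;\le\; \dlcap{M}(x_i,x_j),
\end{align*}
where the last step uses $\delta < D^+_{\min} \le \dlcap{M}(x_i,x_j)$, and similarly $\E\|\embed[i] - \embed[j]\|_1 \ge \tfrac{c}{2}\dlcap{M}(x_i,x_j) = \Omega(1)\cdot \dlcap{M}(x_i,x_j)$. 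The main obstacle is precisely this tension between the additive $\delta$ slack introduced by the perturbation and the \emph{exact} a.s.\ upper bound in (1): the scaling factor $\tfrac{1}{2}$ together with the choice $\delta < D^+_{\min}$ absorbs the slack on positive pairs, while collapsing $Z$ to one representative eliminates slack entirely on zero pairs.
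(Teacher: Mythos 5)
Your proof is correct and takes a genuinely different route from the paper's. The paper deletes the zero-cap nodes, runs the strictly-positive special case (\Cref{lemma:tree-lipschitz-cap-lb}) independently on each resulting connected component with all coordinates halved, and appends a one-dimensional coordinate $C_k \cdot M(i)/4$ with an independent Rademacher sign $C_k$ per component---this extra coordinate is what recovers the distance between pairs that fall in different components, where the capped distance degenerates to $\max(M(i),M(j))$ because $d(i,j) \ge M(i)+M(j)$. Your approach instead perturbs the cap globally to $M_\delta = M+\delta$ so the positive-cap lemma applies on the whole tree at once, then collapses all zero-cap vertices to a single representative $z^\star$ to kill the $\delta$-slack exactly where the a.s. upper bound is $0$, and halves the embedding to absorb the slack on the remaining pairs, choosing $\delta$ below the smallest positive capped distance $D^+_{\min}$. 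The identity $\dlcap{M}(x_{r(i)},x_{r(j)}) = \dlcap{M}(x_i,x_j)$ that you verify plays the same role as the paper's cross-component observation, just packaged as a relabeling. What your route buys is a cleaner logical reduction with no component bookkeeping and no extra Rademacher coordinate; what it gives up is that it relies on the finiteness of the point set to make $D^+_{\min}$ a well-defined positive threshold, and the chopped tree decomposition underlying $\embed_\delta$ has a number of segments that grows unboundedly as $\delta\to 0$ (invisible in the dimension bound, but a real cost for running time). Both constructions yield $O(\log n)$ dimensions and the same distortion guarantees.
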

\begin{proof}
    Note how this immediately follows from \Cref{lemma:tree-lipschitz-cap-lb} if all $M(x)>0$. Otherwise, there are some locations $x$ in the tree where $M(x)=0$. We will define an embedding $\embed[i]$ as the concatenation of two embeddings $\embed_{\textrm{nonzero}}[i]$ and $\embed_{\textrm{single-cord}}[i]$. We define the embedding as follows:

    \begin{itemize}
        \item For each $x_i$ where $M(x_i)=0$ , set $\embed_{\textrm{nonzero}}[i] = 0^{|H|}$ and $\embed_{\textrm{single-cord}}[i] = 0$.
        \item Now, consider the tree where we delete all $x_i$ satisfying $M(x_i)=0$. With the remaining forest, run the special-case embedding algorithm proven in \Cref{lemma:tree-lipschitz-cap-lb} for each resulting connected component $T_k$, and use this embedding \underline{\emph{with each coordinate divided by $2$}} for each $\embed_{\textrm{nonzero}}[i]$. Also, for each connected component, define a Rademacher random variable $C_k$ (that is $+1$ with probability $\frac{1}{2}$ and $-1$ with probability $\frac{1}{2}$). For each $x_i$ in a connected component, set $\embed_{\textrm{single-cord}}[i] = C_k \cdot \frac{M(i)}{4}$.
    \end{itemize}

    To show (1), we first show that the norm of $\embed[i]$ is bounded:
    
    \begin{align}
        \|\embed[i]\|_1
        & = \|\embed_{\textrm{nonzero}}\|_1 + \|\embed_{\textrm{single-cord}}\|_1 \\
        & \le \frac{1}{2} \cdot \frac{M(i)}{2} + \frac{M(i)}{4} \label{step:use-norm}\\
        & = \frac{M(i)}{2} \label{eqn:total-embedding-norm-bound}.
    \end{align}
    \Cref{step:use-norm} uses \Cref{lemma:tree-lipschitz-cap-lb} which shows that the nonzero embedding of $x_i$ satisfies that its norm is at most $\frac{M(i)}{2}$. Accordingly, 
    $$
    \|\embed[i]-\embed[j]\|_1 \le \|\embed[i]\|_1+\|\embed[j]\|_1 \le \frac{M(i)+M(j)}{2} \le \max(M(i),M(j)).
    $$
    What remains is to show $\|\embed[i]-\embed[j]\|_1 \le d(i,j)$:

    \textbf{Case 1: $x_i,x_j$ are in the same connected component and $M(i)>0, M(j)>0$.}
    \begin{align}
        & \|\embed[i] - \embed[j]\|_1 \\
        & = \|\embed_{\textrm{nonzero}}[i]-\embed_{\textrm{nonzero}}[j]\|_1 + \|\embed_{\textrm{single-cord}}[i] - \embed_{\textrm{single-cord}}[j]\|_1 \\
        & \le \frac{1}{2} d(i,j) + \frac{1}{4} \cdot |M(i)-M(j)| \label{eqn:same-component}\\
        & \le \frac{1}{2} d(i,j) + \frac{1}{4} \cdot d(i,j) \label{eqn:same-component-lipschitzness}\\
        & \le d(i,j).
    \end{align}
    \Cref{eqn:same-component} used the fact that $\embed_{\textrm{single-cord}}[i]$ and $\embed_{\textrm{single-cord}}[j]$ have the same sign by virtue of being in the same component, and \Cref{eqn:same-component-lipschitzness} used Lipschitzness of the cap. 

    \textbf{Case 2: $x_i,x_j$ are in different connected components and $M(i)>0, M(j)>0$.}
    \begin{align}
        & \|\embed[i] - \embed[j]\|_1 \\
        & = \|\embed_{\textrm{nonzero}}[i]-\embed_{\textrm{nonzero}}[j]\|_1 + \|\embed_{\textrm{single-cord}}[i] - \embed_{\textrm{single-cord}}[j]\|_1 \\
        & \le \|\embed_{\textrm{nonzero}}[i]\|_1 +\|\embed_{\textrm{nonzero}}[j]\|_1 + \|\embed_{\textrm{single-cord}}[i]\|_1 + \|\embed_{\textrm{single-cord}}[j]\|_1 \\ 
        & \le \frac{1}{4} \cdot (M(i)+M(j)) + \frac{1}{4} \cdot (M(i)+M(j))\\
        & \le \frac{1}{4}\cdot d(i,j) + \frac{1}{4} \cdot d(i,j)  \label{step:add}\\
        & \le d(i,j).
    \end{align}
    \Cref{step:add} follows from $x_i$ and $x_j$ being in different components meaning that there must be a node $x_k$ such that $M(k)=0$ and $x_k$ is on the path from $x_i$ to $x_j$, and so $d(i,j) = d(i,k) + d(k,j) \ge |M(i)-M(k)| + |M(k)-M(j)| = M(i) + M(j)$. 
    
    \textbf{Case 3: Either $M(i)=0$ or $M(j)=0$.}
    
    If both $M(i)=0$ and $M(j)=0$, $\|\embed[i]-\embed[j]\|_1=0 \le d(i,j).$
    Else say $M(i)>0$ and $M(j)=0$. Then, 
    \begin{align*}
    \|\embed[i]-\embed[j]\|_1 &= \|\embed[i]\|_1 \\
    &\le \frac{M(i)}{2} \qquad (\text{using \eqref{eqn:total-embedding-norm-bound}})\\
    &= \frac{M(i)-M(j)}{2} \\
    &\le d(i,j) \qquad(\text{by Lipschitzness}).
    \end{align*}
    This concludes the proof of (1).

    To prove (2), we must show that the expected difference in the embeddings is at least a constant factor of the true capped distance. This immediately holds for any $x_i,x_j$ in the same component (and having nonzero caps) by \Cref{lemma:tree-lipschitz-cap-lb} just from their coordinates in $\embed_{\textrm{nonzero}}$. For any $x_i,x_j$ in different components, and having $M(i), M(j)>0$, note how we argued above that $d(i,j)\ge M(i)+M(j) \ge \max(M(i), M(j))$, and hence $\dlcap{M}(x_i, x_j) = \max(M(i),M(j))$. Thus, the expected difference 
    \begin{align*}
        \E[\|\embed[i]-\embed[j]\|_1] &\ge \E[\|\embed_{\textrm{nonzero}}[i]-\embed_{\textrm{nonzero}}[j]\|_1] \\
        &\ge \frac{1}{2}\cdot\frac{1}{4} \cdot (M(i)+M(j))) \\
        &= \Omega(1) \cdot \max(M(i),M(j)).
    \end{align*}
    If exactly one of the caps is nonzero---say $M(i)>0$ and $M(j)=0$, then again, we have $d(i,j) \ge M(i)-M(j)=M(i)=\max(M(i),M(j))$, and hence $\dlcap{M}(i,j)=M(i)$, giving
    \begin{align*}
        \E[\|\embed[i]-\embed[j]\|_1] &= \E[\|\embed[i]\|_1 \ge \E[\|\embed_{\textrm{nonzero}}[i]\|_1] = \frac{M(i)}{4}.
    \end{align*}
    Finally, if both $M(i)=0$ and $M(j)=0$, then $\dlcap{M}(i,j)=0$, but $\E[\|\embed[i]-\embed[j]\|_1]=0$ as well.
    This concludes the proof of (2).
\end{proof}
Boosting the above in-expectation guarantee of \Cref{lemma:general-tree-lipschitz-cap-lb} using logarithmically many copies then yields \Cref{thm:lipschitz-capped-tree}.

\theoremlipschitzcappedtree*
\section{General Capped \texorpdfstring{$\ell_1$}{l1} Metrics}
\label{sec:general-cap-metrics}
The Build-Clean-Tree algorithm from above has some nice properties that lend themselves to constructing embeddings for capped $\ell_1$ metrics more generally. Concretely, the task here is the following: we are given a set of $n$ points in $\R^d$ with the distance being the $\ell_1$ metric. For a fixed cap $M>0$, we want to construct an embedding of these points which captures the $\ell_1$ distance between these points capped at $M$. Concretely, we want to construct $\embed[x_1],\dots,\embed[x_n]$ such that for any $x_i, x_j$,
\begin{align*}
    \Omega(1)\cdot \min(\|x_i -x_j\|_1, M) \le \|\embed[x_i]-\embed[x_j]\|_1 \le O(1) \cdot \min(\|x_i-x_j\|_1, M).
\end{align*}
We can use the Build-Clean-Algorithm as a subroutine to construct such an embedding.
\theoremgeneralcap*

The first step towards this is interpreting each coordinate of the points as a line metric, and embedding that line metric with the cap $M$ using the Build-Clean-Algorithm.
\begin{restatable}[Coordinate-wise build-clean]{lemma}{coordinatewisebc}
    \label{lemma:coordinate-wise-build-clean}
    Let $x_1,\dots,x_n$ be points embedded in $(\R^d, \ell_1)$. Fix any coordinate $q\in \{1,\dots,d\}$, and consider the corresponding coordinates $x_{1}[q], \dots, x_n[q]$. Then, there exists an embedding $v^{(q)}_{1},\dots, v^{(q)}_{n} \in \R^{8d}$ of these coordinates, such that for any $i,j$,
    \begin{align*}
        &\text{(1)  } \Pr\left[\|v^{(q)}_{i}-v^{(q)}_{j}\|_1 \le \min(|x_{i}[q]-x_{j}[q]|, M)\right] =1.\\
        &\text{(2)  } \E\left[\|v^{(q)}_{i}-v^{(q)}_{j}\|_1\right]  \ge \Omega(1)\cdot \min(|x_{i}[q]-x_{j}[q]|, M).
    \end{align*}
    Furthermore, every $v^{(q)}_{i}$ has the property that all its coordinates are exactly $0$ or $\frac{M}{100d}$, except at most one coordinate which is contained in $\left[0, \frac{M}{100d}\right]$.
\end{restatable}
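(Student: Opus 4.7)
The plan is to treat the projections $x_1[q],\ldots,x_n[q]$ onto the $q$-th coordinate as locations on a real line, view the line as a (degenerate) tree equipped with the constant Lipschitz cap $M(\cdot)\equiv M$, and invoke the Build-Clean-Tree algorithm of \Cref{sec:lipschitz-tree} directly, but with every occurrence of $\log n$ in its parameter choices replaced by $d$: namely, $|H|=8d$, $\lb=4d$, $\lc=9d$. The resulting $8d$-dimensional embedding will be $v_i^{(q)}$.

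First, I would verify that the parameter substitution $\log n \to d$ preserves all the invariants needed by \Cref{lemma:tree-lipschitz-cap-lb}. The key simplifications on a line with a constant cap are: (a) the caterpillar decomposition of a line is a single caterpillar, so the modified decomposition chops the line uniformly into edges of length exactly $\tfrac{M}{100d}$, with at most one shorter stub at the far end; (b) \Cref{claim:path-similar} becomes trivial, since all caps equal $M$ and all ratios are $1$; (c) the slack condition $\lc - (\text{\#short edges per root-to-leaf path}) \ge 2\lb$ that drives \Cref{claim:clean} reduces to $\lc - 1 \ge 2\lb$, which holds for $d\ge 1$; and (d) the probability bound $(1-1/|H|)^{2\lb} = (1-1/(8d))^{8d}\ge \Omega(1)$ underlying \Cref{subclaim:leg} and \Cref{subclaim:clean-helps} is unchanged in form. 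With these substitutions, the proofs of the no-overestimation guarantee and the in-expectation lower bound go through essentially verbatim, yielding parts (1) and (2) of the lemma.

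The structural property will follow directly from the uniformity of edge lengths. I would extend the line slightly past $x_n[q]$ if necessary so that the single stub lies entirely beyond every data point; then every edge the algorithm actually walks through before reaching any $x_i[q]$ has length exactly $\tfrac{M}{100d}$. Consequently, each build walk either sets one coordinate from $0$ to exactly $\tfrac{M}{100d}$ or does nothing, and each clean walk either fully zeros out one coordinate or does nothing. At any boundary location between two edges, every coordinate is therefore either $0$ or exactly $\tfrac{M}{100d}$. At any interior location of some edge $e$, exactly one coordinate is in mid-traversal (walking up during a build edge or walking down during a clean edge) and its value lies in $[0,\tfrac{M}{100d}]$; all remaining coordinates retain their boundary values. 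This is precisely the structural claim.

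The main obstacle I anticipate is the bookkeeping around the stub edge: were it allowed to lie in the interior of the data, a point falling on or past a completed stub build would exhibit a second coordinate with a value outside $\{0,\tfrac{M}{100d}\}$, violating the structural property. Pushing the stub beyond $x_n[q]$, which is free because the cap is a global constant, resolves this. Beyond this single consideration, the rest of the argument is a clean transcription of the tree case with $d$ in place of $\log n$.
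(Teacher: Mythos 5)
Your proposal is correct and follows essentially the same route as the paper's proof in \Cref{sec:proofs-coordinate-wise}: treat the $q$-th coordinate as a line metric with constant cap $M$, instantiate the Build-Clean framework with $|H|=8d$, $\lb=4d$, $\lc=9d$ (replacing $\log n$ by $d$), prepend an $r_0$ node at distance $2M$, observe that \Cref{claim:path-similar} degenerates because all caps are equal, check that complete cleaning and the norm bound still hold, and invoke the analogue of \Cref{subclaim:leg} for the in-expectation lower bound. The one place where you are slightly more careful than the paper is the leftover stub edge: the paper simply states that the line is split into edges of length exactly $M/(100d)$ without discussing the remainder, whereas you explicitly extend the line past the rightmost data point so that the single stub lies past all embedded points. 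That tidies up the argument for the structural claim (though one can also check directly that a trailing stub produces at most one off-lattice coordinate at the final data point, so the claim would still hold without the extension); either way the parameters and invariants you verify match the paper's, and the proof goes through.
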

\begin{proof}[Proof sketch]
    Consider say the first coordinate of each of the $n$ points. If we map this coordinate to the line, this corresponds to a line graph metric. We will split up this line metric into segments of size $\frac{M}{100d}$ and embed it with a fixed cap of $M$ into $8d$ dimensions using the build-clean framework described in \Cref{sec:lipschitz-tree}.\footnote{Observe that we have effectively replaced ``$\log(n)$" with ``$d$" in the parameters in the analysis for Lipschitz-cap tree metrics.} The special properties of the building and cleaning processes ensure that for every point, at most one coordinate of the embedding is not either exactly 0 or $\frac{M}{100d}$. Furthermore, we are guaranteed to never overestimate the capped distance, and also capture a constant fraction of it in expectation, following similar reasoning as in the proof of \Cref{lemma:tree-lipschitz-cap-lb}. Complete details about the proof are given in \Cref{sec:proofs-coordinate-wise}.
\end{proof}
For each $x_i$, we can then concatenate the coordinate-wise embeddings given by \Cref{lemma:coordinate-wise-build-clean} independently to obtain an embedding $z_i = (v^{(1)}_i, \dots, v^{(d)}_i) \in \R^{8d^2}$. For the $z_i$ vectors thus constructed, we have the following convenient lemma:
\begin{restatable}[$z_i$'s capture distance]{lemma}{lemmazigood}
    \label{lemma:zi-good}
    For each $i \in \{1,\dots,n\}$, let $z_i$ be the concatenation of the coordinate-wise build-clean embeddings of $x_i$. Then, for any fixed $i,j$,
    \begin{align*}
        \Pr\left[\|z_i-z_j\|_1 > \Omega(1) \cdot \dcap{M}(x_i, x_j)\right] = \Omega(1).
    \end{align*}
    Additionally, $z_i$ has the special property that all its coordinates are exactly $0$ or $\frac{M}{100d}$, except at most $d$ coordinates which are contained in $\left[0, \frac{M}{100d}\right]$.
\end{restatable}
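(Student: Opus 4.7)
The plan is to leverage the coordinate-wise independence in the construction of $z_i$ together with a reverse-Markov argument applied coordinate-by-coordinate. Define $Y_q = \|v^{(q)}_i - v^{(q)}_j\|_1$ for each $q \in \{1,\dots,d\}$. Since the coordinate-wise build-clean embeddings are drawn from independent runs of the algorithm, the $Y_q$'s are independent. \Cref{lemma:coordinate-wise-build-clean} gives $0 \le Y_q \le \min(|x_i[q]-x_j[q]|, M)$ with probability $1$, and $\E[Y_q] \ge c \cdot \min(|x_i[q]-x_j[q]|, M)$ for some absolute constant $c > 0$. Since $\|z_i - z_j\|_1 = \sum_{q=1}^d Y_q$, the task reduces to lower-bounding the tail probability of this sum. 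Throughout, we use the standard reverse-Markov fact: if $X \in [0, B]$ and $\E[X] \ge \alpha B$, then $\Pr[X > \alpha B/2] \ge \alpha/(2-\alpha) = \Omega(\alpha)$.

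First I split on whether $\|x_i - x_j\|_1 \le M$ or not. In the easy case $\|x_i - x_j\|_1 \le M$, we have $\dcap{M}(x_i,x_j) = \|x_i - x_j\|_1$ and each $Y_q \le |x_i[q]-x_j[q]|$, so $\sum_q Y_q$ is bounded above (surely) by $\dcap{M}(x_i,x_j)$, while $\E[\sum_q Y_q] \ge c \cdot \dcap{M}(x_i, x_j)$. Reverse Markov directly yields $\Pr[\|z_i - z_j\|_1 > (c/2)\cdot\dcap{M}(x_i,x_j)] = \Omega(1)$.

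The harder case is $\|x_i - x_j\|_1 > M$, in which $\dcap{M}(x_i,x_j) = M$ and the total upper bound $\sum_q \min(|x_i[q]-x_j[q]|, M)$ can be as large as $dM$, so applying reverse Markov to the full sum would not give anything useful. My plan is to identify a subset $S \subseteq \{1,\dots,d\}$ whose capped coordinate-distance is $\Theta(M)$: if some coordinate $q^*$ already satisfies $|x_i[q^*]-x_j[q^*]| \ge M$, set $S = \{q^*\}$; otherwise greedily include coordinates until $\sum_{q \in S} |x_i[q]-x_j[q]| \in [M, 2M]$, which is achievable because each term added contributes less than $M$. In either case, $\sum_{q \in S} Y_q$ is bounded above (surely) by $O(M)$ and has expectation at least $\Omega(M)$, so reverse Markov yields $\Pr[\sum_{q \in S} Y_q > \Omega(M)] = \Omega(1)$. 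Monotonicity $\sum_q Y_q \ge \sum_{q \in S} Y_q$ then gives the desired bound against $\dcap{M}(x_i, x_j) = M$.

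The structural claim---that at most $d$ coordinates of $z_i$ lie outside $\{0, M/(100d)\}$, and those few lie in $[0, M/(100d)]$---is immediate: $z_i$ is the concatenation of $d$ vectors $v^{(q)}_i$, each of which, by \Cref{lemma:coordinate-wise-build-clean}, has at most one coordinate outside $\{0, M/(100d)\}$ and that coordinate lies in $[0, M/(100d)]$; the bound of $d$ ``bad'' coordinates follows by summing. The main subtlety will be the greedy construction of $S$ in the second case, ensuring the truncation at the cap $M$ is matched by a subset whose expected capture is $\Omega(M)$ while keeping the sure upper envelope at $O(M)$; everything else is bookkeeping on top of \Cref{lemma:coordinate-wise-build-clean}.
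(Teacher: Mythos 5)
Your proof is correct, and it takes a genuinely different and, in my view, simpler route than the paper's. The paper handles the large-distance case by invoking a bespoke ``reverse Markov with independence'' lemma (their \Cref{lemma:reverse-markovish}): it truncates the sum $\sum_q R_q$ at $cM$, then carefully telescopes $\min(\sum R_q, cM) = \sum_q \min(R_q, \max(0, cM - \sum_{j<q}R_j))$ and uses the \emph{independence} of the coordinate-wise runs to control the expectation of each term in this telescope. Your route sidesteps independence entirely: you pre-select a subset $S$ of coordinates whose uncapped distance already sits in $[M, 2M)$ (or a single coordinate with $|x_i[q^*]-x_j[q^*]| \ge M$), restrict attention to $\sum_{q\in S} Y_q$, and observe that this partial sum is bounded surely by $O(M)$ with expectation $\Omega(M)$, so the one-line reverse Markov inequality applies directly; monotonicity $\|z_i-z_j\|_1 \ge \sum_{q\in S} Y_q$ finishes it. Case 1 is similarly cleaner in your version: $\sum_q Y_q \le \|x_i-x_j\|_1$ holds surely, so you can apply reverse Markov to the full sum without invoking independence. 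The only thing the paper's heavier lemma buys is a slightly more uniform treatment of the two cases; your argument is shorter and makes the (pleasant) fact explicit that the independence of the coordinate-wise runs is not actually needed for this lemma, only the per-coordinate bounds from \Cref{lemma:coordinate-wise-build-clean}. Your handling of the structural claim is correct and matches the paper's implicit reasoning.
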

\begin{proof}[Proof sketch]
    From \Cref{lemma:coordinate-wise-build-clean}, we know that the embedding for each individual coordinate, in expectation, captures a constant fraction of the distance in that coordinate. Furthermore, the embeddings for the different coordinates are constructed independently of each other. We use these two facts and apply an argument that is morally Markov's inequality in reverse, to obtain that with at least a constant probability, the $z_i$ vectors faithfully capture a constant factor of the capped distance. Complete details about the proof are given in \Cref{sec:proofs-for-zi}.
\end{proof}

The $z_i$ vectors enjoy similar structural properties like those in \Cref{claim:sparsity-less-than-M} and \Cref{claim:sparsity-greater-than-M}. These allow us to use similar hashing+snaking techniques that we used in the proofs of \Cref{claim:fixed-cap-tree-no-overestimate}.
and \Cref{lemma:fixed-cap-tree-no-underestimate}. More concretely, let $k=C\cdot d$ for an appropriately chosen constant $C$, and consider choosing a uniformly random hash function $h:[8d^2]\to [k]$ to hash the coordinates of the $z_i$'s into $k$ buckets. That is, for each $z_i$, $H(z_i)$ is a vector of size $Cd$, whose coordinates are defined in the following:
\begin{align}
    H(z_i)[p]=\sum_{q \in [8d^2]:h(q)=p}z_i[q] \qquad \text{for } p \in \{1,\dots,k\}.
\end{align}
Now, we interpret each coordinate of $H(z_i)$ as defining a line metric, over which we will do lazy snaking (\Cref{algo:lazy-snake}). Concretely, for each $p \in [k]$, let $loc_p[i] = H(z_i)[p]$. Let $sloc_p=\textsc{LazySnake}(loc_p, M/d)$, and consider the final embedding of each point $z_i$ as 
\begin{equation}
    \label{eqn:fixed-cap-general-l1-embedding-def}
    embedding[i] = (sloc_1[i], sloc_2[i], \dots, sloc_k[i]).
\end{equation} 

We can show that the embedding constructed as above does not overestimate and underestimate distances upto constant factors.
\begin{claim}[No overestimation]
    \label{claim:fixed-cap-general-l1-no-overestimate}
    Fix any pair $x_i$ and $x_j$. Then,
     with probability 1,
     \begin{align*}
        \|embedding[i]-embedding[j]\|_1 \le \Omega(1)\cdot\dcap{M}(x_i, x_j).
     \end{align*}
\end{claim}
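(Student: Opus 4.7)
The plan is to mirror Claim \ref{claim:fixed-cap-tree-no-overestimate} (the ``no overestimation'' statement for fixed cap tree metrics) essentially verbatim, since the current construction is the natural $\ell_1$ analogue of the one used there. I would split on whether $\|x_i - x_j\|_1 \le M$ or $\|x_i - x_j\|_1 > M$, and in each case produce a deterministic (with-probability-one) upper bound on $\|embedding[i] - embedding[j]\|_1$ of the form $O(1) \cdot \dcap{M}(x_i, x_j)$.

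For the case $\|x_i - x_j\|_1 \le M$, where $\dcap{M}(x_i, x_j) = \|x_i - x_j\|_1$, I would chain three ``no overestimation'' facts. First, LazySnake (Algorithm \ref{algo:lazy-snake}) never inflates distance, giving $|sloc_p[i] - sloc_p[j]| \le |loc_p[i] - loc_p[j]|$ deterministically for each $p$. Second, since $H(z_i)[p] - H(z_j)[p] = \sum_{q : h(q) = p}(z_i[q] - z_j[q])$, the triangle inequality inside each hash bucket followed by summing across buckets gives $\sum_p |loc_p[i] - loc_p[j]| \le \|z_i - z_j\|_1$. Third, since $z$ is a concatenation of the per-coordinate embeddings $v^{(q)}$, part (1) of Lemma \ref{lemma:coordinate-wise-build-clean} supplies $\|v^{(q)}_i - v^{(q)}_j\|_1 \le \min(|x_i[q] - x_j[q]|, M) \le |x_i[q] - x_j[q]|$ with probability 1, which summed over $q$ yields $\|z_i - z_j\|_1 \le \|x_i - x_j\|_1 = \dcap{M}(x_i, x_j)$. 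Chaining the three bounds gives the desired inequality with absolute constant $1$.

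For the case $\|x_i - x_j\|_1 > M$, where $\dcap{M}(x_i, x_j) = M$, the Case 1 chain is too loose since its right-hand side can be arbitrarily larger than $M$, so I would instead exploit the fact that calling LazySnake with cap parameter $M/d$ forces every $snake$ value to live in $[0, M/d]$; hence $|sloc_p[i] - sloc_p[j]| \le M/d$ deterministically for every $p$. Summing over the $k = Cd$ coordinates of $embedding$ gives $\|embedding[i] - embedding[j]\|_1 \le Cd \cdot \tfrac{M}{d} = C \cdot M = O(\dcap{M}(x_i, x_j))$. There is essentially no obstacle here: both ingredients (per-coordinate no-overestimation and the boundedness of the snake's image in $[0, \text{cap}]$) are deterministic properties already established in Section \ref{sec:capped-line}, and Lemma \ref{lemma:coordinate-wise-build-clean}(1) supplies the remaining with-probability-one inequality needed in Case 1. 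The only subtle point worth flagging is that the passage from $\sum_p |H(z_i)[p] - H(z_j)[p]|$ to $\|z_i - z_j\|_1$ is an inequality (not an equality), because within-bucket sign cancellation is absorbed in our favor by the triangle inequality; this is the one place where the argument crucially exploits that we are proving an upper bound.
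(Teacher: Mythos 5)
Your proposal matches the paper's proof essentially step for step: the same case split on $\|x_i-x_j\|_1 \lessgtr M$, the same three chained deterministic inequalities (snaking never overestimates, triangle inequality within hash buckets, and \Cref{lemma:coordinate-wise-build-clean}(1) giving $\|z_i-z_j\|_1 \le \|x_i-x_j\|_1$) in the first case, and the same snake-range bound $|sloc_p[i]-sloc_p[j]|\le M/d$ summed over $k=Cd$ coordinates in the second. No substantive differences; you even correctly read the $\Omega(1)$ in the claim statement as the intended $O(1)$.
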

\begin{proof}
    We have two cases: \\
    \noindent Case 1: $\|x_i-x_j\|_1 \le M.$ \\
    In this case, $\dcap{M}(x_i, x_j) = \|x_i-x_j\|_1$. Furthermore, for any $q\in \{1,\dots,d\}$, $|x_i[q]-x_j[q]|\le M$. Thus, from \Cref{lemma:coordinate-wise-build-clean}, $\|v^{(q)}_i-v^{(q)}_j\|_1 \le |x_i[q]-x_j[q]|$ with probability 1. This means that 
    \begin{align}
        \|z_i - z_j\|_1 &= \sum_{q=1}^d \|v^{(q)}_i-v^{(q)}_j\|_1 \le \sum_{q=1}^d|x_i[q]-x_j[q]| = \|x_i-x_j\|_1. \label{eqn:zi-bound}
    \end{align}
    Observe that
    \begingroup
    \allowdisplaybreaks
    \begin{align*}
        \|embedding[i]-embedding[j]\|_1 &= \sum_{p=1}^k\left|sloc_p[i]-sloc_p[j]\right| \\
        &\le \sum_{p=1}^k\left|loc_p[i]-loc_p[j]\right| \quad \text{(snaking never overestimates distances)}\\
        &= \sum_{p=1}^k\left|\sum_{q \in [8d^2]:h(q)=p}(z_i[q] - z_j[q])\right| \\
        &\le \sum_{p=1}^k\sum_{q \in [8d^2]:h(q)=p}\left|z_i[q] - z_j[q]\right| \\
        &= \sum_{q=1}^{8d^2}\left|z_i[q] - z_j[q]\right| \\
        &=\|z_i-z_j\|_1 \\
        &\le \|x_i-x_j\|_1. \qquad (\text{\Cref{eqn:zi-bound}})
    \end{align*}
    \endgroup
    \noindent Case 2: $\|x_i-x_j\|_1 > M.$ \\
    In this case, $\dcap{M}(x_i, x_j) = M$. We have
    \begin{align*}
        \|embedding[i]-embedding[j]\|_1 &= \sum_{p=1}^k\left|sloc_p[i]-sloc_p[j]\right| \\
        &\le \sum_{p=1}^k \frac{M}{d} \qquad \left(\text{snaking width is }\frac{M}{d}\right)\\
        &\le CM.
    \end{align*}
\end{proof}

\begin{lemma}[No underestimation]
    \label{lemma:fixed-cap-general-l1-no-underestimate}
    Fix any pair $x_i$ and $x_j$, and fix $p \in [k]$. Then, we have that
    \begin{align*}
        \E\left[\left|sloc_p[i]-sloc_p[j]\right|\right] &= \E_{\bc}\E_{h}\E_{\snake}\left[\left|sloc_p[i]-sloc_p[j]\right|\right] \ge \Omega(1) \cdot \frac{\dcap{M}(x_i, x_j)}{k}.
    \end{align*}
    Thus, by linearity of expectation,
    \begin{align*}
        \E\left[\|embedding[i]-embedding[j]\|_1\right] = \E\left[\sum_{p=1}^k\left|sloc_p[i]-sloc_p[j]\right|\right] \ge \Omega(1)\cdot \dcap{M}(x_i, x_j).
    \end{align*}
\end{lemma}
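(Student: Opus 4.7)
The plan is to mirror the proof of \Cref{lemma:fixed-cap-tree-no-underestimate}, with \Cref{lemma:zi-good} playing the role of the sparsity claims \Cref{claim:sparsity-less-than-M} and \Cref{claim:sparsity-greater-than-M}. The first step is to condition on the $\Omega(1)$-probability event from \Cref{lemma:zi-good} that $\|z_i - z_j\|_1 \ge c \cdot \dcap{M}(x_i, x_j)$; this costs only a constant factor in the final bound, and after tuning constants one can take $c$ to be a sufficiently large absolute constant. The structural property from \Cref{lemma:zi-good} --- every coordinate of $z_i - z_j$ lies in $\{-M/(100d), 0, +M/(100d)\}$ except for at most $2d$ ``small'' coordinates in $[-M/(100d), M/(100d)]$ --- then plays the role of the snipped-caterpillar sparsity, enabling a two-case analysis parallel to that lemma's proof.

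In Case 1 ($\|x_i - x_j\|_1 \le M$), we have $\|z_i - z_j\|_1 \le M$ and every nonzero coordinate of $z_i - z_j$ has magnitude at most $M/(100d)$, so $z_i - z_j$ has at most $O(d) = O(k)$ nonzero coordinates (the analog of \Cref{claim:sparsity-less-than-M}). Choosing $k = Cd$ with $C$ sufficiently large, the event that exactly one nonzero coordinate hashes to bucket $p$ has probability $\Omega(1)$; conditional on this, $|loc_p[i] - loc_p[j]|$ is at most $M/(100d)$, well within the snake width $M/d$, so \Cref{claim:lazy-snake-line-fixed-cap-lb} shows lazy snaking preserves a constant fraction of it. Averaging over which nonzero coordinate lands in $p$ then yields $\E[|sloc_p[i] - sloc_p[j]|] = \Omega(\|z_i - z_j\|_1 / k) = \Omega(\dcap{M}(x_i, x_j)/k)$.

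In Case 2 ($\|x_i - x_j\|_1 > M$, so $\dcap{M}(x_i, x_j) = M$), after conditioning the $\le 2d$ small coordinates contribute at most $M/50$ in total magnitude, leaving $\Omega(d)$ full-sized coordinates each of magnitude exactly $M/(100d)$ (the analog of \Cref{claim:sparsity-greater-than-M}). I would then follow the case split of Case 2 in \Cref{lemma:fixed-cap-tree-no-underestimate}: define the event $A$ that the signed sum of small-coordinate contributions to bucket $p$ has magnitude at least $M/(200d)$. Under $A$, the snake already captures $\Omega(M/d) = \Omega(\dcap{M}(x_i, x_j)/k)$; under $\neg A$, the standard argument that with constant probability exactly one full-sized coordinate hashes to $p$ gives a net bucket contribution of magnitude at least $M/(100d) - M/(200d) = \Omega(M/d)$, which the snake preserves.

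The main obstacle is the sub-regime of Case 2 where the number $F$ of full-sized coordinates of $z_i - z_j$ far exceeds $k$ (potentially $\Theta(d^2)$, unlike the tree setting where the number of ``special'' snippets is always $O(k)$). Here ``exactly one full-sized coordinate hashes to $p$'' is no longer a constant-probability event, and the many colliding $\pm M/(100d)$ contributions to bucket $p$ may partly cancel due to their (non-random) signs. To cover this sub-regime, I would appeal to a second-moment lower bound: the bucket's signed sum is an independent sum of bounded random variables with variance $\Theta((F/k)(M/(100d))^2)$, and a Paley--Zygmund or Khintchine-type argument gives $\E[|loc_p[i] - loc_p[j]|] = \Omega(\sigma) = \Omega(\sqrt{F/k}\cdot M/(100d)) = \Omega(M/d)$ whenever $F \ge \Omega(k)$. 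The snake preserves this up to a constant, and summing the per-bucket bound over the $k$ buckets yields $\E[\|embedding[i] - embedding[j]\|_1] = \Omega(\dcap{M}(x_i, x_j))$ as claimed.
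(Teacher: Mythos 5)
Your Case 1 matches the paper's argument. For Case 2, however, your partition into ``small coordinates'' versus ``all $F$ full-sized coordinates'' creates a gap that the paper avoids by a different (and simpler) choice of special set. The issue is already present in your ``under $A$'' branch: if the small coordinates hashed to $p$ contribute $\ge M/(200d)$, the (possibly many) full-sized coordinates that also hash to $p$ can cancel this, so $|loc_p[i]-loc_p[j]|$ is not controlled; to salvage the branch you would need to additionally condition on no full-sized coordinate landing in $p$, which ceases to be a constant-probability event once $F \gg k$ --- exactly the obstacle you then flag. The paper's fix is not Paley--Zygmund. Because each large coordinate has magnitude $\Theta(M/d)$ and their total mass is $\Omega(M)$, \emph{any} $\Theta(d)$-size subset of them already has sum $\Omega(M)$; the paper takes such a $\Theta(d)$-size subset as the ``special'' set, and the excess full-sized coordinates are simply placed in ``non-special'' alongside the fractional ones. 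The event $A$ is then defined over \emph{all} non-special coordinates (not just the small ones). With $k' = \Theta(d) = \Theta(k)$ specials, both ``none of the specials hash to $p$'' and ``exactly one special hashes to $p$'' occur with probability $\Omega(1)$ unconditionally, and the two branches of $A$ absorb arbitrary cancellation among the excess full-sized coordinates for free: under $A$ their net contribution to $p$ is large and we take no special; under $\neg A$ it is small and one special overwhelms it.

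Your proposed Paley--Zygmund patch for the $F \gg k$ regime is a genuinely different route and plausible, but it is under-specified in ways that matter: the bucket sum has a deterministic, generally nonzero mean $\mu_p = \frac{1}{k}\sum_q (z_i[q]-z_j[q])$ that can be comparable to or larger than $\sigma$; the fourth-moment bound needed for Paley--Zygmund on $(X-\mu_p)^2$ to give a constant ratio itself requires $F \gtrsim k$; and you must ultimately lower-bound $\E\left[\min(|loc_p[i]-loc_p[j]|, M/d)\right]$, not $\E\left[|loc_p[i]-loc_p[j]|\right]$, which requires an anti-concentration statement at scale $M/d$ rather than a bare first-moment bound. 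None of this is unfixable, but it is substantially more work than the subset-selection device in the paper, which sidesteps the $F \gg k$ regime entirely.
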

\begin{proof}
    We have two cases: \\
    \noindent Case 1: $\|x_i-x_j\|_1 \le M.$ \\
    In this case, $\dcap{M}(x_i, x_j) = \|x_i-x_j\|_1$.  Furthermore, from \Cref{eqn:zi-bound}, we know that with probability 1,
    $$
    \|z_i -z_j\|_1 \le \|x_i-x_j\|_1 \le M.
    $$
    This implies that $z_i-z_j$ has at most $102d$ nonzero coordinates, and each of these coordinates is at most $\frac{M}{100d}$. To see this, observe that by the structural properties of each $z_i$ (\Cref{lemma:zi-good}), $z_j-z_j$ has at most $2d$ coordinates that are contained in $\left[0, \frac{M}{100d}\right]$---the rest of the coordinates are either exactly $0$ or $\frac{M}{100d}$. But note also that since $\|z_i-z_j\|_1 \le M$, there can only be at most $100d$ coordinates that are equal to $\frac{M}{100d}$, and the rest must be 0.
    
    Thus, under this case, we have argued that $z_i-z_j$ has a set of at most $k'=102d$ nonzero coordinates and each of these coordinates is at most $\frac{M}{100d}$ in magnitude. Let us possibly include some zero coordinates, so that we have exactly $k'=102d$ of these ``special'' coordinates $c_1,\dots, c_r, \dots,c_{k'}$. By \Cref{lemma:zi-good}, we have that with constant probability over the build-clean process, 
    \begin{align}
        \|z_i - z_j\|_1 = \sum_{r=1}^{k'}|z_i[c_r]-z_j[c_r]| > \Omega(1) \cdot \|x_i-x_j\|_1 \label{eqn:zi-good-again}
    \end{align}
    Let us first condition on this constant-probability event over the build-clean process. Next, let us condition on the realization of the hash function $h$, which is independent of the randomness in the snaking. Conditioned on this realization, for $loc_p[i]=H(z_i)[p]$ and $sloc_p=\textsc{LazySnake}(loc_p, M/d)$ we have from \Cref{claim:lazy-snake-line-fixed-cap-lb} that
    \begin{align*}
        \E_{\snake}\left[|sloc_p[i]-sloc_p[j]|\right] \ge \Omega(1) \cdot \dcap{M/d}(loc_p[i], loc_p[j]).
    \end{align*}
    Now, taking an expectation with respect to the choice of the hash function, we get
    \begin{align*}
        \E_h\E_{\snake}\left[|sloc_p[i]-sloc_p[j]|\right] \ge \Omega(1) \cdot \E_h\left[\dcap{M/d}(loc_p[i], loc_p[j])\right].
    \end{align*}
    Let $A$ be the event that only one of the $k'$ special coordinates hashes to $p$. Then, we have that $\Pr[A]=\left(1-\frac1k\right)^{k'-1} \ge \Omega(1)$, yielding
    \begin{align*}
        \E_h\left[\dcap{M/d}(loc_p[i], loc_p[j])\right] &\ge \Omega(1)\cdot\E_h\left[\dcap{M/d}(loc_p[i], loc_p[j]) ~|~ A\right]
    \end{align*}
    Furthermore, conditioned on $A$, we have
    \begin{align*}
        &\E_h\left[\dcap{M/d}(loc_p[i], loc_p[j]) ~|~ A\right] \\
        &\qquad=\sum_{r=1}^{k'} \frac{1}{k'} \cdot \E_h\left[\dcap{M/d}(loc_p[i], loc_p[j]) ~|~ \text{only $r^{\text{th}}$ special coordinate hashes to $p$}\right].
    \end{align*}
    But now, observe that if only the $r^{\text{th}}$ special coordinate hashes to $p$, we have $\dcap{M/d}(loc_p[i], loc_p[j]) =|z_i[c_r]-z_j[c_r]|$. This is because $|z_i[c_r]-z_j[c_r]|$ is at most $M/100d$, which is smaller than the cap $\frac{M}{d}$. Finally, recalling \Cref{eqn:zi-good-again},
    \begin{align*}
        \E_h\left[\dcap{M/d}(loc_p[i], loc_p[j]) ~|~ A\right] &= \sum_{r=1}^{k'}\frac{1}{k'}\left|z_i[c_r]-z_j[c_r]\right| \ge \Omega(1)\cdot \frac{1}{k'}\cdot \|x_i-x_j\|_1 \ge \Omega(1)\cdot \frac{1}{k}\cdot \|x_i-x_j\|_1. 
    \end{align*}
    Putting everything together, we get
    \begin{align*}
        \E_{\bc}\E_h\E_{\snake}\left[|sloc_p[i]-sloc_p[j]|\right] &\ge \Omega(1) \cdot \frac{\|x_i-x_j\|_1}{k}.
    \end{align*}
    
    \noindent Case 2: $\|x_i-x_j\|_1 > M.$ \\
    In this case, $\dcap{M}(x_i, x_j)=M$. First, by \Cref{lemma:zi-good}, we have that with constant probability over the build-clean process, $\|z_i - z_j\|_1 > cM$ for some constant $c$. We condition on this constant probability event. Given that this event holds, let us count the number of coordinates in $z_i-z_j$ that have magnitude smaller than $\frac{cM}{100d}$. Recall that by the properties of the build-clean algorithm (\Cref{lemma:zi-good}), there are at most $2d$ ``not-full" coordinates in $z_i-z_j$ that are neither exactly 0 or $\frac{M}{100d}$, and are in $\left[0, \frac{M}{100d}\right]$. Thus, coordinates that are smaller than $\frac{cM}{100d}$ must lie in this set of at most $2d$ not-full coordinates, and their contribution to $\|z_i-z_j\|_1$ is no more than $2d\cdot \frac{cM}{100d}=\frac{cM}{50}$. Since $\|z_i-z_j\|_1$ is at least $cM$, there is at least $\frac{49cM}{50}$ distance left, and this must be contributed to by the set of ``full" coordinates that is exactly $\frac{M}{100d}$ and the not-full coordinates that are at least $\frac{cM}{100d}$. The total number of these coordinates is at the very lest $\frac{49cM}{50}\cdot \frac{100d}{M}=98cd=\Omega(d)$, and their sum is at least $\frac{49cM}{50}=\Omega(M)$.
    
    Thus, under the conditioned event, we know that $z_i-z_j$ has a set of $k'=\Omega(d)$ ``special" coordinates $c_1,\dots,c_r,\dots,c_{k'}$ such that the sum of their absolute values is $\Omega(M)$. Fix $p \in \{1,\dots,k\}$, and let $A$ be the event that the non-special coordinates that get hashed to $p$ amount for a distance of at least $\frac{M}{C_1d}$ for a suitable constant $C_1$. Concretely, under $A$, $\left|\sum_{q \in [8d^2]:q \text{ not special},h(q)=p}(z_i[q]-z_j[q])\right| \ge \frac{M}{C_1d}$.
    We have that
    \begin{align*}
        \E\left[|sloc_p[i]-sloc_p[j]|\right] &= \Pr[A]\cdot \E\left[|sloc_p[i]-sloc_p[j]| ~|~ A\right] \\
        &\quad + \Pr[\neg A]\cdot \E\left[|sloc_p[i]-sloc_p[j]| ~|~ \neg A\right]
    \end{align*}
    Conditioned on $A$, we are happy even if none of the special coordinates hash to $p$, which happens with probability $\left(1-\frac{1}{k}\right)^{k'}  \ge \Omega(1)$. In this case, by \Cref{claim:lazy-snake-line-fixed-cap-lb}, lazy snaking will capture at least a constant fraction of $\frac{M}{C_1d}$, yielding
    \begin{align*}
        \E\left[|sloc_p[i]-sloc_p[j]| ~|~ A\right] &\ge \Omega(1) \cdot \frac{M}{d}.
    \end{align*}
    If $A$ does not occur, we have that $\left|\sum_{q \in [8d^2]:q \text{ not special},h(q)=p}(z_i[q]-z_j[q])\right| < \frac{M}{C_1d}$. In this case, we are happy if exactly one of the special coordinates hashes to $p$, which happens with probability $\left(1-\frac{1}{k}\right)^{k'-1} \ge \left(1-\frac{1}{k}\right)^{k'} \ge \Omega(1)$. This will ensure a distance of at least $\frac{cM}{100d} - \frac{M}{C_1d}=\Omega(1)\cdot\frac{M}{d}$. Lazy snaking will yet again capture at least a constant fraction of this distance in expectation, yielding
    \begin{align*}
        \E\left[|sloc_p[i]-sloc_p[j]| ~|~ \neg A\right] &\ge \Omega(1) \cdot \frac{M}{d}.
    \end{align*}
    In total, we get that
    \begin{align*}
        \E\left[|sloc_p[i]-sloc_p[j]|\right] &\ge \Pr[A]\cdot \Omega(1) \cdot \frac{M}{d} 
        + \Pr[\neg A]\cdot \Omega(1)\cdot \frac{M}{d} \\
        &\ge \Omega(1) \cdot \frac{M}{d} \\
        &= \Omega(1)\cdot \frac{\dcap{M}(x_i, x_j)}{k}.
    \end{align*}

\end{proof}

Finally, boosting with $O(\log(n))$ copies of the $O(d)$-sized embeddings gives us \Cref{thm:general-cap}.

We note that \Cref{thm:general-cap} provides a tool potentially of general interest. For example, in \Cref{thm:capped-tree}, we provided a novel proof for embedding fixed cap tree metrics into $O(\log^2(n))$ dimensions. However, note how we can also obtain the same result for this problem as a direct corollary of \Cref{thm:general-cap} and Theorem 1.1 of \cite{lee2013dimension} that embeds tree metrics into $\ell^{\log n}_1$.
\section{Discussion and Open Problems}
\label{sec:discussion}

\subsection{Embedding Probability Distributions}
Our results show that $\ell_1$ metrics that can be derived by Tree Ising Models allow for low-dimensional embeddings into $\ell_1$ with constant distortion. Recall that Tree Ising Models are the class of undirected graphical models on trees, or equivalently, the class of Bayesian networks where the underlying graph is a directed tree with all edges pointing away from a designated root. It is natural to wonder if we can obtain low-dimensional $\ell_1$ embeddings for more general classes of Bayesian networks. Tree-structured Bayesian networks are especially convenient because they allow for an efficient top-down sampling procedure. For general Bayesian networks that are not trees, the \textit{treewidth} of the underlying graph is a relevant quantity that is known to characterize the difficulty of sampling from the model \cite{peyrard2019exact}. The treewidth of a graph measures how ``tree-like" a graph is, and given that sampling from tree-structured Bayesian networks is easy, it seems natural that the difficulty of sampling from general graphical models scales with its treewidth.

Because trees have treewidth 1, our result can be stated as the following---of the class of Bayesian networks on $N$ variables whose underlying undirected graph has treewidth 1, the sub-class that has the special property that all the directed edges point away from the root allow for embedding into $\ell_1$ with $\polylog(N)$ dimensions and constant distortion. However, the question about general Bayesian networks with treewidth 1 still remains open. This class consists of directed trees where a single node can have multiple parents (also known as polytrees) \cite{dasgupta2013learning,chatterjee2022estimating,choo2023learning}, and is known to be structurally very different from trees where all edges point away from a single root. In particular, our techniques don't readily generalize to this class, because of the more complicated nature of dependencies in these models.

\begin{openproblem}
    \label{open-prob:treewidth-1}
    Let $D$ be a joint distribution on $N$ binary random variables $X_1,\dots,X_N$ given by a Bayesian network whose underlying undirected graph is a tree (has treewidth 1).\footnote{Note again that all the edges in the Bayesian network need not be pointing away from a single root node.} Consider the metric $d_D(X_i, X_j)=\Pr_D[X_i \neq X_j]$. Does this metric embed into $\ell_1$ with constant distortion and with $N^{o(1)}$ dimensions?
\end{openproblem}

The next natural direction is to see how far this agenda can be pushed---how about Bayesian networks having treewidth $> 1$? As we claimed in \Cref{claim:treewidth-3-lower-bound}, we cannot even hope to obtain a low-dimensional $\ell_1$ embedding for all metrics dictated by treewidth-3 Bayesian networks.

\claimlowerbound*
\begin{proof}[Proof sketch]
    We instantiate the lower bound instance based on the recursive diamond graph from \cite{charikar2002dimension} in the form of a graphical model. \cite{charikar2002dimension} obtain their lower bound via recursive construction in Hamming space shown in \Cref{fig:Hn}.

    \begin{figure}[H]
        \centering
        \includegraphics[scale=0.45]{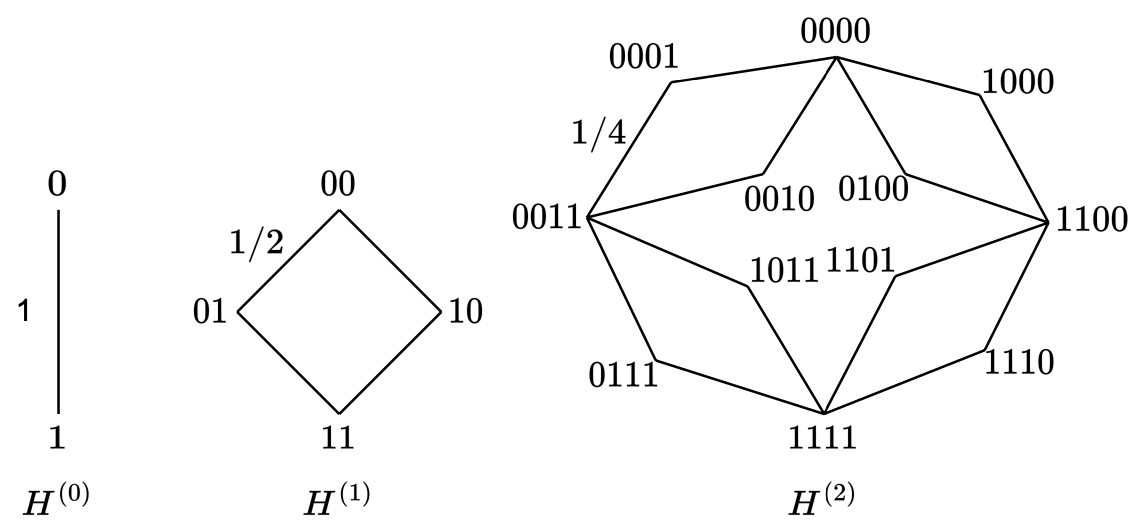}
        \caption{The recursive diamond graph in Hamming space.}
        \label{fig:Hn}
    \end{figure}

    The $n^{\text{th}}$ level of this recursive construction $H^{(n)}$ has $M=\frac{2}{3}\cdot4^n + \frac43$ many Hamming vectors. To obtain $H^{(n+1)}$ from $H^{(n)}$, any edge connecting two vectors at a (normalized) Hamming distance of $1/2^{n}$ is split up, and two new vectors that are at a distance of $1/2^{n+1}$ from each of the these are introduced. \cite[Theorem 2.1]{charikar2002dimension} asserts that embedding these vectors into $\ell_1$ with $O(1)$ distortion requires $M^{\Omega(1)}$ dimensions.
    
    We construct a Bayesian network that has the same distances as the diamond graph above. Refer to \Cref{fig:Gn}.

    \begin{figure}[H]
        \centering
        \includegraphics[scale=0.5]{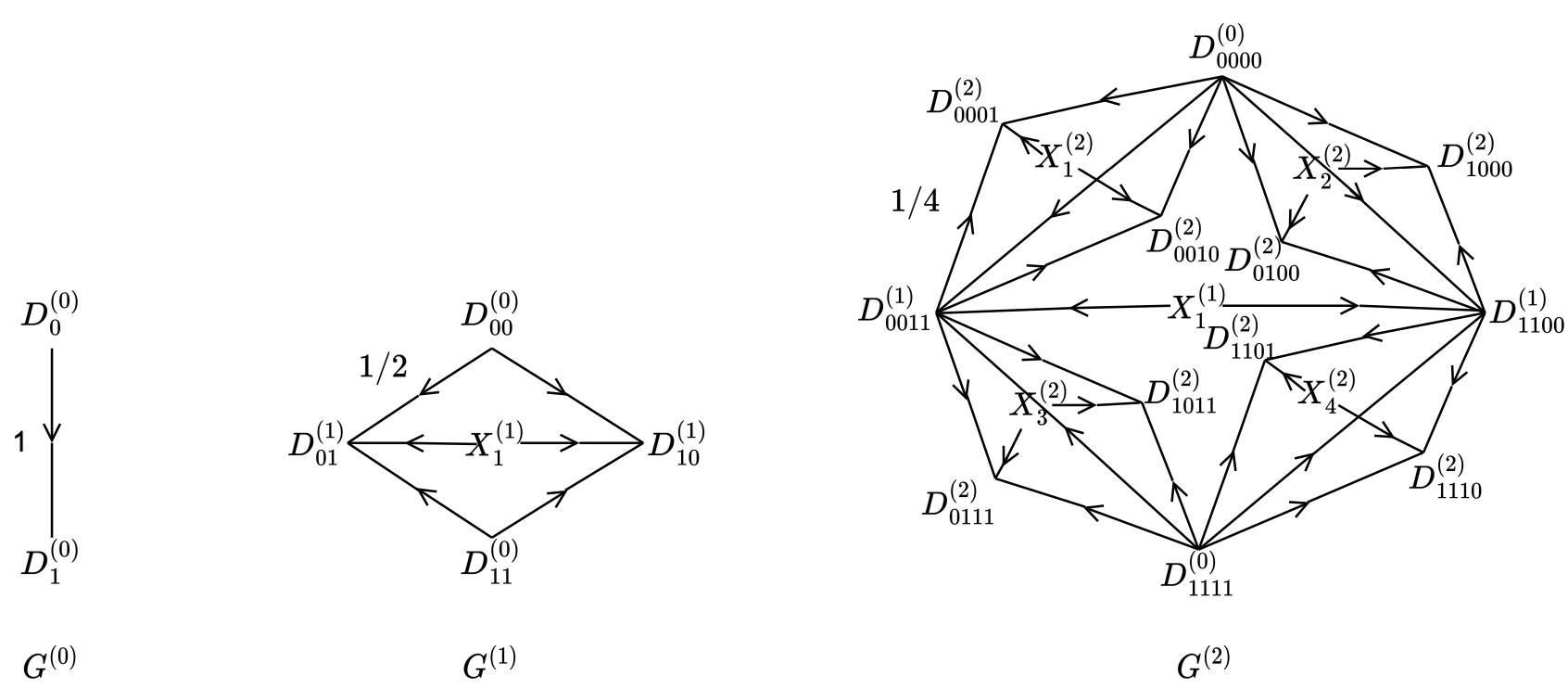}
        \caption{A Bayesian network based on the recursive diamond graph.}
        \label{fig:Gn}
    \end{figure}
    
    Each level $G^{(n)}$ of the instance consists of some $D$ nodes and some $X$ nodes. The $D$ nodes correspond to the points in Hamming space from \Cref{fig:Hn}, while the $X$ nodes simulate 50-50 coin flips. The $(n+1)^\text{th}$ level $G^{(n+1)}$ of the instance is constructed from $G^{(n)}$ by splitting every edge connecting two $D^{(n)}$ nodes at a distance of $1/2^{n}$, introducing two new $D^{(n+1)}$ nodes and an $X^{(n+1)}$ node in the process. \Cref{table:Gn} summarizes the properties of this recursive construction:

    \renewcommand{\arraystretch}{1.5}
    \begin{table}[H]
        \centering
        \begin{tabular}{ |c|c|c|c|c|c|c| } 
             \hline
             Level & $G^{(0)}$ & $G^{(1)}$ & $G^{(2)}$ & $G^{(3)}$ & \dots & $G^{(n)}$ \\ 
             \hline
             \#splittable edges & 1 & 4 & 16 & 64 & \dots & $4^n$ \\
             \#$D$ nodes & 2 & 4 & 12 & 44 & \dots & $\frac{2}{3}\cdot4^n + \frac43$ \\
             \#$X$ nodes & 0 & 1 & 5 & 21 & \dots & $\frac13\cdot4^n-\frac13$\\
             \hline
        \end{tabular}
        \caption{Properties of $G^{(n)}$.}
        \label{table:Gn}
    \end{table}

    To obtain a sample from the Bayesian network, we first independently sample a 50-50 coin flip for every $X$ node in the network. The $D$ nodes can then be sampled in a hierarchical manner. First, the $D$ nodes in the $0^{\text{th}}$ level are deterministically set: $D^{(0)}_{0\dots0}=0$ and $D^{(0)}_{1\dots1}=1$ with probability 1. Then, for $n=1,2,\dots$, we inductively realize $D^{(n)}$ nodes as follows: let $D^{(n)}_x$ and $D^{(n)}_y$ be nodes that arise from splitting an edge $(D^{(n-1)}_u, D^{(n-1)}_v)$ with node $X^{(n)}_i$, and let $x < y$ and $u < v$ lexicographically. Recall that $D^{(n-1)}_u, D^{(n-1)}_v$ and $X^{(n)}_i$ have already been realized. Then, 
    \begin{align*}
        &D^{(n)}_x \gets D^{(n-1)}_u, D^{(n)}_y \gets D^{(n-1)}_v,  \qquad\text{if } X^{(n)}_i = 0 \\
        &D^{(n)}_x \gets D^{(n-1)}_v, D^{(n)}_y \gets D^{(n-1)}_u,  \qquad\text{if } X^{(n)}_i = 1.
    \end{align*}
    In other words, $X^{(n)}_i$ acts as a ``switch" node.
    We can verify that the distances between the $D$ nodes mimic the distances between the corresponding Hamming vectors. Concretely,
    $$
        \Pr[D^{(n)}_{x} \neq D^{(n)}_{y}] = \|x-y\|_1.
    $$ 
    Furthermore, the marginal probability of a $D$ node being equal to 1 is exactly equal to the fraction of $1$'s in the Hamming vector representing it.
    
    Observe that $G^{(n)}$ has $N<2M$ many $D$+$X$ nodes in total, where $M=\frac{2}{3}\cdot4^n + \frac43$ is the number of Hamming vectors in $H^{(n)}$ above. If we were able to embed the nodes of $G^{(n)}$ into $\ell_1$ with constant distortion in $N^{o(1)}=M^{o(1)}$ dimensions, we can use the embeddings of the $D$ nodes as the embeddings of the Hamming vectors in the lower bound instance from \cite{charikar2002dimension} (i.e., \Cref{fig:Hn} above), and this will give us an $M^{o(1)}$ dimensional embedding of the recursive diamond graph, which we know is not possible. Thus, any constant-distortion embedding of $G^{(n)}$---a Bayesian network on $N$ nodes---into $\ell_1$, necessarily requires $N^{\Omega(1)}$ many dimensions. 

    We will now argue that the undirected graph underlying $G^{(n)}$ for any $n \ge 1$ has treewidth equal to 3. We will slightly abuse notation and refer to the undirected graph underlying $G^{(n)}$ by $G^{(n)}$.

    To see that the treewidth is at least 3, we note that the treewidth of a graph is at least the treewidth of any subgraph of the graph. For $n \ge 1$, observe that $G^{(n)}$ contains the subgraph underlying $G^{(1)}$, and we can verify, e.g., by brute-force, that this small subgraph has treewidth equal to 3.

    We will now show that the treewidth of $G^{(n)}$ is at most 3. For this, we appeal to the definition of the treewidth based on elimination orderings of the nodes \cite[Definition 1]{peyrard2019exact}. This definition is especially well-suited in the context of inference via variable elimination in graphical models. An equivalent form of this definition is as follows: first, we specify an ordering over the nodes in the graph. Then, we eliminate nodes in the graph in this order. Whenever we eliminate a node, we add an edge (if there isn't one already) between every pair of its neighbors that are still surviving. Before we eliminate a node, we record the count of nodes it is connected to. The maximum count that we see until we eliminate all nodes is the treewidth with respect to this ordering, and the smallest maximum count over all orderings is the graph's treewidth.

    Thus, to show that the treewidth of $G^{(n)}$ is at most 3, we only need to come up with an ordering that witnesses treewidth to be at most $3$. Given the hierarchical structure of $G^{(n)}$, this ordering is natural---we start eliminating nodes from the highest level to the lowest level. Concretely, we first eliminate all the $D^{(n)}$ nodes, in say lexicographical order. Then, we eliminate all the $X^{(n)}$ nodes (again, in say lexicographical order). We then eliminate all the $D^{(n-1)}$ nodes, followed by all the $X^{(n-1)}$ nodes, and so on, all the way until we empty the graph.

    \begin{figure}[H]
         \centering
         \begin{subfigure}[t]{0.24\textwidth}
             \centering
             \includegraphics[width=\textwidth]{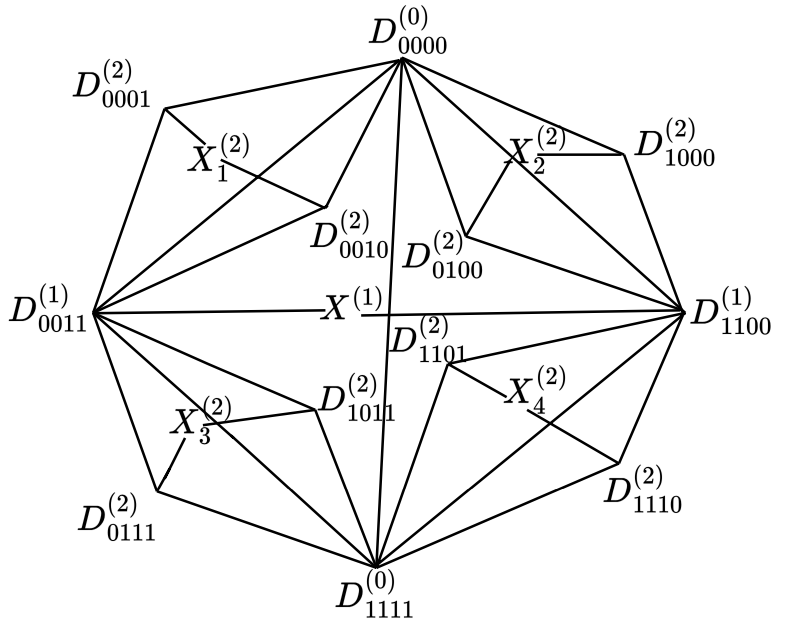}
             \caption{$G^{(2)}$}
             \label{fig:y equals x}
         \end{subfigure}
         \hfill
         \begin{subfigure}[t]{0.24\textwidth}
             \centering
             \includegraphics[width=\textwidth]{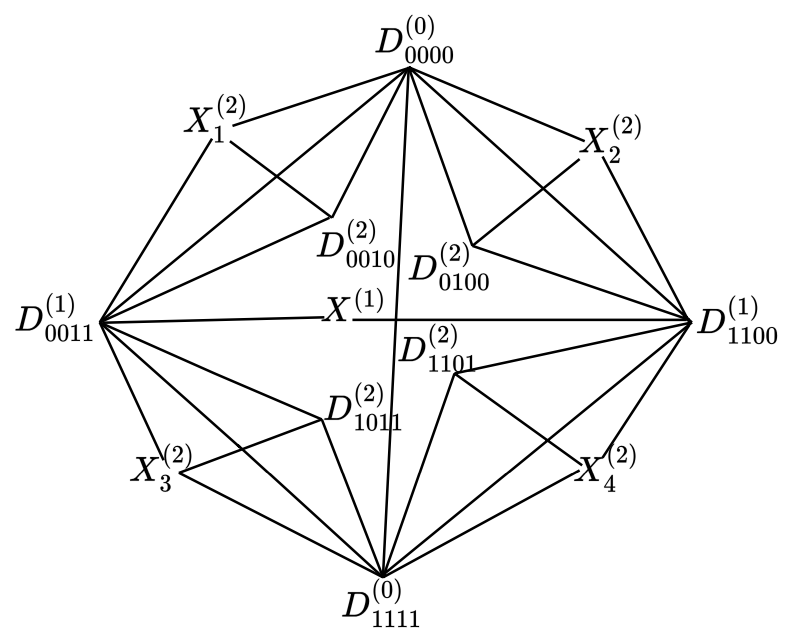}
             \caption{Half the $D^{(2)}$ nodes eliminated.}
             \label{fig:three sin x}
         \end{subfigure}
         \hfill
         \begin{subfigure}[t]{0.24\textwidth}
             \centering
             \includegraphics[width=\textwidth]{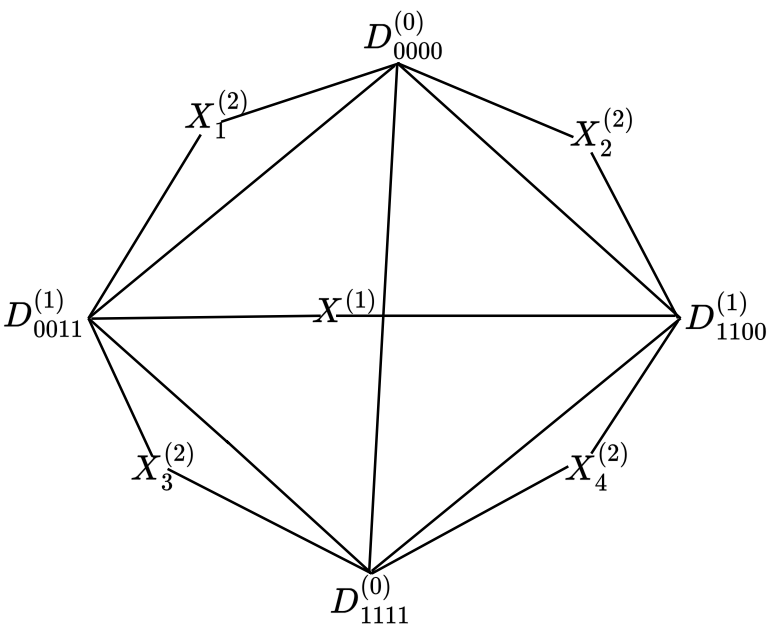}
             \caption{All $D^{(2)}$ nodes eliminated.}
             \label{fig:five over x}
         \end{subfigure}
         \hfill
         \begin{subfigure}[t]{0.24\textwidth}
             \centering
             \includegraphics[width=\textwidth]{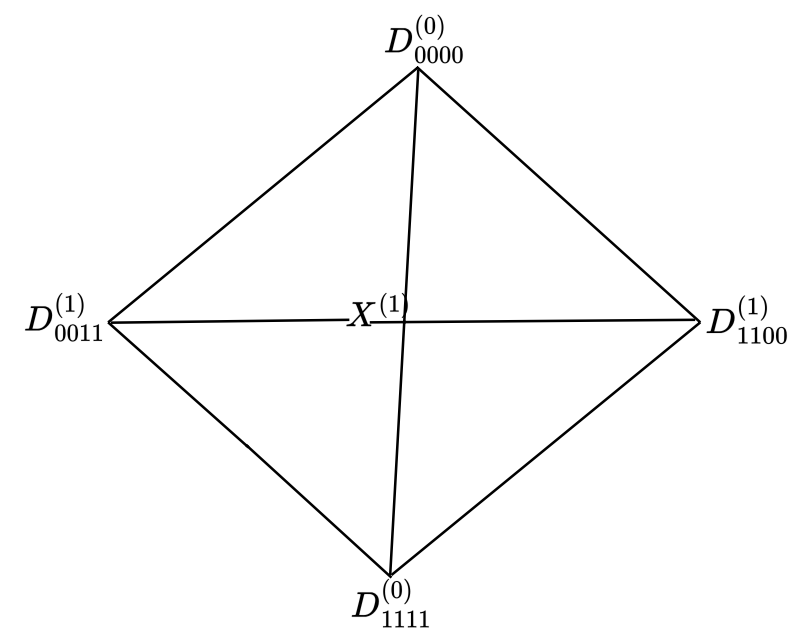}
             \caption{$X^{(2)}$ nodes eliminated. This is $G^{(1)}$.}
             \label{fig:}
         \end{subfigure}
            \caption{Level-wise elimination of nodes in $G^{(n)}$.}
            \label{fig:elimination}
    \end{figure}

    To see that the maximum degree of any node that is about to be eliminated is at most 3, consider what happens in the example in \Cref{fig:elimination}, where we start with $G^{(2)}$. First, we eliminate $D^{(2)}$ nodes, followed by $X^{(2)}$ nodes. Observe that the degree of any $D^{(2)}$ node prior to elimination is exactly 3---this corresponds to the two $D^{(1)}$ nodes and $X^{(1)}$ node generating it. After we eliminate a $D^{(2)}$ node, we must introduce pairwise edges between its neighbors---since there is already an edge between its parent $D^{(1)}$ nodes, we only need to add edges between the $X^{(1)}$ and $D^{(1)}$ nodes. Next, when it comes to eliminating the $X^{(2)}$ nodes, observe that the degree of these nodes prior to elimination is exactly 2---this corresponds to the two $D^{(1)}$ nodes that it split. Finally, when we eliminate these nodes, we don't need to introduce any new edges, because the $D^{(1)}$ nodes that they are adjacent to were already connected by an edge. More importantly, observe that after this round of eliminating $D^{(2)}$ and $X^{(2)}$ nodes, we end up exactly with the graph underlying $G^{(1)}$! Thus, the same argument carries forward. More generally, for any $n$, when we eliminate $D^{(n)}$ and $X^{(n)}$ nodes, we end up with the graph underlying $G^{(n-1)}$, and the maximum degree of any node prior to elimination is at most 3. Thus, this ordering witnesses that the treewidth of $G^{(n)}$ is at most 3, as required.
\end{proof}

Given that we have a (partial) positive answer for the treewidth-1 case, and that there isn't hope for a general result in the treewidth-3 case, this leaves the treewidth-2 case wide open (in addition to the open treewidth-1 case mentioned in \Cref{open-prob:treewidth-1}).

\begin{openproblem}
    \label{open-prob:treewidth-2}
    Let $D$ be a joint distribution on $N$ binary random variables $X_1,\dots,X_N$ given by a Bayesian network whose underlying undirected graph has treewidth 2. Consider the metric $d_D(X_i, X_j)=\Pr_D[X_i \neq X_j]$. Does this metric embed into $\ell_1$ with constant distortion and with $N^{o(1)}$ dimensions?
\end{openproblem}

Finally, our lower bound for treewidth-3 is not an Ising model, so it remains open whether more general classes of Ising models permit low dimension embeddings. For example, it is simple to embed an Ising model whose underlying graph is a cycle.\footnote{Consider conditioning on the value of an arbitrary node in the cycle. The induced distribution is a tree Ising model (with an external field). Thus, any Ising model over a cycle can be written as the mixture of two tree Ising models, and accordingly we can embed its corresponding metric in $O(\log^2(N))$ dimension with $\Theta(1)$ distortion.}

\begin{openproblem}
    \label{open-prob:more-ising}
    Let $D$ be a joint distribution on $N$ binary random variables $X_1,\dots,X_n$ given by an Ising model. Consider the metric $d_D(X_i, X_j) = \Pr_D[X_i \ne X_j]$. For what classes of graphs (e.g. bounded treewidth, general graphs) does this metric embed into $\ell_1$ with constant distortion and with $N^{o(1)}$ dimensions? 
\end{openproblem}

\subsection{Truncated Tree Metrics}

Our \Cref{thm:general-cap} shows how to truncate arbitrary $\ell_1$ metrics with $O(1)$ distortion in near-optimal dimension blowup. However, it is still open whether particular metrics can be truncated without such blowups. For example, we have two approaches that yield $O(\log^2(n))$ dimension embeddings for $O(1)$ distortion truncated tree metrics: (i) using our $O(\log(n))$ dimension blowup (\Cref{thm:general-cap}) truncation approach with the $O(\log(n))$ dimension tree metric embedding of \cite{lee2013dimension} that leverages re-randomization such as in \cite{schulman1996coding}, and (ii) using our more ad-hoc \Cref{thm:capped-tree} that leverages ideas from the $O(\log^2(n))$ dimension tree metric embedding of \cite{charikar2002dimension} without incurring dimension blowup. We know truncated tree metrics must use $\Omega(\log(n))$ dimensions, as this is the case simply for tree metrics. We do not know whether our approaches obtaining $O(\log^2(n))$ are optimal, or if there may be a better embedding.

\begin{openproblem}
    \label{open-prob:truncated-tree}
    What is the minimum dimension required to embed truncated tree metrics with $\Theta(1)$ distortion?
\end{openproblem}

\section*{Acknowledgements}
We thank Frederic Koehler for helpful discussions. This work is supported by Moses Charikar's and Gregory Valiant's Simons Investigator Awards, Tselil Schramm's NSF CAREER Grant no. 2143246, and the National Defense Science \& Engineering Graduate (NDSEG) Fellowship Program.

\bibliographystyle{alpha}
\bibliography{references}

\newcommand{\etalchar}[1]{$^{#1}$}
\begin{thebibliography}{PCdG{\etalchar{+}}19}

\bibitem[ACNN11]{andoni2011near}
Alexandr Andoni, Moses~S Charikar, Ofer Neiman, and Huy~L Nguyen.
\newblock Near linear lower bound for dimension reduction in $\ell_1$.
\newblock In {\em 2011 IEEE 52nd Annual Symposium on Foundations of Computer Science}, pages 315--323. IEEE, 2011.

\bibitem[AFGN22]{abraham2022metric}
Ittai Abraham, Arnold Filtser, Anupam Gupta, and Ofer Neiman.
\newblock Metric embedding via shortest path decompositions.
\newblock {\em SIAM Journal on Computing}, 51(2):290--314, 2022.

\bibitem[AKM{\etalchar{+}}16]{abdullah2016sketching}
Amirali Abdullah, Ravi Kumar, Andrew McGregor, Sergei Vassilvitskii, and Suresh Venkatasubramanian.
\newblock Sketching, embedding and dimensionality reduction in information theoretic spaces.
\newblock In {\em Artificial Intelligence and Statistics}, pages 948--956. PMLR, 2016.

\bibitem[AORP10]{arce2010reconstruction}
Gonzalo~R Arce, Daniel Otero, Ana~B Ramirez, and Jose-Luis Paredes.
\newblock Reconstruction of sparse signals from $\ell_1$ dimensionality-reduced cauchy random-projections.
\newblock In {\em 2010 IEEE International Conference on Acoustics, Speech and Signal Processing}, pages 4014--4017. IEEE, 2010.

\bibitem[B{\etalchar{+}}10]{burges2010dimension}
Christopher~JC Burges et~al.
\newblock Dimension reduction: A guided tour.
\newblock {\em Foundations and Trends{\textregistered} in Machine Learning}, 2(4):275--365, 2010.

\bibitem[BABK22]{boix2022chow}
Enric Boix-Adsera, Guy Bresler, and Frederic Koehler.
\newblock Chow-liu++: Optimal prediction-centric learning of tree ising models.
\newblock In {\em 2021 IEEE 62nd Annual Symposium on Foundations of Computer Science (FOCS)}, pages 417--426. IEEE, 2022.

\bibitem[BC05]{brinkman2005impossibility}
Bo~Brinkman and Moses Charikar.
\newblock On the impossibility of dimension reduction in l1.
\newblock {\em Journal of the ACM (JACM)}, 52(5):766--788, 2005.

\bibitem[BFN19]{bartal2019dimensionality}
Yair Bartal, Nova Fandina, and Ofer Neiman.
\newblock Dimensionality reduction: theoretical perspective on practical measures.
\newblock {\em Advances in Neural Information Processing Systems}, 32, 2019.

\bibitem[BGI{\etalchar{+}}08]{berinde2008combining}
Radu Berinde, Anna~C Gilbert, Piotr Indyk, Howard Karloff, and Martin~J Strauss.
\newblock Combining geometry and combinatorics: A unified approach to sparse signal recovery.
\newblock In {\em 2008 46th Annual Allerton Conference on Communication, Control, and Computing}, pages 798--805. IEEE, 2008.

\bibitem[BGN15]{bartal2015impossibility}
Yair Bartal, Lee-Ad Gottlieb, and Ofer Neiman.
\newblock On the impossibility of dimension reduction for doubling subsets of $\ell_p$.
\newblock {\em SIAM Journal on Discrete Mathematics}, 29(3):1207--1222, 2015.

\bibitem[BGPV21]{bhattacharyya2021near}
Arnab Bhattacharyya, Sutanu Gayen, Eric Price, and NV~Vinodchandran.
\newblock Near-optimal learning of tree-structured distributions by chow-liu.
\newblock In {\em Proceedings of the 53rd annual acm SIGACT symposium on theory of computing}, pages 147--160, 2021.

\bibitem[BK20]{bresler2020learning}
Guy Bresler and Mina Karzand.
\newblock Learning a tree-structured ising model in order to make predictions.
\newblock 2020.

\bibitem[BKR23]{bartha2023sharp}
Zsolt Bartha, J{\'u}lia Komj{\'a}thy, and J{\"a}rvi Raes.
\newblock Sharp bound on the truncated metric dimension of trees.
\newblock {\em Discrete Mathematics}, 346(8):113410, 2023.

\bibitem[BRV16]{beardon2016k}
Alan~F Beardon and Juan~Alberto Rodr{\'\i}guez-Vel{\'a}zquez.
\newblock On the k-metric dimension of metric spaces.
\newblock {\em arXiv preprint arXiv:1603.04049}, 2016.

\bibitem[BSS21]{baudier2021no}
Florent Baudier, Krzysztof Swie{\c{c}}icki, and Andrew Swift.
\newblock No dimension reduction for doubling subsets of $\ell_q$ when $q>2$ revisited.
\newblock {\em Journal of Mathematical Analysis and Applications}, 504(2):125407, 2021.

\bibitem[BVZ98]{boykov1998markov}
Yuri Boykov, Olga Veksler, and Ramin Zabih.
\newblock Markov random fields with efficient approximations.
\newblock In {\em Proceedings. 1998 IEEE Computer Society Conference on Computer Vision and Pattern Recognition (Cat. No. 98CB36231)}, pages 648--655. IEEE, 1998.

\bibitem[CG15]{cunningham2015linear}
John~P Cunningham and Zoubin Ghahramani.
\newblock Linear dimensionality reduction: Survey, insights, and generalizations.
\newblock {\em The Journal of Machine Learning Research}, 16(1):2859--2900, 2015.

\bibitem[CGN{\etalchar{+}}06]{chekuri2006embedding}
Chandra Chekuri, Anupam Gupta, Ilan Newman, Yuri Rabinovich, and Alistair Sinclair.
\newblock Embedding $k$-outerplanar graphs into $\ell_1$.
\newblock {\em SIAM Journal on Discrete Mathematics}, 20(1):119--136, 2006.

\bibitem[CGV16]{cicalese2016approximating}
Ferdinando Cicalese, Luisa Gargano, and Ugo Vaccaro.
\newblock Approximating probability distributions with short vectors, via information theoretic distance measures.
\newblock In {\em 2016 IEEE International Symposium on Information Theory (ISIT)}, pages 1138--1142. IEEE, 2016.

\bibitem[CJLV08]{chakrabarti2008embeddings}
Amit Chakrabarti, Alexander Jaffe, James~R Lee, and Justin Vincent.
\newblock Embeddings of topological graphs: Lossy invariants, linearization, and 2-sums.
\newblock In {\em 2008 49th Annual IEEE Symposium on Foundations of Computer Science}, pages 761--770. IEEE, 2008.

\bibitem[CKNZ04]{chekuri2004linear}
Chandra Chekuri, Sanjeev Khanna, Joseph Naor, and Leonid Zosin.
\newblock A linear programming formulation and approximation algorithms for the metric labeling problem.
\newblock {\em SIAM Journal on Discrete Mathematics}, 18(3):608--625, 2004.

\bibitem[CL68]{chow1968approximating}
CKCN Chow and Cong Liu.
\newblock Approximating discrete probability distributions with dependence trees.
\newblock {\em IEEE transactions on Information Theory}, 14(3):462--467, 1968.

\bibitem[CS02]{charikar2002dimension}
Moses Charikar and Amit Sahai.
\newblock Dimension reduction in the $\ell_1$ norm.
\newblock In {\em The 43rd Annual IEEE Symposium on Foundations of Computer Science, 2002. Proceedings.}, pages 551--560. IEEE, 2002.

\bibitem[CV22]{chatterjee2022estimating}
Sourav Chatterjee and Mathukumalli Vidyasagar.
\newblock Estimating large causal polytree skeletons from small samples.
\newblock {\em arXiv preprint arXiv:2209.07028}, 2022.

\bibitem[CYBC23]{choo2023learning}
Davin Choo, Joy~Qiping Yang, Arnab Bhattacharyya, and Cl{\'e}ment~L Canonne.
\newblock Learning bounded-degree polytrees with known skeleton.
\newblock {\em arXiv preprint arXiv:2310.06333}, 2023.

\bibitem[Das13]{dasgupta2013learning}
Sanjoy Dasgupta.
\newblock Learning polytrees.
\newblock {\em arXiv preprint arXiv:1301.6688}, 2013.

\bibitem[DHK{\etalchar{+}}21]{demaine2021multidimensional}
Erik Demaine, Adam Hesterberg, Frederic Koehler, Jayson Lynch, and John Urschel.
\newblock Multidimensional scaling: Approximation and complexity.
\newblock In {\em International Conference on Machine Learning}, pages 2568--2578. PMLR, 2021.

\bibitem[DP21]{daskalakis2021sample}
Constantinos Daskalakis and Qinxuan Pan.
\newblock Sample-optimal and efficient learning of tree ising models.
\newblock In {\em Proceedings of the 53rd Annual ACM SIGACT Symposium on Theory of Computing}, pages 133--146, 2021.

\bibitem[EMP19]{emiris2019near}
Ioannis~Z Emiris, Vasilis Margonis, and Ioannis Psarros.
\newblock Near neighbor preserving dimension reduction for doubling subsets of $\ell_1$.
\newblock {\em arXiv preprint arXiv:1902.08815}, 2019.

\bibitem[Fil20]{filtser2020face}
Arnold Filtser.
\newblock A face cover perspective to $\ell_1$ embeddings of planar graphs.
\newblock In {\em Proceedings of the Fourteenth Annual ACM-SIAM Symposium on Discrete Algorithms}, pages 1945--1954. SIAM, 2020.

\bibitem[GKL03]{gupta2003bounded}
Anupam Gupta, Robert Krauthgamer, and James~R Lee.
\newblock Bounded geometries, fractals, and low-distortion embeddings.
\newblock In {\em 44th Annual IEEE Symposium on Foundations of Computer Science, 2003. Proceedings.}, pages 534--543. IEEE, 2003.

\bibitem[GNRS04]{gupta2004cuts}
Anupam Gupta, Ilan Newman, Yuri Rabinovich, and Alistair Sinclair.
\newblock Cuts, trees and $\ell_1$-embeddings of graphs.
\newblock {\em Combinatorica}, 24(2):233--269, 2004.

\bibitem[GT00]{gupta2000constant}
Anupam Gupta and {\'E}va Tardos.
\newblock A constant factor approximation algorithm for a class of classification problems.
\newblock In {\em Proceedings of the thirty-second annual ACM symposium on Theory of computing}, pages 652--658, 2000.

\bibitem[Gup99]{gupta1999embedding}
Anupam Gupta.
\newblock Embedding tree metrics into low dimensional euclidean spaces.
\newblock In {\em Proceedings of the thirty-first annual ACM symposium on Theory of computing}, pages 694--700, 1999.

\bibitem[Gup01]{gupta2001improved}
Anupam Gupta.
\newblock Improved bandwidth approximation for trees and chordal graphs.
\newblock {\em Journal of Algorithms}, 40(1):24--36, 2001.

\bibitem[GY23]{gutkovich2023computing}
Paul Gutkovich and Zi~Song Yeoh.
\newblock Computing truncated metric dimension of trees.
\newblock {\em arXiv preprint arXiv:2302.05960}, 2023.

\bibitem[IG98]{ishikawa1998segmentation}
Hiroshi Ishikawa and Davi Geiger.
\newblock Segmentation by grouping junctions.
\newblock In {\em Proceedings. 1998 IEEE Computer Society Conference on Computer Vision and Pattern Recognition (Cat. No. 98CB36231)}, pages 125--131. IEEE, 1998.

\bibitem[IN07]{indyk2007nearest}
Piotr Indyk and Assaf Naor.
\newblock Nearest-neighbor-preserving embeddings.
\newblock {\em ACM Transactions on Algorithms (TALG)}, 3(3):31--es, 2007.

\bibitem[Jac17]{jacques2017small}
Laurent Jacques.
\newblock Small width, low distortions: quantized random embeddings of low-complexity sets.
\newblock {\em IEEE Transactions on information theory}, 63(9):5477--5495, 2017.

\bibitem[JO12]{jannesari2012metric}
Mohsen Jannesari and Behnaz Omoomi.
\newblock The metric dimension of the lexicographic product of graphs.
\newblock {\em Discrete mathematics}, 312(22):3349--3356, 2012.

\bibitem[KD16]{kumar2016rounding}
M~Kumar and Puneet Dokania.
\newblock Rounding-based moves for semi-metric labeling.
\newblock {\em Journal of Machine Learning Research}, 2016.

\bibitem[KDDC23]{pmlr-v195-kandiros23a}
Vardis Kandiros, Constantinos Daskalakis, Yuval Dagan, and Davin Choo.
\newblock Learning and testing latent-tree ising models efficiently.
\newblock In Gergely Neu and Lorenzo Rosasco, editors, {\em Proceedings of Thirty Sixth Conference on Learning Theory}, volume 195 of {\em Proceedings of Machine Learning Research}, pages 1666--1729. PMLR, 12--15 Jul 2023.

\bibitem[KKMR06]{karloff2006earthmover}
Howard Karloff, Subhash Khot, Aranyak Mehta, and Yuval Rabani.
\newblock On earthmover distance, metric labeling, and 0-extension.
\newblock In {\em Proceedings of the thirty-eighth annual ACM symposium on Theory of computing}, pages 547--556, 2006.

\bibitem[KLR19]{krauthgamer2019flow}
Robert Krauthgamer, James~R Lee, and Havana Rika.
\newblock Flow-cut gaps and face covers in planar graphs.
\newblock In {\em Proceedings of the Thirtieth Annual ACM-SIAM Symposium on Discrete Algorithms}, pages 525--534. SIAM, 2019.

\bibitem[Kum14]{kumar2014rounding}
M~Pawan Kumar.
\newblock Rounding-based moves for metric labeling.
\newblock {\em Advances in Neural Information Processing Systems}, 27, 2014.

\bibitem[Kum22]{kumar2022approximate}
Nikhil Kumar.
\newblock An approximate generalization of the okamura-seymour theorem.
\newblock In {\em 2022 IEEE 63rd Annual Symposium on Foundations of Computer Science (FOCS)}, pages 1093--1101. IEEE, 2022.

\bibitem[KVT11]{kumar2011improved}
M~Pawan Kumar, Olga Veksler, and Philip~HS Torr.
\newblock Improved moves for truncated convex models.
\newblock {\em The Journal of Machine Learning Research}, 12:31--67, 2011.

\bibitem[KW16]{krahmer2016unified}
Felix Krahmer and Rachel Ward.
\newblock A unified framework for linear dimensionality reduction in l1.
\newblock {\em Results in Mathematics}, 70:209--231, 2016.

\bibitem[LdMM13]{lee2013dimension}
James~R Lee, Arnaud de~Mesmay, and Mohammad Moharrami.
\newblock Dimension reduction for finite trees in $\ell_1$.
\newblock {\em Discrete \& Computational Geometry}, 50:977--1032, 2013.

\bibitem[LDRW18]{le2018isometric}
Enrico Le~Donne, Tapio Rajala, and Erik Walsberg.
\newblock Isometric embeddings of snowflakes into finite-dimensional banach spaces.
\newblock {\em Proceedings of the American Mathematical Society}, 146(2):685--693, 2018.

\bibitem[LM13]{lee2013lower}
James~R Lee and Mohammad Moharrami.
\newblock A lower bound on dimension reduction for trees in $\ell_1$.
\newblock {\em arXiv preprint arXiv:1302.6542}, 2013.

\bibitem[LMS98]{linial1998low}
Nathan Linial, Avner Magen, and Michael~E Saks.
\newblock Low distortion euclidean embeddings of trees.
\newblock {\em Israel Journal of Mathematics}, 106(1):339--348, 1998.

\bibitem[LN04]{lee2004embedding}
James~R Lee and Assaf Naor.
\newblock Embedding the diamond graph in $l_p$ and dimension reduction in $l_1$.
\newblock {\em Geometric \& Functional Analysis GAFA}, 14(4):745--747, 2004.

\bibitem[LN14]{lafforgue2014doubling}
Vincent Lafforgue and Assaf Naor.
\newblock A doubling subset of $l_p$ for $p> 2$ that is inherently infinite dimensional.
\newblock {\em Geometriae Dedicata}, 172:387--398, 2014.

\bibitem[Lot19]{lotz2019persistent}
Martin Lotz.
\newblock Persistent homology for low-complexity models.
\newblock {\em Proceedings of the Royal Society A}, 475(2230):20190081, 2019.

\bibitem[LP01]{lang2001bilipschitz}
Urs Lang and Conrad Plaut.
\newblock Bilipschitz embeddings of metric spaces into space forms.
\newblock {\em Geometriae Dedicata}, 87:285--307, 2001.

\bibitem[LS09]{lee2009geometry}
James~R Lee and Anastasios Sidiropoulos.
\newblock On the geometry of graphs with a forbidden minor.
\newblock In {\em Proceedings of the forty-first annual ACM symposium on Theory of computing}, pages 245--254, 2009.

\bibitem[LS13]{lee2013pathwidth}
James~R Lee and Anastasios Sidiropoulos.
\newblock Pathwidth, trees, and random embeddings.
\newblock {\em Combinatorica}, 33(3):349--374, 2013.

\bibitem[Mel19]{melucci2019brief}
Massimo Melucci.
\newblock A brief survey on probability distribution approximation.
\newblock {\em Computer Science Review}, 33:91--97, 2019.

\bibitem[Nao18]{naor2018metric}
Assaf Naor.
\newblock Metric dimension reduction: a snapshot of the ribe program.
\newblock In {\em Proceedings of the International Congress of Mathematicians: Rio de Janeiro 2018}, pages 759--837. World Scientific, 2018.

\bibitem[NPS20]{naor2020impossibility}
Assaf Naor, Gilles Pisier, and Gideon Schechtman.
\newblock Impossibility of dimension reduction in the nuclear norm.
\newblock {\em Discrete \& Computational Geometry}, 63(2):319--345, 2020.

\bibitem[PCdG{\etalchar{+}}19]{peyrard2019exact}
Nathalie Peyrard, M-J Cros, Simon de~Givry, Alain Franc, Stephane Robin, Regis Sabbadin, Thomas Schiex, and Matthieu Vignes.
\newblock Exact or approximate inference in graphical models: why the choice is dictated by the treewidth, and how variable elimination can be exploited.
\newblock {\em Australian \& New Zealand Journal of Statistics}, 61(2):89--133, 2019.

\bibitem[PPK17]{pansari2017truncated}
Pankaj Pansari and M~Pawan~Kumar.
\newblock Truncated max-of-convex models.
\newblock In {\em Proceedings of the IEEE Conference on Computer Vision and Pattern Recognition}, pages 2368--2376, 2017.

\bibitem[Rac16]{rachkovskij2016real}
DA~Rachkovskij.
\newblock Real-valued embeddings and sketches for fast distance and similarity estimation.
\newblock {\em Cybernetics and Systems Analysis}, 52(6):967--988, 2016.

\bibitem[Rao99]{rao1999small}
Satish Rao.
\newblock Small distortion and volume preserving embeddings for planar and euclidean metrics.
\newblock In {\em Proceedings of the fifteenth annual symposium on Computational geometry}, pages 300--306, 1999.

\bibitem[Reg13]{regev2013entropy}
Oded Regev.
\newblock Entropy-based bounds on dimension reduction in $l_1$.
\newblock {\em Israel Journal of Mathematics}, 195(2):825--832, 2013.

\bibitem[RV20]{regev2020bounds}
Oded Regev and Thomas Vidick.
\newblock Bounds on dimension reduction in the nuclear norm.
\newblock In {\em Geometric Aspects of Functional Analysis: Israel Seminar (GAFA) 2017-2019 Volume II}, pages 279--299. Springer, 2020.

\bibitem[Sch96]{schulman1996coding}
Leonard~J Schulman.
\newblock Coding for interactive communication.
\newblock {\em IEEE transactions on information theory}, 42(6):1745--1756, 1996.

\bibitem[Sid13]{sidiropoulos2013non}
Anastasios Sidiropoulos.
\newblock Non-positive curvature and the planar embedding conjecture.
\newblock In {\em 2013 IEEE 54th Annual Symposium on Foundations of Computer Science}, pages 177--186. IEEE, 2013.

\bibitem[Tao20]{tao2020embedding}
Terence Tao.
\newblock Embedding the heisenberg group into a bounded-dimensional euclidean space with optimal distortion.
\newblock {\em Revista Matem{\'a}tica Iberoamericana}, 37(1):1--44, 2020.

\bibitem[TFL21]{tillquist2021truncated}
Richard~C Tillquist, Rafael~M Frongillo, and Manuel~E Lladser.
\newblock Truncated metric dimension for finite graphs.
\newblock {\em arXiv preprint arXiv:2106.14314}, 2021.

\bibitem[TFL23]{tillquist2023getting}
Richard~C Tillquist, Rafael~M Frongillo, and Manuel~E Lladser.
\newblock Getting the lay of the land in discrete space: A survey of metric dimension and its applications.
\newblock {\em SIAM Review}, 65(4):919--962, 2023.

\end{thebibliography}

\appendix
\section{Lazy Snaking for Lipschitz Cap Line Metrics}
\label{sec:lipschitz-line-proof}
We restate and prove \Cref{lemma:lazy-snake-line-lipschitz-cap-lb}:
\lazysnakelipschitzcapline*
\begin{proof}
    The first part of the claim is straightforward, since snaking can only reduce distance between two nodes. \\
    For the second part, we will instead show that $\E[|\embed(i)-\embed(j)|] \ge \Omega(1)\cdot \dlcap{M'}(x_i, x_j)$, where $M'(t) = \frac{47}{7447}M(t)$. This is sufficient, because $\dlcap{M'}(x_i, x_j) \ge \frac{47}{7447}\dlcap{M}(x_i, x_j)$. Fix some $i < j$. Before reaching $loc[i]$, let $t_1$ be the last time we were at the origin and decided to either rest/snake, such that $t_1 \le loc[i]$. Then, it must have been the case that either (i) we decided to snake at $t_1$, and returned back to the origin at a time $t_2>loc[i]$, or (ii) we decided to rest but there was not another decision before $t_2>loc[i]$. Namely, we have
    \begin{equation}
        \label{eqn:lipschitz-line-proof-eqn1}
        \frac{M(t_1)}{300} \le t_2 - t_1 \le \max \left(\frac{M(t_1)}{300} , \frac{2M(t_1)}{100} \right)= \frac{M(t_1)}{50}.
    \end{equation}
    \begin{equation}
        \label{eqn:lipschitz-line-proof-eqn2}
        t_2 > loc[i] \implies t_1 + \frac{M(t_1)}{50} > loc[i] \implies loc[i]-t_1 < \frac{M(t_1)}{50}.
    \end{equation}
    From Lipschitzness, we have,
    \begin{align}
        &|M(loc[i])-M(t_1))| \le |loc[i]-t_1| \le |t_2-t_1| \le \frac{M(t_1)}{50} \nonumber \\
        \implies \qquad & \frac{49}{50}M(t_1) \le M(loc[i]) \le \frac{51}{50}M(t_1).
    \end{align}

    Combining with \Cref{eqn:lipschitz-line-proof-eqn2}, we get
    \begin{equation}
        \label{eqn:lipschitz-line-proof-eqn3}
        loc[i]-t_1 < \frac{M(loc[i])}{49}.
    \end{equation}
    Observe also that by Lipschitzness, for every $t \in [t_1,t_2]$, we have
    \begin{align}
        &|M(loc[i])-M(t)| \le |loc[i]-t| \le |t_2-t_1| \le \frac{M(t_1)}{50} \le \frac{M(loc[i])}{49}\\
        \implies \qquad & \frac{48M(loc[i])}{49} \le M(t) \le \frac{50M(loc[i])}{49}.
    \end{align}
    Let $E$ be the event that $snake[t_1]=0$ for some $t_1 \in \left[loc[i]-\frac{M(loc[i])}{49}, loc[i]\right]$, such that we are ready to flip a coin at $t_1$. Then, \Cref{eqn:lipschitz-line-proof-eqn3} implies that $\Pr[E]=1$. Repeating the same argument as above, but from the perspective of $t_1$, we get that there also exists a $t_0\in \left[loc[i]-\frac{M(loc[i])}{49}-\frac{50}{49}\frac{M(loc[i])}{49},loc[i]-\left(\frac{50}{51}\right)^2\frac{M(loc[i])}{300}\right]$ such that with probability 1, $snake[t_0]=0$ and we about to flip at $t_0$. We note here that $t$ was initialized to $-2M(0)$ in Step 1 of \Cref{algo:lipschitz-lazy-snake} precisely to ensure that such a $t_0$ always exists. Initializing in this manner ensures that there is always a distance of $\frac{M(loc[i])}{49}+\frac{50}{49}\frac{M(loc[i])}{49}$ before every $loc[i]$---if there wasn't, Lipschitzness would be violated.

    We will first do the analysis for the setting where $M(loc[i]) \ge M(loc[j])$, so that $\max(M(loc[i]), M(loc[j]))$ $ = M(loc[i])$. The analysis for $M(loc[i]) < M(loc[j])$ is very similar, and is presented afterwards.
    We have three cases: \\
    
    \noindent \textbf{Case 1a:} $0 < |loc[i]-loc[j]|\le \frac{47}{7447}M(loc[i]).$ \\
    Here, $\dlcap{M'}(x_i, x_j)=|loc[i]-loc[j]|$. First, we invoke \Cref{eqn:lipschitz-line-proof-eqn3} and condition on arriving at $t_1 \in \left[ loc[i]-\frac{M(loc[i])}{49},loc[i]\right]$, where $snake[t_1]=0$ and we are about to flip a coin. Starting at $t_1$, observe that with at most $\frac{M(loc[i])}{49}\cdot \frac{49\cdot 300}{48M(loc[i])}=\Theta(1)$ ``rest" flips, we get to a $t \in \left[loc[i]-\frac{50M(loc[i])}{49\cdot300}, loc[i]\right]$ such that $snake[t]=0$. This is because $t_1$ is at most $\frac{M(loc[i])}{49}$ away from $loc[i]$, and the each resting period is at least $\frac{48M(loc[i])}{49\cdot300}$ and at most $\frac{50M(loc[i])}{49\cdot300}$. Now, at $t$, we can flip a ``snake", for which the snaking width is at least $\frac{48M(loc[i])}{49\cdot100}$. Observe that 1) $loc[i]$ is at most $\frac{50M(loc[i])}{49\cdot300}$ from $t$, 2) $loc[j]$ is at most $\frac{47}{7447}M(loc[i])$ away from $loc[i]$ and 3) $\frac{50M(loc[i])}{49\cdot300} + \frac{47}{7447}M(loc[i]) \le \frac{48M(loc[i])}{49\cdot100}$. Together, this ensures that both $loc[i]$ and $loc[j]$ are encountered going forward from $t$ within one snaking width, i.e., $snake[loc[j]]-snake[loc[i]]=loc[j]-loc[i]=\dlcap{M'}(x_i, x_j)$. \\

    \noindent \textbf{Case 2a:} $\frac{47}{7447}M(loc[i]) < |loc[i]-loc[j]|\le \frac{99}{2401}M(loc[i]).$ \\
    In this case, $\dlcap{M'}(x_i, x_j)=\frac{47}{7447}M(loc[i])$. Repeating the same reasoning as above, we start at $t_1$, and then get to a $t \in \left[loc[i]-\frac{50M(loc[i])}{49\cdot300}, loc[i]\right]$ such that $snake[t]=0$ with a constant number of ``rest" flips. Now, we flip another ``rest", which ensures that $snake[loc[i]]=0$, and gets us to a $t \in \left[loc[i], loc[i]+\frac{50M(loc[i])}{49\cdot300}\right]$. Note that since $|loc[i]-loc[j]| > \frac{47}{7447}M(loc[i]) > \frac{50M(loc[i])}{49\cdot300}$ we know $t < loc[j]$ and that there is still some distance left to get to $loc[j]$. Now, for any $t \in [loc[i], loc[j]]$, by Lipschitzness, we have 
    $$
        |M(t)-M(loc[i])| \le |loc[i]-loc[j]| \le \frac{99}{2401}M(loc[i]),
    $$
    which means that $M(t) \in \left[\frac{2392}{2401}M(loc[i]), \frac{2500}{2401}M(loc[i])\right]$. Therefore, with at most $\frac{99M(loc[i])}{2401}\cdot \frac{2401\cdot300}{2392 M(loc[i])}=\Theta(1)$ ``rest" flips, we get to a $t$ which is in $\left[loc[j]-\frac{50M(loc[i])}{49\cdot300\cdot2}-\frac{2500M(loc[i])}{2401\cdot300}, loc[j]-\frac{50M(loc[i])}{49\cdot300\cdot2}\right]$. Once here, we can flip a ``snake"---noting that 1) the snaking width is at least $\frac{2392}{2401\cdot100}M(loc[i])$ 2) the distance to $loc[j]$ is at most $\frac{50M(loc[i])}{49\cdot300\cdot2}+\frac{2500M(loc[i])}{2401\cdot300}$ and 3) the quantity in 2) is smaller than the quantity in 1), we get that $snake[loc[j]]$ is at least $\frac{50M(loc[i])}{49\cdot300\cdot2}$, which is at least a constant fraction of $\dlcap{M'}(x_i, x_j)$. \\

    \noindent \textbf{Case 3a:} $\frac{99}{2401}M(loc[i]) < |loc[i]-loc[j]|.$ \\
    In this case, $\dlcap{M'}(x_i, x_j) = \frac{47}{7447}M(loc[i])$. We argued above that with probability 1, there exists a $t_0 \in \left[loc[i]-\frac{M(t_1)}{50} - \frac{M(t_0)}{50}, loc[i] - \frac{M(t_0)}{300} \right] \in \left[loc[i]-\frac{99M(loc[i])}{2401},loc[i]-\left(\frac{50}{51}\right)^2\frac{M(loc[i])}{300}\right]$ such that $snake[t_0]=0$ and we are about to flip a coin. Observe that by Lipschitzness, for any $t \in [t_0, loc[i]]$, 
    $$M(t) \in \left[\frac{2392}{2401}M(loc[i]), \frac{2500}{2401}M(loc[i])\right].$$
    Therefore, with at most $\frac{99M(loc[i])}{2401}\cdot \frac{2401\cdot300}{2392M(loc[i])}=\Theta(1)$ ``rest" flips, we get to a  $t$ which is in
    $$\left[loc[i]-\left(\frac{50}{51}\right)^2\frac{M(loc[i])}{300}-\frac{2500M(loc[i])}{2401\cdot300}, loc[i]-\left(\frac{50}{51}\right)^2\frac{M(loc[i])}{300}\right].$$
    Now, we may flip a ``snake"---noting that 1) the snaking width is at least $\frac{2392M(loc[i])}{2401\cdot100}$ and 2) the distance to $loc[i]$ is at least $\left(\frac{50}{51}\right)^2\frac{M(loc[i])}{300}$ and at most $\left(\frac{50}{51}\right)^2\frac{M(loc[i])}{300}+\frac{2500M(loc[i])}{2401\cdot300}<\frac{2392M(loc[i])}{2401\cdot100}$, we can conclude that by the time we get to $loc[i]$, $snake[loc[i]]$ will at least be $\left(\frac{50}{51}\right)^2\frac{M(loc[i])}{300}=\Omega(1)\cdot\dlcap{M'}(x_i, x_j)$. Next, we know that with probability 1, there exists a $t'\in \left[loc[j]-\frac{M(loc[j])}{49}, loc[j]\right]$ such that $snake[t']=0$ and we are about flip a coin. Further, this $t'$ is larger than $loc[i]$, because $loc[j]-\frac{M(loc[j])}{49}$ needs to be larger than $loc[i]$ under the case we are studying. We catch hold of this $t'$, and with a constant number of ``rest" flips, we ensure that $snake[loc[j]]=0$. Thus, we obtain that $|snake[loc[i]]-snake[loc[j]]| \ge \Omega(1)\cdot\dlcap{M'}(x_i, x_j)$. \\
    
    Now, we will do the analysis for when $M(loc[i]) < M(loc[j])$, so that $\max(M(loc[i]), M(loc[j]))$ $ = M(loc[j])$. Again, we have three cases: \\

    \noindent \textbf{Case 1b:} $0 < |loc[i]-loc[j]|\le \frac{47}{7447}M(loc[j]).$ \\
    Here, $\dlcap{M'}(x_i, x_j)=|loc[i]-loc[j]|$. By Lipschitzness, we have $M(loc[i])\in \left[\frac{7400}{7447}M(loc[j]),\frac{7494}{7447}M(loc[j])\right]$. First, we condition on arriving at $t_1$, where $snake[t_1]=0$ and we are about to flip a coin. Starting at $t_1$, observe that with at most $\frac{M(loc[i])}{49}\cdot \frac{49\cdot 300}{48M(loc[i])}=\Theta(1)$ ``rest" flips, we get to a $t \in \left[loc[i]-\frac{50M(loc[i])}{49\cdot300}, loc[i]\right]$ such that $snake[t]=0$ and we are about to flip a coin. Now, at $t$, we can flip a ``snake", for which the snaking width is at least $\frac{48M(loc[i])}{49\cdot100} \ge \frac{7400\cdot48M(loc[j])}{7447\cdot49\cdot100}$. Observe that 1) $loc[i]$ is at most $\frac{50M(loc[i])}{49\cdot300} \le \frac{7494\cdot50M(loc[j])}{7447\cdot49\cdot300}$ from $t$, 2) $loc[j]$ is at most $\frac{47}{7447}M(loc[j])$ away from $loc[i]$ and 3) $\frac{7494\cdot50M(loc[j])}{7447\cdot49\cdot300} + \frac{47}{7447}M(loc[j]) \le \frac{7400\cdot48M(loc[j])}{7447\cdot49\cdot100}$. Together, this ensures that both $loc[i]$ and $loc[j]$ are encountered going forward from $t$ within one snaking width, i.e., $snake[loc[j]]-snake[loc[i]]=loc[j]-loc[i]=\dlcap{M'}(x_i, x_j)$. \\

    \noindent \textbf{Case 2b:} $\frac{47}{7447}M(loc[j]) < |loc[i]-loc[j]|\le \frac{99}{2401}M(loc[j]).$ \\
    In this case, $\dlcap{M'}(x_i, x_j)=\frac{47}{7447}M(loc[j])$. Repeating the same reasoning as above, we start at $t_1$, and then get to a $t \in \left[loc[i]-\frac{50M(loc[i])}{49\cdot300}, loc[i]\right]$ such that $snake[t]=0$ with a constant number of ``rest" flips. Now, we flip another ``rest", which ensures that $snake[loc[i]]=0$, and gets us to a $t \in \left[loc[i], loc[i]+\frac{50M(loc[i])}{49\cdot300}\right]$. Since, $M[loc[i]] < M[loc[j]]$, $t \in \left[loc[i], loc[i]+\frac{50M(loc[j])}{49\cdot300}\right]$. Note that since $|loc[i]-loc[j]| > \frac{47}{7447}M(loc[j])> \frac{50M(loc[j])}{49\cdot300}$ we know $t < loc[j]$ and there is still some distance left to reach $loc[j]$. Now, for any $t \in [loc[i], loc[j]]$, by Lipschitzness, we have 
    $$
        |M(t)-M(loc[j])| \le |loc[i]-loc[j]| \le \frac{99}{2401}M(loc[j]),
    $$
    which means that $M(t) \in \left[\frac{2392}{2401}M(loc[j]), \frac{2500}{2401}M(loc[j])\right]$. Therefore, with at most $\frac{99M(loc[j])}{2401}\cdot \frac{2401\cdot300}{2392 M(loc[j])}=\Theta(1)$ ``rest" flips, we get to a $t$ which is in $\left[loc[j]-\frac{50M(loc[j])}{49\cdot300\cdot2}-\frac{2500M(loc[j])}{2401\cdot300}, loc[j]-\frac{50M(loc[j])}{49\cdot300\cdot2}\right]$. Once here, we can flip a ``snake"---noting that 1) the snaking width is at least $\frac{2392}{2401\cdot100}M(loc[j])$ and 2) the distance to $loc[j]$ is at most $\frac{50M(loc[j])}{49\cdot300\cdot2}+\frac{2500M(loc[j])}{2401\cdot300}$, and 3) the quantity in 2) is smaller than the quantity in 1), we get that $snake[loc[j]]$ is at least $\frac{50M(loc[j])}{49\cdot300\cdot2}$, which is at least a constant fraction of $\dlcap{M'}(x_i, x_j)$. \\

    \noindent \textbf{Case 3b:} $\frac{99}{2401}M(loc[j]) < |loc[i]-loc[j]|.$ \\
    In this case, $\dlcap{M'}(x_i, x_j) = \frac{47}{7447}M(loc[j])$. We know that with probability 1, there exists a 
    $$t'\in \left[loc[i]-\frac{M(loc[i])}{49}, loc[i]\right]$$
    such that $snake[t']=0$ and we are about flip a coin. We catch hold of $t'$, and with a constant number of ``rest" flips, we ensure that $snake[loc[i]]=0$. Next, we argued above that with probability 1, there exists a $t_0 \in \left[ loc[j]-\frac{M(t_0)}{50} - \frac{M(t_1)}{50}\right] \in \left[loc[j]-\frac{99M(loc[j])}{2401},loc[j]-\left(\frac{50}{51}\right)^2\frac{M(loc[j])}{300}\right]$ such that $snake[t_0]=0$ and we are about to flip a coin. Observe that by Lipschitzness, for any $t \in [t_0, loc[j]]$, $M(t_0) \in \left[\frac{2392}{2401}M(loc[j]), \frac{2500}{2401}M(loc[j])\right]$. Therefore, with at most $\frac{99M(loc[j])}{2401}\cdot \frac{2401\cdot300}{2392M(loc[j])}=\Theta(1)$ ``rest" flips, we get to a  $t$ which is in $\left[loc[j]-\left(\frac{50}{51}\right)^2\frac{M(loc[j])}{300}-\frac{2500M(loc[j])}{2401\cdot300}, loc[j]-\left(\frac{50}{51}\right)^2\frac{M(loc[j])}{300}\right]$. Now, we may flip a ``snake"---noting that 1) the snaking width is at least $\frac{2392M(loc[j])}{2401\cdot100}$ and 2) the distance to $loc[j]$ is at least $\left(\frac{50}{51}\right)^2\frac{M(loc[j])}{300}$ and at most $\left(\frac{50}{51}\right)^2\frac{M(loc[j])}{300}+\frac{2500M(loc[j])}{2401\cdot300}<\frac{2392M(loc[j])}{2401\cdot100}$, we can conclude that by the time we get to $loc[j]$, $snake[loc[j]]$ will at least be $\left(\frac{50}{51}\right)^2\frac{M(loc[j])}{300}=\Omega(1)\cdot\dlcap{M'}(x_i, x_j)$.

    Finally, note that in every case, we showed that, conditioned on the outcomes of at most a constant number of coin flips, $|snake[loc[i]]-snake[loc[j]]| = |\embed[i]-\embed[j]| \ge \Omega(1)\cdot \dlcap{M'}(x_i, x_j)$. This immediately gives that $\E[|\embed[i]-\embed[j]|] \ge \Omega(1)\cdot \dlcap{M'}(x_i, x_j)$, completing the proof.
\end{proof}

\section{Proofs for General Capped \texorpdfstring{$\ell_1$}{\l} Metrics}
\label{sec:proofs-general-l1-metrics}

\subsection{Coordinate-wise Embeddings are Good}
\label{sec:proofs-coordinate-wise}
In this section, we will essentially reuse the analysis of the build-clean procedure from \Cref{sec:lipschitz-tree} in order to prove \Cref{lemma:coordinate-wise-build-clean}. We restate the lemma here for convenience.
\coordinatewisebc*
\begin{proof}
    The proof is constructive. Without loss of generality, consider say the first coordinate of each of the $n$ points. We overload notation here and identify this coordinate of each point with $x_1,\dots,x_n$. If we map these coordinates to the line, we obtain a line graph metric. Our goal is to embed this line metric with a fixed cap $M$---we will do so using the build-clean framework from \Cref{sec:lipschitz-tree}. We will root the line at the leftmost point on the line (say this is $r$). For technical reasons again, we add an additional node $r_0$ at a distance $2M$ to the left of $r$. We will split up the line into edges of length $\frac{M}{100d}$---these edges correspond to the ``shortened" caterpillar edges from the tree decomposition in \Cref{sec:lipschitz-tree}. We group edges into build and clean stages. As before, each edge $e$ has auxiliary information $a(e)$ associated with it which determines whether it is processed in the build/clean stage. We set $\lb=4d$ and $\lc=9d$. The auxiliary information for the leftmost/first edge is chosen uniformly at random from $\{0,\dots,\lb+\lc-1\}$. Thereafter, for every edge $e$, $a(e)=a(par(e))+1 \mod (\lb + \lc)$, where $par(e)$ is simply the adjacent edge to the left of $e$. If $a(e) \in \{0,\dots,\lb-1\}$, $e$ is processed in the build stage, else it is processed in the clean stage. After processing all the edges, all the points on the line will be mapped to an embedding of size $|H|=8d$.  We construct this $|H|$-sized embedding for every location on the line metric in the same way as described in \Cref{sec:lipschitz-tree} for Lipschitz-cap trees.  In the language of the analysis there, let $\embed[1],\dots,\embed[n]$ denote the embeddings thus obtained (these correspond to the vectors $v^{(1)}_1,\dots, v^{(1)}_n$ in the language of \Cref{lemma:coordinate-wise-build-clean}). To avoid simply rewriting the entire analysis from \Cref{sec:lipschitz-tree}, we will walk through the claims therein in a stepwise manner, and only specify the parts relevant to the special case we are in, i.e., a line graph with a fixed cap.
    
    Since the cap at every location is equal to $M$, we do not need to relate the caps of points on nearby edges, as we did in \Cref{claim:edge-similar} and \Cref{claim:path-similar}.
    
    Next, \Cref{claim:clean} holds too---at any location $x$ immediately after processing an entire clean stage, all the coordinates of $\embed[x]$ are identically 0. While the same proof in \Cref{claim:bad-edges-timwef} works, we can also see this more directly here. The build stage, by virtue of processing only at most $4d$ edges, fills in only at most $4d$ coordinates to a value $\frac{M}{100d}$, while a complete clean stage comprises of $9d$ edges of length $\frac{M}{100d}$, and hence definitely zeros out all the nonzero coordinates.
    
    Because of complete cleaning, the proof in \Cref{claim:norm-bound-after-clean} would similarly give us that for any location $x$, $\|\embed[x]\|_1 \le \frac{M}{2}$ (actually, we would obtain a stronger bound of $\frac{M}{25}$, but the weaker bound suffices).
    
    At this point, we have already proved part (1) of \Cref{lemma:coordinate-wise-build-clean}. Because of the way the build-clean procedure works, the embeddings as we move continuously along the line are coordinate-wise Lipschitz, and only ever capture distance in one coordinate along adjacent locations. This already gives us that $\|\embed[i]-\embed[j]\|_1 \le |x_i -x_j|$. From the norm bound on the embedding, we also have that
    $$
    \|\embed[i]-\embed[j]\|_1 \le \|\embed[i]\|_1 + \|\embed[j]\|_1 \le M.
    $$
    Hence, we have that $\|\embed[i]-\embed[j]\|_1 \le \min(|x_i-x_j|, M)$.
    
    Next, we turn our attention towards proving part (2) of the lemma. Consider the guarantee given by \Cref{subclaim:leg}. Let $x_i$ be to the left of $x_j$ without loss of generality. Then, the LCA of $x_i$ and $x_j$ on the line would simply be $x_i$. The same proof, with ``$\log(n)$" replaced with ``$d$" would yield that 
    \begin{align}
        \E\left[\|\embed[x_i]-\embed[x_j]\|_1\right] \ge \Omega(1) \cdot \min(|x_i-x_j|, M),
    \end{align}
    and this is precisely what we wanted to prove for the purposes of part(2) in \Cref{lemma:coordinate-wise-build-clean}.
\end{proof}

\subsection{Concatenations of Coordinate-wise Embeddings are Good}
\label{sec:proofs-for-zi}

First, we state a reverse application of Markov's inequality:
\begin{proposition}[Reverse Markov]
    \label{prop:reverse-markov}
    Let $F$ be a random variable such that $F \in [0, M]$ with probability 1, and $\E[F] \ge c_1 M$, for $c_1 \le 1$. Then, for any $C_2 \ge 1$, 
    \begin{align*}
        \Pr\left[F \ge \frac{c_1}{C_2}\cdot M\right] \ge 1-\frac{1-c_1}{1-\frac{c_1}{C_2}}.
    \end{align*}
\end{proposition}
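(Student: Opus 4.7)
The plan is to use a standard reverse Markov-type argument based on splitting the expectation according to whether $F$ clears the threshold $\frac{c_1}{C_2} M$ or not.

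First, I would set $p := \Pr\left[F \ge \frac{c_1}{C_2} M\right]$ and write
\begin{align*}
    \E[F] &= \E\left[F \cdot \mathbb{1}\left[F \ge \tfrac{c_1}{C_2} M\right]\right] + \E\left[F \cdot \mathbb{1}\left[F < \tfrac{c_1}{C_2} M\right]\right].
\end{align*}
Using the upper bound $F \le M$ (which holds with probability $1$) on the first summand, and the upper bound $F < \frac{c_1}{C_2} M$ on the second summand, this gives
\begin{align*}
    \E[F] \le p \cdot M + (1-p) \cdot \tfrac{c_1}{C_2} M.
\end{align*}

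Next, I would combine this with the hypothesis $\E[F] \ge c_1 M$ to obtain
\begin{align*}
    c_1 \le p + (1-p)\tfrac{c_1}{C_2} = \tfrac{c_1}{C_2} + p\left(1 - \tfrac{c_1}{C_2}\right),
\end{align*}
and hence
\begin{align*}
    p \ge \frac{c_1 - \tfrac{c_1}{C_2}}{1-\tfrac{c_1}{C_2}} = \frac{c_1\bigl(1-\tfrac{1}{C_2}\bigr)}{1-\tfrac{c_1}{C_2}}.
\end{align*}

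Finally, to match the form stated in the proposition, I would verify the algebraic identity
\begin{align*}
    1 - \frac{1-c_1}{1-\tfrac{c_1}{C_2}} = \frac{\bigl(1-\tfrac{c_1}{C_2}\bigr) - (1-c_1)}{1-\tfrac{c_1}{C_2}} = \frac{c_1 - \tfrac{c_1}{C_2}}{1-\tfrac{c_1}{C_2}},
\end{align*}
which shows the two expressions coincide. The proof is essentially a one-line manipulation, so there is no real obstacle; the only thing to note is that $C_2 \ge 1$ guarantees the denominator $1 - \frac{c_1}{C_2}$ is strictly positive (since $c_1 \le 1$), so the division is legitimate.
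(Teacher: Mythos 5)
Your proof is correct and amounts to the same argument as the paper's: the paper sets $G = M - F$ and invokes Markov's inequality on $G$, while you inline that Markov step by splitting $\E[F]$ at the threshold $\tfrac{c_1}{C_2}M$ and bounding each piece, which is precisely the computation underlying Markov applied to $M-F$. The final algebraic identity you verify matches the paper's stated bound exactly.
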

\begin{proof}
    Define $G = M-F$, so that $G \in [0, M]$. Then $\E[G] = M - \E[F] \le M-c_1M$. Applying Markov's inequality,
    \begin{align*}
        &\Pr\left[F < \frac{c_1}{C_2}\cdot M \right] = \Pr\left[G > \left(1-\frac{c_1}{C_2} \right)\cdot M \right] \le \frac{1-c_1}{1-\frac{c_1}{C_2}} \\
        \implies \qquad& \Pr\left[F \ge \frac{c_1}{C_2}\cdot M\right] \ge 1-\frac{1-c_1}{1-\frac{c_1}{C_2}}.
    \end{align*}
\end{proof}

We can now use \Cref{prop:reverse-markov} to prove the following general result which we can directly use to prove that the $z_i$ vectors constructed in \Cref{sec:general-cap-metrics} capture the capped distance well.
\begin{lemma}
    \label{lemma:reverse-markovish}
    Suppose $R_1,\dots,R_d$ are independent random variables such that each $R_i \in [0, M]$ with probability 1, and $\E[\sum_{i=1}^dR_i] \ge c \cdot M$ for a constant $c \le 1$. Then,
    \begin{align*}
        \Pr\left[\sum_{i=1}^dR_i \ge \frac{c^2}{100}\cdot M\right] \ge \Omega(1).
    \end{align*}
\end{lemma}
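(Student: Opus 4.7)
The plan is to use a second-moment method (Paley–Zygmund), exploiting the fact that each $R_i$ is bounded by $M$ to control variance relative to the mean. Let $S = \sum_{i=1}^d R_i$, so $\E[S] \ge cM$ and we want to show that $S$ is at least $c^2 M/100$ with constant probability. The obstacle to a naive reverse-Markov argument is that $S$ can be as large as $dM$, so we cannot directly divide $\E[S]$ by an upper bound on $S$ that is independent of $d$. Independence and boundedness of the $R_i$'s fix this by giving a good variance bound.

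First I would bound the variance. Since $0 \le R_i \le M$, we have $R_i^2 \le M R_i$ pointwise, so $\E[R_i^2] \le M \E[R_i]$. By independence,
\begin{equation*}
\Var(S) \;=\; \sum_{i=1}^d \Var(R_i) \;\le\; \sum_{i=1}^d \E[R_i^2] \;\le\; M\,\E[S],
\end{equation*}
and hence $\E[S^2] = \Var(S) + (\E[S])^2 \le M\,\E[S] + (\E[S])^2$.

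Next I would apply Paley–Zygmund. For any $\theta \in (0,1)$,
\begin{equation*}
\Pr\!\left[S \ge \theta \E[S]\right] \;\ge\; (1-\theta)^2 \cdot \frac{(\E[S])^2}{\E[S^2]} \;\ge\; (1-\theta)^2 \cdot \frac{\E[S]}{M + \E[S]}.
\end{equation*}
The map $x \mapsto x/(M+x)$ is increasing, and $\E[S] \ge cM$, so $\frac{\E[S]}{M+\E[S]} \ge \frac{cM}{M + cM} = \frac{c}{1+c} \ge c/2$ for $c \le 1$.

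Finally I would choose $\theta = c/100$. Then $\theta \E[S] \ge c \cdot cM/100 = c^2 M/100$, and since $\theta \le 1/100$, we have $(1-\theta)^2 \ge 1/2$. Combining everything,
\begin{equation*}
\Pr\!\left[S \ge \tfrac{c^2}{100}\,M\right] \;\ge\; \tfrac12 \cdot \tfrac{c}{2} \;=\; \tfrac{c}{4} \;=\; \Omega(1),
\end{equation*}
as desired. The entire argument is mechanical once one realizes that the $R_i \le M$ bound yields $\Var(S) \le M\,\E[S]$; no case analysis or truncation is needed.
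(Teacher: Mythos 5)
Your proof is correct, and it takes a genuinely different route from the paper's.

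The paper's argument proceeds by contradiction: it considers the truncated sum $\min\bigl(\sum_i R_i,\, cM\bigr)$, decomposes it as a prefix-sum expression $\sum_i \min\bigl[R_i, \max(0, cM - \sum_{j<i}R_j)\bigr]$, introduces indicator events $A_{i-1}$ (and $B_{i-1}$) controlling the prefix $\sum_{j<i}R_j$, uses independence to factor $\E[\Ind[A_{i-1}]\cdot R_i] = \Pr[A_{i-1}]\,\E[R_i]$, and then feeds the resulting lower bound on $\E\bigl[\min(\sum_i R_i, cM)\bigr]$ into a bespoke reverse-Markov proposition to derive a contradiction. Your proof instead applies the Paley--Zygmund inequality directly to $S=\sum_i R_i$, using the single observation that boundedness gives $R_i^2 \le MR_i$ pointwise, hence $\Var(S)=\sum_i\Var(R_i)\le M\,\E[S]$ by independence, hence $\E[S^2]\le M\E[S]+(\E[S])^2$. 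The parameter choices ($\theta=c/100$, monotonicity of $x\mapsto x/(M+x)$) are routine, and the final bound $\Pr[S\ge c^2M/100]\ge c/4$ is explicit. Both arguments use independence in an essential way, but yours isolates it into one standard variance additivity step, avoids the prefix-sum machinery, the indicator bookkeeping, the auxiliary reverse-Markov proposition, and the contradiction framing entirely, and gives a cleaner explicit constant. This is a net simplification over the paper's approach.
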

\begin{proof}
    Assume for the sake of contradiction that 
    $$\Pr\left[\sum_{i=1}^dR_i \ge \frac{c^2}{100}\cdot M\right] < \min\left(\frac{1}{100}, 1- \frac{1-\left(\frac{99c}{100}\right)^2}{1-\frac{c^2}{100}}\right).$$
    In particular, since the $R_i$s are non-negative, this implies that 
    \begin{equation}
    \label{eqn:reverse-markovish-contradiction}
        \Pr\left[\sum_{i=1}^jR_i \ge \frac{c^2}{100}\cdot M\right] < \min\left(\frac{1}{100}, 1- \frac{1-\left(\frac{99c}{100}\right)^2}{1-\frac{c^2}{100}}\right) \quad\text{for all $j\in\{1,\dots,d\}$}.
    \end{equation}
    We will reason about the random variable
    $$
        \min\left(\sum_{i=1}^dR_i, c M\right) = \sum_{i=1}^d\min\left[R_i, \max\left(0, c M - \sum_{j=1}^{i-1} R_j\right)\right].
    $$
    Let $A_i$ be the event that $\sum_{j=1}^iR_j < \frac{c}{100}\cdot M$ and $B_i$ be the event that $\sum_{j=1}^iR_j < \frac{c^2}{100}\cdot M$. Further, let $\Ind[A_i]$ and $\Ind[B_i]$ be their indicators. Note that $\Pr[B_i] \le \Pr[A_i]$, since $c \le 1$. Then, observe that
    \begin{align*}
        \min\left[R_i, \max\left(0, c M - \sum_{j=1}^{i-1} R_j\right)\right] &\ge \Ind[A_{i-1}]\cdot \min\left(R_i, \frac{99c}{100}\cdot M\right) \\
        &\ge \Ind[A_{i-1}]\cdot \frac{99c}{100}\cdot R_i.
    \end{align*}
    Here, the last inequality follows because $\frac{99c}{100}\cdot R_i \le \frac{99c}{100}\cdot M$ as well as $\frac{99c}{100}\cdot R_i \le R_i$, since $c \le 1$. Thus, we obtain that
    \begingroup
    \allowdisplaybreaks
        \begin{align*}
            \E\left[\min\left(\sum_{i=1}^dR_i, c M\right)\right] &= \E\left[ \sum_{i=1}^d\min\left[R_i, \max\left(0, c M - \sum_{j=1}^{i-1} R_j\right)\right]\right] \\
            &\ge \E\left[ \sum_{i=1}^d \Ind[A_{i-1}]\cdot \frac{99c}{100}\cdot R_i \right]\\
            &= \sum_{i=1}^d \E\left[\Ind[A_{i-1}]\cdot \frac{99c}{100}\cdot R_i \right] \qquad(\text{linearity of expectation}) \\
            &= \frac{99c}{100}\cdot\sum_{i=1}^d \E\left[\Ind[A_{i-1}]\right]\cdot \E\left[R_i \right] \qquad \text{(independence of $R_i$s)} \\
            &= \frac{99c}{100}\cdot\sum_{i=1}^d \Pr\left[A_{i-1}\right]\cdot \E\left[R_i \right]\\
            &\ge \frac{99c}{100}\cdot\sum_{i=1}^d \Pr\left[B_{i-1}\right]\cdot \E\left[R_i \right] \qquad \left(\text{since $\Pr[B_{i-1}] \le \Pr[A_{i-1}]$}\right)\\
            &\ge \frac{99c}{100}\cdot\frac{99}{100}\cdot \sum_{i=1}^d\E\left[R_i \right] \qquad \text{(\Cref{eqn:reverse-markovish-contradiction})} \\
            &\ge \left(\frac{99c}{100}\right)^2 M.
        \end{align*}
    \endgroup
    Now, we invoke \Cref{prop:reverse-markov} with $F = \min\left(\sum_{i=1}^dR_i, c M\right)$. Note that $F \in [0, M]$ with probability 1, and $\E[F] \ge \left(\frac{99c}{100}\right)^2 \cdot M$ --- thus, for $C_2=100\cdot \left(\frac{99}{100}\right)^2$, we get
    \begin{align*}
        \Pr\left[\min\left(\sum_{i=1}^dR_i, c M\right) \ge \frac{c^2}{100}\cdot M\right] &\ge 1-\frac{1-\left(\frac{99c}{100}\right)^2}{1-\frac{c^2}{100}}. 
    \end{align*}
    Notice that we have ended up showing 
    \begin{align*}
        \Pr\left[\sum_{i=1}^d R_i \ge \frac{c^2}{100}\cdot M\right] \ge \Pr\left[\min\left(\sum_{i=1}^dR_i, c M\right) \ge \frac{c^2}{100}\cdot M\right] &\ge 1-\frac{1-\left(\frac{99c}{100}\right)^2}{1-\frac{c^2}{100}},
    \end{align*}
    which contradicts \Cref{eqn:reverse-markovish-contradiction}. Hence, it must be true that 
    $$
    \Pr\left[\sum_{i=1}^dR_i \ge \frac{c^2}{100}\cdot M\right] \ge \min\left(\frac{1}{100}, 1- \frac{1-\left(\frac{99c}{100}\right)^2}{1-\frac{c^2}{100}}\right) = \Omega(1).
    $$
\end{proof}
We are now ready to prove \Cref{lemma:zi-good}, which we resatate here for convenience.
\lemmazigood*
\begin{proof}
    Recall that $z_i$ is the concatenation of independently constructed vectors $v^{(q)}_i$ for $q \in \{1,\dots,d\}$. Let $R_q$ be the random variable $\|v^{q}_i-v^{(q)}_j\|_1$. Observe that the $R_q$'s are independent of each other, and that $\|z_i-z_j\|_1 = \sum_{q=1}^dR_q$. \\
    \noindent Case 1: $\|x_i-x_j\|_1 \le M.$ \\
    In this case, $\dcap{M}(x_i, x_j)=\|x_i-x_j\|_1$. Furthermore, for every $q$, $|x_i[q]-x_j[q]|\le \|x_i-x_j\|_1$. From part (1) of \Cref{lemma:coordinate-wise-build-clean}, this means that for every $q$, $R_q=\|v^{q}_i-v^{(q)}_j\|_1\in [0,\|x_i-x_j\|_1]$ with probability 1. Furthermore, from part (2) of \Cref{lemma:coordinate-wise-build-clean}, $\E[R_q] \ge \Omega(1)\cdot |x_i[q]-x_j[q]|$. Thus, 
    $$
        \E\left[\sum_{q=1}^dR_q\right] \ge \Omega(1)\cdot \sum_{q=1}^d|x_i[q]-x_j[q]| = \Omega(1)\cdot \|x_i-x_j\|_1.
    $$
    Instantiating \Cref{lemma:reverse-markovish} with these $R_q$ variables, we get
    \begin{align*}
        &\Pr\left[\sum_{q=1}^dR_q \ge \Omega(1)\cdot \|x_i-x_j\|_1\right] \ge \Omega(1) \\
        \implies \qquad &\Pr\left[\|z_i-z_j\|_1 \ge \Omega(1)\cdot \|x_i-x_j\|_1\right] \ge \Omega(1).
    \end{align*}
    \noindent Case 2: $\|x_i-x_j\|_1 > M.$ \\
    In this case, $\dcap{M}(x_i, x_j)=M$. Observe first that for any $q\in \{1,\dots,d\}$, by part 1 of \Cref{lemma:coordinate-wise-build-clean}, $R_q \in [0, M]$ with probability 1. Next, observe also that $\E\left[\sum_{q=1}^dR_q\right] \ge \Omega(1)\cdot M$. To see this, consider two cases. In one case, there exists a coordinate $q$ such that $|x_i[q]-x_j[q]| > M$. In this case, part (2) of \Cref{lemma:coordinate-wise-build-clean} already promises us that for this $q$, $\E[R_q] \ge \Omega(1)\cdot M$, and since the $R_q$'s are non-negative, we have the same lower bound on $\E\left[\sum_{q=1}^dR_q\right]$. In the remaining case, $|x_i[q]-x_j[q]| \le M$ for all $q$. In this case, we can still apply part (2) of \Cref{lemma:coordinate-wise-build-clean}, to say that
    \begin{align*}
        \E\left[\sum_{q=1}^dR_q\right] &= \sum_{q=1}^d\E\left[R_q\right] \\
        &\ge \sum_{q=1}^d\Omega(1)\cdot |x_i[q]-x_j[q]| \\
        &= \Omega(1)\cdot \|x_i-x_j\|_1 \\
        &\ge \Omega(1) \cdot M,
    \end{align*}
    where the last inequality follows because $\|x_i-x_j\|_1 > M$ in the present case. Thus, we have shown that every $R_q$ is in $[0, M]$, and $\E\left[\sum_{q=1}^dR_q\right] \ge \Omega(1)\cdot M$. Instantiating \Cref{lemma:reverse-markovish} gives
    \begin{align*}
        &\Pr\left[\sum_{q=1}^dR_q \ge \Omega(1)\cdot M\right] \ge \Omega(1) \\
        \implies \qquad &\Pr\left[\|z_i-z_j\|_1 \ge \Omega(1)\cdot M\right] \ge \Omega(1).
    \end{align*}
\end{proof}

\end{document}